\documentclass[oneside,11pt,table]{book}

\newif\ifIntro\Introfalse
 \Introtrue
\newif\ifCPDS\CPDSfalse
 \CPDStrue
\newif\ifMSO\MSOfalse
 \MSOtrue
\newif\ifUnderapprox\Underapproxfalse
 \Underapproxtrue
\newif\ifPDL\PDLfalse
 \PDLtrue

% ------------------------------------------------
% usepackage
% ------------------------------------------------

\usepackage{hyperref}

\usepackage{graphicx}
\usepackage[pdflatex,recompilepics=false]{gastex}
\usepackage{qtree}
\usepackage{amsmath,amssymb}
\usepackage{xspace}
\usepackage{dsfont}
\usepackage{mathrsfs}
\usepackage[mathscr]{euscript}
\usepackage{bbold}
\usepackage{enumerate}
\usepackage{enumitem}
\usepackage{subfigure}
\usepackage[thmmarks,framed]{ntheorem}
\usepackage{framed}
\usepackage{xcolor}
\usepackage{tensor}
\usepackage{pifont}
\usepackage{caption}
\usepackage{subfigure}
\usepackage{booktabs}
\usepackage{stmaryrd}
\usepackage{array}
\usepackage[T1]{fontenc}
\usepackage{hhline}

\DeclareMathAlphabet{\mathpzc}{OT1}{pzc}{m}{it}
\usepackage{upgreek}

\usepackage{ifpdf}
\ifpdf
\usepackage{tikz}
\usetikzlibrary{shadows}
\RequirePackage[calcwidth]{titlesec}
\usepackage[textwidth=3cm]{todonotes}

\tikzstyle{thmbox} = [rectangle, rounded corners, draw=black,
  fill=lightgray, inner sep=4pt]

\tikzstyle{exbox} = [rectangle, rounded corners, 
  fill=excol, inner sep=4pt]

\fi

% ------------------------------------------------
% colors
% ------------------------------------------------

\definecolor{lightgray}{gray}{0.9}
\definecolor{seccol}{rgb}{0,0.8,0.3}

\colorlet{MyColorOne}{blue!40}

\newcommand{\lightercolor}[3]{% Reference Color, Percentage, New Color Name
    \colorlet{#3}{#1!#2!white}
}

\newcommand{\darkercolor}[3]{% Reference Color, Percentage, New Color Name
    \colorlet{#3}{#1!#2!black}
}

\lightercolor{MyColorOne}{17}{excol}
\darkercolor{MyColorOne}{50}{MyColorOneDark}

% ------------------------------------------------
% ntheorem
% ------------------------------------------------

\newcounter{mycounter}
\numberwithin{mycounter}{chapter}

\theoremclass{Theorem}
\theoremstyle{break}

\theoremheaderfont{\normalfont\bfseries}
\theorembodyfont{\slshape}
\theoremstyle{plain}
\theoremseparator{.}

\newtheorem{theorem}[mycounter]{Theorem}
\newshadedtheorem{mytheorem}[mycounter]{Theorem}
\newshadedtheorem{mylemma}[mycounter]{Lemma}

\theorembodyfont{\upshape}

\newtheorem{lemma}[mycounter]{Lemma}
\newtheorem{proposition}[mycounter]{Proposition}
\newtheorem{corollary}[mycounter]{Corollary}

\theorembodyfont{\normalfont}
\newtheorem{remark}[mycounter]{Remark}

\theoremsymbol{\ensuremath{\diamondsuit}}
\newtheorem{example}[mycounter]{Example}
\newtheorem{exercise}[mycounter]{Exercise}

\theorembodyfont{\itshape}
\theoremsymbol{\ensuremath{\diamondsuit}}
\newtheorem{definition}[mycounter]{Definition}

\theoremheaderfont{\normalfont\bfseries}
\theorembodyfont{\upshape}
\theoremseparator{.}
\theoremsymbol{}
\newtheorem{claim}[mycounter]{Claim}

\theoremheaderfont{\itshape}
\theorembodyfont{\upshape}
\theoremstyle{nonumberplain}
\theoremseparator{.}
\theoremsymbol{\rule{1ex}{1ex}}
\newtheorem{proof}{Proof}

% ------------------------------------------------
% chapter
% ------------------------------------------------

\usepackage[Sonny]{fncychap}
\ChNameVar{\Large\sf}
\ChNumVar{\Huge}
\ChTitleVar{\LARGE\sf}

\titleformat{\section}[hang]{\sffamily}
{\Large\thesection}{0pt}{~~\Large}[{\titlerule[0.5pt]}]

\titleformat{\subsection}[hang]{\sffamily\bf}
{\color{black}\large\thesubsection}{0pt}{~~\large}%[{\titlerule[0.5pt]}]

% ------------------------------------------------
% page layout
% ------------------------------------------------

\pagestyle{plain}

\setlength{\abovecaptionskip}{0pt}
\setlength{\belowcaptionskip}{0pt}

\setlength{\oddsidemargin}{1.5cm}
\setlength{\evensidemargin}{1.5cm}
\setlength{\textwidth}{13.7cm}
\setlength{\textheight}{22cm}
\setlength{\topmargin}{-0.8cm}
\setlength{\footskip}{1cm}
\setlength{\parindent}{0cm}
\setlength{\parskip}{1ex}

\makeatletter

\renewcommand{\paragraph}{\@startsection{paragraph}{6}{\z@}{2ex}{-0.7em}{\bfseries\normalsize}}

\makeatother

\makeatletter
\newcommand{\xRightarrow}[2][]{\ext@arrow 0359\Rightarrowfill@{#1}{#2}}
\makeatother

% ------------------------------------------------
% table of contents
% ------------------------------------------------

\setcounter{tocdepth}{1}

% ------------------------------------------------
% Document
% ------------------------------------------------

\begin{document}

\title{{Non-Sequential Theory of Distributed Systems}\\[1ex]\small Lecture MPRI M2}
\author{Benedikt Bollig and Paul Gastin \\[1ex]
Universit{\'e} Paris-Saclay, CNRS, ENS Paris-Saclay \\[1ex]
Laboratoire M\'ethodes Formelles, 91190, Gif-sur-Yvette, France}
\date{\today}
\maketitle

% ------------------------------------------------
% ------------------------------------------------

\thispagestyle{empty}
\paragraph{Summary.}
These are the lecture notes of the graduate course
\emph{Non-Sequential Theory of Distributed Systems}
that is regularly given in the MPRI programme
(Parisian Master of Research in Computer Science):
\begin{center}
\url{https://wikimpri.dptinfo.ens-cachan.fr/doku.php?id=cours:c-2-8-1}
\end{center}

The lecture covers basic automata-theoretic concepts and logical formalisms for the modeling and verification of concurrent and distributed systems. Many of these concepts naturally extend the classical automata and logics over words, which provide a framework for modeling sequential systems. A distributed system, on the other hand, combines several (finite or recursive) processes, and will therefore be modeled as a collection of (finite or pushdown, respectively) automata. A~crucial parameter of a distributed system is the kind of interaction that is allowed between processes. In this lecture, we focus on the message-passing paradigm. In general, communication in a distributed system creates complex dependencies between events, which are hidden when using a sequential, operational semantics.

The approach taken here is based on a faithful preservation of the dependencies of concurrent events. That is, an execution of a system is modeled as a partial order, or graph, rather than a sequence of events. This has to be reflected in high-level languages for formulating requirements to be met by a distributed system. Actually, specifications for distributed systems are, by nature, non-sequential. They should be interpreted directly on the partial order underlying a system execution, rather than on an (arbitrary) linearization of it. It is worth mentioning that using specifications working on linearizations are often the reason for undecidability, as they may assume synchronization that actually cannot happen. We present classical specification formalisms such as monadic second-order (MSO) logic and temporal logic, interpreted over partial-orders or graphs, as well as (high-level) rational expressions. We compare the expressive power of automata and logic and give translations of specifications into automata (synthesis and realizability). Moreover, we consider the satisfiability (Is a given specification consistent?) and the model-checking problem (Does a given distributed system satisfy its specification?). For both problems, we present elementary techniques (based on tree interpretations and tree automata) that yield decision procedures with optimal complexity.

% ------------------------------------------------

\frontmatter
\tableofcontents
\mainmatter

% ------------------------------------------------

\newcommand{\Device}{\ensuremath{\mathsf{D}}\xspace}
\newcommand{\Display}{\ensuremath{\mathsf{L}}\xspace}
\newcommand{\Pa}{\ensuremath{\mathsf{P1}}\xspace}
\newcommand{\Pb}{\ensuremath{\mathsf{P2}}\xspace}
\newcommand{\on}{\ensuremath{\mathit{on}}\xspace}
\newcommand{\off}{\ensuremath{\mathit{off}}\xspace}
\newcommand{\noop}{\ensuremath{\mathsf{noop}}\xspace}

% ------------------------------------------------
\ifIntro
% ------------------------------------------------

\chapter{Introduction}

Before we go into formal definitions, we give some examples of the kind of problems that are treated in these lecture notes.

\section{Synthesis and Control}

\subsection{A simple protocol}

We consider systems that consist of a fixed finite number of processes that are connected via communication media. Consider the following architecture and assume that communication is synchronous.

\begin{center}
\fbox{\parbox{0.6\textwidth}{
\begin{center}
 \includegraphics[page=1,scale=0.4]{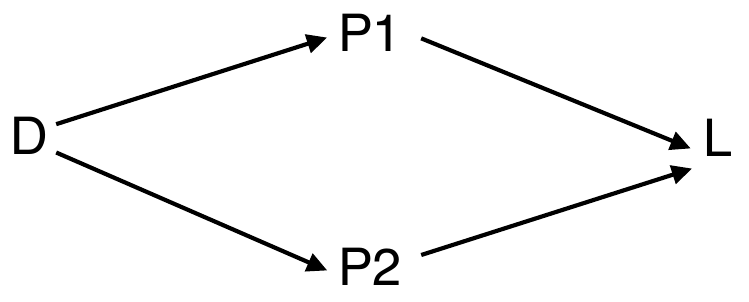}
\end{center}
}}
\end{center}

Consider a process $\Device$ (the device), which communicates with two processes, $\Pa$ and $\Pb$, sending them regularly its current status ($\on$ or $\off$). Every status message is forwarded, by $\Pa$ or $\Pb$, to $\Display$ (a lamp, or display).
The latter is either \on or \off and, upon reception of a message, may change (or not) the current status.
For simplicity, we will assume rendez-vous (i.e., handshake) communication.
A naive implementation looks as follows:

\fbox{\parbox{\textwidth}{
\begin{center}
 \includegraphics[page=2,scale=0.4]{Chap1-figs.pdf}
\end{center}
}}

However, this may result in the following communication scenario, which yields the ``wrong'' output:

\fbox{\parbox{\textwidth}{
\begin{center}
 \includegraphics[page=3,scale=0.4]{Chap1-figs.pdf}
\end{center}
}}

The problem is that \Display does not know which of the last two messages that it receives is the latest one.

Suppose we are not allowed to change the architecture of the system.
We are only allowed to add \emph{local and deterministic} controllers to each process that add additional message contents.
In particular, a controller should not block any of the possible actions of the given system.
Is there still a way to control the protocol in a way such that \Display shows the correct result?

\newpage
\subsection{A first solution.}

A first, rather obvious solution would consist in using timestamps. Strictly increasing timestamps are sent by \Device along with a status message, forwarded by \Pa and \Pb, and then compared by \Display with its latest knowledge.

\fbox{\parbox{\textwidth}{
\begin{center}
\includegraphics[page=4,scale=0.39]{Chap1-figs.pdf}
\end{center}
}}

It is easy to see that the protocol is correct in the sense that \Display does not update its display based on outdated information. The following scenario shows that the error from the previous solution is indeed avoided.

\fbox{\parbox{\textwidth}{
\begin{center}
\includegraphics[page=5,scale=0.4]{Chap1-figs.pdf}
\end{center}
}}

\newpage
\subsection{Towards finite-state solutions}

The previous solution uses infinitely many states, since time stamps are strictly increasing.
This may not be adequate in the realm of \emph{reactive} processes, which are supposed to run forever.
However, finding a finite-state solution is difficult. It is actually impossible for the above specification.

Fortunately, there are guaranteed solutions in the case where every message is immediately followed by an acknowledgment.
So, let us assume the following architecture:

\begin{center}
\fbox{\parbox{0.6\textwidth}{
\begin{center}
 \includegraphics[page=6,scale=0.4]{Chap1-figs.pdf}
\end{center}
}}
\end{center}

Moreover, the allowed communication scenarios look as follows, i.e., every message is \emph{immediately} followed by an acknowledgment (which can be augmented with additional messages):

\fbox{\parbox{\textwidth}{
\begin{center}
\includegraphics[page=7,scale=0.4]{Chap1-figs.pdf}
\end{center}
}}

In this particular case, we get a finite-state local controller. However, it is not always easy to come up with a solution.

\begin{exercise}
Try to find a finite-state controller.
\end{exercise}

The process of finding a controller is actually error-prone. So, it is natural to ask the following question:

\begin{center}
\fbox{\emph{Is there an automatic way to generate a controller from a specification?}}
\end{center}
The answer is YES, provided that the specification satisfies some properties.

\newcommand{\actdaon}{\langle \Device \stackrel{\on}{\rightleftarrows} \Pa \rangle}
\newcommand{\actdbon}{\langle \Device \stackrel{\on}{\rightleftarrows} \Pb \rangle}
\newcommand{\actdaoff}{\langle \Device \stackrel{\off}{\rightleftarrows} \Pa \rangle}
\newcommand{\actdboff}{\langle \Device \stackrel{\off}{\rightleftarrows} \Pb \rangle}

\newcommand{\actadon}{\langle \Pa \stackrel{\on}{\rightleftarrows} \Display \rangle}
\newcommand{\actbdon}{\langle \Pb \stackrel{\on}{\rightleftarrows} \Display \rangle}
\newcommand{\actadoff}{\langle \Pa \stackrel{\off}{\rightleftarrows} \Display \rangle}
\newcommand{\actbdoff}{\langle \Pb \stackrel{\off}{\rightleftarrows} \Display \rangle}

\newcommand{\Lspec}{L_{\mathit{spec}}}

In our example, the specification $\Lspec \subseteq \Sigma^\ast$ could be a word language over the following alphabet:
\[
\begin{array}{rl}
\Sigma = & \{\actdaon\,,\,\actdbon\,,\,\actdaoff\,,\,\actdboff\}\\
\cup & \{\actadon\,,\,\actbdon\,,\,\actadoff\,,\,\actbdoff\}\\[0.5ex]
\cup & \{\on,\off,\noop\}
\end{array}
\]

Let $\Sigma_{\Pa} \subseteq \Sigma$ be the subalphabet containing those actions involving $\Pa$. Moreover, let $\Sigma_{\Pb} \subseteq \Sigma$ contain those with \Pb, and so on. In particular, $\{\on,\off,\noop\} \subseteq \Sigma_\Display$. With this, $\Lspec$ is the set of words $w \in \Sigma^\ast$ satisfying the following:
\begin{itemize}
\item[(R1)] The projection of $w$ to $\Sigma_\Pa$ is contained in \[(\actdaon \actadon + \actdaoff \actadoff)^\ast\,.\]

\item[(R2)] The projection of $w$ to $\Sigma_\Pb$ is contained in \[(\actdbon \actbdon + \actdboff \actbdoff)^\ast\,.\]

\item[(R3)] The projection of $w$ to $\Sigma_\Display$ is contained in \[\Bigl((\actadon+\actbdon+\actadoff+\actbdoff)(\on+\off+\noop)\Bigr)^\ast\,.\]

\item[(R4)] ``The display is updated iff the last (previous) status message emitted by \Device that has already been followed by a forward was not yet followed by a corresponding update by \Display.''
\end{itemize}

\begin{exercise}
Formalize the requirement (R4) in terms of a finite automaton or an MSO formula.
\end{exercise}

Here are some example words to illustrate $\Lspec$:
\begin{itemize}
\item $\actdaon\actdboff\actbdoff\off\actadon\noop ~~\in~~ \Lspec$\,,
\item $\actdaon\actadon\actdaon\on ~~\not\in~~ \Lspec$,
\item $\actdaon\actadon\actdboff\on\actbdoff\off ~~\in~~ \Lspec$\,,
\item $\actdaon\actadon\on\actdboff\actbdoff\off ~~\in~~ \Lspec$\,.

\end{itemize}

Obviously, $\Lspec$ is a regular language. Moreover, it is closed under \emph{permutation rewriting of independent events}. The latter means that changing the order of neighboring independent actions in a word does not affect membership in $\Lspec$.
This seems natural, since the order of independent event cannot be enforced by a distributed protocol.
Here, two actions are said to be \emph{independent} if they involve distinct processes. For example,
\begin{itemize}
\item  $\actdaon$ and $\on$ are independent,
\item  $\actdaon$ and $\actbdon$ are independent,
\item  $\actdaon$ and $\actdboff$ are \emph{not} independent,
\item  $\actdbon$ and $\actbdon$ are \emph{not} independent.
\end{itemize}

\begin{exercise}
Show that $\Lspec$ is closed under permutation rewriting.
\end{exercise}

The following is a fundamental result of concurrency theory (yet informally stated):
\begin{center}
\fbox{\parbox{\textwidth}{
\underline{Theorem [Zielonka 1987]:}\\[1ex]
\emph{Let $L$ be a regular set of words that is closed under permutation rewriting of independent events. There is a deterministic finite-state distributed protocol that realizes $L$.}
}}
\end{center}

Thus, the specification $\Lspec$ could indeed be realized as a distributed program.

There are, however, regular specifications that are not realizable (and, therefore, are not closed under permutation rewriting). Consider the language
\[L= (\actdaon\actbdon)^\ast\,.\]
Though $L$ is regular, it is not closed under permutation rewriting. Even worse, the closure under permutation is not regular anymore. Specification $L$ says that there are as many messages from $\Device$ to $\Pa$ as from $\Pb$ to $\Display$. Intuitively, it is clear that this cannot be realized by a finite-state system in a distributed fashion: There is no communication going on between $\Device$/$\Pa$ on the hand, and $\Pb$/$\Display$ on the other hand.

\section{Modeling behaviors as graphs}

\subsection{Partial orders}

Requirement (R4) from the last section is somehow awkward to write down. The reason is that a word over $\Sigma$ imposes an ordering of events that, actually, are not causally related in the corresponding execution.
When we say ``last'', this refers to the ``last'' position in the word. Consider, for example, the word
\[
\begin{array}{rl}
w= & \actdaon\actadon\actdboff\on\\
& \actdaon\actadon\on\actbdoff\noop \in \Lspec\,.
\end{array}\]
The ``last position'' right before the \on in the first line is actually in no way related to \on. So, it is not natural (and not needed) to include it in what we mean by ``last''.
In our reasoning, a more relaxed ordering has to be recovered from the word ordering. So, why not directly reason about the causal order as it is imposed by a distributed system?

In the following, we do not consider an execution of a system as a word, i.e., a total order, but rather as a partial order. The partial order is already suggested by the message diagrams that we used to argue about our distributed protocols. Consider the execution below, whose partial order is represented by its Hasse diagram. It corresponds to the above word $w$.

\begin{center}
\fbox{
\includegraphics[page=8,scale=0.3]{Chap1-figs.pdf}
}
\end{center}

Suppose this partial order is denoted by $\le$. Then, (R4) can be conveniently rephrased (as we do not need the part ``... that has already been followed by ...'' anymore):
\begin{itemize}
\item[(R4')] ``When \Display performs \on (\off, respectively), then the last (wrt.\ $\le$) status message sent by \Device should also be \on (\off, respectively). Moreover, a display operation should be \noop iff there has already been an acknowledgement between the latest status message and that operation.''
\end{itemize}

This can very easily be expressed in monadic second-order (MSO) logic over partial orders, using the partial order $\le$.

\begin{exercise}
Give an MSO formula for (R4').
\end{exercise}

An advantage of MSO logic that is directly interpreted over partial orders is the following theorem:
\begin{center}
\fbox{\parbox{\textwidth}{
\underline{Theorem [Thomas 1990]:}\\[1ex]
\emph{Let $L$ be an MSO-definable set of partial orders. There is a finite-state distributed protocol that realizes $L$.}
}}
\end{center}

\newcommand{\sreq}{\langle!\mathsf{req}\rangle}
\newcommand{\rreq}{\langle?\mathsf{req}\rangle}
\newcommand{\sack}{\langle!\mathsf{ack}\rangle}
\newcommand{\rack}{\langle?\mathsf{ack}\rangle}
\newcommand{\callact}{\langle\mathsf{call}\rangle}
\newcommand{\retact}{\langle\mathsf{ret}\rangle}

\subsection{Reasoning about recursive processes}

The previous discussion somehow motivates the naming of our lecture.
However, in our modeling we will go a step further.
One single behavior is not just a partial order, but an (acyclic) graph, which is more general.
The edges reflect causal dependencies, but they provide even more information.
For example, they may connect a procedure call with the corresponding return, or the sending of a message with its receive.

Consider a system of two processes, \Pa and \Pb, connected by two unbounded FIFO channels (one in each direction).
From time to time, \Pa sends requests to \Pb. The latter, when receiving a request, calls a procedure, performs some internal actions, and returns from the procedure, before it sends an acknowledgment. In the scope of a procedure, it may call several subprocedures. Thus, \Pa performs actions from the alphabet
$\Sigma_1=\{\sreq,\rack\}$
and $\Pb$ performs actions from
$\Sigma_2=\{\rreq,\sack,\callact,\retact\}$.
Let $\Sigma = \Sigma_1 \cup \Sigma_2$.

Let us try to model the protocol in terms of a word language $L$.
It should say that, whenever a request is received, \Pb should start a subroutine and send the acknowledgment immediately after returning from this subroutine. Thus, we shall have \[w_1 = \sreq\,\underline{\rreq}\,\callact\, \callact\, \retact\, \callact\, \retact\, \retact\, \underline{\sack}\, \rack \in L\,.\]
On the other hand, we should have
\[w_2=\sreq\,\underline{\rreq}\,\callact\, \callact\, \retact\, \callact\, \retact\,\underline{\sack}\, \rack\, \retact \not\in L\,.\]

But how do we express this in a specification language over words such as MSO logic? Unfortunately, there is no such formula, since $L$ is a non-regular property.

\begin{exercise}
Prove that $L$ is not regular.
\end{exercise}

The solution is to equip a word with additional information, which allows us to ``jump'' from a call to its \emph{associated} return position. In other words, we add an edge from a call to the corresponding return:

Then, $w_1$ corresponds to:
\begin{center}
\fbox{
\includegraphics[page=9,scale=0.35]{Chap1-figs.pdf}
}
\end{center}
Moreover, $w_2$ corresponds to:
\begin{center}
\fbox{
\includegraphics[page=10,scale=0.35]{Chap1-figs.pdf}
}
\end{center}
From this point of view, we can now express our property easily in a suitable MSO logic:
\[\forall x\,\rreq(x) \Rightarrow \exists y_1,y_2,z \Bigl(x \to y_1 \wedge \mathsf{cr}(y_1,y_2) \wedge y_2 \to z \wedge \sack(z)\Bigr)\]
It is for similar obvious reasons that, in an asynchronous setting, we connect a send with a receive event.

\section{Underapproximate Verification}

The lecture is concerned with distributed systems with a fixed architecture: There is a finite set of processes, which are connected via some communication media. We will consider stacks, FIFO channels, and bags (with the restriction that stacks connect a process with itself, with the purpose of modeling recursion).%
\footnote{We do not consider \emph{lossy} channels. Cf. Cours 2.9.1.}
A priori, we assume that all media are unbounded.
To get decidability or expressivity results, however, we may sometimes impose a bound on channels or stacks. Recall that, in the introductory example, we assumed synchronous communication, which roughly corresponds to FIFO channels with capacity 0.

The following figure illustrates one possible architecture (source: Aiswarya Cyriac's thesis):
\begin{center}
\fbox{
\includegraphics[page=11,scale=1.2]{Chap1-figs.pdf}
}
\end{center}

In the following, we will actually depict an architecture as a finite graph with directed edges of three types, depending on the type of the communication medium they represent. We will follow the following convention:
\begin{center}
\fbox{
\includegraphics[page=12,scale=0.4]{Chap1-figs.pdf}
}
\end{center}

The most basic verification question is the \emph{reachability problem}, i.e., to ask whether some control state of some process is reachable. Let us examine the decidability status of the reachability problem for the following architectures:
\begin{center}
\fbox{
\includegraphics[page=13,scale=0.4]{Chap1-figs.pdf}
}
\end{center}
\begin{itemize}
\item[a)] Undecidable. A system can simulate a Turing machine TM as follows: Via $c_1$, process 1 (on the left) sends the current configuration of the TM to process 2. Process 2 only sends every message that it receives immediately back to 1. When 1 receives the configuration, it locally modifies it simulating a transition of TM.

\item[b)] Undecidable. Similarly to a), the process sends the current configuration through the channel $c$. When receiving a configuration from $c$, it modifies it locally.

\item[c)] Undecidable. This can be shown by reduction from PCP (Post's correspondence problem). Process 1 guesses a solution (a sequence of indices, and it sends the corresponding words via channels $c_1$ and $c_2$, respectively. Process 2 will then just check if the sequences send through $c_1$ and $c_2$ coincide. To do this, it reads alternately from both channels and checks whether both symbols coincide.

\item[d)] Decidable. This case can be reduced to synchronous communication and, therefore, reachability in a finite-state system.

\item[e)] Decidable. This case corresponds to emptiness of pushdown automata.

\item[f)] Undecidable. One can easily simulate a two-counter machine. Equivalently, we may use the concatenation of two stacks to simulate the unbounded tape of a Turing machine.

\item[g)] Undecidable.  We may use a reduction from the intersection-emptiness
problem for two pushdown automata $\mathcal{A}_{1}$ and $\mathcal{A}_{2}$.
Process 1 guesses an input word $w$, checks with its stack that $w$ is accepted
by $\mathcal{A}_{1}$ and simultaneously sends $w$ letter by letter to process 2
through the FIFO channel $c$.  Process 2 checks with its stack that $w$ is
accepted by $\mathcal{A}_{2}$.

\item[h)] Undecidable.  We use a reduction from case b).  Process 1 simulates
send transitions through channel $c$.  To simulate a receive transition through
$c$, it puts a token into bag $b_1$, whose value is the message to be received,
say $m$.  Process 1 can then only proceed when it finds an acknowledgment token
in bag $b_2$.  The latter is provided by process 2 after removing $m$ from $c$.
% Note that this procedure can be implemented even when the bags 
% $b_{1}$ and $b_{2}$ have capacity one.
\end{itemize}
We conclude that almost all verification problems are undecidable even for very simple system architectures.

In this lecture, we therefore perform \emph{underapproximate verification}: We restrict the behavior of a given system in a non-trivial way that still allows us to reason about it and deduce correctness/faultiness wrt.\ interesting properties. Let us illustrate some restrictions using some of the undecidable architectures above:

\begin{itemize}
\item In all cases, we may assume that communication media have capacity $B$ (existentially $B$-bounded), for some fixed $B$.

\item In case f), assuming an order on the stacks, we can only pop from the first nonempty stack.

\item In case f), we may also impose a bound on the number of \emph{contexts}. In turn, there are several possible definitions of what is allowed in a context:
\begin{itemize}
\item We can only touch one stack.
\item We can only pop from one stack.
\item Many more ...
\end{itemize}
\end{itemize}
Under all these restrictions, most standard verification problems (even model checking against MSO-definable properties) becomes decidable, with varying complexities.

\begin{center}
\emph{In the lecture, we will take a uniform approach to underapproximate verification.}
\end{center}

% ------------------------------------------------
\fi % \ifIntro
% ------------------------------------------------

% Architecture
\newcommand{\Arch}{\ensuremath{\mathfrak{A}}\xspace}
\newcommand{\DS}{\ensuremath{\mathsf{DS}}\xspace}
\newcommand{\Procs}{\ensuremath{\mathsf{Procs}}\xspace}
\newcommand{\Queues}{\ensuremath{\mathsf{Queues}}\xspace}
\newcommand{\reader}{\ensuremath{\mathsf{Reader}}\xspace}
\newcommand{\Stacks}{\ensuremath{\mathsf{Stacks}}\xspace}
\newcommand{\Bags}{\ensuremath{\mathsf{Bags}}\xspace}
\newcommand{\writer}{\ensuremath{\mathsf{Writer}}\xspace}

% Systems
\newcommand{\Act}{\ensuremath{\Sigma}\xspace}
\newcommand{\CPDS}{\ensuremath{\mathsf{CPDS}}\xspace}
\newcommand{\txtCPDS}{\ensuremath{\textup{CPDS}}\xspace}
\newcommand{\CPDSs}{\ensuremath{\textup{CPDSs}}\xspace}
\newcommand{\inLoc}{\ensuremath{\ell_{\textrm{in}}}\xspace}
\newcommand{\inState}{\ensuremath{\ell_{\textrm{in}}}\xspace}
\newcommand{\FinLocs}{\ensuremath{\mathsf{Fin}}\xspace}
\newcommand{\Locs}{\ensuremath{\mathsf{Locs}}\xspace}
\newcommand{\States}{\ensuremath{\mathsf{States}}\xspace}
\newcommand{\Sys}{\ensuremath{\mathcal{S}}\xspace}
\newcommand{\A}{\ensuremath{\mathcal{A}}\xspace}
\newcommand{\B}{\ensuremath{\mathcal{B}}\xspace}
\newcommand{\C}{\ensuremath{\mathcal{C}}\xspace}
\newcommand{\Trans}{\ensuremath{\mathsf{Trans}}\xspace}
\newcommand{\Val}{\ensuremath{\mathsf{Val}}\xspace}
\newcommand{\TS}{\ensuremath{\mathcal{TS}}\xspace}
\newcommand{\last}{\ensuremath{\mathsf{last}}\xspace}
\newcommand{\dsword}{z}

\newcommand{\Akswcbm}{\ensuremath{\A_\cbm^{k\text{-}\mathsf{sw}}}\xspace}
\newcommand{\Akswsys}{\ensuremath{\A_\Sys^{k\text{-}\mathsf{sw}}}\xspace}
\newcommand{\Akswphi}{\ensuremath{\A_\varphi^{k\text{-}\mathsf{sw}}}\xspace}
\newcommand{\Aksttvalid}{\ensuremath{\A_\mathsf{valid}^{k\text{-}\mathsf{stt}}}\xspace}
\newcommand{\Aksttphi}{\ensuremath{\A_\varphi^{k\text{-}\mathsf{stt}}}\xspace}
\newcommand{\Asttphi}[1]{\ensuremath{\A_\varphi^{#1\text{-}\mathsf{stt}}}\xspace}
\newcommand{\Akstw}{\ensuremath{\A^{k\text{-}\mathsf{stw}}}\xspace}
\newcommand{\Akstwcbm}{\ensuremath{\A_\cbm^{k\text{-}\mathsf{stw}}}\xspace}
\newcommand{\Akstwsys}{\ensuremath{\A_\Sys^{k\text{-}\mathsf{stw}}}\xspace}
\newcommand{\Akstwphi}{\ensuremath{\A_\varphi^{k\text{-}\mathsf{stw}}}\xspace}
\newcommand{\AkstwPhi}{\ensuremath{\A_\Phi^{k\text{-}\mathsf{stw}}}\xspace}
\newcommand{\AkstwPsi}{\ensuremath{\A_\Psi^{k\text{-}\mathsf{stw}}}\xspace}
\newcommand{\Akstwacyclic}{\ensuremath{\A_\mathsf{acyclic}^{k\text{-}\mathsf{stw}}}\xspace}
\newcommand{\Akstwedges}{\ensuremath{\A_\mathsf{edges}^{k\text{-}\mathsf{stw}}}\xspace}

% Behaviours
\newcommand{\CBM}{\ensuremath{\mathsf{CBM}}\xspace}
\newcommand{\txtCBM}{\ensuremath{\textup{CBM}}\xspace}
\newcommand{\CBMs}{\ensuremath{\mathsf{CBMs}}\xspace}
\newcommand{\MSC}{\ensuremath{\mathsf{MSC}}\xspace}
\newcommand{\txtMSC}{\ensuremath{\textup{MSC}}\xspace}
\newcommand{\MSCs}{\ensuremath{\mathsf{MSCs}}\xspace}
\newcommand{\MNW}{\ensuremath{\mathsf{MNW}}\xspace}
\newcommand{\MNWs}{\ensuremath{\mathsf{MNWs}}\xspace}
\newcommand{\smscn}{\ensuremath{\mathcal{M}}\xspace}
\newcommand{\mscn}{\ensuremath{\mathcal{M}}\xspace}
\newcommand{\Events}{\ensuremath{\mathcal{E}}\xspace}
\newcommand{\ActLabel}{\lambda}
\newcommand{\pid}{\ensuremath{\mathsf{pid}}\xspace}
\newcommand{\matchrel}{\mathbin{\vartriangleleft}}
\newcommand{\procrel}{\rightarrow}
\newcommand{\elastic}{\dashrightarrow}
\newcommand{\lin}{\ensuremath{\mathsf{lin}}\xspace}
\newcommand{\cbm}{\ensuremath{\mathsf{cbm}}\xspace}
\newcommand{\msc}{\ensuremath{\mathsf{msc}}\xspace}
\newcommand{\width}{\textsf{width}}
\newcommand{\Lang}{L}
\newcommand{\opLang}{\Lang_\textup{op}}
\newcommand{\Lin}{\ensuremath{\mathsf{Lin}}\xspace}
\newcommand{\class}{\ensuremath{\mathcal{C}}\xspace}

% Logic
\newcommand{\LTL}{\ensuremath{\mathsf{LTL}}\xspace}
\newcommand{\FO}{\ensuremath{\mathsf{FO}}\xspace}
\newcommand{\MSO}{\ensuremath{\mathsf{MSO}}\xspace}
\newcommand{\EMSO}{\ensuremath{\mathsf{EMSO}}\xspace}
\newcommand{\PDL}{\ensuremath{\mathsf{PDL}}\xspace}
\newcommand{\CPDL}{\ensuremath{\mathsf{CPDL}}\xspace}
\newcommand{\IPDL}{\ensuremath{\mathsf{IPDL}}\xspace}
\newcommand{\ICPDL}{\ensuremath{\mathsf{ICPDL}}\xspace}
\newcommand{\LCPDL}{\ensuremath{\mathsf{LCPDL}}\xspace}
\newcommand{\EQICPDL}{\ensuremath{\mathsf{EQ\text{-}ICPDL}}\xspace}
\newcommand{\EQLCPDL}{\ensuremath{\mathsf{EQ\text{-}LCPDL}}\xspace}
\newcommand{\EQCPDL}{\ensuremath{\mathsf{EQ\text{-}CPDL}}\xspace}
\newcommand{\EQPDL}{\ensuremath{\mathsf{EQ\text{-}PDL}}\xspace}
\newcommand{\F}{\mathop{\mathsf{F}\vphantom{a}}\nolimits}
\newcommand{\G}{\mathop{\mathsf{G}\vphantom{a}}\nolimits}
\newcommand{\false}{\ensuremath{\mathsf{false}}\xspace}
\newcommand{\true}{\ensuremath{\mathsf{true}}\xspace}
\newcommand{\Exists}[2]{\langle #1\rangle#2}
\newcommand{\test}[1]{\mathsf{test}({#1})}
\newcommand{\phiCBM}{\ensuremath{\Phi_\cbm}\xspace}
\newcommand{\phiSQ}{\ensuremath{\Phi_\sqcbm}\xspace}
\newcommand{\phiSys}{\ensuremath{\Phi_\Sys}\xspace}

\newcommand{\fequiv}{=}

\newcommand{\say}[1]{\textcolor{red}{#1}}
\newcommand{\csSys}{\Sys_\textup{cs}}
\newcommand{\holds}{\gamma}

\newcommand{\minev}{\mathit{min}}
\newcommand{\maxev}{\mathit{max}}

% Split-algebra
\newcommand{\shuffle}{\mathbin{{\sqcup}\mathchoice{\mkern-3mu}{\mkern-3mu}{\mkern-3.4mu}{\mkern-3.8mu}{\sqcup}}}

\newcommand{\Oh}{\mathcal{O}}
\newcommand{\poly}{\textsf{poly}}

\newcommand{\myline}{~\\[-1.5ex]\hrule~\hrule~\\[-1ex]}

% ------------------------------------------------
\ifCPDS
% ------------------------------------------------

\chapter{Concurrent Processes with Data Structures}

\vspace{-6ex}
Notation is taken from \cite{CG-fsttcs14}.

\vspace{-2ex}
\section{The Model}

\begin{definition}[Architecture] An architecture is a tuple \[\Arch=(\Procs,\DS,\writer,\reader)\]
\begin{itemize}
\item $\Procs$ \quad finite set of \emph{processes}
\item $\DS=\Stacks\uplus\Queues\uplus\Bags$ \quad finite set of \emph{data structures}
\item $\writer: \DS \to \Procs$
\item $\reader: \DS \to \Procs$
\end{itemize}
such that $\writer(s)=\reader(s)$ for all $s\in\Stacks$.
\end{definition}

\myline

\begin{example} Consider the following architecture:
\begin{center}
\scalebox{0.6}{
\unitlength=1.2mm
\gasset{frame=false,AHangle=30,AHlength=1.4,AHLength=1.6}
\begin{gpicture}
\gasset{Nadjust=wh, Nframe = true, AHLength = 2, AHlength =1.8}
\node(D1)(0,20){Stack $d_1$}
\node(D2)(20,20){Queue $d_2$}
\node(D3)(40,20){Queue $d_3$}
\node(D4)(60,20){Bag $d_4$}
\node(P1)(10,0){Process $p_1$}
\node(P2)(50,0){Process $p_2$}
\gasset{curvedepth=-3}
\drawedge(P1,D1){}
\drawedge(P2,D4){}
\drawedge(P1,D3){}
\drawedge[curvedepth = 3](P2,D2){}

\drawedge(D1,P1){}
\drawedge(D4,P2){}
\drawedge[exo=3](D2,P1){}
\drawedge[curvedepth = 3, exo = -3](D3,P2){}
\end{gpicture}
}
\end{center}
We have $\Procs = \{p_1,p_2\}$, $\Stacks=\{d_1\}$, $\Queues=\{d_2,d_3\}$, and $\Bags=\{d_4\}$.
E.g., $\writer(d_1) = \reader(d_1) = p_1$.
\end{example}

\begin{definition}[\txtCPDS] \label{def:cpds}
  A \emph{system of concurrent processes with data structures} (\txtCPDS) over
  $\Arch$ and an alphabet $\Sigma$ is a tuple 
  $\Sys=((\Sys_p)_{p\in\Procs},\Val,\FinLocs)$ where for each $p\in\Procs$, 
  $\Sys_p=(\Locs_p,\Delta_p,\iota_p)$ is the local transition system for process $p$:
  \begin{itemize}
    \item $\Val$ \quad  finite set of \emph{values}
    \item $\Locs_p$ \quad  finite set of \emph{locations}
    \item $\iota_p\in\Locs_p$ \quad \emph{initial location}
    \item $\Locs=\prod_{p\in\Procs}\Locs_p$ \quad set of \emph{global} locations
    \item $\inLoc=(\iota_p)_{p\in\Procs}$ \quad \emph{global initial location}
    \item $\FinLocs\subseteq\Locs$ \quad \emph{global final locations}
    \item $\Delta_p=\Delta_p^{i}\cup\Delta_p^{!}\cup\Delta_p^{?}$ \quad transitions of $p$
    \begin{itemize}
      \item $\Delta_p^{i}$ \quad internal transition \quad $\ell \xrightarrow{{a}}_p \ell'$
      \item $\Delta_p^{!}$ \quad write transition \quad $\ell \xrightarrow{a,d!v}_p \ell'$ with $\writer(d) = p$
      \item $\Delta_p^{?}$ \quad read transition \quad $\ell \xrightarrow{a,d?v}_p \ell'$ with $\reader(d) = p$
    \end{itemize}
    where $\ell,\ell'\in\Locs_p$, $a\in\Act$, $d\in\DS$, and $v\in\Val$.
    
    For a transition $t\in\mathsf{Trans}=\biguplus_{p\in\Procs}\Delta_p$, we write
    $\mathsf{src}(t)=\ell$ (source), $\mathsf{tgt}(t)=\ell'$ (target),
    $\mathsf{lbl}(t)=a$ (label), $\mathsf{ds}(t)=d$ and $\mathsf{val}(t)=v$.
  \end{itemize}
  We let $\CPDS(\Arch,\Act)$ be the set of \CPDSs over \Arch and \Act.
\end{definition}

\begin{example}\label{ex:client-server}
Let $\Arch$ be given by $\Procs=\{p_1,p_2\}$, $\Queues=\{c_1,c_2\}$, $\Stacks=\{s\}$, $\Bags=\emptyset$, with $\writer(c_1)=\reader(c_2)=p_1$ and $\writer(c_2)=\reader(c_1)=p_2$ and $\writer(s)=\reader(s)=p_2$. Moreover, let $\Sigma=\{a,b\}$. Consider the client-server system $\csSys$ over $\Arch$ and $\Sigma$ given as follows:
\begin{center}
\fbox{
\includegraphics[scale=0.45]{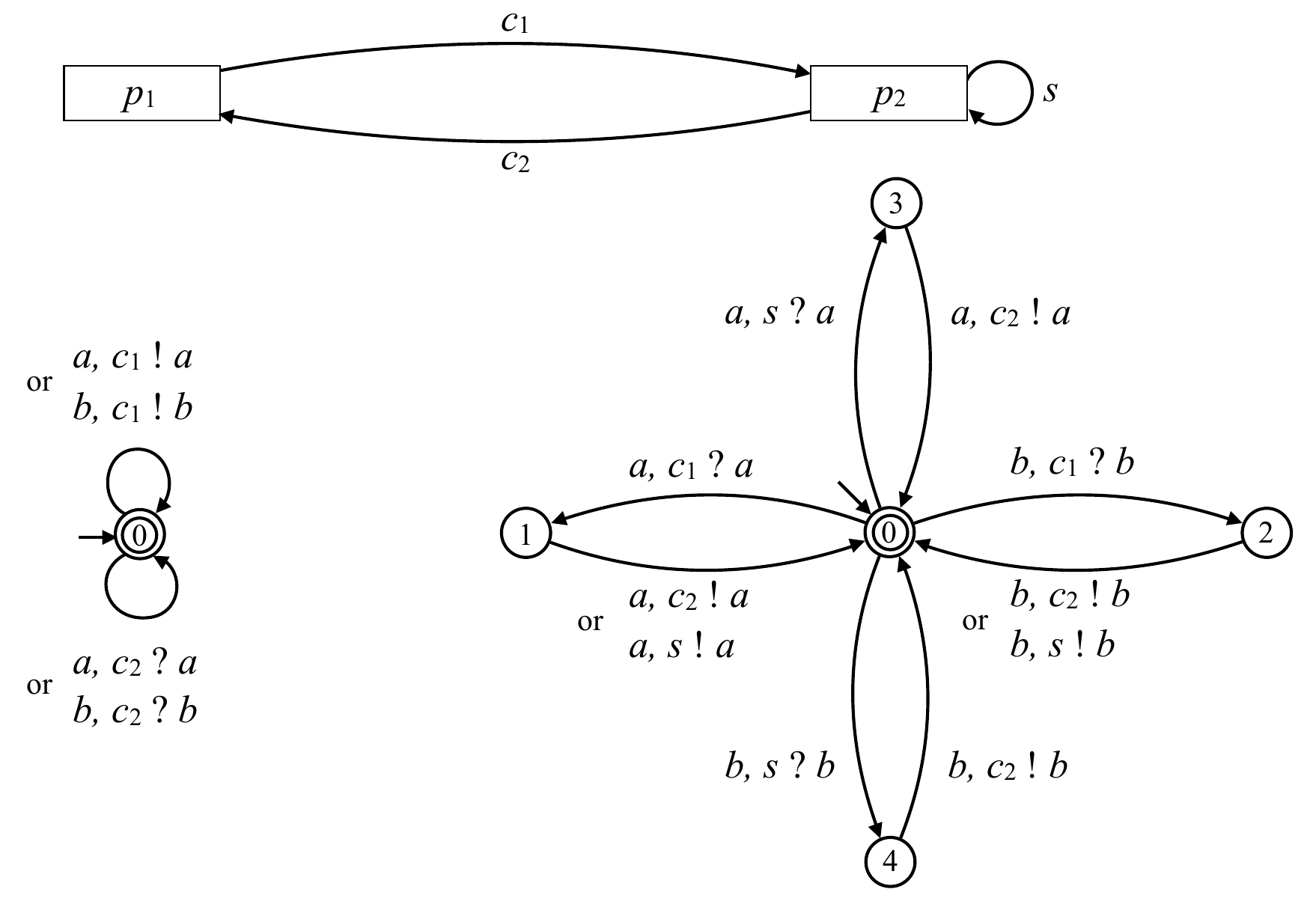}
}
\end{center}
Process $p_1$, the client, sends requests of type $a$ or $b$ to process $p_2$, the server.
The latter may acknowledge the request immediately, or put it on its stack
(either, because it is busy or because the request does not have a high
priority).  At any time, however, the server may pop a task from the stack and
acknowledge it.

We have
$\Locs_{p_2}=\{0,1,2,3,4\}$,
$\iota_{p_2}=0$,
$\FinLocs=\{(0,0)\}$, and
$\Val=\{a,b\}$.
\end{example}

\section{Operational Semantics}

\Sys defines (infinite) transistion system $\A_\Sys=(\States,\Delta,s_\textup{in},F)$ 
over $\Gamma = (\Procs\times\Act)\cup(\Procs\times\Act\times\DS\times\{!,?\})$
\begin{itemize}
\item $\States = \Locs\times(\Val^*)^{\DS}$\\
for $(\overline{\ell},\overline{\dsword}) \in \States$, we denote $\overline{\ell}=(\ell_p)_{p\in\Procs}$
and $\overline{\dsword}=(\dsword_d)_{d\in\DS}$
\item $s_\textup{in} = (\inLoc,(\varepsilon,\ldots,\varepsilon))$
\item $F = \FinLocs \times \{\varepsilon\}^\DS$
\item $\Delta \subseteq \States \times \Gamma \times \States$ \quad global transition
\begin{itemize}
  \item internal transition \quad $(\overline{\ell},\overline{\dsword})\xRightarrow{p,a}(\overline{\ell'},\overline{\dsword})$\\[0.5ex]
  if $\ell_p \xrightarrow{a}_p \ell'_p$ ~~and~~ $\ell'_q=\ell_q$ for all $q\neq p$,
  
  \item write transition \quad $(\overline{\ell},\overline{\dsword})\xRightarrow{p,a,d!}(\overline{\ell'},\overline{\dsword}')$
  \quad if for some $v \in \Val$:\\[0.5ex]
  $\ell_p \xrightarrow{a,d!v}_p \ell'_p$, $\dsword'_d=\dsword_dv$,
  $\ell'_q=\ell_q$ for all $q\neq p$ and
  $\dsword'_{c}=\dsword_{c}$ for all $c\neq d$  
  
  \item read transition \quad $(\overline{\ell},\overline{\dsword})\xRightarrow{p,a,d?}(\overline{\ell'},\overline{\dsword}')$
  \quad if for some $v \in \Val$:\\[0.5ex]
  $\ell_p \xrightarrow{a,d?v}_p \ell'_p$, $\ell'_q=\ell_q$ for all $q\neq p$,
  $\dsword'_{c}=\dsword_{c} \textup{~for~all~} c\neq d$ and
  \\[0.5ex]
  $\left\{\begin{array}{ll}
  d \in \Stacks\textup{:} & \dsword_d=\dsword'_d v \\
  d \in \Queues\textup{:} & \dsword_d=v\dsword'_d \\
  d \in \Bags\textup{:} & \dsword_d=uvw \textup{ and } \dsword'_d=uw
  \text{ for some } u,w \in \Val^\ast
  \end{array}\right.$
\end{itemize}
\end{itemize}

We let $\opLang(\Sys) := L(\A_\Sys) \subseteq \Gamma^\ast$ (discarding the empty word).

\begin{example}
In our client-server system, $\opLang(\csSys)$ contains:
\begin{itemize}
\item $(p_1,a,c_1!)\textcolor{blue}{(p_2,a,c_1?)(p_2,a,c_2!)}(p_1,a,c_2?)$
\item $(p_1,a,c_1!)(p_1,b,c_1!)\textcolor{blue}{(p_2,a,c_1?)(p_2,a,s!)(p_2,b,c_1?)(p_2,b,c_2!)}$\\[1ex]
$(p_1,b,c_2?)\textcolor{blue}{(p_2,a,s?)(p_2,a,c_2!)}(p_1,a,c_2?)$
\end{itemize}
\end{example}

\begin{exercise}
Show that $\opLang(\csSys)$ is not regular.
\end{exercise}

\subsection{Nonemptiness/Reachability Checking}

For an architecture $\Arch$ and an alphabet $\Sigma$, consider the following problem:

\begin{center}
\begin{tabular}{ll}
\toprule
{{\sc{Nonemptiness}}$(\Arch,\Sigma)$}:\\
\midrule
{Instance:} & \hspace{-4em}$\Sys \in \CPDS(\Arch,\Sigma)$\\[0.5ex]
{Question:} & \hspace{-4em}$\opLang(\Sys) \neq \emptyset$\,?\\
\bottomrule
\end{tabular}
\end{center}

\myline

\begin{mytheorem}
Let $\Arch$ be any of the following architectures: a, b, c, f, g, h.\\
Then, {\sc{Nonemptiness}}$(\Arch,\Sigma)$ is undecidable.
\end{mytheorem}
\begin{center}
\fbox{
\includegraphics[page=13,scale=0.4]{Chap1-figs.pdf}
}
\end{center}

The following table summarizes some special cases:
\begin{center}
\begin{tabular}{lllll}
\toprule
$\Arch$ & automata type & CBM & \sc{Nonemptiness}$(\Arch,\Sigma)$\\
\midrule
$|\Procs| = 1$ & finite automaton & word & decidable\\
$|\DS| = 0$\\
\midrule
$|\Procs| = |\DS| = 1$ & (visibly) & nested word & decidable\\
$\DS = \Stacks$ &  pushdown automaton\\
\midrule
$|\Procs| = 1$ & multi-pushdown & multiply & undecidable\\
$|\DS| \ge 2$ & automaton & nested word\\
$\DS = \Stacks$\\
\midrule
$\DS = \Bags$ & $\approx$ Petri net & & decidable\\
\midrule
$|\Procs| \ge 2$ & message-passing & message & undecidable\\
$\DS = \Queues$ & automaton & sequence\\
$= (\Procs \times \Procs) \setminus \textup{Id}$ & & chart (MSC)\\
where\\
$\textup{Id} = \{(p,p) \mid p \in \Procs\}$\\[0.5ex]
$\writer(p,q) = p$\\
$\reader(p,q) = q$\\
\bottomrule
\end{tabular}
\end{center}

\section{Graph Semantics}\label{sec:graph-semantics}

\begin{example}\label{ex:cbm}
Let us represent behaviors as graphs. We start with an example.
The following graph will be in the language of $\csSys$.
The source and the target of an edge represent the exchange of a value through some data structure.
In the example, their labeling ($a$ or $b$) is the same.
However, this is not always the case.
\begin{center}
\fbox{
\includegraphics[scale=0.4]{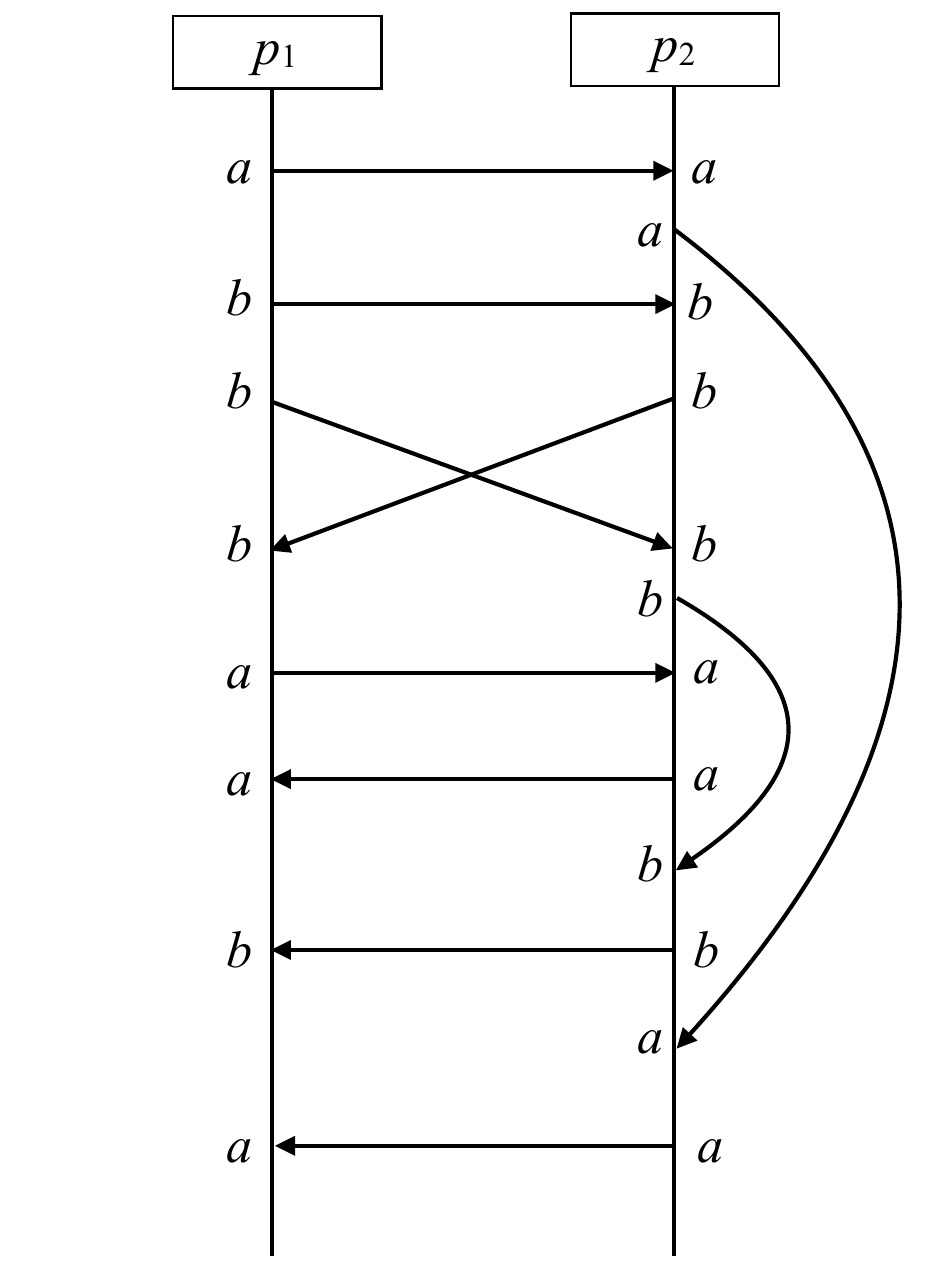}
}
\end{center}
\end{example}

% \myline

\begin{definition}\label{def:cbm}
  A \emph{concurrent behavior with matching} (\txtCBM) over $\Arch$ and $\Act$ is a tuple
  \[\mscn=((w_p)_{p\in\Procs},(\matchrel^d)_{d\in\DS})\]
\begin{itemize}
\item $w_p\in\Sigma^*$ \quad sequence of actions on process $p$\\[1ex]
\fbox{
\parbox{12cm}{
\underline{Notation:}\\[0.5ex]
$\Events_p=\{(p,i)\mid 1\leq i\leq|w_p|\}$ \quad events of process $p$\\[0.5ex]
$\Events=\bigcup_{p\in\Procs}\Events_p$\\[0.5ex]
$(p,i)\procrel(p,i+1)$ \quad if $1\leq i<|w_p|$\\[0.5ex]
for $e=(p,i)\in\Events_p$, let $\pid(e)=p$ and $\ActLabel(e) \in \Sigma$ be the $i$-th letter of $w_p$
}}
\item ${\matchrel^d}\subseteq\Events_{\writer(d)}\times\Events_{\reader(d)}$ such that:
\begin{itemize}[nosep]\itemsep=.5ex
\item if $e_1\matchrel^d e_2$
  and $e_3\matchrel^{d'} e_4$ are different edges ($d\neq d'$ or
  $(e_1,e_2)\neq(e_3,e_4)$), then they are disjoint
  ($|\{e_{1},e_{2},e_{3},e_{4}\}| = 4$)
  
\item ${<}=({\procrel}\cup{\matchrel})^+ \subseteq \Events \times \Events$ is a strict partial order\footnote{For a binary relation $R$, we let $R^\ast = \bigcup_{n \ge 0} R^n$ and $R^+ = \bigcup_{n \ge 1} R^n$.}\\[0.5ex]
where ${\matchrel}=\bigcup_{d\in\DS}{\matchrel^d}$

    \item $\forall d \in \Stacks$ (LIFO):\\[0.5ex]
    $e_{1} \matchrel^d f_{1} \textup{~and~} e_{2} \matchrel^d f_{2} \textup{~and~} e_1<e_2<f_1 ~~\Longrightarrow ~~f_2<f_1$

    \item $\forall d \in \Queues$ (FIFO):\\[0.5ex]
    $e_{1} \matchrel^d f_{1} \textup{~and~} e_{2} \matchrel^d f_{2} \textup{~and~} e_1<e_2 ~~\Longrightarrow ~~f_1<f_2$
    
\end{itemize}
\end{itemize}
We also write $\mscn=(\Events, \procrel, (\matchrel^{d})_{d\in\DS}, \pid, \ActLabel)$.

We let $\CBM(\Arch,\Act)$ be the set of \CBMs over \Arch and \Act.
\end{definition}

\underline{{\bf Run:}}

Consider a mapping $\rho:\Events\to\mathsf{Trans}=\bigcup_{p\in\Procs}\Delta_p$.

Now, $\rho$ is a \emph{run} of \Sys on $\mscn$ if the following hold:
\begin{itemize}[nosep]
  \item for all $e\in\Events$: \quad $\rho(e)\in\Delta_{\pid(e)}$ and 
  $\ActLabel(e)=\mathsf{lbl}(\rho(e))$

  \item for all $e\procrel f$: \quad $\mathsf{tgt}(\rho(e))=\mathsf{src}(\rho(f))$

  \item for all $e\matchrel^d f$: \quad $\rho(e)\in\Delta^{!}$ is a \emph{write} 
  transition, $\rho(f)\in\Delta^{?}$ is a \emph{read} transition, 
  $\mathsf{ds}(\rho(e))=d=\mathsf{ds}(\rho(f))$ and 
  $\mathsf{val}(\rho(e))=\mathsf{val}(\rho(f))$.
  
  \item initial: \quad for all $p\in\Procs$, either $\Events_p=\emptyset$ or 
  $\mathsf{src}(\rho(\min\Events_p))=\iota_p$
\end{itemize}

% \myline

\underline{{\bf Accepting:}}

A run $\rho$ is accepting if $(\ell_p)_{p \in \Procs} \in\FinLocs$ where
\[\ell_p =
\begin{cases}
\iota_p & \textup{if~} \Events_p=\emptyset\\
\mathsf{tgt}(\rho(\max\Events_p)) & \textup{otherwise}
\end{cases}\]
We let $\Lang(\Sys)$ denote the set of \CBMs accepted by \Sys.

% \myline

\begin{example}
The following figure depicts a run of $\csSys$:
\begin{center}
\fbox{
\includegraphics[scale=0.33]{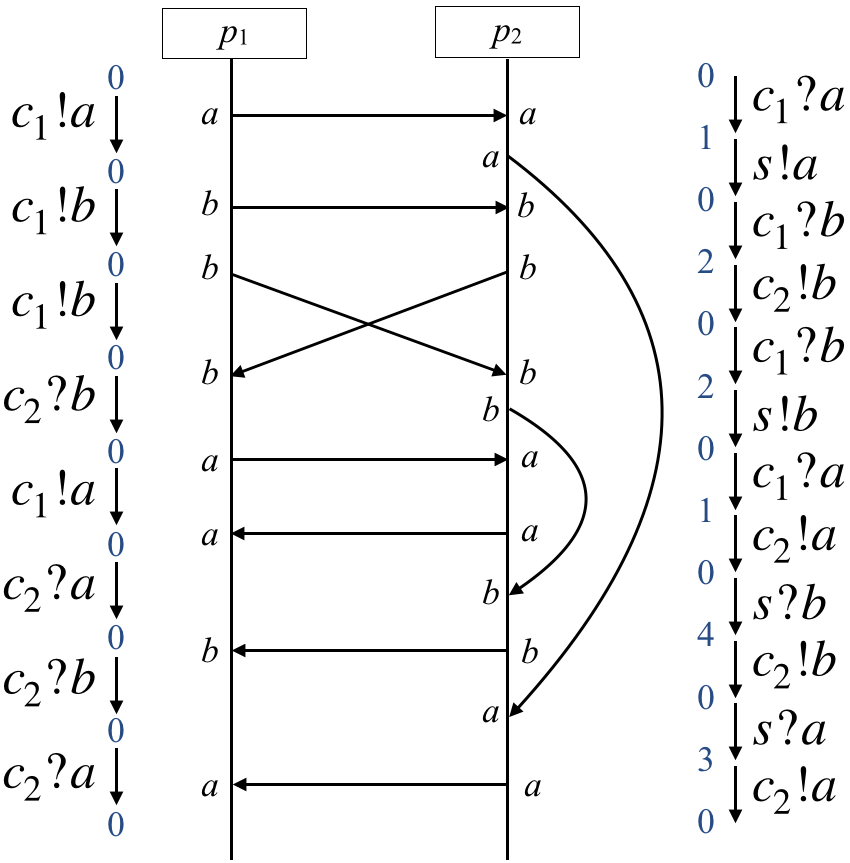}
}
\end{center}
\end{example}

\begin{exercise}
  Show that the class of \CBM languages accepted by \txtCPDS is closed under 
  union and intersection.
\end{exercise}

We will see that \txtCPDS are not closed under complement: 
Exercise~\ref{exo:cpds-complement}.

\underline{{\bf Relation between operational and graph semantics:}}

Every \txtCBM $\mscn=((w_p)_{p\in\Procs},(\matchrel^d)_{d\in\DS})$ defines a set of words over $\Gamma$.

Let $\gamma_\mscn: \Events \to \Gamma~~~~\bigl(=(\Procs\times\Act)\cup(\Procs\times\Act\times\DS\times\{!,?\})\bigr)$ ~~~~be defined by
\[\gamma_\mscn(e) =
\begin{cases}
(\pid(e),\ActLabel(e))  & \textup{if~} e \textup{~is~internal}\\
(\pid(e),\ActLabel(e),d!)  & \textup{if~} e \matchrel^d f\\
(\pid(e),\ActLabel(e),d?)  & \textup{if~} f \matchrel^d e
\end{cases}\]

A \emph{linearization} of $\mscn$ is any (strict) total order ${\sqsubset} \subseteq \Events \times \Events$ such that ${<} \subseteq {\sqsubset}$.\\
(recall that ${<}=({\procrel}\cup{\matchrel})^+$).

Suppose $\Events=\{e_1,\ldots,e_n\}$ and $e_1 \sqsubset \ldots \sqsubset e_n$.

Then, $\sqsubset$ induces the word $\gamma_\mscn(e_1) \ldots \gamma_\mscn(e_n) \in \Gamma^\ast$.

Let $\Lin(\mscn)\subseteq\Gamma^\ast$ be the set of words that are induced by the linearisations of \mscn.

\begin{remark}
\begin{itemize}
\item If $\Bags = \emptyset$, then for every $w \in \Gamma^\ast$, there is at most one $\mscn \in \CBM(\Arch,\Sigma)$ such that $w \in \Lin(\mscn)$. 

\item If $\Bags = \{d\}$, this is not the case: $(p,a,d!)(p,a,d!)(p,a,d?)(p,a,d?)$ is a linearization of two different \CBMs.
\end{itemize}
\end{remark}

\begin{example}
Let $\mscn$ be the following \txtCBM.
\begin{center}
\fbox{
\includegraphics[scale=0.4]{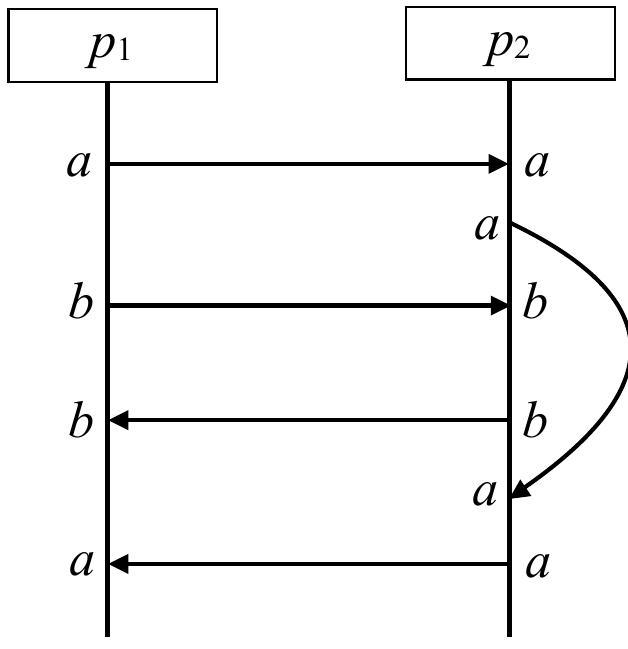}
}
\end{center}
Then, $\Lin(\mscn)$ contains:
\begin{itemize}[nosep]
\item $(p_1,a,c_1!)(p_1,b,c_1!)\textcolor{blue}{(p_2,a,c_1?)(p_2,a,s!)}$\\
$\textcolor{blue}{(p_2,b,c_1?)(p_2,b,c_2!)(p_2,a,s?)(p_2,a,c_2!)}(p_1,b,c_2?)(p_1,a,c_2?)$
\item $(p_1,a,c_1!)\textcolor{blue}{(p_2,a,c_1?)(p_2,a,s!)}(p_1,b,c_1!)$\\
$\textcolor{blue}{(p_2,b,c_1?)(p_2,b,c_2!)(p_2,a,s?)(p_2,a,c_2!)}(p_1,b,c_2?)(p_1,a,c_2?)$
\end{itemize}
Actually, $\mscn$ has $9$ linearizations.
\end{example}

\begin{mytheorem}\label{thm:semantics}
For all $\Sys \in \CPDS(\Arch,\Sigma)$, we have $\Lin(\Lang(\Sys))=\opLang(\Sys)$.
\end{mytheorem}

Without proof.

% ------------------------------------------------
\fi % \ifCPDS
% ------------------------------------------------

\renewcommand{\phi}{\varphi}
\renewcommand{\epsilon}{\varepsilon}
\newcommand{\Var}{\textup{Var}}
\newcommand{\VAR}{\textup{VAR}}
\newcommand{\Int}{\mathcal{I}}
\newcommand{\writeev}{\mathit{write}}
\newcommand{\readev}{\mathit{read}}
\newcommand{\locev}{\mathit{local}}

\newcommand{\goright}{\mathsf{go\textup{-}right}}
\newcommand{\godown}{\mathsf{go\textup{-}down}}

\newcommand{\sem}[1]{\ensuremath{[\![#1]\!]}}
\newcommand{\ftrans}[1]{\ensuremath{\widetilde{#1}}}
\newcommand{\Sem}[2]{\sem{#1}_#2}
\newcommand{\Aut}{\mathcal{A}}
\newcommand{\CL}{\texttt{CL}}

\newcommand{\transf}{\mathit{trans}}
\newcommand{\dsf}{\mathit{data}}

\newcommand{\Conc}{\cdot}

% ------------------------------------------------
\ifMSO
% ------------------------------------------------

\chapter{Monadic Second-Order Logic}

\section{Monadic Second-Order Logic}

Example: $\forall x (a(x) \Rightarrow \exists y (x \matchrel y \wedge b(y)))$

\smallskip

\underline{{\bf Syntax:}}

Let $\Var=\{x,y,\ldots\}$ be an infinite set of first-order variables.

Let $\VAR=\{X,Y,\ldots\}$ be an infinite set of second-order variables.

The set $\MSO(\Arch,\Act)=\MSO(\Sigma,\Procs,\procrel,(\matchrel^{d})_{d\in\DS})$
of formulas from \emph{monadic second-order logic} is given by the grammar:
$$
  \varphi ::= a(x) \mid p(x) \mid x=y \mid x \matchrel^d y \mid x \procrel y
  \mid x \in X \mid \varphi \lor \varphi \mid \neg \varphi 
  \mid \exists x\, \varphi\mid \exists X\, \varphi
$$
where $x,y \in \Var$, $X \in \VAR$, $a \in \Act$, $p \in \Procs$, $d \in \DS$.

The fragment
$\EMSO(\Arch,\Sigma)=\EMSO(\Sigma,\Procs,\procrel,(\matchrel^{d})_{d\in\DS})$
consists of the formulas of the form $\exists X_1 \ldots \exists X_n \phi$ where
$\phi$ is a first-order formula, i.e., it does not contain any second-order
quantification.

% \myline

\underline{{\bf Semantics:}}

Let $\mscn=((w_p)_{p\in\Procs},(\matchrel^d)_{d\in\DS})=(\Events, \procrel,
(\matchrel^{d})_{d\in\DS}, \pid, \ActLabel)$ be a \txtCBM. An
$\mscn$-interpretation is a function $\Int$ that maps every
\begin{itemize}[nosep]\itemsep=1ex
  \item $x \in \Var$ to some element of $\Events$
  \item $X \in \VAR$ to some subset of $\Events$
\end{itemize}
Satisfaction $\mscn \models_\Int \phi$ is defined inductively as follows:
\begin{itemize}[nosep]\itemsep=1ex
\item \parbox{3cm}{$\mscn \models_\Int a(x)$} if $\lambda(\Int(x)) = a$
\item \parbox{3cm}{$\mscn \models_\Int p(x)$} if $\pid(\Int(x)) = p$
\item \parbox{3cm}{$\mscn \models_\Int x = y$} if $\Int(x) = \Int(y)$
\item \parbox{3cm}{$\mscn \models_\Int x \matchrel^d y$} if $\Int(x) \matchrel^d \Int(y)$
\item \parbox{3cm}{$\mscn \models_\Int x \procrel y$} if $\Int(x) \procrel \Int(y)$
\item \parbox{3cm}{$\mscn \models_\Int x \in X$} if $\Int(x) \in \Int(X)$
\item \parbox{3cm}{$\mscn \models_\Int \phi \vee \psi$} if $\mscn \models_\Int \phi$ or $\mscn \models_\Int \psi$
\item \parbox{3cm}{$\mscn \models_\Int \neg\phi$} if $\mscn \not\models_\Int \phi$
\item \parbox{3cm}{$\mscn \models_\Int \exists x \phi$} if there is $e \in \Events$ such that $\mscn \models_{\Int[x \mapsto e]} \phi$
\item \parbox{3cm}{$\mscn \models_\Int \exists X \phi$} if there is $E \subseteq \Events$ such that $\mscn \models_{\Int[X \mapsto E]} \phi$
\end{itemize}
Here, $\Int[x \mapsto e]$ maps $x$ to $e$ and coincides with $\Int$ on $(\Var \setminus \{x\}) \cup \VAR$.

When $\phi$ is a sentence, then $\Int$ is irrelevant, and we write $\mscn \models \phi$ instead of $\mscn \models_\Int \phi$.

We let $\Lang(\phi) := \{\mscn \in \CBM(\Arch,\Sigma) \mid \mscn \models \phi\}$.

% \myline

\begin{example}
We use the following abbreviations:
\begin{itemize}[nosep]\itemsep=1ex
\item $\phi \wedge \psi ~\fequiv~ \neg(\neg\phi \vee \neg\psi)$
\qquad $\forall x \phi ~\fequiv~ \neg \exists x \neg\phi$
\qquad $\phi \Rightarrow \psi ~\fequiv~ \neg \phi \vee \psi$
\item $x \matchrel y \fequiv \bigvee_{d \in \DS} (x \matchrel^d y)$
\item $\writeev(x) \fequiv \exists y (x \matchrel y)$
\qquad $\readev(x) \fequiv \exists y (y \matchrel x)$
\item $\locev(x) \fequiv \neg\writeev(x) \wedge \neg\readev(x)$
\item $\minev(x) \fequiv \neg\exists y(y \procrel x)$ \qquad $\maxev(x) \fequiv \neg\exists y(x \procrel y)$
\item $\textup{``}\Events_p = \emptyset\,\textup{''} \,\fequiv \neg\exists x\, p(x)$
\item $x \le y \fequiv \forall X (x \in X \wedge \forall z \forall z'((z \in X \wedge (z \procrel z' \vee z \matchrel z')) \Rightarrow z' \in X) \Rightarrow y \in X)$
\item On \CBMs, the latter formula is equivalent to
$$
\exists X~[~ y\in X \wedge \forall z \in X~
(~z=x \vee \exists z' \in X~ (z' \procrel z \vee z' \matchrel z)~)~]
$$
\end{itemize}
\end{example}

% \newpage
% \myline

\begin{example} We consider some formulas for $\csSys$:
\begin{itemize}[nosep]\itemsep=1ex
\item $\phi_1 \fequiv \forall x (a(x) \Rightarrow \exists y (x \le y \wedge b(y)))$
\item $\mathit{req\textup{-}ack}(x,y) \fequiv
\left(\begin{array}{rl}
& \exists x_1,x_2 (x \matchrel^{c_1} x_1 \procrel x_2 \matchrel^{c_2} y)\\[1ex]
\vee & \exists x_1,\ldots,x_4 (x \matchrel^{c_1} x_1 \procrel x_2 \matchrel^{s} x_3 \procrel x_4 \matchrel^{c_2} y)
\end{array}\right)$
\item $\phi_2 \fequiv \forall x,y \Bigl(\mathit{req\textup{-}ack}(x,y) \Rightarrow \bigl((a(x) \wedge a(y)) \vee (b(x) \wedge b(y))\bigr)\Bigr)$
\end{itemize}
For the client-server system $\csSys$ from Example~\ref{ex:client-server}, we have $\Lang(\csSys) \not\subseteq \Lang(\phi_1)$ and $\Lang(\csSys) \subseteq \Lang(\phi_2)$.
\end{example}

\clearpage
\newcommand{\phisys}{\ensuremath{\Phi_\Sys}\xspace}

\section{Expressive Power of MSO Logic}\label{sec:exprMSO}

Recall a theorem from the sequential case:

\begin{mytheorem}[B{\"u}chi-Elgot-Trakhtenbrot \cite{Buchi60,Elgot61,Trakhtenbrot62}]~\\
Suppose $|\Procs|=1$ and $\DS=\emptyset$. Let $\Lang \subseteq \CBM(\Arch,\Sigma)$, which can be seen as a word language $L \subseteq \Sigma^\ast$.
Then, the following are equivalent:
\begin{itemize}
\item There is $\Sys \in \CPDS(\Arch,\Sigma)$ such that $\Lang(\Sys) = \Lang$.
\item There is a sentence $\phi \in \MSO(\Arch,\Sigma)$ such that $\Lang(\phi) = \Lang$.
\end{itemize}
\end{mytheorem}

The theorem also holds for $|\Procs|=1$, $|\DS| = 1$, and $\DS=\Stacks$ \cite{Alur2009}.

\bigskip

One direction is actually independent of achitecture:

\begin{mytheorem}\label{thm:CPDStoMSO}
  For every $\Sys \in \CPDS(\Arch,\Sigma)$, there is a sentence
  $\phisys\in\EMSO(\Arch,\Sigma)$ of size $\mathcal{O}(|\Sys|^{2})$ such that
  $\Lang(\phisys) = \Lang(\Sys)$.
\end{mytheorem}

\begin{proof}
  Fix $\Sys=((\Sys_p)_{p\in\Procs},\Val,\FinLocs)\in\CPDS(\Arch,\Sigma)$.  
  Recall the notations of Definition~\ref{def:cpds}.
  We define
  \[
  \phisys =
  \begin{array}[t]{rl}
    \multicolumn{2}{l}{\exists (X_t)_{t\in\mathsf{Trans}} \,\Bigl[ 
    \forall x \, \bigvee_{t\in\mathsf{Trans}} \Big( X_t(x) ~\wedge~ 
    \bigwedge_{t'\neq t} \neg X_{t'}(x) \Big)
    }
    \\[2ex]
    \wedge & \forall x \, \bigwedge_{p\in\Procs, a\in\Sigma} \bigl(p(x) \wedge a(x) \Rightarrow 
    \bigvee_{t\in\Delta_p\mid\mathsf{lbl}(t)=a} X_t(x) \bigr)
    \\[2ex]
    \wedge & \forall x,x' \,\bigl(x \procrel x' \Rightarrow 
    \bigvee_{\substack{t,t'\in\mathsf{Trans} \mid \mathsf{tgt}(t)=\mathsf{src}(t')}} X_t(x) \wedge X_{t'}(x') \bigr)
    \\[2ex]
    \wedge & \forall x,x' \, \Big( x \matchrel x' \Rightarrow 
    \bigvee_{\substack{t\in\Delta^{!},t'\in\Delta^{?} \mid \\ 
    \mathsf{val}(t)=\mathsf{val}(t') \wedge \mathsf{ds}(t)=\mathsf{ds}(t')}} 
    X_t(x) \wedge X_{t'}(x') \wedge x \matchrel^{\mathsf{ds}(t)} x' \Big)
    \\[2ex]
    \wedge & \forall x \, \Big( (\neg\exists y~y\procrel x) \Rightarrow 
    \bigvee_{\substack{p\in\Procs,t\in\Delta_p \mid\mathsf{src}(t)=\iota_p}} 
    p(x) \wedge X_t(x) \Big)
    \\[2ex]
    \wedge & 
    \bigvee_{(\ell_p)_{p\in\Procs}\in\FinLocs}
    \Bigg(\!\!
    \begin{array}{l}
      \bigwedge_{p\,\in\,\Procs \mid \ell_p\neq\iota_p}~ \exists x\, p(x)
      \\[1ex]
      {}\wedge
      \forall x\, \Big( ( \neg\exists y~x\procrel y ) \implies
      \bigvee_{\substack{p\in\Procs,t\in\Delta_p \\ \mid\mathsf{tgt}(t)=\ell_p}} 
      ~p(x)\wedge X_{t}(x) \Big)
    \end{array}\!\!\!\Bigg)\Bigr]
  \end{array}
  \]
  This completes the construction of the formula $\phisys$. 
  We have $\Lang(\phisys) = \Lang(\Sys)$.
  Note that $\phisys$ does not use $\leq$.
\end{proof}

Unfortunately, the other direction does not hold in general:

\begin{mytheorem}\label{thm:grids}
Suppose $\Sigma=\{a,b,c\}$.
Suppose that $\Arch$ is given by $\Procs=\{p_1,p_2\}$ and $\DS=\Queues=\{c_1,c_2\}$ with $\writer(c_1)=\reader(c_2)=p_1$ and $\writer(c_2)=\reader(c_1)=p_2$:
\begin{center}
{
\includegraphics[scale=0.5]{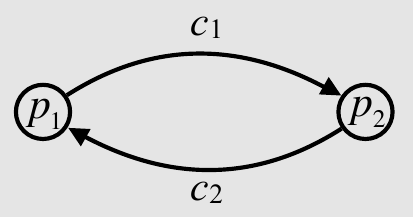}
}
\end{center}
There is a sentence $\phi \in \MSO(\Arch,\Sigma)$ such that, for all $\Sys \in \CPDS(\Arch,\Sigma)$, we have $\Lang(\Sys) \neq \Lang(\phi)$.
\end{mytheorem}

\begin{proof}
  To illustrate the proof idea, which goes back to \cite{ThoPOMIV96} we consider
  $(n\times m)$-pictures over alphabet $\Sigma=\{a,b,c\}$, i.e., maps
  $\mathsf{pict}\colon[n]\times[m]\to\Sigma$.  Here is an example of a $(3\times
  7)$-picture:
  \begin{center}
\fbox{
\scalebox{1}{
\unitlength=1.2mm
\gasset{frame=false,AHangle=30,AHlength=1.4,AHLength=1.6}
\begin{gpicture}%
  \gasset{Nw=1.5,Nh=1.5,Nframe=n,Nfill=y}%

\put(0,0){ 
    \drawccurve[fillgray=0.9](-13,-3)(-16,10)(-13,23)(0,26)(13,23)(16,10)(13,-3)(0,-6)
   }

\put(40,0){ 
    \drawccurve[fillgray=0.9](-13,-3)(-16,10)(-13,23)(0,26)(13,23)(16,10)(13,-3)(0,-6)
   }

\node(V1)(-10,20){}%
\node(V2)(-10,10){}%
\node(V3)(-10,0){}%
\node(A1)(0,20){}%
\node(A2)(0,10){}%
\node(A3)(0,0){}%
\node(B1)(10,20){}%
\node(B2)(10,10){}%
\node(B3)(10,0){}%
\node(C1)(20,20){}%
\node(C2)(20,10){}%
\node(C3)(20,0){}%
\node(D1)(30,20){}%
\node(D2)(30,10){}%
\node(D3)(30,0){}%
\node(E1)(40,20){}%
\node(E2)(40,10){}%
\node(E3)(40,0){}%
\node(N1)(50,20){}%
\node(N2)(50,10){}%
\node(N3)(50,0){}%
\drawedge(A1,A2){}%
\drawedge(A2,A3){}%
\drawedge(B1,B2){}%
\drawedge(B2,B3){}%
\drawedge(C1,C2){}%
\drawedge(C2,C3){}%
\drawedge(D1,D2){}%
\drawedge(D2,D3){}%
\drawedge(E1,E2){}%
\drawedge(E2,E3){}%
\drawedge(A1,B1){}%
\drawedge(B1,C1){}%
\drawedge(C1,D1){}%
\drawedge(D1,E1){}%
\drawedge(A2,B2){}%
\drawedge(B2,C2){}%
\drawedge(C2,D2){}%
\drawedge(D2,E2){}%
\drawedge(A3,B3){}%
\drawedge(B3,C3){}%
\drawedge(C3,D3){}%
\drawedge(D3,E3){}%

\drawedge(V1,V2){}%
\drawedge(V2,V3){}%
\drawedge(N1,N2){}%
\drawedge(N2,N3){}%

\drawedge(V1,A1){}%
\drawedge(V2,A2){}%
\drawedge(V3,A3){}%

\drawedge(E1,N1){}%
\drawedge(E2,N2){}%
\drawedge(E3,N3){}%

\gasset{Nframe=n,Nadjust=w,Nh=6,Nmr=0,Nfill=n}

\node(V1)(-12,22){$a$}%
\node(V2)(-12,12){$b$}%
\node(V3)(-12,2){$b$}%
\node(A1)(-2,22){$b$}%
\node(A2)(-2,12){$a$}%
\node(A3)(-2,2){$b$}%
\node(B1)(8,22){$b$}%
\node(B2)(8,12){$a$}%
\node(B3)(8,2){$a$}%
\node(C1)(18,22){$c$}%
\node(C2)(18,12){$c$}%
\node(C3)(18,2){$c$}%
\node(D1)(28,22){$a$}%
\node(D2)(28,12){$b$}%
\node(D3)(28,2){$b$}%
\node(E1)(38,22){$b$}%
\node(E2)(38,12){$a$}%
\node(E3)(38,2){$b$}%
\node(N1)(48,22){$b$}%
\node(N2)(48,12){$a$}%
\node(N3)(48,2){$a$}%
\end{gpicture}
}
}
\end{center}

Consider the set $P_=$ of pictures that are of the form $ACA$ where
\begin{itemize}[nosep]\itemsep=1ex
\item $A$ is a nonempty square picture with labels in $\{a,b\}$, and
\item $C$ is a $c$-labeled column.
\end{itemize}
The above picture is a member of $P_=$.

\myline

The language $P_=$ is definable by an MSO formula $\Phi_=$ over pictures using predicates:
\[\goright(x,y) \qquad\qquad \godown(x,y)\]
Here the interpretation of first-order variables $x,y$ are pixels in 
$[n]\times[m]$.

The formula $\Phi_=$ is easy to obtain once we have a ``matching'' predicate
$\mu(x,y)$ that relates two coordinates $x$ and $y$ iff they refer to identical
positions in the two different square grids: $\mathcal{I}(x)=(i,j)$ with 
$j\in[n]$ and 
$\mathcal{I}(y)=(i,j+n+1)$.

Essentially, $\mu(x,y)$ says that $x$ and $y$ are on the same line using the 
transitive closure $\godown^{*}(x,y)$ (recall that transitive closure can be 
expressed in \MSO). It is a bit more difficult to state that there are exactly 
$n$ columns between the columns of $x$ and $y$. For this we use further 
transitive closure, and in particular for the binary relation defined by
$$
\textsf{go-diag}(x,y) = \exists z~(\godown(x,z) \wedge \goright(z,y)
$$
So we define
\begin{align*}
  \mu(x,y)=\exists x_1,x_2,x_3,z ~
  & \mathsf{firstline}(x_1) \wedge \godown^{*}(x_1,x) \\
  {}\wedge{} & \mathsf{lastline}(x_2) \wedge \textsf{go-diag}^{*}(x_1,x_2) \\
  {}\wedge{} & \goright(x_2,z) \wedge \goright(z,x_3) \wedge \godown^{*}(y,x_3)
\end{align*}
where
\begin{align*}
  \mathsf{firstline}(z) & = \neg\exists z' ~\godown(z',z) \\
  \mathsf{lastline}(z) & = \neg\exists z' ~\godown(z,z')
\end{align*}
The idea for $\mu(x,y)$ is illustrated below:
\begin{center}
\fbox{
\includegraphics[scale=0.45]{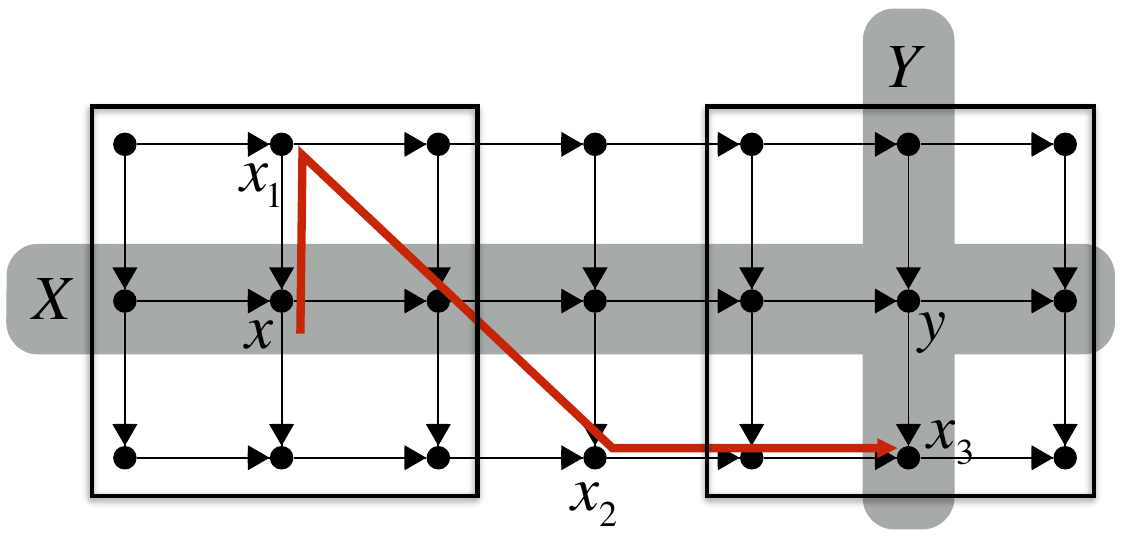}
}
\end{center}
Note that $\mu(x,y)$ can be written in \EMSO over pictures.

The formula $\Phi_=$ says that the picture is of size $n\times(2n+1)$ for some 
$n\geq1$ (again using transitive closures of $\godown$ and $\textsf{go-diag}$), 
that the middle column is labeled $c$, and
$$
\forall x,y~(\mu(x,y) \implies ((a(x)\wedge a(y))\vee(b(x)\wedge b(y)))
$$
Note that $\Phi_=$ is not in \EMSO.

\myline
\clearpage
Next, we encode pictures into \CBMs over $\Arch$ and $\Sigma$. The above picture is encoded as follows:

\begin{center}
\fbox{
  \scalebox{0.8}{
  \unitlength=1.2mm
  \gasset{frame=false,AHangle=30,AHlength=1.4,AHLength=1.6}
  \begin{gpicture}
    \gasset{AHLength=2,AHlength=1.8,AHangle=30,Nw=0,Nh=0,Nframe=n,Nfill=y}
    \unitlength=0.5mm
% -------------------------------------

\put(0,0){ 
    \drawccurve[fillgray=0.9](30,77)(0,55)(-15,50)(-30,110)(-18,170)(30,180)(80,170)(87,110)(70,99)
   }

\put(0,-110){ 
    \drawccurve[fillgray=0.9](-5,25)(-30,90)(-20,155)(0,158)(30,158)(75,149)(90,90)(85,40)
   }

% -------------------------------------

\put(0,120){ 
    \drawccurve[fillgray=0.7](-15,-5)(-20,20)(-15,45)(-9,45)(-4,20)(-9,-5)
   }

\put(0,60){ 
    \drawccurve[fillgray=0.7](-15,-5)(-20,20)(-15,45)(-9,45)(-4,20)(-9,-5)
   }

\put(0,0){ 
    \drawccurve[fillgray=0.7](-15,-5)(-20,20)(-15,45)(-9,45)(-4,20)(-9,-5)
   }

\put(0,-60){ 
    \drawccurve[fillgray=0.7](-15,-5)(-20,20)(-15,45)(-9,45)(-4,20)(-9,-5)
   }

\put(84,110){ 
    \drawccurve[fillgray=0.7](-15,-5)(-20,20)(-15,45)(-9,45)(-4,20)(-9,-5)
   }

\put(84,50){ 
    \drawccurve[fillgray=0.7](-15,-5)(-20,20)(-15,45)(-9,45)(-4,20)(-9,-5)
   }

\put(84,-10){ 
    \drawccurve[fillgray=0.7](-15,-5)(-20,20)(-15,45)(-9,45)(-4,20)(-9,-5)
   }

\node(A1)(0,160){}
\node(A2)(0,140){}
\node(A3)(0,120){}
\node(A4)(0,110){}
\node(A5)(0,100){}
\node(A6)(0,90){}
\node(A7)(0,80){}
\node(A8)(0,70){}
\node(A9)(0,60){}
\node(A10)(0,50){}
\node(A11)(0,40){}
\node(A12)(0,30){}
\node(A13)(0,20){}
\node(A14)(0,10){}
\node(A15)(0,0){}
\node(A16)(0,-10){}
\node(A17)(0,-20){}
\node(A18)(0,-30){}
\node(A19)(0,-40){}
\node(A20)(0,-50){}
\node(A21)(0,-60){}

\node(B1)(60,160){}
\node(B2)(60,150){}
\node(B3)(60,140){}
\node(B4)(60,130){}
\node(B5)(60,120){}
\node(B6)(60,110){}
\node(B7)(60,100){}
\node(B8)(60,90){}
\node(B9)(60,80){}
\node(B10)(60,70){}
\node(B11)(60,60){}
\node(B12)(60,50){}
\node(B13)(60,40){}
\node(B14)(60,20){}
\node(B15)(60,0){}
\node(B16)(60,30){}
\node(B17)(60,10){}
\node(B18)(60,-10){}
\node(B19)(60,-20){}
\node(B20)(60,-30){}
\node(B21)(60,-40){}
\node(B22)(60,-50){}
\node(B23)(60,-60){}
\node(B24)(60,-70){}

\drawedge[AHnb=0](A1,A2){}
\drawedge[AHnb=0](A2,A3){}
\drawedge[AHnb=0](A3,A4){}
\drawedge[AHnb=0](A4,A5){}
\drawedge[AHnb=0](A5,A6){}
\drawedge[AHnb=0](A6,A7){}
\drawedge[AHnb=0](A7,A8){}
\drawedge[AHnb=0](A8,A9){}
\drawedge[AHnb=0](A9,A10){}
\drawedge[AHnb=0](A10,A11){}
\drawedge[AHnb=0](A11,A12){}
\drawedge[AHnb=0](A12,A13){}
\drawedge[AHnb=0](A13,A14){}
\drawedge[AHnb=0](A14,A15){}
\drawedge[AHnb=0](A15,A16){}
\drawedge[AHnb=0](A16,A17){}
\drawedge[AHnb=0](A17,A18){}
\drawedge[AHnb=0](A18,A19){}
\drawedge[AHnb=0](A19,A20){}
\drawedge[AHnb=0](A20,A21){}

\drawedge[AHnb=0](B1,B2){}
\drawedge[AHnb=0](B2,B3){}
\drawedge[AHnb=0](B3,B4){}
\drawedge[AHnb=0](B4,B5){}
\drawedge[AHnb=0](B5,B6){}
\drawedge[AHnb=0](B6,B7){}
\drawedge[AHnb=0](B7,B8){}
\drawedge[AHnb=0](B8,B9){}
\drawedge[AHnb=0](B9,B10){}
\drawedge[AHnb=0](B10,B11){}
\drawedge[AHnb=0](B11,B12){}
\drawedge[AHnb=0](B12,B13){}
\drawedge[AHnb=0](B13,B16){}
\drawedge[AHnb=0](B16,B14){}
\drawedge[AHnb=0](B14,B17){}
\drawedge[AHnb=0](B17,B15){}
\drawedge[AHnb=0](B15,B18){}
\drawedge[AHnb=0](B18,B19){}
\drawedge[AHnb=0](B19,B20){}
\drawedge[AHnb=0](B20,B21){}
\drawedge[AHnb=0](B21,B22){}
\drawedge[AHnb=0](B22,B23){}

\drawedge(A1,B1){}
\drawedge(A2,B3){}
\drawedge(A3,B5){}
\drawedge(A5,B7){}
\drawedge(A7,B9){}
\drawedge(A9,B11){}
\drawedge(A11,B13){}
\drawedge(A13,B14){}
\drawedge(A15,B15){}

\drawedge(A17,B19){}
\drawedge(A19,B21){}
\drawedge(A21,B23){}

\drawedge(B16,A16){}
\drawedge(B17,A18){}
\drawedge(B18,A20){}

\drawedge(B2,A4){}
\drawedge(B4,A6){}

\drawedge(B6,A8){}

\drawedge(B8,A10){}
\drawedge(B10,A12){}

\drawedge(B12,A14){}

\gasset{Nframe=n,Nadjust=w,Nw=0,Nh=0,Nmr=0,Nfill=n}

\node(A)(-12,160){{\large $a$}}
\node(A)(-12,140){{\large $b$}}
\node(A)(-12,120){{\large $b$}}

\node(A)(-12,100){{\large $b$}}
\node(A)(-12,80){{\large $a$}}
\node(A)(-12,60){{\large $a$}}

\node(A)(-12,40){{\large $a$}}
\node(A)(-12,20){{\large $b$}}
\node(A)(-12,0){{\large $b$}}

\node(A)(72,150){{\large $b$}}
\node(A)(72,130){{\large $a$}}
\node(A)(72,110){{\large $b$}}

\node(A)(72,90){{\large $c$}}
\node(A)(72,70){{\large $c$}}
\node(A)(72,50){{\large $c$}}

\node(A)(72,30){{\large $b$}}
\node(A)(72,10){{\large $a$}}
\node(A)(72,-10){{\large $b$}}

\node(A)(-12,-20){{\large $b$}}
\node(A)(-12,-40){{\large $a$}}
\node(A)(-12,-60){{\large $a$}}

\end{gpicture}
}
}
\end{center}
We obtain a formula $\ftrans{\Phi_=} \in \MSO(\Arch,\Sigma)$ for the encodings of the above picture language $P_=$ inductively:
\begin{itemize}[nosep]\itemsep=1ex
\item $\ftrans{\exists x \phi} ~=~ \exists x (\writeev(x) \wedge \ftrans{\phi})$
\item $\ftrans{\goright}(x,y) ~=~ \exists z (x \matchrel z \procrel y)$
\item $\ftrans{\godown}(x,y) ~=~ \neg \mathit{bottom}(x) \wedge (x \procrel y \vee \exists z (x \procrel z \procrel y \wedge \neg \writeev(z)))$
\end{itemize}
Here, $\mathit{bottom}(x)$ says that $x$ is an element that is located on the 
last row:
\begin{align*}
  \mathsf{last}_p(x) & = p(x) \wedge \mathsf{write}(x) \wedge
  \neg\exists y~(x\procrel y) \\
  \mathsf{bottom}(x) & = \exists y~(\mathsf{last}_p(y) \wedge 
  \ftrans{\goright}^{*}(x,y)
\end{align*}
Other formulas remain unchanged.

Using Theorem~\ref{thm:CPDStoMSO}, we can moreover determine a formula $\psi_{\textup{pict}}$ that describes the encodings of (arbitrary) pictures.

Let $\phi = \psi_{\textup{pict}} \wedge \ftrans{\Phi_=}$.

\myline

Towards a contradiction, suppose that there is
$\Sys=((\Sys_p)_{p\in\Procs},\Val,\FinLocs)\in\CPDS(\Arch,\Sigma)$ such that
$\Lang(\Sys) = \Lang(\phi)$.

An accepting run of $\Sys$ has to transfer all the information it has about the upper part of the \txtCBM along the middle part of size $2n$ (where $n$ is the length of a column), to the lower part.

However, there are
\begin{itemize}
\item $2^{n^2}$ square pictures of width/height $n$, and
\item $|\Delta_q|^{2n}$-many assignments of transitions to the middle part.
\end{itemize}
Thus, for sufficiently large $n$, we can find an accepting run of $\Sys$ on a \txtCBM $\mscn$ whose upper part and lower part do not match, i.e., $\mscn \not\in \Lang(\phi)$.
\end{proof}

However, there is a fragment of \MSO that allows for a positive result (we do not present the proof).

\begin{mytheorem}[\cite{BL-tcs06}]\label{thm:EMSOtoCPDS}
  Suppose $\DS = \Queues$.  Then, for every sentence $\phi \in
  \EMSO(\Arch,\Sigma)=\EMSO(\Sigma,\Procs,\procrel,(\matchrel^{d})_{d\in\DS})$,
  there is a \txtCPDS $\Sys$ such that $\Lang(\Sys) = \Lang(\phi)$.
\end{mytheorem}

\begin{corollary}
  The formula $\ftrans{\Phi_=}$ cannot be expressed in $\EMSO(\Arch,\Sigma)$.
\end{corollary}

\begin{exercise}\label{exo:cpds-complement}
Prove that \CPDSs are, in general, not closed under complementation:
Suppose $\Sigma=\{a,b,c\}$ and assume the architecture $\Arch$ from Theorem~\ref{thm:grids}.
Show that there is $\Sys \in \CPDS(\Arch,\Sigma)$ such that, for all $\Sys' \in \CPDS(\Arch,\Sigma)$, we have $\Lang(\Sys') \neq \CBM(\Arch,\Sigma) \setminus \Lang(\Sys)$.
\end{exercise}

\begin{exercise}
Show that Theorem~\ref{thm:grids} also holds when $|\Procs|=1 $, $|\DS|=2$, and $\DS=\Stacks$.
\end{exercise}

\begin{mytheorem}[\cite{BFG-concur18}]\label{thm:EMSO+toCPDS}
  Suppose $\DS = \Queues$.  Then, for every sentence $\phi \in
  \EMSO(\Sigma,\Procs,<,(\matchrel^{d})_{d\in\DS})$, there is a \txtCPDS $\Sys$
  such that $\Lang(\Sys) = \Lang(\phi)$.
\end{mytheorem}

Notice that $x\procrel y \equiv x<y \wedge \bigvee_{p\in\Procs} p(x)\wedge p(y)
\wedge \neg\exists z~(x<z<y \wedge p(z))$
but $<$ cannot be expressed from $\procrel$ and $\matchrel$ in \EMSO.

\clearpage
\section{Satisfiability and Model Checking}

For an architecture $\Arch$ and an alphabet $\Sigma$, consider the following problems:

\begin{center}
\begin{tabular}{ll}
\toprule
{{\sc{MSO-Satisfiability}}$(\Arch,\Sigma)$}:\\
\midrule
{Instance:} & \hspace{-4em}$\phi \in \MSO(\Arch,\Sigma)$\\[0.5ex]
{Question:} & \hspace{-4em}$L(\phi) \neq \emptyset$\,?\\
\bottomrule
\end{tabular}
\end{center}

\begin{center}
\begin{tabular}{ll}
\toprule
{{\sc{MSO-ModelChecking}}$(\Arch,\Sigma)$}:\\
\midrule
{Instance:} & \hspace{-4em}$\Sys \in \CPDS(\Arch,\Sigma)$\,; $\phi \in \MSO(\Arch,\Sigma)$\\[0.5ex]
{Question:} & \hspace{-4em}$L(\Sys) \subseteq L(\phi)$\,?\\
\bottomrule
\end{tabular}
\end{center}

\bigskip

\begin{mytheorem}
Let $\Arch$ be given as follows (and $\Sigma$ be arbitrary):
\begin{center}
{
\includegraphics[scale=0.5]{simplearch.pdf}
}
\end{center}
Then, all the abovementioned problems are undecidable.
\end{mytheorem}

% ------------------------------------------------
\fi % \ifMSO
% ------------------------------------------------

\newcommand{\vDST}[1]{\mathsf{DST}_{\textup{valid}}^{#1}}
\newcommand{\DST}[1]{\mathsf{DST}^#1}
\newcommand{\stwCBM}[1]{\ensuremath{\mathsf{CBM}^{#1\textup{-}\mathsf{stw}}}}
\newcommand{\kstwCBM}{\stwCBM{k}}
\newcommand{\swCBM}[1]{\ensuremath{\mathsf{CBM}^{#1\textup{-}\mathsf{sw}}}}
\newcommand{\kswCBM}{\swCBM{k}}
\newcommand{\EB}[1]{\ensuremath{#1\textup{-}\exists\textup{B}}}
\newcommand{\eEB}{\ensuremath{\exists\textup{B}}}
\newcommand{\NTA}{\textup{{NTA}}\xspace}
\newcommand{\NTAs}{\textup{{NTAs}}\xspace}
\newcommand{\TWA}{\textup{{TWA}}\xspace}
\newcommand{\TWAs}{\textup{{TWAs}}\xspace}
\newcommand{\ATWA}{\textup{{ATWA}}\xspace}
\newcommand{\ATWAs}{\textup{{ATWAs}}\xspace}
\newcommand{\ATA}{\textup{{A2A}}\xspace}
\newcommand{\ATAs}{\textup{{A2As}}\xspace}
\newcommand{\WA}{\textup{{WA}}\xspace}
\newcommand{\WAs}{\textup{{WAs}}\xspace}
\newcommand{\eventof}[1]{#1}
\newcommand{\stay}{\mathsf{id}}
\newcommand{\type}{\mathit{type}}
\newcommand{\wafinal}{\mathsf{f}}
\newcommand{\dir}{\mathit{dir}}

\newcommand{\Context}[1]{\mathsf{Context}_{#1}}
\newcommand{\AllContext}{\mathsf{Context}}
\newcommand{\Scope}[1]{\mathsf{Scope}_{#1}}
\newcommand{\AllScope}{\mathsf{Scope}}
\newcommand{\Phase}[1]{\mathsf{Phase}_{#1}}
\newcommand{\AllPhase}{\mathsf{Phase}}

\newcommand{\ebMSCs}[1]{\CBM_{\exists{#1}}}
\newcommand{\ubMSCs}[1]{\CBM_{\forall{#1}}}
\newcommand{\allebMSCs}{\CBM_{\exists}}
\newcommand{\allubMSCs}{\CBM_{\forall}}

% ------------------------------------------------
\ifUnderapprox
% ------------------------------------------------

\chapter{Underapproximate Verification}

Recall that most verification problems such as nonemptiness, global-state reachability, and model checking are undecidable even for very simple architectures.

\section{Principles of Underapproximate Verification}

To get decidability, we will restrict decision problems to a subclass $\mathcal{C} \subseteq \CBM(\Arch,\Sigma)$ of \CBMs:

\begin{center}
\begin{tabular}{ll}
\toprule
{{\sc{MSO-Validity}}$(\Arch,\Sigma,\C)$}:\\
\midrule
{Instance:} & \hspace{-4em}$\phi \in \MSO(\Arch,\Sigma)$\\[0.5ex]
{Question:} & \hspace{-4em}$\C \subseteq L(\phi)$\,?\\
\bottomrule
\end{tabular}
\end{center}

\begin{center}
\begin{tabular}{ll}
\toprule
{{\sc{MSO-ModelChecking}}$(\Arch,\Sigma,\C)$}:\\
\midrule
{Instance:} & \hspace{-4em}$\Sys \in \CPDS(\Arch,\Sigma)$\,; $\phi \in \MSO(\Arch,\Sigma)$\\[0.5ex]
{Question:} & \hspace{-4em}$L(\Sys) \cap \C\subseteq L(\phi)$\,?\\
\bottomrule
\end{tabular}
\end{center}

For example, we may only consider the \CBMs that can be executed when the data structures have bounded capacity:

\begin{definition}
Let $k \ge 0$. A \txtCBM $\mscn$ is called $k$-\emph{existentially} bounded (\EB{k} for short) if there is a linearization $w \in \Lin(\mscn)$ such that, for every prefix $u$ of $w$, the number of unmatched writes in $u$ is at most $k$. A class $\mathcal{C} \subseteq \CBM(\Arch,\Sigma)$ is $\EB{k}$ if $\mscn$ is $\EB{k}$ for every $\mscn \in \mathcal{C}$. Finally, $\mathcal{C}$ is called $\eEB$ if it is $\EB{k}$ for some $k$.
\end{definition}

\begin{example}\label{ex:underappr}
We will give some examples:
\begin{itemize}
\item[(a)] The \txtCBM from Example~\ref{ex:cbm} is \EB{3}.
\item[(b)] The class of nested words ($|\Procs| = 1$, $|\DS| = 1$, and $\DS = \Stacks$) is not $\eEB$, as illustrated by the following figure:
\begin{center}
\fbox{
\includegraphics[scale=0.5]{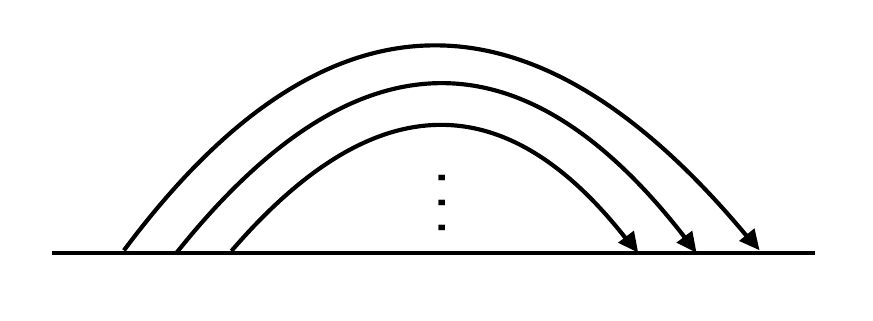}
}
\end{center}
\item[(c)] The class of MSCs ($|\Procs| \ge 2$,
$\DS = \Queues = \Procs \times \Procs \setminus \{(p,p) \mid p \in \Procs\}$,
$\writer(p,q) = p$, and $\reader(p,q) = q$) is not $\eEB$:
\begin{center}
\fbox{
\includegraphics[scale=0.5]{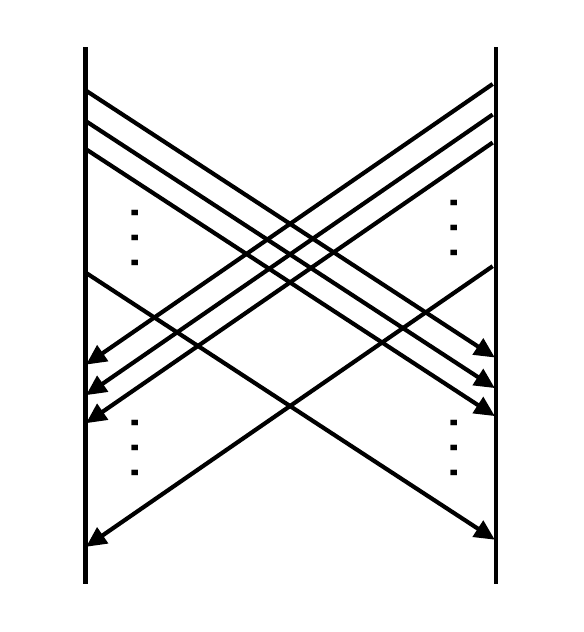}
}
\end{center}
\end{itemize}
\end{example}

\begin{exercise}
Consider the encoding of pictures as \CBMs from Section~\ref{sec:exprMSO}. Show that the encoding of a picture of height $k$ yields a \txtCBM that is $\EB{k}$.
\end{exercise}

We are looking for ``reasonable'' classes of \CBMs that are suitable for underapproximate verification.

\begin{definition}
Let $\C=(\C_k)_{k \ge 0}$ with $\C_k \subseteq \CBM(\Arch,\Sigma)$ be a family of classes of \CBMs.
Then, $\C$ is called
\begin{itemize}
\item \emph{monotone} if $\C_k \subseteq \C_{k+1}$ for all $k \ge 0$,
\item \emph{complete} if $\bigcup_{k \ge 0} \C_k = \CBM(\Arch,\Sigma)$,
\item \emph{decidable} if the usual decision problems are decidable when the domain of \CBMs is restricted to $\C_k$,
\item \emph{MSO-definable} if, for all $k \ge 0$, there is a sentence $\phi_k \in \MSO(\Arch,\Sigma)$ such that $L(\phi_k) = \C_k$, and
\item \emph{CPDS-definable} if, for all $k \ge 0$, there is a CPDS $\Sys_k \in \CPDS(\Arch,\Sigma)$ such that $L(\Sys_k) = \C_k$.
\end{itemize}
\end{definition}

Below, we first present a generic family, which is based on the notion of 
\emph{special tree-width}.

\newcommand{\phicbm}{\ensuremath{\Phi_\mathsf{cbm}}\xspace}

\section{Graph-Theoretic Approach}

In the following, we will use tools from graph theory.  Actually, the pair
$(\Arch,\Sigma)$ defines a \emph{signature} of unary ($\Procs\uplus\Sigma$) and
binary ($\{\procrel,\matchrel^{d}\mid d\in\DS\}$) relation symbols.  Thus, one can consider general
graphs over $(\Arch,\Sigma)$, with node labels from $\Sigma'=\Procs\uplus\Sigma$ and edge labels
from $\Gamma=\{\mathsf{succ}\} \cup \DS$.  Here, $\mathsf{succ}$ stands for
\emph{process successor}, and $d\in\DS$ is the labeling of an edge that connects
a write and a read event.  Those graphs that satisfy the axioms from
Definition~\ref{def:cbm} can then be considered as \CBMs.

\begin{proposition}\label{prop:mso-cbm}
  Let $\Arch$ be an architecture.  The class $\CBM(\Arch,\Sigma)$ is
  MSO-definable, i.e., there is an $\MSO(\Arch,\Sigma)$ sentence $\phicbm$ such
  that for all $(\Sigma',\Gamma)$-labeled graphs $G$ we have
  $G\models\phicbm$ iff $G\in\CBM(\Arch,\Sigma)$.
\end{proposition}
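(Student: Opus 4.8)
The plan is to let $\phicbm$ be a conjunction of $\MSO(\Arch,\Sigma)$ formulas, one per clause of Definition~\ref{def:cbm}, interpreting the binary edge symbol $\mathsf{proc}$ as $\procrel$ and each $d\in\DS$ as $\matchrel^d$, and reading the unary action/process predicates $a(\cdot)$ and $p(\cdot)$ directly off the node labelling. The first thing to isolate is that the strict order ${<}=({\procrel}\cup{\matchrel})^+$ is itself MSO-definable: this is essentially the reachability (transitive-closure) formula already exhibited in the example preceding Section~\ref{sec:exprMSO}, now saying that $y$ lies in every vertex set that contains every $({\procrel}\cup{\matchrel})$-successor of $x$ and is closed under $({\procrel}\cup{\matchrel})$ (taking successors rather than $x$ itself makes the relation strict). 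Once $x<y$ is available as a formula, all remaining clauses become first-order over the enriched vocabulary $\{{\procrel},{\matchrel^d},{<}\}$, and the whole conjunction stays in MSO.

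Concretely I would split $\phicbm$ into the following conjuncts. \textbf{(i)} Well-formed labelling: every node carries exactly one label from $\Sigma$ and exactly one process from $\Procs$. \textbf{(ii)} $\procrel$ is a process-successor relation: $x\procrel y$ implies that $x$ and $y$ share the same process; every node has at most one $\procrel$-successor and at most one $\procrel$-predecessor; and, for each $p\in\Procs$, the $p$-nodes form a single $\procrel$-chain, which I would force by requiring any two $p$-nodes to be $\procrel$-comparable (using the analogous reachability formula for $\procrel$ alone) together with the in/out-degree bounds, so as to rule out a disjoint union of several chains. \textbf{(iii)} Typing of matchings: $x\matchrel^d y$ implies that the predicate $\writer(d)$ holds at $x$ and $\reader(d)$ holds at $y$ (a finite conjunction, since $\writer$ and $\reader$ are fixed by $\Arch$), plus the disjointness axiom, i.e.\ each node is an endpoint of at most one matching edge across all $d$, which is plainly first-order. \textbf{(iv)} Partial order: $\forall x\,\neg(x<x)$; as the defined relation $<$ is automatically transitive, irreflexivity is all that is needed to make it a strict partial order. \textbf{(v)} The data-structure laws: for each $d\in\Stacks$ the LIFO implication and for each $d\in\Queues$ the FIFO implication, transcribed verbatim as first-order formulas using $\matchrel^d$ and the defined $<$; bags contribute no further conjunct.

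For correctness I would argue both inclusions by unwinding the semantics. If $G\models\phicbm$, then conjuncts (i)--(ii) let me read off words $(w_p)_{p\in\Procs}$ from the $\procrel$-chains and conjunct (iii) gives relations ${\matchrel^d}\subseteq\Events_{\writer(d)}\times\Events_{\reader(d)}$; the disjointness, the partial-order axiom, and the LIFO/FIFO laws then transfer because $x<y$ evaluates in $G$ to precisely $({\procrel}\cup{\matchrel})^+$, so each MSO condition coincides with the corresponding set-theoretic one in Definition~\ref{def:cbm}, whence $G\in\CBM(\Arch,\Sigma)$. Conversely, any $\mscn\in\CBM(\Arch,\Sigma)$ viewed as a $(\Sigma,\Gamma)$-labelled graph satisfies every conjunct by the same identification of $<$ with reachability.

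The two delicate points where I would spend the care are the following. First, the identification of the MSO-defined $<$ with the genuine transitive closure $({\procrel}\cup{\matchrel})^+$: this is the one place where second-order quantification is essential (plain first-order logic cannot express reachability), but it is handed to us by the reachability formula from the example, so the obstacle is conceptual rather than technical. Second, conjunct (ii): forcing each process to be a single chain rather than several, while ensuring no spurious $\procrel$- or $\matchrel$-edge creates a cycle. Getting the conjunction of degree bounds, comparability, and acyclicity exactly right is the main place where a careless encoding could silently admit graphs that are not \CBMs, so that is the step I would verify most scrupulously.
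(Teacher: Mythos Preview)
Your proposal is correct and follows exactly the approach the paper intends: the paper's own proof merely says that $\phicbm$ ``has to check that all conditions of Definition~\ref{def:cbm} are satisfied'' and leaves the rest as an exercise, which you have carried out in full, including the two genuinely nontrivial points (MSO-defining the transitive closure for $<$, and forcing each process to be a single $\procrel$-chain rather than a disjoint union of chains).
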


\begin{proof}
  Let $G=(V,(V_\sigma)_{\sigma\in\Sigma'},(E_\gamma)_{\gamma\in\Gamma})$ be a
  $(\Sigma',\Gamma)$-labelled graph with $V_\sigma\subseteq V$ and 
  $E_\gamma\subseteq V^{2}$.  The formula $\phicbm$ has to check that
  all conditions of Definition~\ref{def:cbm} are satisfied.
  
  This is left as an exercise.
\end{proof}

\begin{proposition}\label{prop:SAT-MC-reduction}
Fix a class $\C \subseteq \CBM(\Arch,\Sigma)$. The following problems are inter-reducible:
\begin{enumerate}
\item {{\sc{MSO-Validity}}$(\Arch,\Sigma,\C)$}
\item {{\sc{MSO-ModelChecking}}$(\Arch,\Sigma,\C)$}
\end{enumerate}
\end{proposition}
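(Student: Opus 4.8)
The plan is to prove both reductions by a single direct construction in each direction, using the universal-automaton idea one way and Theorem~\ref{thm:CPDStoMSO} the other way.

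\textbf{From Validity to Model Checking.} Given an instance $\phi$ of {\sc MSO-Validity}, I would output the Model-Checking instance $(\Sys_\top,\phi)$, where $\Sys_\top$ is a fixed ``universal'' \txtCPDS with $L(\Sys_\top)=\CBM(\Arch,\Sigma)$. Such a $\Sys_\top$ is easy to exhibit: take a single location $\Locs=\{\inLoc\}$, a single value $\Val=\{\star\}$, global final set $\FinLocs=\{(\inLoc,\ldots,\inLoc)\}$, and, for every process $p$, all internal transitions $\inLoc\xrightarrow{a}_p\inLoc$, all write transitions $\inLoc\xrightarrow{a,d!\star}_p\inLoc$ (for $\writer(d)=p$), and all read transitions $\inLoc\xrightarrow{a,d?\star}_p\inLoc$ (for $\reader(d)=p$). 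The constant map $\rho\equiv\inLoc$ is then an accepting run on every $\mscn\in\CBM(\Arch,\Sigma)$, the value-matching condition being vacuous since $\star$ is the only value ever written, so $L(\Sys_\top)=\CBM(\Arch,\Sigma)$. Hence $L(\Sys_\top)\cap\C=\C$, and the Model-Checking question $L(\Sys_\top)\cap\C\subseteq L(\phi)$ coincides verbatim with the Validity question $\C\subseteq L(\phi)$.

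\textbf{From Model Checking to Validity.} Given an instance $(\Sys,\phi)$, I would apply Theorem~\ref{thm:CPDStoMSO} to obtain a sentence $\phisys\in\EMSO(\Arch,\Sigma)$ with $L(\phisys)=L(\Sys)$, and then output the Validity instance $\psi:=\neg\phisys\vee\phi$, which is a legal $\MSO(\Arch,\Sigma)$ sentence as the grammar is closed under $\neg$ and $\vee$. For any \txtCBM $\mscn$ we have $\mscn\models\psi$ iff $\mscn\notin L(\Sys)$ or $\mscn\in L(\phi)$, i.e.\ iff $\mscn\in L(\Sys)\Rightarrow\mscn\in L(\phi)$. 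Quantifying over $\mscn\in\C$ yields $\C\subseteq L(\psi)$ iff $L(\Sys)\cap\C\subseteq L(\phi)$, as required.

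Both transformations are polynomial-time computable and keep the class $\C$ untouched, so the two problems are inter-reducible for every fixed $\C$. There is no genuine obstacle here; the only points to verify are that the auxiliary objects are admissible in their target problems. Concretely, in the first direction one must actually construct a universal \txtCPDS rather than merely invoke its existence, and check that using the single value $\star$ makes every matching edge automatically consistent so that acceptance imposes no constraint. In the second direction one must note that $\psi$ remains in full $\MSO$ even though $\phisys$ came from the weaker fragment $\EMSO$, since the reduction negates $\phisys$ and thereby leaves the existential second-order block.
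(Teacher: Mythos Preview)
Your proof is correct and follows essentially the same approach as the paper: both directions use the universal one-state \txtCPDS for Validity-to-ModelChecking and the formula $\neg\phisys\vee\phi$ (via Theorem~\ref{thm:CPDStoMSO}) for ModelChecking-to-Validity. Your write-up is in fact more detailed than the paper's, spelling out the universal automaton and the logical equivalences explicitly.
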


\begin{proof}
  For the reduction from 1.\ to 2., let $\phi \in \MSO(\Arch,\Sigma)$ be a
  sentence.  Let $\Sys_\mathsf{univ}$ be the \emph{universal} CPDS, satisfying
  $L(\Sys_\mathsf{univ}) = \CBM(\Arch,\Sigma)$.  Note that we can define
  $\Sys_\mathsf{univ}$ such that $|\Locs_p| = 1 = |\Val|$ and where we have full
  transition tables.  We have:
  \[\C \subseteq L(\phi) \textup{~~~iff~~~} L(\Sys_\mathsf{univ}) \cap \C \subseteq L(\phi)\]

For the reduction from 2.\ to 1., let $\Sys \in \CPDS(\Arch,\Sigma)$ and $\phi \in \MSO(\Arch,\Sigma)$. Let $\phi_\Sys \in \MSO(\Arch,\Sigma)$ such that $L(\phi_\Sys) = L(\Sys)$ (cf.\ Theorem~\ref{thm:CPDStoMSO}). We have:
\[
\begin{array}{rl}
& L(\Sys) \cap \C \subseteq L(\phi)\\[0.5ex]
\textup{iff~~} & L(\phi_\Sys) \cap \C \subseteq L(\phi)\\[0.5ex]
\textup{iff~~} & \C \subseteq L(\phi \vee \neg \phi_\Sys)
\end{array}
\]
\end{proof}

The decidability of the MSO theory of classes of graphs has been extensively studied (cf.\ the book by Courcelle and Engelfriet \cite{CourcelleBook}):

\begin{theorem}\label{thm:main-graph}
  Let \class be a class of bounded degree graphs which is MSO-definable. The following statements are equivalent:
  \begin{enumerate}[nosep]
    \item $\C$ has a decidable MSO theory.
  
    \item $\C$ can be interpreted in binary trees.
  
    \item $\C$ has bounded tree-width.
  
    \item $\C$ has bounded clique-width.
  \end{enumerate}
\end{theorem}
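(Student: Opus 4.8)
The four conditions are equivalent via the cycle (3)$\Rightarrow$(2)$\Rightarrow$(1)$\Rightarrow$(3) together with the separate equivalence (3)$\Leftrightarrow$(4). Throughout, the plan is to exploit the bounded-degree hypothesis in two ways. First, a graph of maximum degree $\Delta$ cannot contain $K_{\Delta+1,\Delta+1}$ as a subgraph, so every $G \in \C$ excludes some fixed complete bipartite graph. Second, over $\C$ the monadic logic with edge-set quantification ($\mathsf{MSO}_2$) has the same expressive power as the vertex-based logic $\MSO(\Arch,\Sigma)$ of this course (i.e.\ $\mathsf{MSO}_1$): since a graph of maximum degree $\Delta$ has chromatic index at most $\Delta+1$ (Vizing), its edges split into boundedly many matchings, and each restriction of an edge set to one matching is determined by, hence encodable as, a single vertex set. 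This lets $\mathsf{MSO}_1$ simulate quantification over edges, so I may freely use whichever version of MSO is convenient.

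For the constructive directions (3)$\Rightarrow$(2)$\Rightarrow$(1) I would argue as follows. If $\C$ has tree-width at most $w$, then each $G \in \C$ admits a tree decomposition of width $w$, which can be recorded as a binary tree labelled over a \emph{finite} alphabet (bags have bounded size $w+1$, so their contents and internal adjacencies range over finitely many types). The edge relation of $G$, and the vertex labels, are then recovered from the labelled tree by a fixed MSO formula, yielding an MSO interpretation of $\C$ in binary trees; this is exactly (2). For (2)$\Rightarrow$(1), an MSO interpretation translates any sentence $\phi$ about graphs in $\C$ into a sentence $\phi'$ about trees such that a tree carrying a valid decomposition satisfies $\phi'$ iff the corresponding $G \in \C$ satisfies $\phi$; decidability of the MSO theory of $\C$ thus reduces to the MSO theory of the binary tree, which is decidable by Rabin's theorem.

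The equivalence (3)$\Leftrightarrow$(4) is where bounded degree is again decisive. Bounded tree-width always implies bounded clique-width (Courcelle--Olariu), giving (3)$\Rightarrow$(4) with no hypothesis on degree. For (4)$\Rightarrow$(3) I would invoke the theorem of Gurski and Wanke: a graph of clique-width $k$ that excludes $K_{n,n}$ as a subgraph has tree-width bounded by a function of $k$ and $n$. Since every $G \in \C$ has degree at most $\Delta$ and hence excludes $K_{\Delta+1,\Delta+1}$, bounded clique-width forces bounded tree-width.

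The hard implication, and the main obstacle, is (1)$\Rightarrow$(3), which is \emph{Seese's theorem}; I would prove its contrapositive. If $\C$ has unbounded tree-width, then by the excluded grid theorem of Robertson and Seymour the graphs of $\C$ contain square grids of unbounded size as minors. One then sets up an MSO interpretation of arbitrarily large grids inside members of $\C$ --- this is where the reduction $\mathsf{MSO}_1 = \mathsf{MSO}_2$ is used, since the grid-minor model (branch sets and their connections) is most naturally expressed with edge quantification before being pushed down to the vertex logic --- and transfers the known undecidability of the MSO theory of the class of all grids, which encodes the domino/tiling problem and hence Turing-machine halting, back to $\C$. Thus unbounded tree-width yields an undecidable MSO theory, contradicting (1). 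The delicate and genuinely deep point is to make this interpretation \emph{uniform}, so that a single MSO sentence over $\C$ captures tiling on grids of every size; by comparison the other implications are routine once the tree and grid encodings are in place. A complete account is given in the book of Courcelle and Engelfriet~\cite{CourcelleBook}.
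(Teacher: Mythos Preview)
The paper does not give its own proof of this theorem: it is stated as a known result and attributed to the book of Courcelle and Engelfriet~\cite{CourcelleBook}, with no argument in the text. So there is nothing to compare your approach against except the standard literature, and your sketch is indeed the standard route: Seese's theorem (via the Robertson--Seymour grid-minor theorem and the undecidability of MSO over grids) for the hard direction $(1)\Rightarrow(3)$, the tree-decomposition encoding for $(3)\Rightarrow(2)$, Rabin/Thatcher--Wright for $(2)\Rightarrow(1)$, and Courcelle--Olariu together with Gurski--Wanke for $(3)\Leftrightarrow(4)$ under the bounded-degree hypothesis. Your use of Vizing's theorem to collapse $\mathsf{MSO}_2$ to $\mathsf{MSO}_1$ on bounded-degree graphs is also the right mechanism, and it is exactly what makes Seese's theorem (originally an $\mathsf{MSO}_2$ statement) applicable here.

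One point you gloss over is where the hypothesis that $\C$ is \emph{MSO-definable} enters. In your argument for $(2)\Rightarrow(1)$ you speak of ``a tree carrying a valid decomposition,'' but deciding $\phi$ over $\C$ requires restricting to those binary trees that actually encode a member of $\C$, not just any graph. This is precisely what MSO-definability of $\C$ buys: the defining sentence for $\C$, pushed through the interpretation, becomes an MSO sentence over trees singling out the valid encodings, and the paper makes exactly this remark immediately after the theorem statement. Without it the reduction to Rabin's theorem is incomplete.
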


For a class $\C \subseteq \CBM(\Arch,\Sigma)$ that is MSO-definable, we prove 
\emph{bounded (special) tree-width}
\begin{itemize}
\item to get decidability,
\item to get the interpretation in binary trees,
\item to reduce verification problems to problems on tree automata, and
\item to get efficient algorithms with optimal complexity.
\end{itemize}

In the theorem above, graphs are interpreted in binary trees.
We need to identify which trees are \emph{valid encodings}, i.e., do encode graphs in the class $\C$.
This is why we assumed the class of graphs to be MSO-definable.
From this, we can build a tree automaton for the valid encodings.
Actually, we can replace MSO-definability of the class $\C$ by the existence of
a tree automaton $\A_\C$ for the \emph{valid encodings} of CBMs in $\C$.
It is often better to define the tree automaton directly.
Its size has a direct impact on the decision procedures arising from the 
tree-interpretation.

\section{Graph (De)composition and Tree Interpretation}\label{sec:cographs}

% Note that Subsection~\ref{sec:cographs} is not part of this year's lecture.

We will illustrate the concept of tree interpretation by means of \emph{cographs}. Undirected and labeled cographs are generated by \emph{cograph terms}. A cograph term is built from the grammar (cograph algebra)
\[C ~::=~ a ~\mid~ C  \oplus C ~\mid~ C \otimes C\]
where $a \in \Sigma$. A term $C$ defines a cograph $\sem{C}=(V,E,\lambda)$ as follows:
\begin{itemize}
\item $\sem{a}$ is the graph $(\{1\},\emptyset,1 \mapsto a)$ with one $a$-labeled vertex and no edges

\item if $\sem{C_i} = (V_i,E_i,\lambda_i)$ for $i = 1,2$, with $V_1 \cap V_2 = \emptyset$, then
\[\sem{C_1 \oplus C_2} = (V_1 \cup V_2,E_1 \cup E_2,\lambda_1 \cup \lambda_2)\]

\item if $\sem{C_i} = (V_i,E_i,\lambda_i)$ for $i = 1,2$, with $V_1 \cap V_2 = \emptyset$, then
\[\sem{C_1 \otimes C_2} = (V_1 \cup V_2,E_1 \cup E_2 \cup \underbrace{(V_1 \times V_2) \cup (V_2 \times V_1)}_{\textup{undirected graphs}},\lambda_1 \cup \lambda_2)\]
(this is called the \emph{complete join})
\end{itemize}

\begin{example}
Consider the cograph term $C=((a \otimes a) \otimes b) \otimes (a \oplus (a \otimes b))$. The figure below shows a tree representation of $C$ as well as the cograph $\sem{C}$ defined by $C$.
\begin{center}
\fbox{
\includegraphics[scale=0.5]{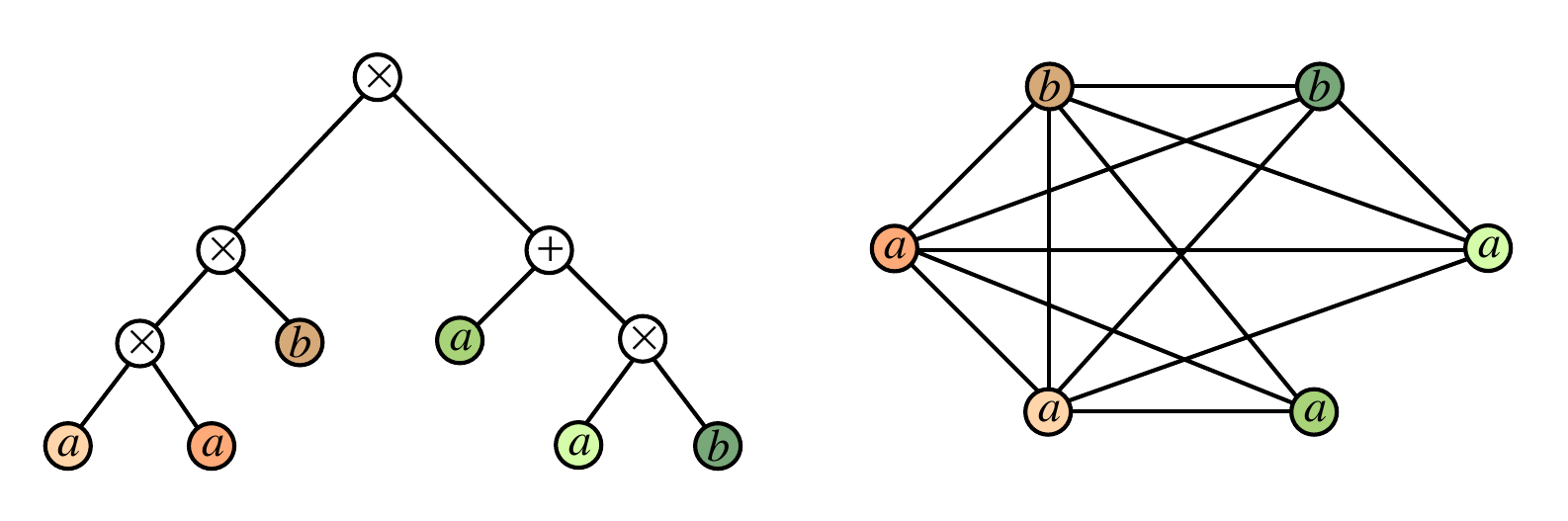}
}
\end{center}
Note that the tree representation offers a top-down decomposition of a cograph.
\end{example}

\begin{remark}
One can also include edge labelings $d$, using operators $\otimes_d$ with the expected meaning.
\end{remark}

Clearly, the set of cograph terms (considered as trees) is a regular tree language:

\begin{lemma}\label{lem:validcographs}
  The set of cograph terms is MSO-definable in binary trees, i.e. with a
  sentence $\varphi_\textup{cograph}\in
  \MSO(\Sigma\cup\{\oplus,\otimes\},\downarrow_0,\downarrow_1)$ where
  $\downarrow_0$ stands for ``left child'' and $\downarrow_1$ for ``right
  child''.  \\
  Alternatively, there is a tree automaton which accepts the set of binary 
  trees that are cograph terms.
\end{lemma}

\begin{proof}
  The set of ``valid'' binary trees is defined by the sentence
  \[
  \phi_\textup{cograph} = \forall x \left(
  \begin{array}{rl}
    & \mathit{leaf}(x) \to \bigvee_{a \in \Sigma} a(x)\\[1ex]
    \wedge & \neg \mathit{leaf}(x) \to \oplus(x) \vee \otimes(x) \\[1ex]
    \wedge & (\exists y~ x\downarrow_0 y) \leftrightarrow (\exists y~ 
    x\downarrow_1 y)
  \end{array} \right)
  \]
  where $\mathit{leaf}(x)=\neg\exists y~ x\downarrow y$ and
  $x\downarrow y = x\downarrow_0 y \vee x\downarrow_1 y$.
\end{proof}

\underline{\bf{Tree interpretation:}}

Next, we demonstrate that a cograph can be \emph{recovered} from its cograph term using MSO formulas, i.e., an \emph{MSO interpretation}. Let $C$ be a cograph term (i.e., a binary tree), and let $G = \sem{C} = (V,E,\lambda)$.
\begin{itemize}
\item The nodes in $V$ correspond to the leaves of $C$:
$\phi_\textsf{vertex}(x) = \mathit{leaf}(x)$.
\item The set $E$ of edges is defined by $\phi_\textsf{edge}(x,y)$
\[
\begin{array}{rl}
= & \textup{``}\mathit{least\textup{-}common\textup{-}ancestor}(x,y) \textup{ is labeled } \otimes\textup{''}\\[1ex]
= &
\exists z,x',y'
\left(
\begin{array}{rl}
& z \downarrow_0 x' ~\wedge~ z \downarrow_1 y'\\
\vee & z \downarrow_1 x' ~\wedge~ z \downarrow_0 y'
\end{array}
\right)
\wedge (x' \downarrow^\ast x) \wedge (y' \downarrow^\ast y) \wedge \otimes(z)
\end{array}
\]
\end{itemize}

\begin{proposition}[Backward translation]
  MSO formulas over cographs can be ``translated'' to MSO formulas over cograph terms:
  For all sentences $\phi \in \MSO(\Sigma,E)$, there is a sentence $\widetilde{\phi} \in \MSO(\Sigma \cup \{\oplus,\otimes\},\downarrow_0,\downarrow_1)$ such that, for every cograph term $C$, say with $G = \sem{C}$, we have
  \[G \models \phi \textup{~~~iff~~~} C \models \widetilde{\phi}\,.\]
\end{proposition}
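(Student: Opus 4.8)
The plan is to treat the statement as a standard \emph{backwards MSO interpretation}, reusing the two formulas $\phi_\textsf{vertex}(x)$ and $\phi_\textsf{edge}(x,y)$ constructed just above. The crucial observation is that the universe of $G = \sem{C}$ is exactly the set of leaves of $C$: a first-order variable ranging over vertices of $G$ should therefore range over leaves of $C$, and a second-order variable ranging over vertex sets over sets of leaves. Accordingly, I would define $\widetilde{\phi}$ by induction on the structure of $\phi$, relativizing every quantifier to the leaves.

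Concretely I would set $\widetilde{a(x)} = a(x)$ (leaves of $C$ carry precisely the labels from $\Sigma$), $\widetilde{E(x,y)} = \phi_\textsf{edge}(x,y)$, and leave $x = y$ and $x \in X$ unchanged; the translation commutes with $\vee$ and $\neg$; and for quantifiers I would relativize, putting $\widetilde{\exists x\,\psi} = \exists x\,(\phi_\textsf{vertex}(x) \wedge \widetilde{\psi})$ and $\widetilde{\exists X\,\psi} = \exists X\,(\forall x\,(x \in X \Rightarrow \phi_\textsf{vertex}(x)) \wedge \widetilde{\psi})$. The relativizations in the last two clauses are what force the tree-side witnesses to be leaves, i.e., vertices, and sets of leaves, i.e., vertex sets.

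Before running the induction I would isolate the semantic fact underlying $\phi_\textsf{edge}$: for any two distinct leaves $x,y$ of $C$, the pair $(x,y)$ is an edge of $\sem{C}$ iff the least common ancestor of $x$ and $y$ in $C$ is labeled $\otimes$. This is proved by induction on the cograph term, using that $C_1 \oplus C_2$ adds no edge between $V_1$ and $V_2$ whereas $C_1 \otimes C_2$ adds all of them, and that edges internal to a subterm are governed by a deeper common ancestor (handled by the induction hypothesis). Since $C$ is a binary tree, the node $z$ quantified in $\phi_\textsf{edge}$, namely a common ancestor placing $x$ and $y$ into distinct child-subtrees, is necessarily the least common ancestor; hence $\phi_\textsf{edge}$ correctly defines the (symmetric) edge relation $E$.

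With this in hand I would prove, by induction on $\phi$, the statement in a form strong enough to pass through subformulas: for every cograph term $C$ with $G = \sem{C}$ and every interpretation $\Int$ mapping each free first-order variable to a leaf of $C$ and each free second-order variable to a set of leaves, $G \models_\Int \phi$ iff $C \models_\Int \widetilde{\phi}$. The base cases are immediate from the label convention, the edge lemma, and the fact that equality and membership are untouched; the boolean cases are trivial; and in the quantifier cases the relativization makes the tree-side witness an element, resp.\ subset, of $V$, so the quantifier ranges match and the induction hypothesis closes the case. Specializing to a sentence $\phi$ yields the proposition, since then $\Int$ is irrelevant. I expect the only real work to be the edge lemma and the verification that $\phi_\textsf{edge}$ captures ``least common ancestor labeled $\otimes$''; the rest is bookkeeping about relativization, relying on $\mathit{leaf}$ and $\downarrow^\ast$ being MSO-definable over binary trees.
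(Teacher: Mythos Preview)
Your proposal is correct and follows essentially the same approach as the paper: the same inductive translation (atomic cases unchanged, $E$ replaced by $\phi_\textsf{edge}$, quantifiers relativized to leaves via $\phi_\textsf{vertex}$) and the same strengthened induction hypothesis over interpretations into leaves. If anything, you are slightly more thorough in explicitly justifying the edge lemma, which the paper simply takes as established from the preceding discussion.
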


\begin{proof}
We proceed by induction on $\phi$:
\begin{itemize}
\item $\widetilde{a(x)} = a(x)$
\item $\widetilde{x E y} = \phi_\mathsf{edge}(x,y)$
\item $\widetilde{x \in X} = x \in X$
\item $\widetilde{x = y} = (x = y)$
\item $\widetilde{\neg\phi} = \neg\,\widetilde{\phi}$
\item $\widetilde{\phi_1 \vee \phi_2} = \widetilde{\phi_1} \vee \widetilde{\phi_2}$
\item $\widetilde{\exists x\phi} = \exists x(\phi_\textup{vertex}(x) \wedge \widetilde{\phi})$
\item $\widetilde{\exists X\phi} = \exists X((\forall x (x \in X \to \phi_\textup{vertex}(x))) \wedge \widetilde{\phi})$
\end{itemize}
Note that, in the correctness proof, we have to deal with free variables. Actually, the inductive statement is as follows: For all $\phi \in \MSO(\Sigma,E)$, there is $\widetilde{\phi} \in \MSO(\Sigma \cup \{\oplus,\otimes\},\downarrow_0,\downarrow_1)$ such that, for every cograph term $C$ and every interpretation $\mathcal{I}$ of $\Var \cup \VAR$ in $\mathit{Leaves}(C) = \mathit{Vertices}(G = \sem{C})$, we have $G \models_\mathcal{I} \phi$ iff $C \models_\mathcal{I} \widetilde{\phi}$.
\end{proof}

\begin{corollary}
The MSO theory of cographs is decidable.
\end{corollary}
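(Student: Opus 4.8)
The plan is to reduce the MSO theory of cographs to the (decidable) MSO theory of finite binary trees, using the two results immediately preceding. Recall that the semantics map $C \mapsto \sem{C}$ sending a cograph term to the cograph it defines is \emph{surjective}: every cograph arises as $\sem{C}$ for some cograph term $C$ (this is precisely what it means for cographs to be the graphs generated by the cograph algebra). Reading ``the MSO theory of cographs is decidable'' as the decidability of MSO-satisfiability over cographs — deciding, given $\phi$, whether some cograph satisfies $\phi$ — I note that validity reduces to this by negation, so it suffices to treat satisfiability.

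First I would take an arbitrary sentence $\phi \in \MSO(\Sigma,E)$ and apply the preceding Proposition to compute $\widetilde{\phi} \in \MSO(\Sigma \cup \{\oplus,\otimes\},\downarrow_0,\downarrow_1)$, which satisfies $\sem{C} \models \phi$ iff $C \models \widetilde{\phi}$ for every cograph term $C$. Combined with surjectivity of $\sem{\cdot}$, this yields the equivalence
\[
\text{``$\phi$ is satisfied by some cograph''} \quad\Longleftrightarrow\quad \text{``$\widetilde{\phi}$ is satisfied by some cograph term $C$''.}
\]

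Next I would invoke Lemma~\ref{lem:validcographs}: the set of cograph terms, viewed as finite binary trees, is MSO-definable by the sentence $\phi_\textup{cograph}$ (equivalently, recognized by a tree automaton). Hence the right-hand side above holds iff the sentence $\phi_\textup{cograph} \wedge \widetilde{\phi}$ is satisfiable over \emph{arbitrary} finite binary trees. Since satisfiability of MSO sentences over finite binary trees is decidable — by the classical equivalence between MSO logic and finite tree automata (Thatcher--Wright, Doner): translate $\phi_\textup{cograph} \wedge \widetilde{\phi}$ into a tree automaton and test it for emptiness — we obtain a decision procedure for satisfiability over cographs, and thus for the whole MSO theory.

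The construction is routine given the two preceding results, so there is no serious obstacle; the only points requiring care are (i) recording the surjectivity of $\sem{\cdot}$, which is what allows satisfiability over cographs to be replaced by satisfiability over cograph terms, and (ii) observing that the reduction is \emph{effective} — both $\widetilde{\phi}$ and $\phi_\textup{cograph}$ are computable from $\phi$ — so that the black-box decidability of MSO over finite binary trees can actually be applied. The genuinely deep ingredient, the decidability of MSO over trees, is imported as a known theorem rather than reproved.
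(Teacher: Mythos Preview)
Your proposal is correct and follows essentially the same approach as the paper: reduce to MSO over finite binary trees via the interpretation $\widetilde{\phi}$ and the definability of cograph terms by $\phi_\textup{cograph}$, then invoke Thatcher--Wright. The paper phrases the reduction primarily through validity ($\phi_\textup{cograph}\to\widetilde{\phi}$ valid on trees) and mentions satisfiability as an aside, whereas you go the other way round, but this is a cosmetic difference; your explicit mention of surjectivity of $\sem{\cdot}$ is a point the paper leaves implicit.
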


\begin{proof}
  Let $\phi \in \MSO(\Sigma,E)$.  Then, $\phi$ is valid on cographs iff
  $\phi_\textup{cograph} \to \widetilde{\phi}$ is valid on binary trees (cf.\
  Lemma~\ref{lem:validcographs}).  Note that, moreover, $\phi$ is satisfiable on
  cographs iff $\phi_\textup{cograph} \wedge \widetilde{\phi}$ is satisfiable on
  binary trees.  The corollary follows, since MSO validity (satisfiability) is
  decidable on binary trees.  Indeed, the problem can be reduced to
  tree-automata emptiness, see Thatcher and Wright \cite{ThaWri68}.
\end{proof}

\begin{gpicture}[ignore, name = cbm-stt]
  \gasset{Nw=1.5, Nh = 1.5, Nfill = n, Nframe = n}
  \node(p)(-6,11.5){$p$}
  \node(q)(-6,-0.5){$q$}
  \gasset{Nfill=y,NLangle=90,NLdist=2.5}
  
  \node(p3)(0,12){}\nodelabel(p3){$b$}\nodelabel[NLangle=-90](p3){$2$}
  \node(p4)(14,12){}\nodelabel(p4){$c$}
  \node(p5)(28,12){}\nodelabel(p5){$d$}\nodelabel[NLangle=-90](p5){$3$}
  \node(q2)(0,0){}\nodelabel(q2){$a$}\nodelabel[NLangle=-90](q2){$0$}
  \drawedge(p3,p4){}\drawedge(p4,p5){}
  \drawedge[ELside=r](q2,p4){$c_2$}
  \drawedge[curvedepth = 5](p3,p5){$s$}
\end{gpicture}

\newpage
\section{Special tree-width} 
\label{sec:stt}

\newcommand{\atomicSTT}{\textsf{atomicSTT}\xspace}
\newcommand{\STT}{\textsf{STT}\xspace}
\newcommand{\STTs}{\textsf{STTs}\xspace}
\newcommand{\kSTT}{$k$-\textsf{STT}\xspace}
\newcommand{\kSTTs}{$k$-\textsf{STTs}\xspace}
\newcommand{\STW}{\textsf{STW}\xspace}
\newcommand{\STWs}{\textsf{STWs}\xspace}
\newcommand{\stt}{\tau}
\newcommand{\add}[2]{\mathop{\textsf{Add}_{#1}^{#2}}}
\newcommand{\forget}[1]{\mathop{\textsf{Forget}_#1}}
\newcommand{\phiksttvalid}{\ensuremath{\Phi_\mathsf{valid}^{k\text{-}\mathsf{stt}}}\xspace}

\newcommand{\rename}[2]{\mathop{\textsf{Rename}_{#1,#2}}}
\newcommand{\sttunion}{\oplus}
\newcommand{\Nat}{\mathbb{N}}
\newcommand{\dom}{\textsf{dom}}

\newcommand{\da}{{\downarrow}}
\newcommand{\ua}{{\uparrow}}

In this section, we 
introduce special tree terms (\STTs) and their semantics as labeled graphs.
A special tree term using at most $k+1$ colors (\kSTT) defines a graph of special 
tree-width at most $k$. Special tree-width is similar to tree-width. See 
\cite{Courcelle10} for more details on special tree-width and tree-width.
We also give an \MSO interpretation of the graph $G_\tau$ defined by a special 
tree term $\tau$ in the binary tree associated with $\tau$.

A $(\Sigma,\Gamma)$-labeled graph is a tuple
$G=(V,(V_a)_{a\in\Sigma},(E_\gamma)_{\gamma\in\Gamma})$ where $V_a\subseteq V$ is
the set of vertices labeled $a$ and $E_\gamma \subseteq V^2$ is the set of edges for each
label $\gamma\in\Gamma$.

\subparagraph{Special tree terms} form an algebra to define labeled graphs.  The
syntax of \kSTTs over $(\Sigma,\Gamma)$ is given by
$$
\stt ::= i \mid \add{i}{a} \stt \mid \add{i,j}{\gamma} \stt \mid \forget{i} \stt \mid
\rename{i}{j} \stt \mid \stt \sttunion \stt 
$$
where $a \in \Sigma$, $\gamma \in \Gamma$ and $i,j\in[k]=\{0,1,\ldots,k\}$ are
colors.  

Each \kSTT represents a colored graph $\sem\stt=(G_\tau,\chi_\tau)$ where
$G_\tau$ is a $(\Sigma,\Gamma)$-labeled graph
and $\chi_\tau\colon [k]\to V$ is a partial injective function coloring some vertices of
$G_\tau$.
\begin{itemize}
  \item $\sem{i}$ consists of a single vertex with color $i$.

  \item $\add{i}{a}$ adds label $a$ to the vertex colored $i$, if it exists.

  Formally, if $\sem{\tau}=(V,(V_a)_{a\in\Sigma},(E_\gamma)_{\gamma\in\Gamma},\chi)$ then
  $\sem{\add{i}{b}\tau}=(V,(V'_a)_{a\in\Sigma},(E_\gamma)_{\gamma\in\Gamma},\chi)$ 
  with $V'_a=V_a$ if $a\neq b$ and
  $V'_b= \begin{cases}
    V_b & \text{if } \{i\}\not\subseteq\dom(\chi) \\
    V_b\cup\{\chi(i)\} & \text{otherwise.}
  \end{cases}$
  
  \item $\add{i,j}{\gamma}$ adds a $\gamma$-labeled edge to the vertices
  colored $i$ and $j$ (if such vertices exist).
  
  Formally, if $\sem{\tau}=(V,(V_a)_{a\in\Sigma},(E_\gamma)_{\gamma\in\Gamma},\chi)$ then
  $\sem{\add{i,j}{\alpha}\tau}=(V,(V_a)_{a\in\Sigma},(E'_\gamma)_{\gamma\in\Gamma},\chi)$ 
  with $E'_\gamma=E_\gamma$ if $\gamma\neq\alpha$ and
  $E'_\alpha= \begin{cases}
    E_\alpha & \text{if } \{i,j\}\not\subseteq\dom(\chi) \\
    E_\alpha\cup\{(\chi(i),\chi(j))\} & \text{otherwise.}
  \end{cases}$
  
  \item $\forget{i}$ removes color $i$ from the domain of the color map.
  
  Formally, if $\sem{\tau}=(V,(V_a)_{a\in\Sigma},(E_\gamma)_{\gamma\in\Gamma},\chi)$ then
  $\sem{\forget{i}\tau}=(V,(V_a)_{a\in\Sigma},(E_\gamma)_{\gamma\in\Gamma},\chi')$ 
  with $\dom(\chi')=\dom(\chi)\setminus\{i\}$ and $\chi'(j)=\chi(j)$ for all 
  $j\in\dom(\chi')$.
  
  \item $\rename{i}{j}$ exchanges the colors $i$ and $j$.
  
  Formally, if $\sem{\tau}=(V,(V_a)_{a\in\Sigma},(E_\gamma)_{\gamma\in\Gamma},\chi)$ then
  $\sem{\rename{i}{j}\tau}=(V,(V_a)_{a\in\Sigma},(E_\gamma)_{\gamma\in\Gamma},\chi')$ 
  with $\chi'(\ell)=\chi(\ell)$ if $\ell\in\dom(\chi)\setminus\{i,j\}$, 
  $\chi'(i)=\chi(j)$ if $j\in\dom(\chi)$ and $\chi'(j)=\chi(i)$ if 
  $i\in\dom(\chi)$.
  
  \item Finally, $\sttunion$ constructs the disjoint union of the two graphs
  provided they use different colors.  This operation is undefined otherwise.
  
  Formally, if $\sem{\tau_i}=(G_i,\chi_i)$ for $i=1,2$ and 
  $\dom(\chi_1)\cap\dom(\chi_2)=\emptyset$ then 
  $\sem{\tau_1\sttunion\tau_2}=(G_1\uplus G_2,\chi_1\uplus\chi_2)$.
  Otherwise, $\tau_1\sttunion\tau_2$ is not a valid \STT.
\end{itemize}
The special tree-width of a graph $G$ is the least $k$ such that 
$G=G_\tau$ for some $\kSTT$ $\stt$.

\begin{example}
  For \CBMs, we have process edges and data edges, so we take
  $\Gamma=\{\procrel\}\cup\DS$.  Also, vertices of \CBMs are labeled with a letter
  from $\Sigma$ and a process from $\Procs$.  Hence the labels of vertices are
  from $\Sigma'=\Sigma\uplus\Procs$.
  For $i\in[k]$, $a\in\Sigma$ and $p\in\Procs$, we simply write
  $(i,a,p)=\add{i}{p}~ \add{i}{a}~ i$.

  Consider the following 3-\STT: 
  $$
  \stt=\forget{1} \add{1,3}{\procrel} \add{2,1}{\procrel}
  (\add{0,1}{c_2}((0,a,q)\sttunion(1,c,p))
  \sttunion
  \add{2,3}{s}((2,b,p)\sttunion(3,d,p)))
  $$
  It is depicted below (left) as a tree and the graph $G_\stt$ is 
  given on the right.
\end{example}

\begin{minipage}{80mm}
  \Tree
    [.$\forget{1}$ [.$\add{1,3}{\procrel}$ [.$\add{2,1}{\procrel}$
       [.$\sttunion$
        [.$\add{0,1}{c_2}$ [.$\sttunion$ [.$(0,a,q)$ ] [.$(1,c,p)$ ] ] ]
        [.$\add{2,3}{s}$   [.$\sttunion$ [.$(2,b,p)$ ] [.$(3,d,p)$ ] ] ]
       ]
    ] ] ]
\end{minipage}
\hfil
\gusepicture{cbm-stt}

\begin{definition}
  Let $\kstwCBM(\Arch,\Sigma)$ denote the set of CBMs with special tree-width
  bounded by $k$.
\end{definition}

\begin{exercise}
  Prove that trees have special tree-width (at most) 1.
\end{exercise}

\begin{exercise}
  Give a 3-\STT for the following graph:
  \raisebox{-2mm}{\includegraphics[scale=0.4, page=4]{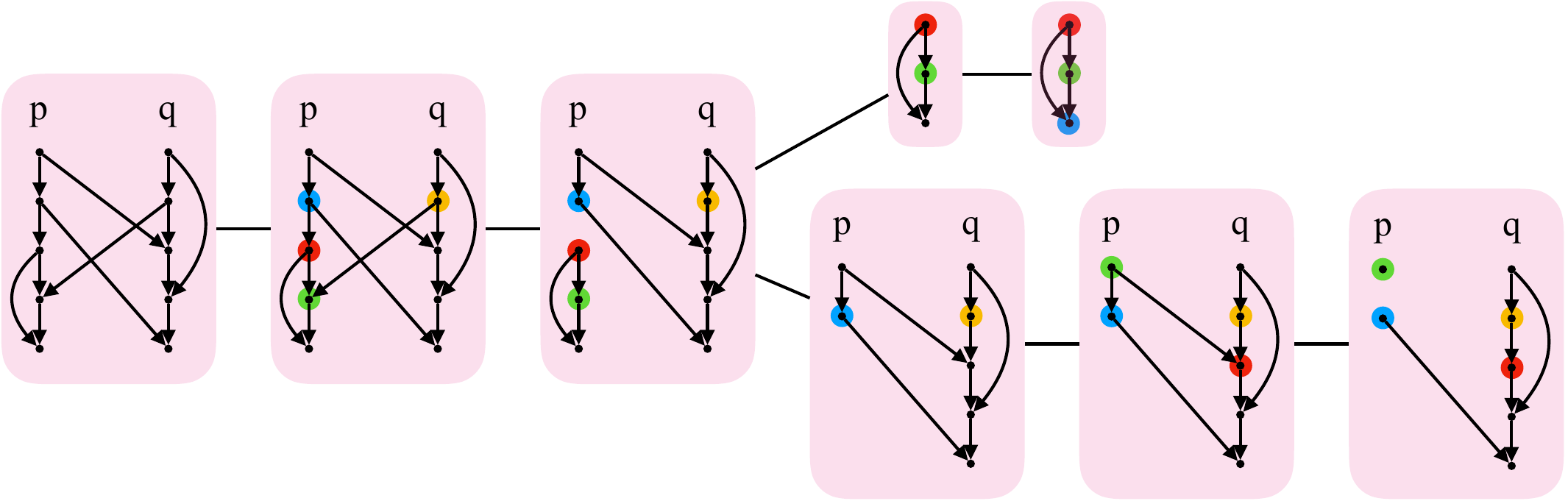}}
\end{exercise}

\begin{exercise}
  Show that the rename operation is redundant. More precisely, show that for 
  every \kSTT $\stt$ (possibly using the rename operation) we can construct a 
  \kSTT $\stt'$ which does not use the rename operation and such that 
  $\sem{\tau}=\sem{\tau'}$, i.e., $G_\stt=G_{\stt'}$ and
  $\chi_\tau=\chi_{\tau'}$.
\end{exercise}

\clearpage
\section{Decomposition game for special tree-width}
\label{sec:game-stw}

\paragraph{The decomposition game for special tree-width} is a two player turn
based game $\mathsf{Arena}(\Sigma,\Gamma)=(\mathsf{Pos}_\exists\uplus
\mathsf{Pos}_\forall,\mathsf{Moves})$.  Eve's set of positions
$\mathsf{Pos}_\exists$ consists of marked (colored) graphs $(G,U)$ where
$G=(V,(V_a)_{a\in\Sigma},(E_\gamma)_{\gamma\in\Gamma})$ is a $(\Sigma,\Gamma)$-labeled
graph and $U\subseteq V$ is the subset of marked vertices.  Adam's set of
positions $\mathsf{Pos}_\forall$ consists of pairs of marked graphs.
The edges $\mathsf{Moves}$ of $\mathsf{Arena}$ reflect the moves of the players.
Eve's moves from $(G,U)$ consist in 
\begin{enumerate}[nosep]
  \item  marking some vertices of the graph resulting in $(G,U')$ with 
  $U\subseteq U'\subseteq V$,

  \item  removing edges whose endpoints are marked, resulting in $(G',U)$ with 
  $E'_\gamma\subseteq E_\gamma\subseteq E'_\gamma \cup U\times U$ for all 
  $\gamma\in\Gamma$,

  \item dividing $(G,U)$ in $(G_1,U_1)$ and $(G_2,U_2)$ such that $G$ is the
  disjoint union of $G_1$ and $G_2$ (in particular $V_1\cap V_2=\emptyset$ and
  $V=V_1\cup V_2$) and marked nodes are inherited ($U_1=U\cap V_1$ and
  $U_2=U\cap V_2$).
\end{enumerate}
Adam's moves amount to choosing one of the two marked graphs.
Terminal positions of the game are graphs where all vertices are marked: $U=V$. 
Neither Eve nor Adam can move from terminal positions which are winning for Eve.

A (maximal) play is a path in $\mathsf{Arena}$ starting from some marked graph
$(G,U)$ and leading to a terminal position.  The cost of the play is the maximum
number of marked vertices in the positions of the path.  Eve's objective is to
minimize the cost and Adam's objective is to maximize the cost.

A (positional) strategy for Eve starting from a marked graph $(G,U)$ is
$k$-winning if all plays starting from $(G,U)$ and following the strategy
have cost at most $k+1$.  

\begin{theorem}
  The special tree-width of a graph $G$ is the least $k$ such that Eve has a
  $k$-winning strategy starting from $(G,\emptyset)$ (initially $G$ is
  unmarked).
\end{theorem}

\begin{exercise}
  Prove that a $k$-winning strategy for Eve starting from $(G,U)$ can be
  described with a valid \kSTT $\tau$ with $\sem{\tau}=(G,\chi)$ and
  $\dom(\chi)=U$.
\end{exercise}

\includegraphics[scale=0.4, page=1]{stw-export.pdf}

\begin{example}
  The CBM on the left of the Figure above has \STW at most 3.  The beginning of
  a 3-winning strategy for Eve is depicted as a tree.  She starts by marking
  four nodes, then she removes two edges and the resulting graph is
  disconnected.  The component with three vertices can easily be made terminal
  by marking the last node.  On the second component (bottom branch), Eve marks
  two more vertices and removes two edges so that the green node is
  disconnected.  On the remaining component, she marks the last node and removes
  one edge so that the blue node is disconnected.  Then she marks the second to
  last node and removes one edge disconnecting the last node.  Finally, she
  marks the first node reaching a terminal position.
\end{example}

\hfil\includegraphics[scale=0.6, page=2]{stw-export.pdf}

\begin{example}\label{ex:stw-nw}
  Nested words have special tree-width bounded by 3.

  We describe a $3$-winning strategy for Eve. First she marks the 
  first and last point of the word. Then she repeats the following steps until 
  reaching a terminal position:
  \begin{enumerate}[nosep]
    \item If the first point is not a push (the source of a $\matchrel$) then 
    she marks the second point and she removes the first linear edge. 
    The graph is disconnected. One component is terminal. The other one is a 
    nested word with the endpoints marked.
  
    \item If there is a $\matchrel$-edge from the first point to the last 
    point, then Eve removes this edge and continues as above.
  
    \item If there is a $\matchrel$-edge from the first point, call it $e$, to
    some middle point, call it $f$, then Eve marks $f$ and its linear successor
    $g$.  Then she removes the matching edge $e\matchrel f$ and the successor
    edge $f\procrel g$.  The resulting graph is disconnected.  Both connected
    components are nested words.  Each one has its endpoints marked.
  \end{enumerate}
\end{example}

\includegraphics[scale=0.6, page=3]{stw-export.pdf}

\begin{example}\label{ex:stw-mnw-bounded-context}
  Multiply nested words with at most $k\geq2$ contexts have special tree-width
  bounded by $2k-1$.
  
  Eve marks each endpoint of each context.  Doing so, she marks at most $2k$
  vertices.  Then, she removes the successor edges between contexts.  The graph
  is disconnected.  Each connected component is a (simply) nested word with at
  most $\ell$ contexts with $2\ell-1\leq k$, hence, at most $k+1$ marked 
  vertices.
  Using two extra marks, Eve applies on each connected component the strategy
  described in Example~\ref{ex:stw-nw}.
\end{example}

\begin{example}\label{ex:stw-exist-bounded}
  Let $n,k\geq1$.  \CBMs over $n$ processes and which are $k$-existentially
  bounded have special tree-width bounded by $k+n$.
  
  Let $\mscn$ be a \CBM and let $e_1,e_2,\ldots,e_m$ be a $k$-bounded 
  linearisation of the events in $\mscn$. Notice that for all $1\leq\ell\leq m$ 
  we have
  \begin{align*}
    |\{(i,j)\mid i\leq\ell<j \wedge e_i\matchrel e_j \}| & \leq k \\
    |\{(i,j)\mid i\leq\ell<j \wedge e_i\procrel e_j \}| & \leq n
  \end{align*}
  Initially, Eve's strategy is to mark the first $r=n+k$ vertices and to remove
  all edges between the marked vertices.  Write $(G_r,U_r)$ for the resulting
  marked graph.  Notice that $|U_r|=n+k$, $U_r=G_r\cap\{e_1,\ldots,e_r\}$ and
  there are no edges between the vertices in $U_r$.  This will be an invariant
  of the construction.  Now, while $r<m$, Eve's strategy proceeds as follows:
  \begin{enumerate}
    \item she marks vertex $e_{r+1}$ and remove all edges between $U_r$ and 
    $e_{r+1}$. 
    In the resulting marked graph $(G'_{r},U'_{r})$, at least one vertex 
    $e\in U'_{r}=U_r\cup\{e_{r+1}\}$ is isolated.
  
    \item she divides $(G'_{r},U'_r)$ in the isolated vertex $e$ and the rest
    $(G_{r+1},U_{r+1})$.  To avoid losing immediately, Adam must choose
    $(G_{r+1},U_{r+1})$ and this marked graph satisfies the invariant.
  \end{enumerate}
\end{example}

% \newpage
\section{Main Results} 

Consider the following problems:

\begin{center}
\begin{tabular}{ll}
\toprule
{{\sc{stw-Nonemptiness}}$(\Arch,\Sigma)$}:\\
\midrule
{Instance:} & \hspace{-4em}$\Sys \in \CPDS(\Arch,\Sigma)$\,; $k>0$\\[0.5ex]
{Question:} & \hspace{-4em}$L(\Sys) \cap \kstwCBM \neq \emptyset$\,?\\
\bottomrule
\end{tabular}
\end{center}

\begin{center}
\begin{tabular}{ll}
\toprule
{{\sc{stw-Inclusion}}$(\Arch,\Sigma)$}:\\
\midrule
{Instance:} & \hspace{-4em}$\Sys,\Sys' \in \CPDS(\Arch,\Sigma)$\,; $k>0$\\[0.5ex]
{Question:} & \hspace{-4em}$L(\Sys) \cap \kstwCBM \subseteq L(\Sys')$\,?\\
\bottomrule
\end{tabular}
\end{center}

\begin{center}
\begin{tabular}{ll}
\toprule
{{\sc{stw-Universality}}$(\Arch,\Sigma)$}:\\
\midrule
{Instance:} & \hspace{-4em}$\Sys \in \CPDS(\Arch,\Sigma)$\,; $k>0$\\[0.5ex]
{Question:} & \hspace{-4em}$\kstwCBM \subseteq L(\Sys)$\,?\\
\bottomrule
\end{tabular}
\end{center}

\begin{center}
\begin{tabular}{ll}
\toprule
{{\sc{stw-Satisfiability}}$(\Arch,\Sigma)$}:\\
\midrule
{Instance:} & \hspace{-4em}$\phi \in \MSO(\Arch,\Sigma)$; $k>0$\\[0.5ex]
{Question:} & \hspace{-4em}$L(\phi) \cap \kstwCBM \neq \emptyset$\,?\\
\bottomrule
\end{tabular}
\end{center}

\begin{center}
\begin{tabular}{ll}
\toprule
{{\sc{stw-Validity}}$(\Arch,\Sigma)$}:\\
\midrule
{Instance:} & \hspace{-4em}$\phi \in \MSO(\Arch,\Sigma)$; $k>0$\\[0.5ex]
{Question:} & \hspace{-4em}$\kstwCBM \subseteq L(\phi)$\,?\\
\bottomrule
\end{tabular}
\end{center}

\begin{center}
\begin{tabular}{ll}
\toprule
{{\sc{stw-ModelChecking}}$(\Arch,\Sigma)$}:\\
\midrule
{Instance:} & \hspace{-4em}$\Sys \in \CPDS(\Arch,\Sigma)$\,; $\phi \in \MSO(\Arch,\Sigma)$; $k>0$\\[0.5ex]
{Question:} & \hspace{-4em}$L(\Sys) \cap \kstwCBM \subseteq L(\phi)$\,?\\
\bottomrule
\end{tabular}
\end{center}

In the following, we will prove that:

\begin{theorem}
All these problems are decidable.
\end{theorem}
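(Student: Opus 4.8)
The plan is to reduce every one of the five problems to the satisfiability of an \MSO sentence over binary trees, which is decidable by the classical equivalence between \MSO over trees and tree automata (emptiness being decidable, \cite{ThaWri68}). The bridge is the tree interpretation of \STTs (Section~\ref{sec:stt}): since any $\mscn \in \kstwCBM(\Arch,\Sigma)$ has special tree-width at most $k$, it is of the form $G_\tau$ for some valid $(k+1)$-\STT $\tau$, and an \STT is nothing but a binary tree over the finite ranked alphabet of \STT operations. Thus quantifying over members of $\kstwCBM$ amounts to quantifying over those binary trees that are valid $(k+1)$-\STTs and whose associated graph is a \CBM.

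First I would observe that all five problems collapse, using the results already at hand, to a single question: given an \MSO sentence $\psi$ and $k\ge 0$, is there $\mscn\in\kstwCBM$ with $\mscn\models\psi$? Indeed \textsc{stw-Satisfiability} is this question verbatim (with $\psi=\phi$). For the automaton-based problems, Theorem~\ref{thm:CPDStoMSO} yields, for each $\Sys$, a sentence $\phisys$ with $L(\phisys)=L(\Sys)$; then \textsc{stw-Nonemptiness} asks whether some $\mscn\in\kstwCBM$ satisfies $\phisys$, \textsc{stw-Universality} asks for the \emph{non}-existence of $\mscn\in\kstwCBM$ with $\mscn\models\neg\phisys$, and \textsc{stw-Inclusion} (resp.\ \textsc{stw-ModelChecking}) asks for the non-existence of $\mscn\in\kstwCBM$ with $\mscn\models\phisys\wedge\neg\Phi_{\Sys'}$ (resp.\ $\mscn\models\phisys\wedge\neg\phi$). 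Each of these is the displayed satisfiability question, or its complement; this is just the concrete, complexity-aware counterpart of Proposition~\ref{prop:SAT-MC-reduction}.

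Second, I would transfer this question to trees. Following the cograph development of Section~\ref{sec:cographs} applied to the \STT algebra, the \MSO interpretation of $G_\tau$ in $\tau$ gives, for every $\MSO(\Arch,\Sigma)$ sentence $\varphi$, a sentence $\widetilde{\varphi}$ over binary \STT-trees with $G_\tau\models\varphi$ iff $\tau\models\widetilde{\varphi}$ for all valid \STTs $\tau$. Together with the sentence $\phiksttvalid$ defining the syntactically valid $(k+1)$-\STTs and the sentence $\phicbm$ of Proposition~\ref{prop:mso-cbm} defining \CBMs among graphs, one obtains the equivalence: there is $\mscn\in\kstwCBM$ with $\mscn\models\psi$ iff the tree sentence $\phiksttvalid\wedge\widetilde{\Phi_{\mathsf{cbm}}}\wedge\widetilde{\psi}$ is satisfiable over binary trees. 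The forward direction picks a defining $(k+1)$-\STT for $\mscn$; the backward direction reads $G_\tau$ off any model $\tau$, which is then a \CBM of special tree-width at most $k$ by construction. Since \MSO satisfiability over binary trees is decidable, all five problems are decidable.

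The hard part will be the \MSO interpretation itself, that is, the edge predicate and the formula $\phiksttvalid$. For each $\gamma\in\Gamma$ one must express ``there is a $\gamma$-edge between the leaves $x$ and $y$'', meaning that some common ancestor $z$ is labeled $\add{i}{j}{\gamma}$ and that, \emph{at} $z$, vertex $x$ currently carries color $i$ while $y$ carries color $j$. The subtlety is that a vertex's color is not constant along the tree path: it is created with one color and then possibly permuted by $\rename{\cdot}{\cdot}$ or dropped by $\forget{\cdot}$. Tracking it requires asserting, along the path from the leaf up to $z$, a consistent sequence of colors from the finite set $[k+1]$ that is never forgotten and ends in $i$ at $z$; this is a path-coloring property that is routinely expressible in \MSO over trees. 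The same color tracking is needed inside $\phiksttvalid$ to enforce the crucial side condition of $\sttunion$, namely that the two subterms use disjoint active colors. Everything else (partitioning leaves into vertices, the process and data-structure labels, and well-formedness of the \STT syntax) is straightforward, and the finite bound $k+1$ on colors is exactly what keeps these formulas, equivalently the resulting tree automaton, finite.
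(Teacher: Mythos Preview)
Your proposal is correct and follows essentially the same approach as the paper: reduce all five problems to \MSO satisfiability over $\kstwCBM$ via Theorem~\ref{thm:CPDStoMSO}, then transfer to binary trees using the \MSO interpretation of \STTs together with $\phiksttvalid$ and $\widetilde{\phicbm}$, and conclude by decidability of \MSO on trees. One minor simplification the paper exploits that you could adopt: since $\rename{i}{j}$ is redundant (it can be eliminated from any \kSTT), the alphabet $\Lambda^k$ omits it, so a leaf's color never changes along the path to the root and the edge predicate only needs to check that the color is not forgotten before the relevant $\add{i}{j}{\gamma}$ node---no permutation tracking is required.
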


The proof technique is via an interpretation of $\kstwCBM$ in binary trees and
reduction to problems on tree automata.  This is actually similar to
decidability for cographs as explained in Section~\ref{sec:cographs}.

In the following, we introduce
\begin{itemize}
\item $\Akstwcbm$: a tree automaton accepting \kSTTs denoting graphs in 
$\kstwCBM$.

\item $\Akstwsys$: a tree automaton for $\Sys\in\CPDS(\Arch,\Sigma)$ such that
for all $\tau\in\Lang(\Akstwcbm)$, $\tau$ is accepted by $\Akstwsys$ iff
$G_\tau \in\Lang(\Sys)$.

\item $\Akstwphi$: a tree automaton for $\phi\in\MSO(\Arch,\Sigma)$ such that
for all $\tau\in\Lang(\Akstwcbm)$, $\tau$ is accepted by $\Akstwphi$ iff
$G_\tau\models\varphi$.
\end{itemize}

Then, 
\begin{itemize}
  \item  for \textsc{stw-Satisfiability}$(\Arch,\Sigma)$ we check if 
  $\Lang(\Akstwcbm\cap\Akstwphi)=\emptyset$,

  \item  for \textsc{stw-ModelChecking}$(\Arch,\Sigma)$ we check if 
  $\Lang(\Akstwcbm\cap\Akstwsys\cap{\A_{\neg\varphi}^{k\text{-}\mathsf{stw}}})=\emptyset$,

  \item  etc.
\end{itemize}

\clearpage
\section{Special tree-width and Tree interpretation}\label{sec:stw-tree-interpretation}

We show now that a graph $G$ defined by an \STT $\tau$ 
can be interpreted in the binary tree $\tau$.  Notice that the vertices of $G$
are in bijection with the leaves of $\tau$.  The main difficulty is to interpret
the edge relations $E_\gamma$ of $G$ in the tree $\tau$.

Let us denote by $\Lambda^k$ the alphabet of \kSTTs (we do not include the 
rename operation since it is redundant):
$$
\Lambda^k=\{\sttunion, i, \add{i}{a}, \add{i,j}\gamma, \forget{i} \mid
i,j\in[k], a\in\Sigma, \gamma\in\Gamma\} \,.
$$
Clearly, the set of \kSTTs considered as binary trees over $\Lambda^k$ is a 
regular tree language.

\begin{lemma}\label{lem:sttvalid}
  There is a formula $\phiksttvalid\in\MSO(\Lambda^k,\da_0,\da_1)$
  which defines the set of \emph{valid} \kSTTs.
\end{lemma}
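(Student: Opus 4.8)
The plan is to observe that a \kSTT fails to be \emph{valid} for exactly one reason: the only partial operation in the algebra is $\sttunion$, which is undefined precisely when the two subterms share an active color. Everything else ($\add{i}{j}{\gamma}$, $\forget{i}$, and the atoms) is always defined, and the injectivity of the color map $\chi$ is \emph{automatically} preserved once all unions combine color-disjoint subterms (a new colored vertex is created only at an atom, and two such vertices carrying the same color can meet only through a union, which the disjointness condition forbids). Hence validity decomposes into two checks: (i) the tree is a syntactically well-formed $\Lambda^k$-term, i.e.\ every node uses its label with the correct arity (atoms are leaves, $\add{i}{j}{\gamma}$- and $\forget{i}$-nodes have exactly one child along $\da_0$ and no $\da_1$-child, and $\sttunion$-nodes have both children); and (ii) at every $\sttunion$-node the active-color sets of the two children are disjoint. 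Condition~(i) is a local, hence first-order (so \MSO) definable, property; call the corresponding sentence $\phi_{\mathsf{synt}}$.

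For condition~(ii) I would introduce, for the sub-\kSTT $\stt_x$ rooted at a node $x$, its set of \emph{active colors} $\dom(\chi_{\stt_x})\subseteq[k]$, which is computed bottom-up. Since $[k]$ is a \emph{fixed finite} set, I can record this information by $k$ monadic variables $L_1,\dots,L_k$, with the intended meaning $L_i=\{x\mid i\in\dom(\chi_{\stt_x})\}$, and pin them down by a local consistency sentence $\psi_{\mathsf{active}}(L_1,\dots,L_k)$ that mirrors the semantics node by node:
\[
\begin{aligned}
\mathsf{atom}_j(x) &\Rightarrow \big(x\in L_j \wedge \textstyle\bigwedge_{i\neq j} x\notin L_i\big),\\
\forget{j}(x)\wedge x\da_0 y &\Rightarrow \big(x\notin L_j \wedge \textstyle\bigwedge_{i\neq j}(x\in L_i \Leftrightarrow y\in L_i)\big),\\
\mathsf{Add}(x)\wedge x\da_0 y &\Rightarrow \textstyle\bigwedge_{i\in[k]}(x\in L_i \Leftrightarrow y\in L_i),\\
\sttunion(x)\wedge x\da_0 y_0 \wedge x\da_1 y_1 &\Rightarrow \textstyle\bigwedge_{i\in[k]}\big(x\in L_i \Leftrightarrow (y_0\in L_i \vee y_1\in L_i)\big),
\end{aligned}
\]
where $\mathsf{atom}_j(x)$ abbreviates ``$x$ is an atomic node with color $j$'' and $\mathsf{Add}(x)$ that $x$ is labeled by some $\add{i}{j}{\gamma}$, and all four lines are understood under a leading universal quantification over $x$ and the relevant children. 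I would then add the disjointness requirement
\[
\phi_{\mathsf{disj}} = \forall x\,\forall y_0\,\forall y_1\,\Big(\sttunion(x)\wedge x\da_0 y_0 \wedge x\da_1 y_1 \Rightarrow \textstyle\bigwedge_{i\in[k]} \neg(y_0\in L_i \wedge y_1 \in L_i)\Big),
\]
and set $\phiksttvalid = \phi_{\mathsf{synt}} \wedge \exists L_1 \cdots \exists L_k\,\big(\psi_{\mathsf{active}} \wedge \phi_{\mathsf{disj}}\big)$.

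Correctness would follow from a structural induction showing that, on any $\Lambda^k$-tree satisfying $\phi_{\mathsf{synt}}$, the four clauses of $\psi_{\mathsf{active}}$ have a \emph{unique} solution, namely $L_i=\{x\mid i\in\dom(\chi_{\stt_x})\}$; consequently $\phi_{\mathsf{disj}}$ holds under this (forced) assignment iff every union in the term is defined. The only mildly delicate point is this determinism argument, which is what guarantees that the existential quantifier over $L_1,\dots,L_k$ neither under- nor over-shoots the genuine active-color sets; the rest is bookkeeping about the binary encoding of unary operators. As an alternative I would note that the whole property is regular: a bottom-up tree automaton whose states are $2^{[k]}$ (the active-color set of the processed subterm) together with a rejecting sink, rejecting exactly when a $\sttunion$ merges two overlapping color sets or an arity is violated, recognizes the valid \kSTTs; \MSO-definability then follows from the equivalence of regular tree languages and \MSO on trees (cf.\ \cite{ThaWri68}). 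I expect the automaton formulation to be the cleaner route, with the explicit formula above serving to make the construction transparent.
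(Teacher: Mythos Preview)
Your proof is correct, but it takes a different route from the paper's. You compute the active-color set at every node by introducing second-order variables $L_1,\dots,L_k$ pinned down by local bottom-up constraints, and then need the determinism argument to ensure the existential quantifier does not over- or under-shoot. The paper instead observes that ``color $i$ is active at node $z$'' can be expressed \emph{directly}: it holds iff there is a leaf $x$ labeled $(i,a)$ for some $a$ such that $z$ is an ancestor of $x$ and no $\forget{i}$ occurs on the path from $x$ to $z$. This is packaged into a macro $\beta_i(x,z)$ (ancestor plus no-$\forget{i}$-in-between, definable in \MSO via transitive closure), and the disjointness condition at $\sttunion$-nodes becomes a single first-order sentence with this macro, with no auxiliary set variables and no uniqueness argument needed. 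Your approach buys modularity (the $L_i$ explicitly materialize the active-color sets, which could be reused elsewhere) and is closer in spirit to the automaton you sketch at the end; the paper's approach buys concision and avoids the mildly delicate determinism step you flagged. Both are perfectly valid \MSO constructions.
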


\begin{proof}
  First, the binary tree is correctly
  labeled: leaves should have labels in $[k]$, unary nodes should
  have labels in $\{\add{i}{a},\add{i,j}\gamma, \forget{i}\mid i,j\in[k], a\in\Sigma
  \text{ and } \gamma\in\Gamma\}$ and binary nodes should be labeled
  $\sttunion$.  Moreover, for the \kSTT to be valid, the children of a binary
  node should have disjoint sets of \emph{active} colors.  This can be expressed
  with
  $$
  \neg\exists x,x',y,y',z
    \bigvee_{\substack{i\in[k]}}
    i(x) \wedge i(y) \wedge \sttunion(z) \wedge
    z\da_0 x' \wedge \beta_{i}(x',x) \wedge z\da_1 y' \wedge \beta_{i}(y',y)
  $$
  where $\beta_{i}(u',u)$ is a macro stating that $u'$ is an ancestor of $u$ in
  the tree and that $\forget{i}$ does not occur in the tree between node $u'$ and
  node $u$.  This formula can be written in \MSO with a transitive closure. 
  Recall that $u'\da u = u'\da_0 u \vee u'\da_1 u$. Let $\varphi_i(u',u)=u'\da 
  u \wedge \neg\forget{i}(u')$. Then, $\beta_{i}(u',u)=\varphi_i^{*}(u',u)$.
\end{proof}

\begin{exercise}\label{exo:ksttvalid}
  Give a tree automaton \Aksttvalid
  with $2^{k+1}$ states that accepts the set of \emph{valid} \kSTTs.
\end{exercise}

\begin{proposition}[Backward translation]\label{prop:stt-mso-interpretation}
  Fix some $k>0$.
  For all sentences $\phi\in\MSO(\Sigma,\Gamma)$, there is a
  sentence $\widetilde{\phi}^k\in\MSO(\Lambda^k,\da_0,\da_1)$ of size 
  $\mathcal{O}(k^{2}|\varphi|)$ such
  that, for every valid \kSTT $\tau$ with $\sem{\tau}=(G,\chi)$, we have
  \[G\models\phi \textup{~~~iff~~~} \tau\models \widetilde{\phi}^k \,.\]
\end{proposition}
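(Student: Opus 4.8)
The plan is to mimic the cograph interpretation from Section~\ref{sec:cographs}: I would build $\widetilde{\phi}^k$ by structural induction on $\phi$, relativizing every quantifier to the leaves of $\tau$, which are in bijection with the vertices of $G=G_\tau$. First I would fix the vertex predicate $\mathit{leaf}(x) := \neg\exists y\,(x\da_0 y \vee x\da_1 y)$ and translate the two atomic kinds of atoms. Labels are immediate: a vertex $x$ carries letter $a$ in $G$ iff its leaf is labelled $(i,a)$ for some color, so $\widetilde{a(x)}^k := \bigvee_{i\in[k]} P_{(i,a)}(x)$. The essential work is interpreting the edge relations $E_\gamma$.

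For edges I would exploit that, in the \STT algebra (with rename dropped as redundant), edges are only ever created and never removed, and a leaf's color is fixed at creation and can only vanish through a later $\forget$. Hence a $\gamma$-edge $(u,v)$ is present in $G$ exactly when some $\add{i}{j}{\gamma}$ node $z$ is applied at a moment where $u$ is still colored $i$ and $v$ still colored $j$. I would capture ``$x$ is colored $i$ at node $z$'' by the macro $\mathit{col}_i(x,z) := \bigl(\bigvee_{a\in\Sigma} P_{(i,a)}(x)\bigr) \wedge \beta_i(x,z)$, reusing $\beta_i$ from the proof of Lemma~\ref{lem:sttvalid} ($z$ is an ancestor of $x$ with no $\forget{i}$ strictly in between, itself MSO-expressible via the usual closed-set / transitive-closure encoding). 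Then I would set
\[
\widetilde{E_\gamma}(x,y) := \exists z\, \bigvee_{i,j\in[k]} \Bigl( \add{i}{j}{\gamma}(z) \wedge \mathit{col}_i(x,z) \wedge \mathit{col}_j(y,z) \Bigr)\,,
\]
and translate $x\procrel y$ and $x\matchrel^d y$ by the corresponding $\widetilde{E_\gamma}$. Requiring $z$ to be a common ancestor of $x$ and $y$ is automatic, since a color active at $z$ must belong to a leaf in the subtree of $z$. Booleans commute, and quantifiers are relativized: $\widetilde{\exists x\,\phi}^k := \exists x\,(\mathit{leaf}(x) \wedge \widetilde{\phi}^k)$ and $\widetilde{\exists X\,\phi}^k := \exists X\,(\forall x(x\in X \to \mathit{leaf}(x)) \wedge \widetilde{\phi}^k)$.

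Correctness would be proved by induction, strengthened to formulas with free variables: for every valid \kSTT $\tau$ and every interpretation $\Int$ of $\Var\cup\VAR$ over the leaves of $\tau$ (mapping first-order variables to leaves and second-order variables to sets of leaves), one shows $G\models_\Int \phi$ iff $\tau\models_\Int \widetilde{\phi}^k$. The hard part will be the edge base case, namely verifying that $\mathit{col}_i(x,z)$ exactly characterizes the vertex $\chi(i)$ of the colored graph $\sem{\tau_z}$ at the subterm rooted at $z$. This rests on three semantic facts that I would establish by a side induction on subterms: (i) validity of $\tau$ makes the color map injective, so $\chi(i)$ is well-defined; (ii) with no rename operation, the unique vertex colored $i$ in $\sem{\tau_z}$ is precisely the leaf below $z$ labelled $(i,\cdot)$ with no intervening $\forget{i}$; and (iii) since no operation deletes edges, the $\gamma$-edges of $G$ are exactly the pairs $(\chi(i),\chi(j))$ ranging over all $\add{i}{j}{\gamma}$ nodes. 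Once these are in hand, the displayed formula is seen to define $E_\gamma$, and the remaining induction steps are the routine relativization arguments already used in the cograph case.
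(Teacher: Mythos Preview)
Your proposal is correct and essentially identical to the paper's own proof: the paper also proceeds by induction on $\phi$ (strengthened to free variables via an interpretation $\Int$ into leaves), relativizes quantifiers to $\mathsf{leaf}(x)$, translates labels by $\bigvee_i P_{(i,a)}(x)$, and interprets $E_\gamma$ via the very same formula $\exists z\,\bigvee_{i,j,a,b} P_{(i,a)}(x)\wedge P_{(j,b)}(y)\wedge\add{i}{j}{\gamma}(z)\wedge\beta_i(x,z)\wedge\beta_j(y,z)$, which is exactly your $\widetilde{E_\gamma}$ after unfolding your $\mathit{col}_i$ macro. Your explicit articulation of the three semantic facts (injectivity of coloring, identification of $\chi(i)$ with the unique non-forgotten $(i,\cdot)$-leaf, and monotonicity of edge creation) is a useful spelling-out of what the paper leaves tacit.
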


\begin{proof}
  We proceed by induction on $\phi$.  Hence we also have to deal with free
  variables.  We denote by $\Int$ an interpretation of variables to (sets of)
  vertices of $G$ which are identified with leaves of $\tau$.  Hence, we prove
  by induction that for all formulas $\phi\in\MSO(\Sigma,\Gamma)$,
  there is a formula
  $\widetilde{\phi}^k\in\MSO(\Lambda^k,\da_0,\da_1)$ such that,
  for all valid \kSTT $\tau$ with $\sem{\tau}=(G,\chi)$, and all interpretations
  $\Int$ in vertices of $G$, we have
  \[G\models_\Int\phi \textup{~~~iff~~~} \tau\models_\Int \widetilde{\phi}^k \,.\]

  The difficult case is to translate the edge relations. We define
  \begin{align*}
    \widetilde{x \mathrel{E_\gamma} y}^k &= \exists z
    \bigvee_{\substack{i,j\in[k]}}
    i(x) \wedge j(y) \wedge \add{i,j}\gamma(z) \wedge
    \beta_i(z,x) \wedge \beta_j(z,y)
    \\
    \widetilde{a(x)}^k &= \exists z
    \bigvee_{\substack{i\in[k]}}
    i(x) \wedge \add{i}a(z) \wedge \beta_i(z,x) \,.
  \end{align*}
  Note that $\widetilde{x \mathrel{E_\gamma} y}^k$ and $\widetilde{a(x)}^k$ are of size
  $\mathcal{O}(k^{2})$ and $\mathcal{O}(k)$.
  The other cases are easy:
  \begin{align*}
     \widetilde{\neg\phi}^k &= \neg\,\widetilde{\phi}^k
     &
     \widetilde{\phi_1 \vee \phi_2}^k &= \widetilde{\phi_1}^k \vee 
     \widetilde{\phi_2}^k
     \\
     \widetilde{\exists x\phi}^k &= \exists x\;(\widetilde{\phi}^k \wedge \mathsf{leaf}(x))
     & 
     \widetilde{x \in X}^k &= x \in X  
     \\
     \widetilde{\exists X\phi}^k &= \exists X\,( \widetilde{\phi}^k \wedge \forall x (x \in X \rightarrow \mathsf{leaf}(x)) )
     &
     \widetilde{x = y}^k &= (x = y)
  \end{align*}
This concludes the proof.
\end{proof}

\begin{corollary}
  The MSO theory of graphs of special tree-width at most $k$ is decidable.
\end{corollary}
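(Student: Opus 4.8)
The plan is to mimic the decidability argument already carried out for cographs: reduce the MSO theory of this graph class to the (decidable) MSO theory of binary trees, using the tree interpretation just established. First I would recall that, by the very definition of special tree-width, a graph $G$ has special tree-width at most $k$ exactly when $G = G_\tau$ for some valid $(k+1)$-\STT $\tau$. So the relevant term alphabet is $\Lambda^{k+1}$, and I would write $\psi_{\mathsf{valid}} \in \MSO(\Lambda^{k+1},\da_0,\da_1)$ for the sentence supplied by Lemma~\ref{lem:sttvalid} (instantiated with $k+1$ colours) that defines precisely those binary trees over $\Lambda^{k+1}$ which are valid $(k+1)$-\STTs.

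Next, given an arbitrary sentence $\phi \in \MSO(\Sigma,\Gamma)$, I would invoke the MSO interpretation of Proposition~\ref{prop:stt-mso-interpretation} to obtain $\widetilde{\phi}^{k+1} \in \MSO(\Lambda^{k+1},\da_0,\da_1)$ satisfying $G_\tau \models \phi$ iff $\tau \models \widetilde{\phi}^{k+1}$ for every valid $(k+1)$-\STT $\tau$. Combining these two facts yields the central equivalences
\[
\text{$\phi$ satisfiable over graphs of special tree-width $\le k$} \iff \text{$\psi_{\mathsf{valid}} \wedge \widetilde{\phi}^{k+1}$ satisfiable over binary trees,}
\]
\[
\text{$\phi$ valid over graphs of special tree-width $\le k$} \iff \text{$\psi_{\mathsf{valid}} \to \widetilde{\phi}^{k+1}$ valid over binary trees.}
\]
For the forward direction I would take a graph $G$ of special tree-width $\le k$ with $G \models \phi$, choose a valid $(k+1)$-\STT $\tau$ with $G_\tau = G$ (which exists by the definition of special tree-width), and observe that $\tau \models \psi_{\mathsf{valid}}$ and, by the interpretation, $\tau \models \widetilde{\phi}^{k+1}$. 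Conversely, any binary tree over $\Lambda^{k+1}$ satisfying $\psi_{\mathsf{valid}}$ is a valid $(k+1)$-\STT $\tau$, whose graph $G_\tau$ then has special tree-width $\le k$ and, whenever $\tau \models \widetilde{\phi}^{k+1}$, satisfies $\phi$.

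Finally, I would close the argument exactly as in the cograph corollary: MSO satisfiability and validity over binary trees are decidable, since both reduce to emptiness of tree automata~\cite{ThaWri68}. Decidability of the MSO theory of graphs of special tree-width at most $k$ then follows. I expect no genuine obstacle at this stage, since all the substantive work is already done in Lemma~\ref{lem:sttvalid} and Proposition~\ref{prop:stt-mso-interpretation}; the only point requiring care is the $k$-versus-$(k+1)$ bookkeeping forced by the definition of special tree-width, so that the interpretation and the validity test are run over $\Lambda^{k+1}$ rather than $\Lambda^{k}$.
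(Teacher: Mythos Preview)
Your proposal is correct and follows essentially the same route as the paper: reduce to the MSO theory of binary trees via Lemma~\ref{lem:sttvalid} and Proposition~\ref{prop:stt-mso-interpretation}, then appeal to Thatcher--Wright. In fact you handle the $k$-versus-$(k{+}1)$ bookkeeping more carefully than the paper's own proof, which silently shifts to ``special tree-width at most $k-1$'' when working with $k$-\STTs.
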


\begin{proof}
  Let $\phi\in\MSO(\Sigma,\Gamma)$.  Then, $\phi$ is valid on graphs of special
  tree-width at most $k$ iff $\phiksttvalid \to \widetilde{\phi}^k$ is
  valid on binary trees.
  The result of the corollary follows, since MSO validity is decidable on binary
  trees.  Indeed, the problem can be reduced to tree-automata emptiness
  \cite{ThaWri68}.
  Note that, moreover, $\phi$ is satisfiable on graphs of special tree-width at
  most $k$ iff $\phiksttvalid \wedge \widetilde{\phi}^k$ is satisfiable on
  binary trees.
\end{proof}

\begin{exercise}
  Construct a tree automaton $\A^k_\gamma$ with $\mathcal{O}(k^2)$ states which
  accepts a valid \kSTT $\tau$ with two marked leaves $x$ and $y$ iff there is a
  $\gamma$-edge between $x$ and $y$ in the graph $\sem{\tau}$.
\end{exercise}

\textbf{Solution:} $\A^k_\gamma$ is a deterministic bottom-up tree automaton.
It keeps in its state a pair of colors $(i,j)\in([k]\cup\{\bot\})^2$.  At some
node $z$, the state is $(i,j)$ where $i=\bot$ if $x$ is not in the
subtree of $z$, otherwise $i$ is the color of $x$; same for $j$ and $y$.
The state is initialized at leaves.  It is updated at $\sttunion$-nodes.  The
automaton goes to an accepting state if it is in state $(i,j)$ when reading a
node labeled $\add{i,j}\gamma$.
It goes to a rejecting state
at a node $\forget{\ell}$ if it is in state $(i,j)$ with $\ell\in\{i,j\}$.

\begin{exercise}\label{exo:TWAedgeSTT}
  Construct a tree-walking automaton $\B^k_\gamma$ with $\mathcal{O}(k)$ states
  which runs on a valid \kSTT $\tau$ starting from a leaf (say $x$) and accepts when
  reaching a leaf (say $y$) such that there is a $\gamma$-edge between $x$ and
  $y$ in the graph $\sem{\tau}$.
\end{exercise}

\textbf{Solution:} First, walking up the tree, the automaton $\B^k_\gamma$ keeps
in its state the color of the leaf $x$.  It 
makes sure that the color is not forgotten, until it
reaches a node labeled $\add{i,j}\gamma$ where $i$ is the color in its state.
Then, it updates its state with color $j$ and it enters a second phase where it
walks down the tree (non-deterministically at $\sttunion$-nodes), 
making sure that the color is not forgotten, until it
reaches a leaf having the color that corresponds to its state.

\underline{Remark:} The automaton $\B^k_\gamma$ can be made deterministic if 
there is at most one $\gamma$ edge with source $x$. Indeed, walking up the tree 
is deterministic and we can search for the target leaf $y$ using a DFS.

\clearpage
\section{Decision procedures for \kstwCBM}  

Recall that the class $\CBM(\Arch,\Sigma)$ is $\MSO(\Arch,\Sigma)$-definable by
a sentence $\phicbm$ (Proposition~\ref{prop:mso-cbm}). Recall that for \CBMs: 
$\Gamma=\{\procrel\}\cup\DS$ and $\Sigma'=\Sigma\cup\Procs$.

\begin{corollary}\label{cor:stw-mso-sat}
  The problem \textsc{stw-Satisfiability} and \textsc{stw-Validity} are
  decidable and the complexity is non-elementary.
\end{corollary}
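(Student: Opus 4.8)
The plan is to reduce both problems to MSO satisfiability (resp.\ validity) over binary trees, where decidability is available via tree automata, and then to pin down the complexity from the cost of that reduction together with a matching lower bound.

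First I would record the bijective correspondence underlying the interpretation: by the definition of special tree-width, a CBM $\mscn$ lies in $\kstwCBM$ iff $\mscn = G_\tau$ for some valid $(k+1)$-STT $\tau$. Writing $\Phi_{\mathsf{valid}}^{(k+1)\text{-}\mathsf{stt}}$ for the sentence of Lemma~\ref{lem:sttvalid} instantiated with $k+1$ colors, and using the MSO interpretation $\widetilde{\cdot}^{\,k+1}$ of Proposition~\ref{prop:stt-mso-interpretation}, I would set
\[\Psi ~=~ \Phi_{\mathsf{valid}}^{(k+1)\text{-}\mathsf{stt}} \,\wedge\, \widetilde{\phicbm}^{\,k+1} \,\wedge\, \widetilde{\phi}^{\,k+1} ~\in~ \MSO(\Lambda^{k+1},\da_0,\da_1)\,.\]
The three conjuncts express, respectively, that the binary tree $\tau$ is a valid $(k+1)$-STT, that the denoted graph $G_\tau$ is a CBM (Proposition~\ref{prop:mso-cbm}), and that $G_\tau \models \phi$. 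Hence $\Psi$ is satisfiable over binary trees iff some CBM of special tree-width at most $k$ satisfies $\phi$, i.e.\ iff $L(\phi) \cap \kstwCBM \neq \emptyset$. Since MSO satisfiability over binary trees is decidable --- it reduces to emptiness of a finite tree automaton \cite{ThaWri68} --- this settles \textsc{stw-Satisfiability}.

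For \textsc{stw-Validity} I would argue dually: $\kstwCBM \subseteq L(\phi)$ holds iff no CBM of special tree-width at most $k$ satisfies $\neg\phi$, i.e.\ iff $(\neg\phi,k)$ is a negative instance of \textsc{stw-Satisfiability}; equivalently, the tree sentence $\Phi_{\mathsf{valid}}^{(k+1)\text{-}\mathsf{stt}} \wedge \widetilde{\phicbm}^{\,k+1} \rightarrow \widetilde{\phi}^{\,k+1}$ is valid on binary trees, which is decidable as above. (Alternatively, one could invoke the inter-reducibility of Proposition~\ref{prop:SAT-MC-reduction}.)

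For the complexity, the non-elementary upper bound is inherited from the translation of an MSO sentence into an equivalent tree automaton, whose state space grows as a tower of exponentials with the quantifier-alternation depth of $\Psi$. For a matching non-elementary lower bound I would reduce from MSO satisfiability over finite words, which is already non-elementary: when $\Arch$ has a single process and no data structure, finite words are exactly the CBMs, and a word (a path) has special tree-width at most $1$, so for every $k \ge 1$ all words belong to $\kstwCBM$; thus MSO-satisfiability over words reduces to \textsc{stw-Satisfiability} with no blow-up. The main obstacle is not any single step but the careful bookkeeping in the equivalence for $\Psi$: one must ensure that $\Phi_{\mathsf{valid}}^{(k+1)\text{-}\mathsf{stt}}$ restricts attention to exactly the well-formed $(k+1)$-STTs (so that Proposition~\ref{prop:stt-mso-interpretation} is applicable), and that the CBMs of special tree-width at most $k$ are precisely the graphs denoted by such trees, so that satisfiability over $\kstwCBM$ and satisfiability of $\Psi$ over binary trees coincide.
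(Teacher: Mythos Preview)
Your proof is correct and follows essentially the same route as the paper: build the tree formula $\Phi_{\mathsf{valid}}\wedge\widetilde{\phicbm}\wedge\widetilde{\phi}$ via Lemma~\ref{lem:sttvalid}, Proposition~\ref{prop:mso-cbm}, and Proposition~\ref{prop:stt-mso-interpretation}, then invoke decidability of MSO over binary trees, and handle validity by negation. Your indexing with $(k{+}1)$ colors is in fact more consistent with the definition of special tree-width than the paper's own (slightly loose) use of $k$, and your explicit non-elementary lower bound via MSO over words goes a bit beyond what the paper spells out.
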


\begin{proof}
  Let $\varphi\in\MSO(\Arch,\Sigma)$ and let $k>0$.
  
  Using Proposition~\ref{prop:stt-mso-interpretation}, we construct the formulas
  $\widetilde{\varphi}^k,\widetilde{\phicbm}^k\in\MSO(\Lambda^k,\da_0,\da_1)$.
  Consider the formula
  $\phiksttvalid\in\MSO(\Lambda^k,\da_0,\da_1)$ given by
  Lemma~\ref{lem:sttvalid}.  Then, we define
  $\Phi=\phiksttvalid\wedge\widetilde{\phicbm}^k\wedge\widetilde{\varphi}^k
  \in\MSO(\Lambda^k,\da_0,\da_1)$.
  
  We prove that $\varphi$ is
  satisfiable over $\kstwCBM$ iff $\Phi$ is satisfiable over 
  $\Lambda^k$-labeled binary trees.

  Indeed, let $\tau$ be a binary tree over $\Lambda^k$ such that
  $\tau\models\Phi$. Then, $\tau$ is a \kSTT and $\sem{\tau}=(G,\chi)$ where $G$ 
  is a \CBM and $G\models\varphi$. 
  Conversely, Let $G\in\kstwCBM$ be such that $G\models\varphi$. Let $\tau$ be 
  any \kSTT with $\sem{\tau}=(G,\chi)$. We have $\tau\models\Phi$.
  
  From \cite{ThaWri68}, we can construct a tree automaton $\A_\Phi$ equivalent to $\Phi$ 
  and then check $\A_\Phi$ for emptiness. Recall that emptiness for tree 
  automata can be solved in \textsc{PTime}, in fact linear time by a reduction 
  to Horn-Sat.
  
  The decidability of \textsc{stw-Validity} follows since validity of $\varphi$ 
  over $\kstwCBM$ is equivalent to non-satifiability of $\neg\varphi$ over 
  $\kstwCBM$.
\end{proof}

\begin{corollary}\label{cor:stw-emptiness}
  The problem \textsc{stw-Nonemptiness} is decidable and  
  \textsc{ExpTime}-complete.
\end{corollary}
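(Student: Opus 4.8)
The plan is to reduce \textsc{stw-Nonemptiness} to emptiness of a tree automaton running over $k$-STTs, exploiting the tree interpretation of Section~\ref{sec:stw-tree-interpretation}. Concretely, I would construct the two automata announced above: $\Akstwcbm$, accepting exactly the valid $k$-STTs $\tau$ whose denoted graph $G_\tau$ lies in $\kstwCBM$, and $\Akstwsys$, accepting those $\tau \in \Lang(\Akstwcbm)$ with $G_\tau \in \Lang(\Sys)$. Since every CBM of special tree-width at most $k$ equals $G_\tau$ for some valid $k$-STT, we obtain the key equivalence
\[\Lang(\Sys) \cap \kstwCBM \neq \emptyset \quad\Longleftrightarrow\quad \Lang(\Akstwcbm) \cap \Lang(\Akstwsys) \neq \emptyset \,,\]
whose right-hand side is decided by testing the product automaton for emptiness, which is in \textsc{PTime} in its size by \cite{ThaWri68}. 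Decidability is then immediate, and the complexity analysis reduces to bounding the sizes of the two automata.

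For $\Akstwcbm$ I would start from the automaton $\A_\mathsf{valid}^{k\text{-}\mathsf{stt}}$ with $2^k$ states that recognises valid $k$-STTs (left as an exercise) and take its product with automata checking, on the denoted graph, the axioms of Definition~\ref{def:cbm}: that the $\procrel$-edges form one chain per process, that $\matchrel^d$ is an injective matching compatible with $\writer$/$\reader$, that the induced relation is acyclic (this is what $\Akstwacyclic$ is for), and the LIFO/FIFO constraints for stacks and queues (this is $\Akstwedges$). Each component tracks information about the at most $k$ currently coloured vertices, so it has $2^{\Oh(k)}$ states times a factor polynomial in $|\Arch|$.

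The core is $\Akstwsys$, which must certify the existence of a run $\rho$ of $\Sys$ on $G_\tau$. I would build a bottom-up automaton whose state is a partial map assigning to each active colour $i \in [k]$ a \emph{profile} of the vertex $\chi(i)$: its guessed source location $\rho^-(\chi(i))$ and target location $\rho(\chi(i))$, together with the type of the transition fired there and, for a write or read event, the data structure and value $d!v$ or $d?v$. At a leaf $(i,a,p)$ the automaton guesses a transition of $\Sys$ on $p$ labelled $a$ and records its endpoints. Run consistency is checked locally at the $\mathsf{Add}$-nodes: at $\add{i}{j}{\procrel}$ it verifies $\rho(\chi(i)) = \rho^-(\chi(j))$; at $\add{i}{j}{d}$ it verifies that the vertex coloured $i$ is a write and the one coloured $j$ a matching read of the \emph{same} value $v$ through $d$. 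The initial constraint $\rho^-(e) = \inLoc$ for process-minimal events and the global acceptance condition in $\FinLocs$ are handled by guessing, for each colour, whether its vertex is process-minimal/-maximal — a guess that becomes final exactly when the colour is forgotten, since a forgotten vertex can no longer receive edges — and by nondeterministically fixing a tuple $(\ell_p)_{p\in\Procs}\in\FinLocs$ in the finite control, to be matched against the target location of each process-maximal event. Correctness follows because the only inter-vertex conditions of a run lie along $\procrel$- and $\matchrel^d$-edges, and every such edge is added between two simultaneously coloured vertices. Since a profile ranges over a set of size $\Oh(|\Locs|^2 \cdot |\DS| \cdot |\Val|)$, the automaton has $|\Locs|^{\Oh(k)}\cdot 2^{\Oh(k)}$ states, i.e.\ exponentially many; the product with $\Akstwcbm$ is still of exponential size, and its emptiness test yields the \textsc{ExpTime} upper bound.

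For \textsc{ExpTime}-hardness I would reduce from a canonical \textsc{ExpTime}-complete problem whose succinctness matches the role of the budget $k$, such as acceptance for alternating polynomial-space Turing machines (equivalently, nonemptiness of an intersection of tree automata). Given an instance one builds a fixed architecture, a \txtCPDS $\Sys$, and a bound $k$ polynomial in the instance, so that a behaviour in $\kstwCBM$ accepted by $\Sys$ encodes an accepting computation; the width budget is exactly what lets the automaton carry the $\Oh(k)$ parallel threads of the simulated computation across the decomposition. I expect the construction of $\Akstwsys$ to be the main conceptual obstacle on the positive side — in particular getting the bookkeeping of minimal/maximal events and of $\FinLocs$ right with only local, bottom-up tests — while on the negative side the delicate point is to simulate the hard problem with behaviours whose special tree-width stays \emph{polynomially} bounded, since it is precisely this bound that controls the size of the automaton and hence the complexity.
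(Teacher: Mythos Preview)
Your approach is correct and is, in fact, the route the paper eventually takes: the direct construction of $\Akstwcbm$ (Proposition~\ref{prop:Akstwcbm}) and $\Akstwsys$ (Proposition~\ref{prop:Akstwsys}), followed by emptiness of their product (Corollary~\ref{cor:stwnonemptiness}). However, the paper's \emph{immediate} proof of this corollary does something lighter for mere decidability: it invokes Theorem~\ref{thm:CPDStoMSO} to turn $\Sys$ into an $\MSO$ sentence $\phisys$, so that $\Lang(\Sys)\cap\kstwCBM\neq\emptyset$ becomes satisfiability of $\phisys$ over $\kstwCBM$, which is decidable by Corollary~\ref{cor:stw-mso-sat}. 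The \textsc{ExpTime} upper bound is explicitly deferred to the later direct tree-automata constructions --- precisely the ones you sketch. So you are not wrong, you have simply anticipated the sharper argument rather than using the quick MSO detour.

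Two small remarks. First, your size estimate for $\Akstwcbm$ is optimistic: tracking acyclicity requires storing a partial order on the active colours, which costs $2^{\Oh(k^2)}$ rather than $2^{\Oh(k)}$ states (cf.\ Proposition~\ref{prop:STT-acyclic}); this does not affect the \textsc{ExpTime} bound since $k$ is in unary, but it is worth getting right. Second, the paper does not actually supply a hardness proof in this section; your sketch via alternating polynomial-space machines is a reasonable plan, though the details (keeping special tree-width polynomial while encoding the computation) would need to be worked out carefully.
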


\begin{proof}
  Let $\Sys\in\CPDS(\Arch,\Sigma)$ and let $k>0$.
  Let $\phisys\in\MSO(\Arch,\Sigma)$ be the equivalent formula given by 
  Theorem~\ref{thm:CPDStoMSO}. Then, we have
  $L(\Sys) \cap \kstwCBM \neq \emptyset$ iff $\phisys$ is satisfiable over 
  $\kstwCBM$. We conclude using Corollary~\ref{cor:stw-mso-sat}.

  We will prove the \textsc{ExpTime} upper-bound later using direct 
  constructions of tree automata.
\end{proof}

\begin{corollary}\label{cor:stw-mso-mc}
  The problem \textsc{stw-ModelChecking} is decidable and the complexity is 
  non-elementary.
\end{corollary}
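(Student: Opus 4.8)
The plan is to prove Corollary~\ref{cor:stw-mso-mc} exactly as Corollaries~\ref{cor:stw-emptiness} and~\ref{cor:stw-mso-sat} were proved: reduce \textsc{stw-ModelChecking} to \textsc{stw-Validity}, which is already known to be decidable with non-elementary complexity. Given an instance $(\Sys,\phi,k)$, I would first eliminate the system by turning it into an equivalent formula. By Theorem~\ref{thm:CPDStoMSO} there is a sentence $\phisys\in\EMSO(\Arch,\Sigma)\subseteq\MSO(\Arch,\Sigma)$ with $L(\phisys)=L(\Sys)$.

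Using $L(\phisys)=L(\Sys)$ together with a purely set-theoretic manipulation ($A\cap B\subseteq C$ iff $B\subseteq(\mathord{\sim}A)\cup C$), one obtains
\[
L(\Sys)\cap\kstwCBM\subseteq L(\phi)
\quad\Longleftrightarrow\quad
\kstwCBM\subseteq L(\neg\phisys\vee\phi)\,.
\]
The right-hand side is precisely an instance of \textsc{stw-Validity}$(\Arch,\Sigma)$ with the \emph{same} bound $k$ and the $\MSO(\Arch,\Sigma)$ sentence $\neg\phisys\vee\phi$. Hence decidability, and the non-elementary upper bound, follow immediately from Corollary~\ref{cor:stw-mso-sat}. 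Equivalently, one may phrase the condition as ``$\phisys\wedge\neg\phi$ is unsatisfiable over $\kstwCBM$'' and invoke the \textsc{stw-Satisfiability} half of Corollary~\ref{cor:stw-mso-sat}; either way the decision procedure is the tree-automaton emptiness test obtained from Proposition~\ref{prop:stt-mso-interpretation} and Lemma~\ref{lem:sttvalid}.

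For the matching non-elementary \emph{lower} bound I would use the reverse reduction, instantiating Proposition~\ref{prop:SAT-MC-reduction} with $\C=\kstwCBM$. Given a \textsc{stw-Validity} instance $(\psi,k)$, take the universal CPDS $\Sys_\mathsf{univ}$ with $L(\Sys_\mathsf{univ})=\CBM(\Arch,\Sigma)$; then $\kstwCBM\subseteq L(\psi)$ iff $L(\Sys_\mathsf{univ})\cap\kstwCBM\subseteq L(\psi)$, which is a model-checking instance. Since the non-elementary hardness of MSO over binary trees already yields non-elementary hardness of \textsc{stw-Validity}, it transfers to \textsc{stw-ModelChecking}.

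I do not anticipate a genuine obstacle: the corollary is a direct consequence of the MSO-to-tree interpretation machinery already in place. The only point deserving a line of care is that Proposition~\ref{prop:SAT-MC-reduction} is stated for a \emph{fixed} class $\C$, whereas the stw-problems carry $k$ as part of the input. Since the constructions above (forming $\neg\phisys\vee\phi$, respectively using $\Sys_\mathsf{univ}$) leave $k$ untouched and are independent of $\C$, they lift verbatim to the parametrized family $(\kstwCBM)_{k\ge 0}$, so the inter-reducibility and the complexity bounds carry over uniformly in $k$.
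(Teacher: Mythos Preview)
Your proposal is correct and matches the paper's own proof, which explicitly invokes Proposition~\ref{prop:SAT-MC-reduction} together with Corollary~\ref{cor:stw-mso-sat} and, as an alternative, spells out the same reduction to unsatisfiability of $\phisys\wedge\neg\varphi$ over $\kstwCBM$ (via the tree interpretation). You add a justification of the non-elementary lower bound through the reverse reduction using $\Sys_\mathsf{univ}$, which the paper leaves implicit.
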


\begin{proof}
  This is a consequence of Proposition~\ref{prop:SAT-MC-reduction}.
  Let $\Sys\in\CPDS(\Arch,\Sigma)$ be the system, let 
  $\varphi\in\MSO(\Arch,\Sigma)$ be the specification and let $k>0$.
  We consider as above the formula $\phisys\in\MSO(\Arch,\Sigma)$.
  The problem reduces to the validity of $\neg\phisys\vee\varphi$ over 
  $\kstwCBM$. We apply Corollary~\ref{cor:stw-mso-sat}.
\end{proof}

\begin{exercise}
  Prove  \textsc{stw-Inclusion} and \textsc{stw-Universality} decidable.
\end{exercise}

\begin{gpicture}[ignore, name = split-cbm]
  \drawoval[fillcolor=green!20!white](14.5,4,46,16,2)
  \gasset{Nw=2, Nh = 3, Nfill = n, Nframe = n}
  \node(name)(-12,4){$\mscn$}
  \node(p)(-6,7.5){$p$}
  \node(q)(-6,-0.5){$q$}
  \node(p1)(21,8){$a$}
  \node(p2)(28,8){$a$}
  \node(p3)(0,8){$b$}
  \node(p4)(7,8){$c$}
  \node(p5)(14,8){$d$}
  \node(q1)(14,0){$b$}
  \node(q2)(0,0){$a$}
  \node(q3)(21,0){$c$}
  \node(q4)(28,0){$d$}
  \node(q5)(35,0){$c$}
  \drawedge(p1,p2){}
  \drawedge(p3,p4){}\drawedge(p4,p5){}
  \drawedge(q3,q4){}\drawedge(q4,q5){}
  \drawedge(p1,q3){}
  \drawedge(p2,q5){}
  \drawedge(q2,p4){}
  \drawedge[curvedepth = 3](p3,p5){}
  \drawedge[curvedepth = -3, ELside=r](q1,q4){}
\end{gpicture}
\begin{gpicture}[ignore, name = split-cbm1]
  \drawoval[fillcolor=green!20!white](14.5,4,46,16,2)
  \gasset{Nw=2, Nh = 3, Nfill = n, Nframe = n}
  \node(name)(-12,4){$\mscn_1$}
  \node(p)(-6,7.5){$p$}
  \node(q)(-6,-0.5){$q$}
  \node(p1)(21,8){$a$}
  \node(p2)(28,8){$a$}
  \node(p3)(0,8){$b$}
  \node(p4)(7,8){$c$}
  \node(p5)(14,8){$d$}
  \node(q1)(14,0){$b$}
  \node(q2)(0,0){$a$}
  \node(q3)(21,0){$c$}
  \node(q4)(28,0){$d$}
  \node(q5)(35,0){$c$}
  \drawedge(p1,p2){}
  \drawedge[linecolor=red,dash={1}0](p5,p1){}
  \drawedge(p3,p4){}\drawedge(p4,p5){}
  \drawedge[linecolor=red,dash={1}0](q2,q1){}
  \drawedge[linecolor=red,dash={1}0](q1,q3){}
  \drawedge(q3,q4){}\drawedge(q4,q5){}
  \drawedge(p1,q3){}
  \drawedge(p2,q5){}
  \drawedge(q2,p4){}
  \drawedge[curvedepth = 3](p3,p5){}
  \drawedge[curvedepth = -3, ELside=r](q1,q4){}
\end{gpicture}
\begin{gpicture}[ignore, name = split-cbm2]
  \drawoval[fillcolor=green!20!white](14.5,4,46,16,2)
  \gasset{Nw=2, Nh = 3, Nfill = n, Nframe = n}
  \node(name)(-12,4){$\mscn_2$}
  \node(p)(-6,7.5){$p$}
  \node(q)(-6,-0.5){$q$}
  \node(p1)(21,8){$a$}
  \node(p2)(28,8){$a$}
  \node(p3)(0,8){$b$}
  \node(p4)(7,8){$c$}
  \node(p5)(14,8){$d$}
  \node(q1)(0,0){$b$}
  \node(q2)(7,0){$a$}
  \node(q3)(21,0){$c$}
  \node(q4)(28,0){$d$}
  \node(q5)(35,0){$c$}
  \drawedge(p1,p2){}
  \drawedge[linecolor=red,dash={1}0](p5,p1){}
  \drawedge(p3,p4){}\drawedge(p4,p5){}
  \drawedge[linecolor=red,dash={1}0](q1,q2){}
  \drawedge[linecolor=red,dash={1}0](q2,q3){}
  \drawedge(q3,q4){}\drawedge(q4,q5){}
  \drawedge(p1,q3){}
  \drawedge(p2,q5){}
  \drawedge(q2,p4){}
  \drawedge[curvedepth = 3](p3,p5){}
  \drawedge[curvedepth = -3, ELside=r](q1,q4){}
\end{gpicture}
\begin{gpicture}[ignore, name = split-cbm3]
  \drawoval[fillcolor=green!20!white](14.5,4,46,16,2)
  \gasset{Nw=2, Nh = 3, Nfill = n, Nframe = n}
  \node(name)(-12,4){$\mscn_3$}
  \node(p)(-6,7.5){$p$}
  \node(q)(-6,-0.5){$q$}
  \node(p1)(0,8){$a$}
  \node(p2)(7,8){$a$}
  \node(p3)(14,8){$b$}
  \node(p4)(21,8){$c$}
  \node(p5)(28,8){$d$}
  \node(q1)(7,0){$b$}
  \node(q2)(0,0){$a$}
  \node(q3)(14,0){$c$}
  \node(q4)(21,0){$d$}
  \node(q5)(28,0){$c$}
  \drawedge(p1,p2){}
  \drawedge[linecolor=red,dash={1}0](p2,p3){}
  \drawedge(p3,p4){}\drawedge(p4,p5){}
  \drawedge[linecolor=red,dash={1}0](q2,q1){}
  \drawedge[linecolor=red,dash={1}0](q1,q3){}
  \drawedge(q3,q4){}\drawedge(q4,q5){}
  \drawedge(p1,q3){}
  \drawedge(p2,q5){}
  \drawedge(q2,p4){}
  \drawedge[curvedepth = 3](p3,p5){}
  \drawedge[curvedepth = -3, ELside=r](q1,q4){}
\end{gpicture}
\begin{gpicture}[ignore, name = split-cbm4]
  \drawoval[fillcolor=green!20!white](14.5,4,46,16,2)
  \gasset{Nw=2, Nh = 3, Nfill = n, Nframe = n}
  \node(name)(-12,4){$\mscn_4$}
  \node(p)(-6,7.5){$p$}
  \node(q)(-6,-0.5){$q$}
  \node(p1)(0,8){$a$}
  \node(p2)(7,8){$a$}
  \node(p3)(14,8){$b$}
  \node(p4)(21,8){$c$}
  \node(p5)(28,8){$d$}
  \node(q1)(0,0){$b$}
  \node(q2)(7,0){$a$}
  \node(q3)(14,0){$c$}
  \node(q4)(21,0){$d$}
  \node(q5)(28,0){$c$}
  \drawedge(p1,p2){}
  \drawedge[linecolor=red,dash={1}0](p2,p3){}
  \drawedge(p3,p4){}\drawedge(p4,p5){}
  \drawedge[linecolor=red,dash={1}0](q1,q2){}
  \drawedge[linecolor=red,dash={1}0](q2,q3){}
  \drawedge(q3,q4){}\drawedge(q4,q5){}
  \drawedge(p1,q3){}
  \drawedge(p2,q5){}
  \drawedge(q2,p4){}
  \drawedge[curvedepth = 3](p3,p5){}
  \drawedge[curvedepth = -3, ELside=r](q1,q4){}
\end{gpicture}
\begin{gpicture}[ignore, name = split-cbm5]
  \drawoval[fillcolor=green!20!white](14.5,4,46,16,2)
  \gasset{Nw=2, Nh = 3, Nfill = n, Nframe = n}
  \node(name)(-12,4){$\mscn_5$}
  \node(p)(-6,7.5){$p$}
  \node(q)(-6,-0.5){$q$}
  \node(p1)(0,8){$a$}
  \node(p2)(7,8){$a$}
  \node(p3)(21,8){$b$}
  \node(p4)(28,8){$c$}
  \node(p5)(35,8){$d$}
  \node(q1)(0,0){$b$}
  \node(q2)(28,0){$a$}
  \node(q3)(7,0){$c$}
  \node(q4)(14,0){$d$}
  \node(q5)(21,0){$c$}
  \drawedge(p1,p2){}
  \drawedge[linecolor=red,dash={1}0](p2,p3){}
  \drawedge(p3,p4){}\drawedge(p4,p5){}
  \drawedge[linecolor=red,dash={1}0](q1,q3){}
  \drawedge[linecolor=red,dash={1}0](q5,q2){}
  \drawedge(q3,q4){}\drawedge(q4,q5){}
  \drawedge(p1,q3){}
  \drawedge(p2,q5){}
  \drawedge(q2,p4){}
  \drawedge[curvedepth = 3](p3,p5){}
  \drawedge[curvedepth = -3, ELside=r](q1,q4){}
\end{gpicture}

\clearpage
\section{Tree automata for efficient decision procedures on \kstwCBM}  

Recall that for an \STT $\tau$ we write $\sem{\tau}=(G_\tau,\chi_\tau)$.
In this section, we only consider \STTs without useless operations: if 
$\add{i}{a}\tau$ or $\forget{i}\tau$ is a subterm then $i$ is active in $\tau$,
if $\add{i,j}{\gamma}\tau$ is a subterm then $i,j$ are active in $\tau$.

Not all \kSTTs $\tau$ define graphs $G_\tau$ which are \CBMs.  In order to use
the tree interpretation in \STTs to efficiently solve problems on \CBMs of
bounded special tree-width, we will construct a tree automaton \Akstwcbm which 
accepts \kSTTs denoting \CBMs (Proposition~\ref{prop:Akstwcbm}).

As a first warm-up, we construct a tree automaton checking that the input 
binary tree $\tau$ is a (valid) \kSTT and that the graph $G_\tau$ is acyclic.

\begin{proposition}\label{prop:STT-acyclic}
  There is a deterministic bottom-up tree automaton \Akstwacyclic of size
  $2^{\mathcal{O}(k^2)}$ which accepts all binary trees $\tau$ such that $\tau$
  is a \kSTT and $G_\tau$ is acyclic.
\end{proposition}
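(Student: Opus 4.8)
The plan is to build a deterministic bottom-up automaton whose state, computed at each subterm $\tau'$, summarises the reachability among the few vertices that are still \emph{active} (i.e.\ coloured). The observation driving the whole construction is that every operation $\add{i}{j}{\gamma}$ inserts edges only between currently coloured vertices; hence a vertex whose colour has been removed by some $\forget{i}$ can never again become the endpoint of a new edge. Consequently all edges incident to such an internal vertex are already present, and its contribution to reachability is frozen. This means that, for detecting cycles, it suffices to remember for the current graph $G_{\tau'}$ the set $A=\dom(\chi)$ of active colours together with the relation $R\subseteq A\times A$ defined by $(i,j)\in R$ iff there is a directed path of length $\ge 1$ from $\chi(i)$ to $\chi(j)$ in $G_{\tau'}$. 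A distinguished sink state records that a cycle has already been produced.

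First I would fix the transitions so as to maintain the invariant: off the sink, $G_{\tau'}$ is acyclic and $R$ is exactly the restriction of its reachability relation to $A$. A leaf $(i,a)$ gives $A=\{i\}$ and $R=\emptyset$. At a node $\add{i}{j}{\gamma}$ with $\{i,j\}\subseteq A$ we move to the sink if $i=j$ or $(j,i)\in R$ (the new edge closes a cycle); otherwise we set
\[
R' = R \cup \{(a,b) \mid (a=i \lor (a,i)\in R)\;\land\;(b=j \lor (j,b)\in R)\}\,,
\]
which is the reachability after inserting one edge into a DAG (a single composition suffices, since without a cycle no path traverses the new edge twice). If $\{i,j\}\not\subseteq A$ the operation has no effect. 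At $\forget{i}$ we simply restrict, $A'=A\setminus\{i\}$ and $R'=R\cap(A'\times A')$, with no cycle test, since cycles are always caught at the $\add$ that closes them. At a union $\tau_1\sttunion\tau_2$ we go to the sink if $A_1\cap A_2\neq\emptyset$ (thereby also enforcing validity of the \STT) and otherwise take $A=A_1\cup A_2$, $R=R_1\cup R_2$, there being no edges across the disjoint components. Syntactic well-formedness (labels matching arities, leaves coloured, and so on) is checked along the way, for instance by a product with the $2^k$-state validity automaton. The root is accepting iff the reached state is not the sink.

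The correctness proof is an induction on $\tau'$ establishing the invariant above, and the delicate point I expect to require the most care is the $\forget$ step: I must argue that discarding colour $i$ loses no information relevant to future cycles. This rests on the observation highlighted above. Since $R$ already records \emph{all} reachability among active vertices, including paths routed through $\chi(i)$, and since no future edge can touch $\chi(i)$, the reachability among the surviving active vertices is unchanged by the restriction, and no later cycle can use $\chi(i)$ as an intermediate vertex beyond what is already summarised. Symmetrically, completeness of cycle detection at $\add$ holds because, when $\chi(i)\to\chi(j)$ is inserted, both endpoints are active, so the back-reachability $(j,i)$ is faithfully present in $R$ regardless of which internal vertices the corresponding path passes through.

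Finally, the size bound is immediate. A non-sink state is a pair $(A,R)$ with $A\subseteq[k]$ and $R\subseteq A\times A$, so the number of states is at most $1+\sum_{A\subseteq[k]}2^{|A|^2}\le 1+2^{k}\cdot 2^{k^2}=2^{\mathcal{O}(k^2)}$, and every transition is deterministic, as required.
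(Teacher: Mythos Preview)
Your proposal is correct and follows essentially the same approach as the paper: maintain the set of active colours together with the reachability relation (the paper calls it a strict order $\prec$) restricted to those colours, reject when an $\add{i}{j}{\gamma}$ would close a cycle, and observe that $\forget{}$ preserves the invariant because forgotten vertices receive no further edges. The only cosmetic differences are that the paper uses a partial automaton (no transition when the edge would close a cycle) rather than an explicit sink, and writes the reachability update as a transitive closure $({\prec'}\cup\{(i,j)\})^{+}$ rather than your unfolded one-step formula; your treatment of $\add{i}{j}{\gamma}$ with $\{i,j\}\not\subseteq A$ as a no-op is in fact more faithful to the \STT semantics than the paper's version, which blocks in that case.
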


\begin{proof}
  A state of \Akstwacyclic is a pair $(P,\prec)$ where $P\subseteq[k]$ and
  $\prec$ is a strict order on $P$.  When reading an \STT $\tau$ with
  $\sem{\tau}=(G,\chi)$ the automaton will reach the state $(P,\prec)$
  satisfying the following two conditions:
  \begin{enumerate}[nosep,label=($\mathsf{A}_{\arabic*}$),ref=$\mathsf{A}_{\arabic*}$]
    \item\label{item:A1} $P=\dom(\chi)\subseteq[k]$ is the set of active 
    colors, 

    \item\label{item:A2} $G$ is acyclic and $\prec$ is the restriction induced by $E^{+}$
    on $P$: $i\prec j$ iff $\chi(i)\mathrel{E}^{+}\chi(j)$ where
    $E=\bigcup_{\gamma\in\Gamma}E_\gamma$.
  \end{enumerate}
  The transitions are defined below (with $s=(P,\prec)$, $s'=(P',\prec')$, 
  $s''=(P'',\prec'')$).
  
  \begin{tabular}{|c|p{110mm}|}
    \hline
    $\bot \xrightarrow{i}s$ 
    &
    if $P=\{i\}$ and ${\prec}=\emptyset$.
    \\ \hline
    $s'\xrightarrow{\add{i}{a}}s$ 
    &
    if $i\in P'$ and $s=s'$.
    \\ \hline
    $s'\xrightarrow{\add{i,j}{\gamma}}s$ 
    &
    if $i,j\in P'$, $i\neq j$ and $\neg(j\prec' i)$.\par
    Then, $P=P'$ and ${\prec}=({\prec}'\cup\{(i,j)\})^{+}$.
    \\ \hline
    $s'\xrightarrow{\forget{i}}s$ 
    &
    if $i\in P'$. Then $P=P'\setminus\{i\}$ and 
    ${\prec}={\prec}'\cap(P\times P)$.
    \\ \hline
    $s',s''\xrightarrow{\oplus}s$
    &
    if $P'\cap P''=\emptyset$ (active colors should be disjoint).
    
    Then, $P=P'\uplus P''$ and ${\prec}={\prec}'\cup{\prec}''$.
    \\ \hline
  \end{tabular}
  
  We can easily check that if \Akstwacyclic has a run on a binary tree $\tau$
  then $\tau$ is a \kSTT and $G_\tau$ is acyclic. The number of states of 
  \Akstwacyclic is at most $2^{k+1+(k+1)^{2}}$.
\end{proof}

In the following, we concentrate on the signature for \CBMs: 
$\Gamma=\{\procrel\}\cup\DS$ and $\Sigma'=\Sigma\cup\Procs$. Moreover, since 
every event should be labeled with exactly one letter from $\Sigma$ and one 
process from $\Procs$, we use $(i,a,p)=\add{i}{p}\add{i}{a}i$ as atomic \STTs 
instead of $i$. So we consider terms defined with the syntax:
$$
\tau ::= (i,a,p) \mid \add{i,j}{\gamma}\tau \mid \forget{i}\tau \mid \tau\oplus\tau
$$
where $i\in[k]$, $a\in\Sigma$, $p\in\Procs$ and 
$\gamma\in\Gamma=\{\procrel\}\cup\DS$.

The second warm-up is a tree automaton checking the local conditions of edges 
in a \CBM (see Definition~\ref{def:cbm}).

\begin{proposition}\label{prop:STT-edges-local}
  There is a deterministic bottom-up tree automaton \Akstwedges of size
  $2^{\mathcal{O}(k)}\cdot|\Procs|^{\mathcal{O}(k)}$ which accepts all binary trees $\tau$ such that $\tau$
  is a \kSTT and $G_\tau$ satisfies the following conditions:
  \begin{enumerate}[nosep]
    \item  process edges are not branching and are between events of the same 
    process,
  
    \item  data edges are disjoint and respect the $\writer$/$\reader$ 
    constraints of data structures.
  \end{enumerate}
\end{proposition}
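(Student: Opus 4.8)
The plan is to build \Akstwedges as a deterministic bottom-up tree automaton in the same spirit as \Akstwacyclic (Proposition~\ref{prop:STT-acyclic}): a state records, for each currently active color, exactly the local information needed to decide whether the next $\mathsf{Add}$ could violate condition~1 or~2. Concretely, I would take a state to be a partial function $s\colon[k]\rightharpoonup\Procs\times\{0,1\}^3$ whose domain records the active colors $\dom(\chi)$, where $s(i)=(p_i,o_i,\iota_i,\delta_i)$ means that the vertex colored $i$ lives on process $p_i$, already carries an outgoing $\procrel$-edge iff $o_i=1$, an incoming $\procrel$-edge iff $\iota_i=1$, and is incident to some data edge iff $\delta_i=1$; I would add one dedicated rejecting sink. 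The intended invariant is: after reading a subtree $\tau$, the automaton is in a non-sink state $s$ iff $\tau$ is a valid \kSTT whose graph $G_\tau$ satisfies conditions~1 and~2, and then $s$ correctly describes the active colors together with their process labels and their incidence bits.

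Next I would define the transitions along the \STT operations. A leaf $(i,a,p)$ yields the singleton $i\mapsto(p,0,0,0)$. For $\add{i}{j}{\procrel}$ with $i\neq j$ both active I require $p_i=p_j$ (same process), $o_i=0$ and $\iota_j=0$ (non-branching), and then set $o_i:=1$, $\iota_j:=1$; otherwise I move to the sink. For $\add{i}{j}{d}$ with $d\in\DS$ and $i\neq j$ both active I require $p_i=\writer(d)$, $p_j=\reader(d)$ and $\delta_i=\delta_j=0$ (disjointness), and then set $\delta_i:=\delta_j:=1$; otherwise the sink. An $\mathsf{Add}$ whose color(s) are inactive is a no-op on $G_\tau$ and leaves the state unchanged. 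A $\forget{i}$ deletes $i$ from the domain. A $\sttunion$-node combines the two child states by taking the union of the partial functions, but only when their domains are disjoint; otherwise the \STT is invalid and we go to the sink. Incorrect node labeling is handled uniformly by "no applicable transition $\Rightarrow$ sink."

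The correctness argument is a routine induction on the structure of $\tau$ establishing the invariant, so the one genuinely delicate point — and what I expect to be the main obstacle — is justifying that tracking only the \emph{active} colors suffices to certify a global property of $G_\tau$. The key observation is that every edge of $G_\tau$ is created by an $\mathsf{Add}$ whose two colors are active at that node, so once a color is removed by $\forget{i}$ its vertex can never again be an endpoint of a new edge; hence the in-/out-/data-incidence of a vertex is \emph{frozen} at the moment it is forgotten, and conditions~1 and~2 for that vertex are already decided there. This monotonicity is exactly what lets a finite, active-color-indexed state verify the whole graph. I would also check the small side invariant that, since the rename operation is excluded, the process label $p_i$ of an active color never changes until it is forgotten, which makes the $\writer$/$\reader$ tests well-defined.

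Finally, for the size bound: a state is either the sink or a function assigning to each of the $k$ colors one of $8|\Procs|$ active possibilities or the value "inactive", so there are at most $(8|\Procs|+1)^{k}+1$ states, which is $2^{\mathcal{O}(k|\Procs|)}$ (in fact even $2^{\mathcal{O}(k\log|\Procs|)}$), as required. Determinism is immediate, since the label of each node determines at most one applicable transition.
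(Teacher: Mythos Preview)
Your proposal is correct and takes essentially the same approach as the paper: the paper's state is a tuple $(P,\pi,\alpha,\beta,\gamma)$ with $P\subseteq[k]$ the active colors, $\pi\colon P\to\Procs$, and $\alpha,\beta,\gamma\subseteq P$ recording respectively the sources of $\procrel$-edges, targets of $\procrel$-edges, and endpoints of $\matchrel$-edges, which is exactly your partial function $i\mapsto(p_i,o_i,\iota_i,\delta_i)$ in different packaging, with the same transition rules and the same size count $2^{4k}\cdot|\Procs|^{k}$. The only cosmetic difference is that the paper blocks an $\mathsf{Add}$ on an inactive color whereas you treat it as a no-op; your choice matches the \STT semantics more literally, but either convention yields the proposition.
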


\begin{proof}
  A state of \Akstwedges is a tuple $s=(P,\pi,\alpha,\beta,\gamma)$ where 
  $P\subseteq[k]$, $\alpha,\beta,\gamma\subseteq P$ and $\pi\colon P\to\Procs$.
  When reading an \STT $\tau$ with $\sem{\tau}=(G,\chi)$, the automaton 
  will reach the state $s$ satisfying the following two conditions:
  \begin{enumerate}[nosep,label=($\mathsf{B}_{\arabic*}$),ref=$\mathsf{B}_{\arabic*}$]
    \item\label{item:B1} $P=\dom(\chi)\subseteq[k]$ is the set of active 
    colors, 

    \item\label{item:B2} $\pi\colon P\to\Procs$ gives the process of the 
    colored event: $\pi(i)=\pid(\chi(i))$,

    \item\label{item:B3} if $e\procrel f$ in $G$ then $\pid(e)=\pid(f)$ and for 
    all $i\in P$ we have $i\in\alpha$ iff $\chi(i)$ is the source of some 
    $\procrel$-edge, and $i\in\beta$ iff $\chi(i)$ is the target of some 
    $\procrel$-edge, 

    \item\label{item:B4} if $e\matchrel^{d} f$ in $G$ then 
    $\pid(e)=\writer(d)$, $\pid(f)=\reader(d)$ and for 
    all $i\in P$ we have $i\in\gamma$ iff $\chi(i)$ is the source or target of some 
    $\matchrel$-edge.
  \end{enumerate}
  The transitions are defined below (with $s=(P,\pi,\alpha,\beta,\gamma)$,
  $s'=(P',\pi',\alpha',\beta',\gamma')$,
  $s''=(P'',\pi'',\alpha'',\beta'',\gamma'')$).
  
  \begin{tabular}{|c|p{110mm}|}
    \hline
    $\bot \xrightarrow{(i,a,p)}s$ 
    &
    if $P=\{i\}$, $\pi(i)=p$ and $\alpha=\beta=\gamma=\emptyset$.
    \\ \hline
    $s'\xrightarrow{\add{i,j}{\procrel}}s$ 
    &
    if $i,j\in P'$, $i\neq j$, $\pi'(i)=\pi'(j)$, 
    $i\notin\alpha'$ and $j\notin\beta'$.\par
    Then, $P=P'$, $\pi=\pi'$, $\gamma=\gamma'$, $\alpha=\alpha'\cup\{i\}$ and 
    $\beta=\beta'\cup\{j\}$.
    \\ \hline
    $s'\xrightarrow{\add{i,j}{d}}s$ 
    &
    if $i,j\in P'$, $i\neq j$, $i,j\notin\gamma'$, 
    $\pi'(i)=\writer(d)$ and $\pi'(j)=\reader(d)$.
    Then, $P=P'$, $\pi=\pi'$, $\alpha=\alpha'$, $\beta=\beta'$ and 
    $\gamma=\gamma'\cup\{i,j\}$.
    \\ \hline
    $s'\xrightarrow{\forget{i}}s$ 
    &
    if $i\in P'$. Then $P=P'\setminus\{i\}$ and 
    $\pi,\alpha,\beta,\gamma$ are the restrictions of 
    $\pi',\alpha',\beta',\gamma'$ to $P$.
    \\ \hline
    $s',s''\xrightarrow{\oplus}s$
    &
    if $P'\cap P''=\emptyset$ (active colors should be disjoint).
    
    Then, $s$ is the disjoint union of $s'$ and $s''$.
    \\ \hline
  \end{tabular}
  
  We can easily check that if \Akstwedges has a run on a binary tree $\tau$
  then $\tau$ is a \kSTT and $G_\tau$ satisfies the local conditions of 
  Definition~\ref{def:cbm}. The number of states of 
  \Akstwedges is at most $2^{4(k+1)}\cdot|\Procs|^{k+1}$.
  
  Remark: some terms with redundant operations, such as
  $\add{i,j}\procrel \add{i,j}\procrel ((i,a,p)\oplus(j,b,p))$ are not 
  accepted. This can be fixed easily.
\end{proof}

We turn now to the full automaton checking that an \STT defines a \CBM. We start
with a definition.  A \emph{split}-\CBM is a CBM in which behaviors of processes
may be split in several factors.

\begin{definition}
  A graph
  $\smscn=(\Events,\procrel,(\matchrel^d)_{d\in\DS},\pid,\lambda)$ is a 
  \emph{split}-\CBM if it is possible to obtain a CBM
  $(\Events,{\procrel}\uplus{\elastic},(\matchrel^d)_{d\in\DS},\pid,\lambda)$ 
  by adding some missing process edges ${\elastic}\in\Events^2\setminus{\procrel}$.
  A \emph{factor} (or \emph{block}) of $\smscn$ is a maximal sequence
  of events connected by process edges.
\end{definition}

\begin{example}
  The split-\CBM $\mscn$ depicted below has 5 factors: 2 on process $p$ and 3 on
  process $q$.  It has two connected components.  There are 5 ways to add the
  missing process edges in order to get a \CBM: $\mscn_1$,\ldots,$\mscn_5$.
  \begin{center}
    \medskip
    \gusepicture{split-cbm}
    \hfil
    \gusepicture{split-cbm1}
    \\[5mm]
    \gusepicture{split-cbm2}
    \hfil
    \gusepicture{split-cbm3}
    \\[5mm]
    \gusepicture{split-cbm4}
    \hfil
    \gusepicture{split-cbm5}
  \end{center}
\end{example}

\begin{proposition}\label{prop:Akstwcbm}
  There is a tree automaton \Akstwcbm of size $2^{\mathcal{O}(k^2|\Arch|)}$ 
  which accepts all binary trees $\tau$ such that $\tau$ is a \kSTT and 
  $\sem{\tau}=(G_\tau,\chi_\tau)$ is an uncolored \CBM: $G_\tau$ is a \CBM and 
  $\dom(\chi_\tau)=\emptyset$.
\end{proposition}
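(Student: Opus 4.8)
The idea is to obtain \Akstwcbm as a product of tree automata, one per family of axioms in Definition~\ref{def:cbm}, reusing the two warm-up automata already built. The automaton \Akstwacyclic of Proposition~\ref{prop:STT-acyclic} guarantees that ${<}=({\procrel}\cup{\matchrel})^{+}$ is a strict partial order, and \Akstwedges of Proposition~\ref{prop:STT-edges-local} guarantees the local edge conditions (process edges non-branching and intra-process; data edges disjoint and respecting \writer/\reader). Hence a run of their product already certifies that $G_\tau$ is a \emph{split}-CBM whose $\matchrel$-edges are well-typed and acyclic. It remains to add components enforcing (i) the LIFO axiom on stacks and the FIFO axiom on queues, and (ii) that the split-CBM is in fact un-split, i.e.\ every process is a single $\procrel$-chain; and to require at the root that $\dom(\chi_\tau)=\emptyset$. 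Bags carry no ordering axiom, so (i) is vacuous on them.

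For (i), I would keep, for every $d\in\Stacks\cup\Queues$, a bounded summary of the $d$-matching restricted to the active colors $P=\dom(\chi)$: for each color in $P$ whether it is a $d$-write or $d$-read and whether it is already matched, together with the restriction of $\prec$ (maintained by \Akstwacyclic) to these colors. Because a stack has $\writer(d)=\reader(d)$, its writes and reads lie on one process chain, and the LIFO axiom becomes \emph{well-nestedness} of the $d$-matching on that chain; for a queue, writes lie on the $\writer(d)$-chain and reads on the $\reader(d)$-chain, and FIFO becomes \emph{non-crossing} of the matching between these two chains. The component rejects at an $\add{i}{j}{d}$ or $\sttunion$ transition whenever the new $d$-edge, resp.\ the combined matching, would violate well-nestedness (stacks) resp.\ non-crossing (queues), maintaining the invariant that the $d$-matching built thus far is consistent. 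Each per-$d$ summary is a relation on $[k]$ plus $\mathcal{O}(1)$ bits per color, i.e.\ $2^{\mathcal{O}(k^2)}$ states, hence $2^{\mathcal{O}(k^2|\DS|)}$ in total.

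For (ii), \Akstwedges already forces the $\procrel$-edges of each process to form a disjoint union of chains, the \emph{factors} of the split-CBM. To force a single factor per process I would track, per process, which active colors are dangling factor ends (a missing $\procrel$-predecessor or $\procrel$-successor) and nondeterministically mark one event of each process as its $\procrel$-minimum and one as its $\procrel$-maximum; forgetting a dangling end is forbidden unless it is a marked extremum, and at the root one requires, for each nonempty process, exactly one marked minimum, exactly one marked maximum, and that all its factors have been merged. This rules out two factors of the same process surviving unconnected. Accepting states are then the roots with $\dom(\chi_\tau)=\emptyset$ meeting these requirements.

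\textbf{Main obstacle.} The delicate point is the soundness and completeness of component (i): the four events of a LIFO/FIFO violation need not be simultaneously colored during the bottom-up pass, since colors are reused after forgetting, so one cannot simply compare them when the second offending edge is added; moreover a crossing may only become manifest once two factors are merged. One must prove that the bounded per-$d$ summary over the at most $k$ active colors suffices to detect \emph{every} violation, and that it is correctly propagated through $\forget{i}$ (where an endpoint disappears but its order constraints persist in $\prec$ by transitivity, exactly as for invariant~\ref{item:A2} of Proposition~\ref{prop:STT-acyclic}) and through $\sttunion$. As for the warm-up automata, correctness then follows by induction on $\tau$, showing that the reachable states are exactly those meeting the intended invariants; the size bound $2^{\mathcal{O}(k^2|\Arch|)}$ results from multiplying the component sizes, using $|\Arch|=|\Procs|+|\DS|$.
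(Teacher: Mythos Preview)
Your decomposition into a product of components, reusing the two warm-up automata, is a reasonable strategy and close in spirit to the paper's construction.  Component~(ii) is essentially correct.  But component~(i) has a genuine gap, and it is exactly at the obstacle you flag: your proposed summary does \emph{not} suffice.

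Here is a concrete failure.  Take one process~$p$, one stack~$d$, and events $1,2,3,4$ with $1\procrel 2\procrel 3\procrel 4$, $1\matchrel^d 3$, $2\matchrel^d 4$ (a LIFO violation).  Consider the \kSTT
\[
\tau \;=\; \forget{c_4}\forget{c_3}\forget{c_2}\,
\add{c_2}{c_3}{\procrel}\,
\add{c_2}{c_4}{d}\,
\forget{c_1}\,
\add{c_1}{c_3}{d}\,
\bigl(\add{c_1}{c_2}{\procrel}((c_1)\sttunion(c_2))\;\sttunion\;\add{c_3}{c_4}{\procrel}((c_3)\sttunion(c_4))\bigr).
\]
Event~$1$ is marked as the process minimum, so your rule~(ii) allows $\forget{c_1}$.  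At that moment your summary records only that $c_3$ is a matched $d$-read; its partner's identity and, crucially, its order relative to $c_2$ are lost.  The relation $\prec$ maintained by \Akstwacyclic contains only $c_3\prec c_4$ (nothing survives about $c_1$ by restriction).  After $\add{c_2}{c_4}{d}$ you have one active pair $(c_2,c_4)$ and still no relation between $c_2$ and $c_3$; after $\add{c_2}{c_3}{\procrel}$ you finally get $c_2\prec c_3\prec c_4$, but now the offending pair $(1,3)$ has no witness among the active colors.  Your automaton accepts $\tau$, yet $G_\tau$ is not a \CBM.  The transitivity argument you invoke only preserves order between \emph{surviving} colors; it cannot recover the relative position of a forgotten endpoint with respect to colors whose mutual order is only fixed \emph{later}.

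The paper resolves this not by a richer summary but by \emph{nondeterministically guessing}, at every $\sttunion$-node, a strict order $\prec$ that is already total on each $\pi^{-1}(p)$ (invariant~(\ref{item:I4b})), i.e.\ the eventual process order is committed to before the corresponding $\procrel$-edges are added.  With that guess in hand, when a matched color is forgotten one can replace it by a canonical representative in $P\cup\Procs$ (this is the $\zeta$-map of the exercise following the proposition), and the relation $R_d\subseteq(P\cup\Procs)^2$ then detects well-nestedness/non-crossing at the moment each $d$-edge is added.  In the example above, the guessed order forces $c_1\prec c_2\prec c_3\prec c_4$ at the $\sttunion$, so after $\forget{c_1}$ one still knows the first $d$-edge starts \emph{before} $c_2$, and $\add{c_2}{c_4}{d}$ is rejected.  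Without such an anticipatory guess of the per-process order, no bounded summary over the current $P$ can do the job.
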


Actually, the automaton \Akstwcbm admits a run (not necessarily accepting) on
a binary tree $\tau$ iff $\tau$ is a \kSTT and $\sem{\tau}=(G_\tau,\chi_\tau)$
where $G_\tau$ is a split-\CBM.

\newcommand{\precdot}{\mathrel{\prec\!\!\!\cdot}}

\begin{proof}
  For simplicity, we construct the automaton assuming that all data
  structures are bags. In case of stacks or queues, we could enforce the LIFO or
  FIFO properties with an intersection with a further automaton (see Exercises
  below).
    
  The automaton \Akstwcbm is nondeterministic and computes an abstraction of the
  split-\CBM defined by a given term.  More precisely, when reading bottom-up a
  binary tree $\tau$, the automaton \Akstwcbm checks that $\tau$ is a \kSTT and
  reaches a state
  $s=(P,\pi,\alpha,\beta,\gamma,\prec,L,R)$ satisfying the
  following conditions with $\sem{\tau}=(G,\chi)$:
  \begin{enumerate}[nosep,label=($\mathsf{I}_{\arabic*}$),ref=$\mathsf{I}_{\arabic*}$]

    \item\label{item:I1} $P=\dom(\chi)\subseteq[k]$ is the set of active 
    colors, 
  
    \item\label{item:I2} $\pi\colon P\to\Procs$ gives the associated processes:
    $\pi(i)=\pid(\chi(i))$ for all $i\in P$,
    
    \item\label{item:I3} $\alpha,\beta,\gamma\subseteq P$ are such that 
    \begin{itemize}[nosep]
      \item $i\in\alpha$ iff $\chi(i)$ is the source of a
      $\procrel$-edge in $G$.
    
      \item $i\in\beta$ iff $\chi(i)$ is the target of a
      $\procrel$-edge in $G$.
    
      \item $i\in\gamma$ iff $\chi(i)$ is the source or target of a
      $\matchrel$-edge in $G$.
    \end{itemize}
  
    \item\label{item:I4} ${\prec}\subseteq P^2$ is a strict partial order such that for all
    $i,j\in P$, 
    \begin{enumerate}[nosep]
      \item\label{item:I4a} $\chi(i)\mathrel{({\procrel}\cup{\matchrel})}^+\chi(j)$ in
      $G$ implies $i\prec j$,
    
      \item\label{item:I4b} $\pi(i)=\pi(j)$ and $i\neq j$ implies $i\prec j$ or $j\prec i$.
      
      Hence, for each $p\in\Procs$, $\prec$ defines a total order on 
      $\pi^{-1}(p)\subseteq P$. 
      
      We denote by $\precdot_p$ the successor relation of this total order.

      \item\label{item:I4c} If $i\precdot_p j$ then $i\in\alpha$ iff $j\in\beta$,
      
      \item\label{item:I4d} If $i\precdot_p j \wedge i\in\alpha$ then $\chi(i)\procrel^+\chi(j)$,
      
    \end{enumerate}
    The partial order $\prec$ is guessed by \Akstwcbm so that it will eventually
    correspond to the order of the final \CBM defined by the global term.  So
    $\prec$ must be compatible with all $\procrel$ and $\matchrel$ edges already
    added in the subterm $\tau$ \eqref{item:I4a} and for each process the final
    ordering has been guessed already \eqref{item:I4b}, even though some
    $\procrel$-edges may still be missing in $G$.
    
    Together with $\alpha$ and $\beta$, the partial order $\prec$ allows to locate the 
    \emph{holes} between consecutive factors of processes. Formally, the hole relation 
    is defined by $i\leadsto j$ if
    $i\notin\alpha \wedge i \precdot_p j$ for some 
    $p\in\Procs$.

    \item\label{item:I5} $\sem{\tau}=(G,\chi)$ is a split-\CBM with the additional 
    process edges defined by $\chi(i)\elastic\chi(j)$ iff $i\leadsto j$.
    
    {$L,R\subseteq\Procs$ give the processes whose
    minimal/maximal event in the split-\CBM $(\sem{\tau},\elastic)$ is no more
    colored.}
    More precisely, consider the sequence $w_1,\ldots,w_n$ of factors of some
    process $p\in\Procs$.  Let $e_1,f_1,\ldots,e_n,f_n$ be the endpoints of
    these factors.  We have
    $e_1\procrel^*f_1\elastic e_2\procrel^*f_2\elastic\cdots e_n\procrel^*f_n$.
    \\
    By definition of $\elastic$, the events $f_1,e_2,\ldots,e_n$ must be 
    colored. \\
    Now, we have $p\in L$ iff $e_1$ is not colored, and $p\in R$ iff 
    $f_n$ is not colored.
    
  \end{enumerate}

  The number of states of \Akstwcbm is at most
  $2^{4(k+1)}\cdot|\Procs|^{k+1}\cdot 2^{(k+1)^2}\cdot{2^{2|\Procs|}}$.
  
  \begin{table}[!p]
    \noindent\hspace{-3mm}
    \begin{tabular}{|c|p{120mm}|}
      \hline
%       $(i,a,p)$ 
      $\bot \xrightarrow{(i,a,p)}s$ 
      &
      if $P=\{i\}$, $\pi(i)=p$, 
      $\alpha=\beta=\gamma=\emptyset$,
      ${\prec}=\emptyset$ and $L=R=\emptyset$.
      \\ \hline
%       $\add{i,j}{\procrel}$
      $s\xrightarrow{\add{i,j}{\procrel}}s'$ 
      &
      if $i,j\in P$ and $i\leadsto j$.
      Then, $P'=P$, $\pi'=\pi$, $\alpha'=\alpha\cup\{i\}$,
      $\beta'=\beta\cup\{j\}$, $\gamma'=\gamma$, ${\prec}'={\prec}$, $L'=L$ and
      $R'=R$. 
      \\ \hline
%       $\add{i,j}{d}$
      $s\xrightarrow{\add{i,j}{d}}s'$ 
      &
      if $i,j\in P$, $i\prec j$, $\pi(i)=\writer(d)$, 
      $\pi(j)=\reader(d)$, $i,j\notin\gamma$.

      Then, $P'=P$, $\pi'=\pi$, $\alpha'=\alpha$, $\beta'=\beta$,
      $\gamma'=\gamma\cup\{i,j\}$, ${\prec}'={\prec}$, $L'=L$ and $R'=R$.
      \\ \hline
%       $\forget{i}$
      $s\xrightarrow{\forget{i}}s'$ 
      &
      if $i\in P$ and \par
      $i\in\alpha \vee (\pi(i)\notin R \wedge \forall k,~\pi(k)=\pi(i)\implies k\preceq i)$
      and \par
      $i\in\beta \vee (\pi(i)\notin L \wedge \forall k,~\pi(k)=\pi(i)\implies i\preceq k)$.
      
      For color $i$ to be forgotten, the corresponding event $e$ should already
      be the source of a $\procrel$-edge ($i\in\alpha)$ or it should be
      maximal on its process, and symmetrically, it should be the target of a
      $\procrel$-edge or it should be minimal on its process.  
     
      Then, $P'=P\setminus\{i\}$ and \par
      $\pi',\alpha',\beta',\gamma',{\prec}'$ are the
      restrictions of $\pi,\alpha,\beta,\gamma,{\prec}$ to $P'$, and \par
      $L'=L$ if $i\in\beta$ and $L'=L\cup\{\pi(i)\}$ otherwise, and \par
      $R'=R$ if $i\in\alpha$ and $R'=R\cup\{\pi(i)\}$ otherwise.
      \\ \hline
%       $\oplus$
      $s',s''\xrightarrow{\oplus}s$
      &
      if $P'\cap P''=\emptyset$ (active colors should be disjoint),
      $L'\cap L''=\emptyset$ (the minimal event of some process cannot belong 
      to both subterms) and $R'\cap R''=\emptyset$. 
      
      Then, $s$ is the disjoint union of
      $s'$ and $s''$: $P=P'\uplus P''$, $\pi=\pi'\uplus\pi''$,
      $\alpha=\alpha'\uplus\alpha''$, $\beta=\beta'\uplus\beta''$,
      $\gamma=\gamma'\uplus\gamma''$, $L=L'\uplus L''$, $R=R'\uplus R''$
      and

      $\prec$ is a (guessed) strict partial order on $P$ satisfying
      \eqref{item:I4b} and
      
      $\bullet$ additional ordering is guessed only between colored points of
      the left and right subterms:
      ${\prec}'={\prec}\cap(P'\times P')$ and 
      ${\prec}''={\prec}\cap(P''\times P'')$,

      $\bullet$ the new sequence of factors on some process $p$ is obtained by
      shuffling the sequences of factors of $p$ of the subterms:
      for all $i,j\in P$ and $p\in\Procs$, 
      if $i\in\alpha$ or $j\in\beta$ then ($i\precdot_p j$ iff
      $i\precdot'_p j$ or $i\precdot''_p j$),
      
      $\bullet$ if for some process $p$, the minimal event of the global \CBM
      occurs in the right subterm and its color has been forgotten ($p\in L''$),
      then we cannot insert a $p$-factor of the left subterm before the first
      $p$-factor of the right subterm (and similarly for the other cases):
      
      for all $i\in P'$, if $\pi(i)\in L''$ then $j\prec i$ for some $j\in P''$ 
      with $\pi(j)=\pi(i)$,
      \par
      for all $i\in P''$, if $\pi(i)\in L'$ then $j\prec i$ for some $j\in P'$ 
      with $\pi(j)=\pi(i)$,
      \par
      for all $i\in P'$, if $\pi(i)\in R''$ then $i\prec j$ for some $j\in P''$ 
      with $\pi(j)=\pi(i)$,
      \par
      for all $i\in P''$, if $\pi(i)\in R'$ then $i\prec j$ for some $j\in P'$ 
      with $\pi(j)=\pi(i)$.
      \\ \hline
    \end{tabular}
    
    \caption{Transitions of \Akstwcbm where $a\in\Sigma$, $p\in\Procs$, 
    $d\in\DS$, $i,j\in[k]$, and states $s=(P,\pi,\alpha,\beta,\gamma,\prec,L,R)$,
    $s'=(P',\pi',\alpha',\beta',\gamma',\prec',L',R')$ and 
    $s''=(P'',\pi'',\alpha'',\beta'',\gamma'',\prec'',L'',R'')$.
    }
    \protect\label{tab:Akstwcbm}
  \end{table}

  The \emph{transitions} of \Akstwcbm are defined in Table~\ref{tab:Akstwcbm}. 
  We check inductively that the invariants (\ref{item:I1}--\ref{item:I5}) are 
  preserved by the transitions of \Akstwcbm. This is clear at the leaves. Let 
  us go through the other cases.
  \begin{itemize}
    \item $\add{i,j}\procrel$. Clearly, (\ref{item:I1}--\ref{item:I3}) are 
    preserved. Next, \eqref{item:I4a} is also preserved since the 
    edge $\chi(i)\procrel\chi(j)$ is only added when $i\leadsto j$, which 
    implies $i\prec j$. Items  (\ref{item:I4b}--\ref{item:I4d}) are 
    trivially preserved. Finally, \eqref{item:I5} is also preserved since the 
    effect of $\add{i,j}\procrel$ is to turn the $\elastic$-edge from
    $\chi(i)$ to $\chi(j)$ into a $\procrel$-edge.
  
    \item $\add{i,j}d$.  As above, (\ref{item:I1}--\ref{item:I4}) are trivially
    preserved.  For \eqref{item:I5}, notice that the effect of $\add{i,j}d$ is
    to add a $\matchrel^d$-edge from $\chi(i)$ to $\chi(j)$.  The resulting
    graph is still a split-\CBM since the transition is only allowed when
    $i\prec j$, $\pi(i)=\writer(d)$, $\pi(j)=\reader(d)$ and
    $i,j\notin\gamma$.
  
    \item $\forget{i}$.  We can also easily check that
    (\ref{item:I1}--\ref{item:I5}) are preserved.  The only non-trivial cases
    are \eqref{item:I4c} and \eqref{item:I4d}.  Assume that $j\precdot'_p \ell$.
    Either $j\precdot_p \ell$ and we can conclude easily.  Or $j\precdot_p
    i\precdot_p \ell$ and the conditions of the $\forget{i}$-transition imply
    that $i\in\alpha$ and $i\in\beta$.  By induction we obtain
    $\ell\in\beta$, $j\in\alpha$ and
    $\chi(j)\procrel^+\chi(i)\procrel^+\chi(\ell)$.  Hence, \eqref{item:I4c} and
    \eqref{item:I4d} hold.
    
    \item $\oplus$ is the most difficult case. Assume that the transition 
    $s',s''\xrightarrow{\oplus}s$ is applied at the root of a term 
    $\tau=\tau'\oplus\tau''$ and that the invariants 
    (\ref{item:I1}--\ref{item:I5}) hold for $(s',{\tau'})$ and 
    $(s'',{\tau''})$. Since $P'\cap P''=\emptyset$, 
    \eqref{item:I1} implies that $\dom(\chi_{\tau'})$ and $\dom(\chi_{\tau''})$ 
    are disjoint and $\tau$ is a legal \kSTT. It is easy to check that  
    (\ref{item:I1}--\ref{item:I3}) hold for $(s,\tau)$.
     
    We turn now to \eqref{item:I4}.  By definition of the $\oplus$-transition,
    $\prec$ is a strict partial order on $P$ satisfying \eqref{item:I4b}.  Now,
    $\chi(i)\mathrel{({\procrel}\cup{\matchrel})}^+\chi(j)$ in
    $G_\tau$ iff
    $\chi(i)\mathrel{({\procrel}\cup{\matchrel})}^+\chi(j)$ in
    $G_{\tau'}$ or in $G_{\tau''}$, which implies $i\prec' j$ or $i\prec'' j$,
    and finally $i\prec j$. Hence, \eqref{item:I4a} holds.
    Assume that $i\precdot_p j$.  If $i\in\alpha$ or $j\in\beta$ then
    $i\precdot'_p j$ or $i\precdot''_p j$ by definition of the
    $\oplus$-transition.  We deduce that $j\in\alpha\cap\beta$ and
    $\chi(i)\procrel^+\chi(j)$.  The other case is $i\notin\alpha$ and
    $j\notin\beta$.  Hence, \eqref{item:I4c} and \eqref{item:I4d} hold.
    Hence, \eqref{item:I4} holds for $(s,\tau)$.
    
    Next, we check that \eqref{item:I5} holds for the pair $(s,\tau)$.  Let
    $\sem{\tau}=(G,\chi)$ with
    $G=(\Events,\procrel,(\matchrel^d)_{d\in\DS},\pid,\lambda)$.  We have to
    prove that $\sem{\tau}$ is a split-\CBM with additional process edges
    defined by ${\elastic}=\{(\chi(i),\chi(j))\mid i,j\in P \wedge i\leadsto j\}$.
    
    Let $p\in\Procs$.  The set of $p$-events is  
    $\Events_p=\Events'_p\uplus\Events''_p$. If
    $\Events''_p=\emptyset$ then $\elastic$ coincide with $\elastic'$ on
    $\Events_p=\Events'_p$ and $(\Events_p,{\procrel}\cup{\elastic},\lambda)$ is
    indeed a word.  The same holds when $\Events'_p=\emptyset$.
    We assume now that $\Events'_p\neq\emptyset\neq\Events''_p$.
    {We can prove that each $p$-factor, i.e., each
    $\procrel$-connected component of $\Events_p$, has at least one endpoint
    colored.}
    Since $\prec$ is a total order on $\pi^{-1}(p)$, it
    induces a total order on the factors: $w_1,\ldots,w_n$.  Let 
    $e_1,f_1,\ldots,e_n,f_n$
    be the left and right endpoints of $w_1,\ldots,w_n$.  Let
    $i_1,j_1,\ldots,i_n,j_n$ be the colors of $e_1,f_1,\ldots,e_n,f_n$.  We have
    $j_\ell\precdot_p i_{\ell+1}$ and $j_\ell\notin\alpha$ for $1\leq\ell<n$.
    Therefore, $j_\ell\leadsto i_{\ell+1}$ and $f_\ell\elastic e_{\ell+1}$.
    We deduce that $(\Events_p,{\procrel}\cup{\elastic},\lambda)$ is the word 
    $w_1w_2\cdots w_n$.
    Also, $p\in L'$ iff $e_1\in\Events'_p$ and $e_1$ is not colored. Similarly,
    $p\in L''$ iff $e_1\in\Events''_p$ and $e_1$ is not colored. Therefore,
    $p\in L=L'\cup L''$ iff $e_1$ is not colored. Similarly,
    $p\in P=R'\cup R''$ iff $f_n$ is not colored. 
        
    We prove now that the relation $R={\procrel}\cup{\elastic}\cup{\matchrel}$
    is acyclic.  Notice that $e\procrel f$ in $G_\tau$ iff $e\procrel f$ in
    either $G_{\tau'}$ or $G_{\tau''}$.  The same holds for $\matchrel$.
    Moreover, the relation ${\procrel}\cup{\matchrel}$ is acyclic both in
    $G_{\tau'}$ and in $G_{\tau''}$.  Hence, if the relation $R$ admits a cycle
    in $G_\tau$, it must use some $\elastic$-edges:
    $$
    e_1\elastic f_1({\procrel}\cup{\matchrel})^*e_2\elastic f_2 \cdots 
    e_n\elastic f_n({\procrel}\cup{\matchrel})^*e_1 \,.
    $$
    By definition of $\elastic$ we have $e_1,f_1,\ldots,e_n,f_n\in\chi(P)$.
    Let $i_1,j_1,\ldots,i_n,j_n\in P$ such that $\chi(i_\ell)=e_\ell$ and 
    $\chi(j_\ell)=f_\ell$ for $1\leq\ell\leq n$. By definition of $\elastic$ 
    and using \eqref{item:I4a} we deduce that
    $i_1\prec j_1\preceq i_2\prec j_2 \cdots i_n\prec j_n\preceq i_1$,
    a contradiction with $\prec$ acyclic.
    
    This concludes the proof that \eqref{item:I5} holds for $(s,\tau)$.
  \end{itemize}
  
  A state $s=(P,\pi,\alpha,\beta,\gamma,\prec,L,R)$ of \Akstwcbm is accepting if
  $P=\emptyset$.  It follows that if a binary tree $\tau$ is accepted by
  \Akstwcbm then $\tau$ is a \kSTT and $\sem{\tau}=(G,\chi)$ is a
  split-\CBM with $\elastic$ defined as in \eqref{item:I5}.  From the definition
  of accepting states, we deduce that $P=\emptyset$, and ${\elastic}=\emptyset$.
  Therefore, $G$ is a \CBM and $\dom(\chi)=P=\emptyset$.
  
  Remark: There are some legal \kSTTs denoting \CBMs that are not accepted by
  the above automaton.  For instance, the term
  $$
  \tau = \add{i,j}\procrel \add{i,j}\procrel ((i,a,p)\oplus(j,b,p))
  $$
  is not accepted because the automaton prevents adding twice the same edge to 
  the graph. To circumvent this restriction, the automaton may additionally 
  store a relation ${\procrel}\subseteq P^2$ such that $i\procrel j$ iff 
  $\chi(i)\procrel\chi(j)$. Then, a transition $\add{i,j}\procrel$ is possible 
  if either $i\procrel j$ or $i\leadsto j$.
  
  Similarly, by keeping for each data-structure $d\in\DS$ a relation
  ${\matchrel}^d\subseteq P^2$ such that $i\matchrel^d j$ iff
  $\chi(i)\matchrel^d\chi(j)$, the tree automaton may allow adding several times
  a same data-structure edge.
\end{proof}

\begin{exercise}
  Prove that, if $\tau$ is a \kSTT and $\sem{\tau}=(G_\tau,\chi_\tau)$ is such
  that $G_\tau$ is a \CBM and $\dom(\chi_\tau)=\emptyset$ then $\tau$ is
  accepted by the tree automaton \Akstwcbm constructed in the proof of
  Proposition~\ref{prop:Akstwcbm}.
  
  Hint: The only non-determinism in the tree automaton \Akstwcbm occurs during 
  $\oplus$-transitions, when a strict partial order $\prec$ is guessed.
  When $\tau$ is a \kSTT such that $G_\tau$ is a \CBM, we can resolve the
  non-determinism of \Akstwcbm by choosing the order induced by $G_\tau$: if
  $\tau'$ is a subterm of $\tau$ then $G_{\tau'}$ is a subgraph of $G_\tau$ and
  the ordering $\prec$ guessed by \Akstwcbm at $\tau'$ should be $i\prec j$ iff
  $\chi_{\tau'}(i)<\chi_{\tau'}(j)$ in $G_\tau$.
\end{exercise}

\begin{exercise}
  Modify the automaton constructed in the proof of 
  Proposition~\ref{prop:Akstwcbm} in order to check that data-structures 
  $d\in\Stacks$ follow the LIFO policy and data-structures $d\in\Queues$ follow 
  the FIFO policy.
  
  Hint: For a data-structure $d\in\DS$, store a relation $R_d\subseteq
  (P\cup\Procs)^2$ with the invariant defined below.  
  For each event $e\in\Events$, define $\zeta(e)=\pid(e)$ if there is no active
  color $i$ such that $\chi(i)\procrel^* e$ and let $\zeta(e)$ be the maximal such
  $i$ otherwise.
  The invariant is $R_d=\{(\zeta(e),\zeta(f))\mid e\matchrel^d f\}$.
  When taking a $\oplus$-transition, make sure that the 
  policy of $d$ is respected.
\end{exercise}

\begin{proposition}\label{prop:Akstwsys}
  Given $\Sys\in\CPDS(\Arch,\Sigma)$, we can construct a tree automaton 
  \Akstwsys of size $|\Sys|^{\mathcal{O}(k+|\Procs|)}$ such that for all \kSTT 
  $\tau\in\Lang(\Akstwcbm)$, we have \par
  $\tau\in\Lang(\Akstwsys)$ iff $G_\tau$ is 
  accepted by the \CPDS \Sys.
\end{proposition}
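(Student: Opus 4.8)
The plan is to build $\Akstwsys$ as a refinement of the automaton \Akstwcbm from Proposition~\ref{prop:Akstwcbm}: on top of the bookkeeping $(P,\pi,\alpha,\beta,\gamma,\prec,L,R)$ that already reconstructs the split-\CBM bottom-up, the automaton additionally guesses a run $\rho$ of \Sys and verifies it incrementally. Concretely, I would augment each state with a map $\kappa\colon P\to\Trans$ assigning to every active color $i$ a transition $\kappa(i)=t_i$ of \Sys that the corresponding event $\chi(i)$ is meant to fire. Writing $t_i$ as $\mathsf{src}(t_i)\xrightarrow{a,\ldots}_p\mathsf{tgt}(t_i)$, the intended meaning is $\mathsf{src}(t_i)=\rho^-(\chi(i))$ and $\mathsf{tgt}(t_i)=\rho(\chi(i))$; the label and process of $t_i$ must coincide with those recorded at the leaf $(i,a,p)$, where $t_i$ is guessed together with its type (internal, or a write/read on some $d\in\DS$ carrying a value $v\in\Val$).

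The run conditions from the graph semantics are then checked exactly when the relevant edge is created or a color is forgotten. A process edge $\chi(i)\procrel\chi(j)$, produced by $\add{i}{j}{\procrel}$, triggers the requirement $\mathsf{src}(t_j)=\mathsf{tgt}(t_i)$, enforcing $\rho^-(\chi(j))=\rho(\chi(i))$. A data edge $\chi(i)\matchrel^d\chi(j)$, produced by $\add{i}{j}{d}$ (with $i$ the writer and $j$ the reader, as in \Akstwcbm), requires $t_i$ to be a write transition on $d$ and $t_j$ a read transition on $d$ carrying the \emph{same} value $v$; this is the only coupling between writer and reader, and it automatically forbids an internal $t_i$ from taking part. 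At $\forget{i}$ I handle minimality: when $i$ is forgotten with $i\notin\beta$ --- which by the \Akstwcbm invariants happens only if $\chi(i)$ is the global minimum of its process ($\pi(i)\notin L$, $i$ being $\prec$-minimal, and $L$ then recording $\pi(i)$) --- I require $\mathsf{src}(t_i)=\inLoc$, matching $\rho^-(\chi(i))=\inLoc$; symmetrically, when $i$ is forgotten as the maximal event of its process (recorded by $R$), I store $\mathsf{tgt}(t_i)$ in a partial map $\Procs\rightharpoonup\Locs$ carried in the state (with domain $R$). I also require at $\forget{i}$ that $i\in\gamma$ iff $t_i$ is a write/read transition, so that events typed as write/read do acquire a matching edge and internal ones do not. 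The $\sttunion$-transition takes the disjoint union of the two $\kappa$'s and of the two location maps (their domains are disjoint since $P',P''$ and $R',R''$ are). A state is accepting if it is accepting for \Akstwcbm (in particular $P=\emptyset$, so the split-\CBM is a genuine \CBM) and $(\ell_p)_{p\in\Procs}\in\FinLocs$, where $\ell_p$ is the stored exit location of $p$ if $p\in R$ and $\inLoc$ otherwise (i.e.\ $\Events_p=\emptyset$).

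For correctness I would show that, on any $\tau\in\Lang(\Akstwcbm)$ with $G_\tau\in\CBM(\Arch,\Sigma)$, the accepting runs of $\Akstwsys$ on $\tau$ are in bijection with the accepting runs $\rho$ of \Sys on $G_\tau$: given such a $\rho$, setting $\kappa(i)$ to the transition fired at $\chi(i)$ makes every check above succeed by definition of a run; conversely, the family $(\kappa(i))$ collected during a successful run of $\Akstwsys$ assembles a map $\rho$ for which each internal/write/read condition has been verified precisely at the point where the relevant entry location became known, and the acceptance condition matches $\FinLocs$. The state space is that of \Akstwcbm multiplied by $|\Trans|^{k}\le|\Sys|^{\Oh(k)}$ (for $\kappa$) and by $(|\Locs|+1)^{|\Procs|}\le|\Sys|^{\Oh(|\Procs|)}$ (for the location map), yielding the claimed $|\Sys|^{\Oh(k+|\Procs|)}$ bound.

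The step I expect to be the main obstacle is getting $\rho^-$ right in the \emph{split}-\CBM setting: since process edges are added lazily and factors are glued through the holes $\leadsto$, one must be sure that the moment at which $\mathsf{src}(t_i)=\inLoc$ is imposed genuinely corresponds to $\chi(i)$ being minimal on its process in the final \CBM, and never to the left endpoint of a non-initial factor whose incoming process edge is still missing. This is exactly what the interplay of $\beta$, $L$ and the guessed order $\prec$ in \Akstwcbm guarantees --- a left endpoint of a later factor is not $\prec$-minimal (or its process already lies in $L$), so it cannot be forgotten before its hole is filled by some $\add{\,\cdot\,}{\,\cdot\,}{\procrel}$ --- and thus the correctness of $\Akstwsys$ rests on carefully reusing those invariants rather than re-deriving the process order.
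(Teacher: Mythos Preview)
Your construction is logically sound for the ``iff'' claim, but it does not meet the stated size bound. By carrying the full $\Akstwcbm$-state---in particular the guessed strict order ${\prec}\subseteq[k]^2$---your automaton has at least $2^{\Omega(k^2)}$ states; multiplying $|\Akstwcbm|=2^{\Oh(k^2|\Arch|)}$ by the extra components does \emph{not} give $|\Sys|^{\Oh(k+|\Procs|)}$, since for fixed $|\Sys|$ the latter is only singly exponential in~$k$. Your size computation silently drops the $|\Akstwcbm|$ factor.

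The paper avoids this by \emph{not} building a product with $\Akstwcbm$. Its $\Akstwsys$ stores only $(P,\pi,\alpha,\beta)$ together with the transition map $\delta\colon P\to\Trans$ and a partial final-location map $\sigma\colon\Procs\rightharpoonup\Locs$; the components $\prec,\gamma,L,R$ are discarded. The obstacle you single out in your last paragraph dissolves once the hypothesis $\tau\in\Lang(\Akstwcbm)$ is used directly: after $\forget{i}$ no edge can ever be attached to the vertex $\chi(i)$ again, so $i\notin\beta$ at that moment already means $\chi(i)$ has no incoming $\procrel$-edge in $G_\tau$; since $G_\tau$ is a \txtCBM, $\chi(i)$ must be process-minimal. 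Hence the bare test ``$i\in\beta$ or $\mathsf{src}(\delta(i))=\inLoc$'' is correct with no reference to $\prec$ or $L$, and the resulting state count $2^{\Oh(k)}\cdot|\Procs|^{k}\cdot|\Trans|^{k}\cdot(|\Locs|{+}1)^{|\Procs|}$ is genuinely $|\Sys|^{\Oh(k+|\Procs|)}$.
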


\begin{proof}
  The automaton \Akstwsys follows the definitions and notations of 
  Section~\ref{sec:graph-semantics}.
  
  A state of \Akstwsys is a tuple $s=(P,\pi,\alpha,\beta,\delta,\sigma)$ where
  $P,\pi,\alpha,\beta$ are as in the proof of Proposition~\ref{prop:Akstwcbm},
  $\delta$ stores the transitions that are guessed for the events associated
  with active colors, and $\sigma$ stores the (partial) global final state.
  More precisely, when reading bottom-up a term $\tau\in\Lang(\Akstwcbm)$, the 
  tree automaton \Akstwsys reaches a state $s$  satisfying the
  following conditions with $\sem{\tau}=(G,\chi)$:
  \begin{enumerate}[label=($\mathsf{J}_{\arabic*}$),ref=$\mathsf{J}_{\arabic*}$]

    \item\label{item:J1} $P=\dom(\chi)\subseteq[k]$ is the set of active 
    colors, 
  
    \item\label{item:J2} $\pi\colon P\to\Procs$ gives the associated processes:
    $\pi(i)=\pid(\chi(i))$ for all $i\in P$,
    
    \item\label{item:J3} $\alpha,\beta\subseteq P$ are such that 
    \begin{itemize}[nosep]
      \item $i\in\alpha$ iff $\chi(i)$ is the source of a
      $\procrel$-edge in $G$.
    
      \item $i\in\beta$ iff $\chi(i)$ is the target of a
      $\procrel$-edge in $G$.
    \end{itemize}
  
    \item\label{item:J4} $\delta\colon P\to\Trans$ defines for each active color
    $i\in P$ the transition $\delta(i)$ guessed for the event $\chi(i)$.

    \item\label{item:J5} $\sigma\colon\Procs\to\Locs$ is a partial map which
    collects the global final state: when the color $i$ of the maximal event of
    some process $p$ is forgotten, we store the target state of $\delta(i)$ in 
    $\sigma(p)$.
  \end{enumerate}
  
  The number of states of \Akstwsys is $|\Sys|^{\mathcal{O}(k+|\Procs|)}$.

  \begin{table}[!t]
    \noindent\hspace{-3mm}
    \begin{tabular}{|c|p{120mm}|}
      \hline
%       $(i,a,p)$ 
      $\bot \xrightarrow{(i,a,p)}s$ 
      &
      if $P=\{i\}$, $\pi(i)=p$, $\alpha=\beta=\emptyset$,
      $\delta(i)\in\Delta_p$ with $\mathsf{lbl}(\delta(i))=a$ and $\dom(\sigma)=\emptyset$.
      \\ \hline
%       $\add{i,j}{\procrel}$
      $s\xrightarrow{\add{i,j}{\procrel}}s'$ 
      &
      if $i,j\in P$ and 
      $\mathsf{tgt}(\delta(i))=\mathsf{src}(\delta(j))$.
      \par
      Then, $P'=P$, $\pi'=\pi$, $\delta'=\delta$, $\sigma'=\sigma$,
      $\alpha'=\alpha\cup\{i\}$, $\beta'=\beta\cup\{j\}$.
      \\ \hline
%       $\add{i,j}{d}$
      $s\xrightarrow{\add{i,j}{d}}s'$ 
      &
      if $i,j\in P$ and $\delta(i)\in\Delta^{!}$, $\delta(j)\in\Delta^{?}$ and
      $\mathsf{val}(\delta(i))=\mathsf{val}(\delta(j))$. \par
      Then, $s'=s$.
      \\ \hline
%       $\forget{i}$
      $s\xrightarrow{\forget{i}}s'$ 
      &
      if $i\in P$ and 
      $i\in\beta \vee \mathsf{src}(\delta(i))=\iota_{\pi(i)}$
      and $i\in\alpha \vee i\notin\dom(\sigma)$.
      \par
      Then, $P'=P\setminus\{i\}$ and 
      $\pi',\alpha',\beta',\delta'$ are the
      restrictions of $\pi,\alpha,\beta,\delta$ to $P'$, and 
      $\sigma'=\sigma$ if $i\in\alpha$ and
      $\sigma'=\sigma[\pi(i)\mapsto\mathsf{tgt}(\delta(i))]$ otherwise.
      \\ \hline
%       $\oplus$
      $s',s''\xrightarrow{\oplus}s$
      &
      if $P'\cap P''=\emptyset$  and
      $\dom(\sigma')\cap\dom(\sigma'')=\emptyset$ (this condition is actually 
      always satisfied for terms accepted by \Akstwcbm).
      
      Then, $s$ is the disjoint union of
      $s'$ and $s''$: $P=P'\uplus P''$, $\pi=\pi'\uplus\pi''$,
      $\alpha=\alpha'\uplus\alpha''$, $\beta=\beta'\uplus\beta''$,
      $\delta=\delta'\uplus\delta''$
      and $\sigma=\sigma'\uplus\sigma''$.
      \\ \hline
    \end{tabular}
    
    \caption{Transitions of \Akstwsys where $a\in\Sigma$, $p\in\Procs$, 
    $d\in\DS$, $i,j\in[k]$, $s=(P,\pi,\alpha,\beta,\delta,\sigma)$,
    $s'=(P',\pi',\alpha',\beta',\delta',\sigma')$ and 
    $s''=(P'',\pi'',\alpha'',\beta'',\delta'',\sigma'')$.
    }
    \protect\label{tab:Akstwsys}
  \end{table}

  The transitions of \Akstwsys are given in Table~\ref{tab:Akstwsys}.

  A state $s$ of \Akstwsys is accepting if $P=\emptyset$ and $\bar{\sigma}\in 
  F$ is an accepting global state of \Sys, where $\bar{\sigma}$ completes the 
  partial global state $\sigma$ with the initial location for 
  processes having no events in the \CBM:
  $\bar{\sigma}(p)=\sigma(p)$ if $p\in\dom(\sigma)$ and 
  $\bar{\sigma}(p)=\iota_p$ otherwise.
\end{proof}

\begin{corollary}\label{cor:stwnonemptiness}
  The problem {\sc{stw-Nonemptiness}}$(\Arch,\Sigma)$ is decidable in 
  \textsc{ExpTime}. The procedure is only polynomial in the size of the \CPDS.
\end{corollary}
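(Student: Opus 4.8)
The plan is to reduce \textsc{stw-Nonemptiness} to emptiness of a single tree automaton, obtained as the product of the two automata built in the previous propositions. Concretely, given an instance $\Sys \in \CPDS(\Arch,\Sigma)$ and $k \ge 0$, I first take the tree automaton $\Akstwcbm$ of Proposition~\ref{prop:Akstwcbm}, which accepts exactly the valid \kSTTs $\tau$ for which $G_\tau$ is a \CBM with $\dom(\chi_\tau)=\emptyset$, and then the tree automaton $\Akstwsys$ of Proposition~\ref{prop:Akstwsys}, which on any such $\tau$ accepts iff $G_\tau \in \Lang(\Sys)$. Both automata run over the same alphabet $\Lambda^k$ of \STTs, so I form their synchronous product $\mathcal{A} = \Akstwcbm \times \Akstwsys$, which recognises $\Lang(\Akstwcbm) \cap \Lang(\Akstwsys)$, and I run the polynomial-time emptiness test for tree automata on $\mathcal{A}$.

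The correctness statement I must verify is
\[\Lang(\Sys) \cap \kstwCBM \neq \emptyset \quad\Longleftrightarrow\quad \Lang(\mathcal{A}) \neq \emptyset\,.\]
From right to left: any $\tau \in \Lang(\mathcal{A})$ lies in $\Lang(\Akstwcbm)$, so $G_\tau$ is a \CBM of special tree-width at most $k$; since $\tau$ also lies in $\Lang(\Akstwsys)$, Proposition~\ref{prop:Akstwsys} gives $G_\tau \in \Lang(\Sys)$, hence $G_\tau \in \Lang(\Sys) \cap \kstwCBM$. From left to right: I take a witness $G \in \Lang(\Sys) \cap \kstwCBM$ and choose any \kSTT $\tau$ with $G_\tau = G$, appending forget-operations for any remaining active colours so that $\dom(\chi_\tau)=\emptyset$. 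By the reverse direction recorded in the exercise immediately after Proposition~\ref{prop:Akstwcbm} (every valid \kSTT denoting an uncoloured \CBM is accepted by $\Akstwcbm$), we get $\tau \in \Lang(\Akstwcbm)$; and since $G_\tau = G \in \Lang(\Sys)$, Proposition~\ref{prop:Akstwsys} yields $\tau \in \Lang(\Akstwsys)$. Thus $\tau \in \Lang(\mathcal{A})$.

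For the complexity, the factor $\Akstwcbm$ has size $2^{\mathcal{O}(k^2|\Arch|)}$ and is independent of $\Sys$, while $\Akstwsys$ has size $|\Sys|^{\mathcal{O}(k+|\Procs|)}$, so $\mathcal{A}$ has size $2^{\mathcal{O}(k^2|\Arch|)} \cdot |\Sys|^{\mathcal{O}(k+|\Procs|)}$. Since tree-automata emptiness is solvable in time polynomial in the automaton's size, the overall running time is polynomial in this quantity. Measuring the input by $|\Sys| + k$ (with $k$ in unary), this is exponential, giving the \textsc{ExpTime} bound. Crucially, the exponent $\mathcal{O}(k+|\Procs|)$ on $|\Sys|$ and the $\Sys$-independent factor $2^{\mathcal{O}(k^2|\Arch|)}$ are constants once $k$ and the architecture are fixed, so the procedure then runs in time $\poly(|\Sys|)$; this is the sense in which it is only polynomial in the size of the \CPDS.

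There is no deep obstacle here: once Propositions~\ref{prop:Akstwcbm} and~\ref{prop:Akstwsys} are available, the corollary is a routine product-plus-emptiness argument. The one place that needs genuine care is the left-to-right direction, which relies on the \emph{completeness} of $\Akstwcbm$ (that it accepts at least one \STT for every member of $\kstwCBM$, not merely that everything it accepts denotes a \CBM), together with correctly matching the off-by-one between special tree-width at most $k$ and the number of colours used by the \STTs, so that the automata invoked really range over \STTs rich enough to denote all of $\kstwCBM$.
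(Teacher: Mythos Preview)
Your proposal is correct and follows essentially the same approach as the paper: reduce to emptiness of the product $\Akstwcbm\cap\Akstwsys$, verify the equivalence $\Lang(\Sys)\cap\kstwCBM\neq\emptyset$ iff $\Lang(\Akstwcbm\cap\Akstwsys)\neq\emptyset$ in both directions, and invoke polynomial-time tree-automata emptiness. You are in fact more careful than the paper in two respects: you spell out the complexity bookkeeping explicitly, and you correctly flag that the left-to-right direction needs \emph{completeness} of $\Akstwcbm$ (which the paper defers to the exercise following Proposition~\ref{prop:Akstwcbm}), together with the colour/tree-width off-by-one.
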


\begin{proof}
  The problem reduces to checking nonemptiness of a tree automaton.  Given $k>0$
  and $\Sys\in\CPDS(\Arch,\Sigma)$, we have
  $\Lang(\Sys)\cap\kstwCBM\neq\emptyset$ iff
  $\Lang(\Akstwcbm\cap\Akstwsys)\neq\emptyset$.
  
  Indeed, given $\mscn\in\Lang(\Sys)\cap\kstwCBM$, we find 
  $\tau\in\Lang(\Akstwcbm)$ with $G_\tau=\mscn$ by 
  Proposition~\ref{prop:Akstwcbm}. We obtain $\tau\in\Lang(\Akstwsys)$ by 
  Proposition~\ref{prop:Akstwsys}.
  Conversely, if $\tau\in\Lang(\Akstwcbm\cap\Akstwsys)$ then 
  $G_\tau\in\kstwCBM$ by Proposition~\ref{prop:Akstwcbm} and 
  $G_\tau\in\Lang(\Sys)$ by Proposition~\ref{prop:Akstwsys}.
\end{proof}

\begin{proposition}\label{prop:Akstwphi}
  Given $k>0$ and an \MSO formula $\varphi\in\MSO(\Arch,\Sigma)$, we can
  construct a tree automaton \Akstwphi such that for all \kSTT $\tau$ we have
  $$
  \tau\in\Lang(\Akstwphi)
  \qquad\text{iff}\qquad
  G_\tau\models\varphi \,.
  $$
\end{proposition}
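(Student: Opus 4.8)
The plan is to reuse the MSO interpretation already established and then invoke the classical equivalence between MSO and automata on binary trees. First I would apply Proposition~\ref{prop:stt-mso-interpretation} to the sentence $\varphi$, obtaining a tree sentence $\widetilde{\varphi}^k \in \MSO(\Lambda^k,\da_0,\da_1)$ with the property that for every valid \kSTT $\tau$, writing $\sem{\tau}=(G_\tau,\chi_\tau)$, we have $G_\tau \models \varphi$ iff $\tau \models \widetilde{\varphi}^k$. This already transfers the semantic question from the graph $G_\tau$ to the term $\tau$ viewed as a $\Lambda^k$-labeled binary tree. (Here $\varphi$ is a sentence, as required for $G_\tau\models\varphi$ to be meaningful; the interpretation is proved by induction on the formula, so free variables cause no difficulty.)

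Since $\widetilde{\varphi}^k$ is only guaranteed to be faithful on \emph{valid} \kSTTs, I would next conjoin it with the validity sentence $\phiksttvalid$ from Lemma~\ref{lem:sttvalid}, setting $\Phi = \phiksttvalid \wedge \widetilde{\varphi}^k \in \MSO(\Lambda^k,\da_0,\da_1)$. Then, for an arbitrary $\Lambda^k$-labeled binary tree $\tau$, we have $\tau \models \Phi$ iff $\tau$ is a valid \kSTT and $G_\tau \models \varphi$. In particular, restricting attention to \kSTTs $\tau$ (for which $G_\tau$ is defined), the condition $\tau \models \Phi$ is equivalent to $G_\tau \models \varphi$, which is exactly the equivalence claimed in the proposition.

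Finally, I would convert the sentence $\Phi$ into a tree automaton. By the Thatcher--Wright theorem~\cite{ThaWri68}, MSO logic over binary trees and finite tree automata are effectively equivalent, so there is a tree automaton $\Akstwphi$ with $\Lang(\Akstwphi) = \{\tau \mid \tau \models \Phi\}$. This automaton satisfies the required equivalence, and it is precisely the component that can be intersected with $\Akstwcbm$ (Proposition~\ref{prop:Akstwcbm}) and $\Akstwsys$ (Proposition~\ref{prop:Akstwsys}) to obtain the decision procedures for MSO model checking and satisfiability over $\kstwCBM$.

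The main obstacle is one of complexity rather than correctness: the translation of $\Phi$ into a tree automaton requires one complementation (hence one determinization step) per quantifier alternation in $\varphi$, producing a tower of exponentials whose height grows with the quantifier-alternation depth. Thus $\Akstwphi$ is of non-elementary size in general, matching the non-elementary bound already observed for \textsc{stw-Satisfiability} in Corollary~\ref{cor:stw-mso-sat}. This is also why the efficient (\textsc{ExpTime}) procedures rely on the directly constructed automata $\Akstwcbm$ and $\Akstwsys$ instead of $\Akstwphi$, which is needed only when the specification is genuinely given as an \MSO formula.
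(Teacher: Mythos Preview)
Your proposal is correct and follows essentially the same route as the paper: apply the MSO interpretation of Proposition~\ref{prop:stt-mso-interpretation} to obtain $\widetilde{\varphi}^k$, then invoke Thatcher--Wright~\cite{ThaWri68} to get the tree automaton. The paper's proof is in fact just these two lines and does not bother conjoining with $\phiksttvalid$; your extra step is harmless but unnecessary, since the statement only concerns \kSTTs $\tau$ for which $G_\tau$ is defined (hence valid ones), and in every application the automaton is intersected with $\Akstwcbm$, which already enforces validity.
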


\begin{proof}
  From $\varphi\in\MSO(\Arch,\Sigma)$, we construct 
  $\tilde{\varphi}^{k}\in\MSO(\Lambda^{k},\da_0,\da_1)$ 
  using Proposition~\ref{prop:stt-mso-interpretation}. Using  \cite{ThaWri68} 
  we obtain an equivalent tree automaton \Akstwphi.
\end{proof}

\begin{corollary}\label{cor:stwmodelchecking}
  The problem {\sc{stw-ModelChecking}}$(\Arch,\Sigma)$ is decidable.  The
  procedure is only polynomial in the size of the \CPDS.
\end{corollary}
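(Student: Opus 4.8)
The plan is to reduce the problem to emptiness of a product of three tree automata, exactly as in the nonemptiness proof (Corollary~\ref{cor:stwnonemptiness}), but now with a factor for the specification. The starting observation is that $\Lang(\Sys) \cap \kstwCBM \subseteq \Lang(\varphi)$ \emph{fails} precisely when there is a \txtCBM $\mscn \in \Lang(\Sys) \cap \kstwCBM$ with $\mscn \not\models \varphi$, i.e.\ $\mscn \models \neg\varphi$. So the idea is to build a tree automaton recognizing exactly the \kSTTs that encode such a counterexample, and to test it for emptiness.

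Concretely, I would take $\Akstwcbm$ from Proposition~\ref{prop:Akstwcbm}, $\Akstwsys$ from Proposition~\ref{prop:Akstwsys}, and the automaton $\A_{\neg\varphi}^{k\text{-}\mathsf{stw}}$ obtained by applying Proposition~\ref{prop:Akstwphi} to the formula $\neg\varphi$, and form their product $\B = \Akstwcbm \cap \Akstwsys \cap \A_{\neg\varphi}^{k\text{-}\mathsf{stw}}$. The key claim to prove is
\[
\Lang(\Sys) \cap \kstwCBM \subseteq \Lang(\varphi) \quad\text{iff}\quad \Lang(\B) = \emptyset \,.
\]
For the ``counterexample $\Rightarrow$ nonempty product'' direction I would start from $\mscn \in \Lang(\Sys) \cap \kstwCBM$ with $\mscn \models \neg\varphi$, use Proposition~\ref{prop:Akstwcbm} to choose a \kSTT $\tau \in \Lang(\Akstwcbm)$ with $G_\tau = \mscn$, then invoke Proposition~\ref{prop:Akstwsys} (applicable since $\tau \in \Lang(\Akstwcbm)$ and $G_\tau \in \Lang(\Sys)$) to get $\tau \in \Lang(\Akstwsys)$, and Proposition~\ref{prop:Akstwphi} to get $\tau \in \Lang(\A_{\neg\varphi}^{k\text{-}\mathsf{stw}})$; hence $\tau \in \Lang(\B)$. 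Conversely, any $\tau \in \Lang(\B)$ is in particular accepted by $\Akstwcbm$, so $G_\tau$ is a \txtCBM in $\kstwCBM$; then Proposition~\ref{prop:Akstwsys} yields $G_\tau \in \Lang(\Sys)$ and Proposition~\ref{prop:Akstwphi} yields $G_\tau \models \neg\varphi$, exhibiting a counterexample. Decidability then follows since emptiness of a product of tree automata is decidable~\cite{ThaWri68}.

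For the complexity remark, I would note that among the three factors only $\Akstwsys$ depends on $\Sys$, with size $|\Sys|^{\mathcal{O}(k+|\Procs|)}$ by Proposition~\ref{prop:Akstwsys}, whereas $\Akstwcbm$ and $\A_{\neg\varphi}^{k\text{-}\mathsf{stw}}$ do not involve $\Sys$ at all. Since the size of the product and the cost of the emptiness test are polynomial in the sizes of the factors, the whole procedure is polynomial in $|\Sys|$ (for fixed $k$, $\Arch$, and $\varphi$); the non-elementary dependence noted in Corollary~\ref{cor:stw-mso-mc} is entirely absorbed into the construction of $\A_{\neg\varphi}^{k\text{-}\mathsf{stw}}$ from $\varphi$.

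I do not expect a genuine obstacle, since the substantive work is already packaged in Propositions~\ref{prop:Akstwcbm}, \ref{prop:Akstwsys}, and \ref{prop:Akstwphi}. The one point requiring care is that the correctness guarantees for $\Akstwsys$ and $\A_{\neg\varphi}^{k\text{-}\mathsf{stw}}$ are only \emph{relative} (the former characterizes acceptance correctly only on $\tau \in \Lang(\Akstwcbm)$, the latter is meaningful on valid \kSTTs); this is exactly why the factor $\Akstwcbm$ must be present in $\B$, so that both the ``$\Rightarrow$'' and ``$\Leftarrow$'' directions of the claim may legitimately apply those propositions under their stated hypotheses.
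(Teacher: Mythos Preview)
Your proposal is correct and matches the paper's proof essentially verbatim: the paper also reduces to emptiness of $\Akstwcbm\cap\Akstwsys\cap\A_{\neg\varphi}^{k\text{-}\mathsf{stw}}$, and your added detail (the two directions of the equivalence and the care about relative correctness of $\Akstwsys$) is exactly the justification the paper leaves implicit.
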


\begin{proof}
  The problem reduces to checking emptiness of a tree automaton.  Given
  $k>0$, a \CPDS \Sys, and a formula $\varphi\in\MSO(\Arch,\Sigma)$ we have
  $\Lang(\Sys)\cap\kstwCBM\subseteq\Lang(\varphi)$ iff
  $\Lang(\Akstwcbm\cap\Akstwsys\cap
  \A_{\neg\varphi}^{k\text{-}\mathsf{stw}})=\emptyset$.
\end{proof}

% ------------------------------------------------
\ifPDL
% ------------------------------------------------

\newcommand{\back}[1]{\ensuremath{#1^{-1}}}
\newcommand{\existspath}[1]{\ensuremath{\langle#1\rangle}}
\newcommand{\existsloop}[1]{\ensuremath{\mathsf{Loop}\langle#1\rangle}}
\newcommand{\forallpath}[1]{\ensuremath{[#1]}}

\newcommand{\Existsev}{\mathsf{E}}
\newcommand{\Forallev}{\mathsf{A}}
\newcommand{\AP}{\ensuremath{\mathsf{AP}}}
\newcommand{\U}{\mathbin{\mathsf{U}}}
\newcommand{\X}{\mathop{\mathsf{X}\vphantom{a}}\nolimits}

\chapter{Propositional Dynamic Logic}

In the previous sections, we showed that model checking \CPDSs against \MSO formulas is
undecidable in general but decidability is recovered when we restrict to behaviors of
bounded (special) tree-width.  However, the transformation of an \MSO formula into a tree
automaton is inherently non-elementary, and this non-elementary lower bound is in fact
inherited by the model-checking problem against specifications in \MSO. The goal in this
chapter is to study a more tractable specification language in the spirit of linear
temporal logic (\LTL) for sequential systems.  We will focus on (variants of) the
Propositional Dynamic Logic (\PDL) introduced by Fischer and Ladner for reasonning about
program schemas \cite{Fischer_1979}.  Over finite words, $\PDL$ is as expressive as \MSO
(hence more expressive than $\LTL=\FO$) since it includes regular expressions.  But
verification problems for \PDL can be solved in \textsc{PSpace}, i.e., as efficiently as
for \LTL. We will see that for concurrent systems with data structures (\CPDS), the logic
\PDL, though undecidable in general, is decidable in \textsc{ExpTime} (or
2-\textsc{ExpTime}) when we restrict to behaviours with bounded (special) tree-width.

\section{Propositional Dynamic Logic}

\underline{{\bf Syntax:}}

Let $\AP\neq\emptyset$ be a nonempty set of atomic propositions (node labels) and
$\Gamma\neq\emptyset$ be a nonempty set of atomic programs (edge labels).
The syntax of $\ICPDL(\AP,\Gamma)$ is given by
\begin{align*}
  \Phi &::= \Existsev \sigma \mid \Phi \vee \Phi \mid \neg \Phi
  \\
  \sigma &::= p \mid \sigma \vee \sigma \mid \neg \sigma 
  \mid \existspath{\pi}{\sigma} 
  && && \mid \existsloop{\pi}
  \\
  \pi &::= \gamma 
  \mid \test{\sigma} \mid
  \pi + \pi \mid \pi \cdot \pi \mid \pi^{\ast}
  &&\mid \pi^{-1} && &&\mid \pi\cap\pi
\end{align*}
where $p\in\AP$, $\gamma\in\Gamma$.
We call
\begin{itemize}
  \item $\Phi$ a sentence,
  \item $\sigma$ a \emph{state formula} or \emph{node formula}, it has one implicit state 
  variable,
  \item $\pi$ a \emph{program}, or \emph{path formula}, it has two implicit state 
  variables.
\end{itemize}
If intersection $\pi\cap\pi$ is not allowed, the fragment is \PDL with loop and
converse (\LCPDL).  If neither intersection nor loop are allowed, the fragment
is \PDL with converse (\CPDL).  If backward paths $\pi^{-1}$ are not allowed
the fragment is called \PDL with intersection (\IPDL).  In simple \PDL neither
backward paths nor intersections nor loops are allowed.

% \myline
% \newpage
\medskip
\underline{{\bf Semantics of \ICPDL formulas:}}

Let $G=(V,(E_\gamma)_{\gamma\in\Gamma},\lambda)$ be a $(2^\AP,\Gamma)$-labeled
graph: $\lambda\colon V\to2^\AP$ and $E_\gamma \subseteq V^2$.

\smallskip
\underline{{\bf State formulas:}}
The semantics of a state formula $\sigma$ wrt.\ $G$ is a set
$\Sem{\sigma}{G} \subseteq V$, inductively defined below. We also write 
$G,v\models\sigma$ for $v\in\Sem{\sigma}{G}$.
\[
\begin{array}{rcl}
%\Sem{\false}{G} & := & \emptyset\\[1ex]
\Sem{p}{G} & := & \{v\in V\mid p\in\lambda(v)\} \\[1ex]

\Sem{\sigma_1 \vee \sigma_2}{G} & := & \Sem{\sigma_1}{G} \cup \Sem{\sigma_2}{G}\\[1ex]

\Sem{\neg\sigma}{G} & :=&  V \setminus \Sem{\sigma}{G}\\[1ex]

\Sem{\existspath{\pi}\sigma}{G} & := & \{v \in V \mid \textup{there is } w \in \Sem{\sigma}{G} 
\textup{ such that } (v,w) \in \Sem{\pi}{G}\}\\[1ex]

\Sem{\existsloop{\pi}}{G} & := & \{v \in V \mid (v,v) \in \Sem{\pi}{G}\}
\end{array}
\]

\underline{{\bf Path formulas:}}
The semantics of a path formula $\pi$ wrt.\ $G$ is a set $\Sem{\pi}{G} \subseteq
V \times V$, inductively defined below. We also write 
$G,u,v\models\pi$ for $(u,v)\in\Sem{\pi}{G}$.
\[
\begin{array}{rcl}

\Sem{\gamma}{G} & := & E_\gamma \\[1ex]

\Sem{\test{\sigma}}{G} & := & \{(v,v) \mid v \in \Sem{\sigma}{G}\}\\[1ex]

\Sem{\back{\pi}}{G} & := & \Sem{\pi}{G}^{-1} = \{(u,v) \mid (v,u) \in \Sem{\pi}{G}\}\\[1ex]

\Sem{\pi_1 + \pi_2}{G} & := & \Sem{\pi_1}{G} \cup \Sem{\pi_2}{G}
\qquad\qquad
\Sem{\pi_1 \cap \pi_2}{G} ~:=~ \Sem{\pi_1}{G} \cap \Sem{\pi_2}{G}\\[1ex]

\Sem{\pi_1 \cdot \pi_2}{G} & := & \Sem{\pi_1}{G} \circ \Sem{\pi_2}{G}\\[1ex]
& & = \{(u,w) \in V \times V \mid \exists v \in V: (u,v) \in \Sem{\pi_1}{G} 
\textup{ and } (v,w) \in \Sem{\pi_2}{G}\}\\[1ex]

\Sem{\pi^\ast}{G} & := & \Sem{\pi}{G}^\ast = \bigcup_{n \in \mathds{N}} \Sem{\pi}{G}^n
\end{array}
\]

% \smallskip
\underline{{\bf Sentences:}}
We write $G \models \Existsev\sigma$ if $\Sem{\sigma}{G} \neq \emptyset$.  

For $\Phi \in \ICPDL(\AP,\Gamma)$, we let $L(\Phi) := \{G\mid G \models \Phi\}$.

% \myline
\medskip
One can show that \ICPDL is no more expressive than \MSO:

\begin{exercise}\label{exc:ICPDLtoMSO}
  Show that, for every $\ICPDL(\AP,\Gamma)$ sentence $\Phi$, state formula
  $\sigma$ and path formula $\pi$, we can construct \emph{equivalent}
  $\MSO(\AP,\Gamma)$ formulas $\overline{\Phi}$, $\overline{\sigma}(x)$ and
  $\overline{\pi}(x,y)$ where $\overline{\Phi}$ is a sentence,
  $\mathsf{Free}(\overline{\sigma})=\{x\}$ and
  $\mathsf{Free}(\overline{\pi})=\{x,y\}$.
\end{exercise}

Notice that \MSO is strictly more powerful than \ICPDL. Indeed, we cannot 
express that a graph is connected in \ICPDL. Also, the modality $\U_2$ defined 
below checks a path of even length and therefore cannot be expressed in \FO.

Over message passing automata ($\DS=\Queues$) it was shown very recently
that $\FO(<,\procrel,\matchrel)$ has the same expressive power as the
starfree fragment of \LCPDL \cite{BFG-concur18}.

% \myline
%\newpage
\medskip
For \CBMs over $(\Arch,\Act)$, we use the set $\AP=\Procs\cup\Act$ of atomic propositions
and the set $\Gamma=\{\procrel\}\cup\DS$ of atomic programs and we often write
$\matchrel^d$ instead of $d$ for atomic steps of path formulas.
We then write $\mathsf{(ILC)PDL}(\Arch,\Sigma)$.

\begin{example}
Consider the following abbreviation/examples:
\begin{itemize}
\item $\Forallev \sigma ~\fequiv~ \neg\Existsev\neg\sigma$ ~~~~~($\ICPDL$ formula)
\item $\true ~\fequiv~ p \vee \neg p$ ~~~~~(state formula)
\item $\forallpath{\pi}\sigma ~\fequiv~ \neg\existspath{\pi}\neg\sigma$ ~~~~~(state formula)
\item $\existspath{\pi} ~\fequiv~ \existspath{\pi}\true$ ~~~~~(state formula)
\item $\X\sigma ~\fequiv~ \existspath{\procrel}\sigma$ ~~~~~(state formula)
\item $\sigma_1\U\sigma_2 ~\fequiv~ \existspath{(\test{\sigma_1}\cdot{\procrel})^*}\sigma_2$ ~~~~~(state formula)
\item $\sigma_1\U_2\sigma_2 ~\fequiv~ \existspath{(\test{\sigma_1}\cdot{\procrel}\cdot\test{\sigma_1}\cdot{\procrel})^*}\sigma_2$ ~~~~~(state formula)
\item ${\matchrel} \fequiv \sum_{d \in \DS} {\matchrel^d}$ ~~~~~(path formula)
\item $\mathsf{write} \fequiv \existspath{\matchrel}\true$ and 
$\mathsf{read} \fequiv \existspath{\matchrel^{-1}}\true$~~~~~(state formulas)
\item $\Phi_1 \fequiv \Forallev \bigl(a \Rightarrow \existspath{({\procrel} + {\matchrel})^\ast}b\bigr) \in \PDL(\Arch,\Sigma)$
\item $\mathsf{req\textup{-}ack} \fequiv (\test{p_1} \cdot {\matchrel^{c_1}} \cdot {\procrel} \cdot {\matchrel^{c_2}})
~~+~~ (\test{p_1} \cdot {\matchrel^{c_1}} \cdot {\procrel} \cdot {\matchrel^s} \cdot {\procrel} \cdot {\matchrel^{c_2}})$\\
(path formula; cf.\ client-server system from previous chapter)
\item $\Phi_2 =
\Forallev (a \Rightarrow \forallpath{\mathsf{req\textup{-}ack}}a) ~\wedge~
\Forallev (b \Rightarrow \forallpath{\mathsf{req\textup{-}ack}}b) ~~~\in \PDL(\Arch,\Sigma)$\\[1ex]
$\equiv \Forallev (\forallpath{\test{a} \cdot \mathsf{req\textup{-}ack}}a) ~\wedge~
\Forallev (\forallpath{\test{b} \cdot \mathsf{req\textup{-}ack}}b) ~~~\in \PDL(\Arch,\Sigma)$\\[1ex]
$\equiv \forall x,y \Bigl(\mathit{req\textup{-}ack}(x,y) \Rightarrow \bigl((a(x) \wedge a(y)) \vee (b(x) \wedge b(y))\bigr)\Bigr)~\in \MSO(\Arch,\Sigma)$
\end{itemize}
We have $\Lang(\csSys) \not\subseteq L(\Phi_1)$ and $\Lang(\csSys) \subseteq L(\Phi_2)$.
\end{example}

\begin{example}\label{ex:pdl-cbm}
  Let $\Arch$ be an architecture, $\AP=\Sigma\uplus\Procs$ and $\Gamma=\{\procrel\}\cup\DS$.
  Most conditions ensuring that a $(\AP,\Gamma)$-labeled graph
  $G=(\Events,\procrel,(\matchrel^d)_{d\in\DS},\pid,\lambda)$ is in fact a \CBM
  (Definition~\ref{def:cbm}) can be expressed in \LCPDL. Missing is a formula stating that
  on each $\Events_p$, $\procrel$ is the successor relation of a \emph{total} order.
  \begin{itemize}
    \item $\mathsf{LABELS}=\Forallev \Big( 
    \bigvee_{p\in\Procs} \big(p \wedge \bigwedge_{p'\neq p} \neg p' \big)
    \wedge \bigvee_{a\in\Act} \big(a \wedge \bigwedge_{a'\neq a} \neg a' \big)
    \Big)$
  
    \item $\mathsf{PROCESS}=
    \Forallev\big({\existspath{\procrel}}\implies
    \bigvee_{p\in\Procs}(p\wedge\existspath{{\procrel}}p)\big)$
    
    \item $\mathsf{ORDER}=\Forallev\neg\existsloop{({\procrel}+{\matchrel})^+}$

    \item $\mathsf{SUCCESSOR}=
    \Forallev\neg\existsloop{{\procrel}^{+}\cdot{\procrel}^{+}\cdot{\procrel}^{-1}}$

    \item $\mathsf{WRITER}=\Forallev\bigwedge_{d\in\DS}
    {\existspath{\matchrel^d}}\implies\writer(d)$
  
    \item $\mathsf{READER}=\Forallev\bigwedge_{d\in\DS}
    {\existspath{(\matchrel^d)^{-1}}}\implies\reader(d)$

    \item $\mathsf{DISJOINT}=\neg\Existsev\Big(
    \existspath{{\matchrel}\cdot{\matchrel}} \vee
    \bigvee_{d\neq d'}
    \big(\existspath{\matchrel^d}\wedge\existspath{\matchrel^{d'}}\big)
    \vee\big(\existspath{(\matchrel^d)^{-1}}\wedge\existspath{(\matchrel^{d'})^{-1}}\big)
    \Big)$
    
    \hspace{18mm}${}\wedge\neg\Existsev\Big(\bigvee_{d\in\DS}
    \existsloop{({\matchrel}^d\cdot{\procrel}^+\cdot({\matchrel}^d)^{-1})
    +(({\matchrel}^d)^{-1}\cdot{\procrel}^+\cdot{\matchrel}^d})
    \Big)$

    \item $\mathsf{FIFO}=\Forallev\bigwedge_{c\in\Queues}
    \neg\existsloop{{\matchrel}^c\cdot{\procrel}^+\cdot({\matchrel}^c)^{-1}\cdot{\procrel}^+}$

    \item $\mathsf{LIFO}=\Forallev\bigwedge_{s\in\Stacks}
    \Big( {\existspath{\matchrel^s}}\implies{}$
    
    \hspace{25mm}$\existsloop{{\procrel}\cdot
    \big( {\matchrel}^s\cdot{\procrel} + 
    \test{\neg\existspath{{\matchrel}^s+({\matchrel}^s)^{-1}}}\cdot{\procrel} 
    \big)^* \cdot({\matchrel}^s)^{-1}} \Big)$
  \end{itemize}
\end{example}

\underline{{\bf Quantified propositions:}}

It is useful to extend \ICPDL with existentially quantified propositions and cardinality 
constraints. A sentence in \EQICPDL is a formula of the form
$$
\Psi=\exists p_1,\ldots,p_n~\chi\wedge\Phi
$$
where $\Phi$ is an \ICPDL sentence which may use atomic propositions in
$\AP'=\AP\uplus\{p_1,\ldots,p_n\}$, and $\chi$ expresses cardinality constraints, i.e., is
a boolean combination of formulas of the form $\#p_i\leq c$ with $1\leq i\leq n$ and
$c\in\mathbb{N}$ (constants are usually encoded in binary).

If $\Phi\in\LCPDL$ then $\Psi\in\EQLCPDL$ and similarly for the other fragments.

\underline{{\bf Semantics of \EQICPDL:}}
Let $G=(V,(E_\gamma)_{\gamma\in\Gamma},\lambda)$ be a $(2^\AP,\Gamma)$-labeled
graph.
Then, $G\models\Psi$ if we can extend the vertex labeling to $\lambda'\colon V\to
2^{\AP'}$ with $\lambda'(v)\cap\AP=\lambda(v)$ for all $v\in V$, such that
\begin{itemize}
  \item  $G'=(V,(E_\gamma)_{\gamma\in\Gamma},\lambda')\models\Phi$ and 

  \item  $\chi$ evaluates to true when $\#p_i$ is replaced with $m_i=|\{v\in V\mid 
  p_i\in\lambda'(v)\}|$ for each $1\leq i\leq n$.
\end{itemize}

This extension allows to express connectivity in $\EQCPDL$:
$$
\mathsf{CONNECTED} = \exists p~(\#p\leq1)\wedge 
\Forallev\,\existspath{\big(\textstyle{\sum_{\gamma\in\Gamma}}~\gamma+\gamma^{-1}\big)^{*}}p
$$

On \CBMs, we can express in \EQPDL that the process relation has a maximal element on 
each process:
$$
\mathsf{MAXPROCESS} = \exists (Z_p)_{p\in\Procs}~
\Big( \textstyle{\bigwedge_{p\in\Procs}}~\#Z_p\leq1 \Big)
\wedge 
\Forallev\Big( p \implies \existspath{\procrel^{*}} (p\wedge Z_p) \Big)
$$
To get a \emph{linear} order on each process, it remains to state that we have no join, 
which is first-order definable but not \EQICPDL definable.
\begin{lemma}\label{lem:cbm-pdl-nojoin}
  Let $G$ be a $(\AP,\Gamma)$-labeled graph. Then, $G\in\CBM(\Arch,\Sigma)$ iff $G$ 
  satisfies $\mathsf{ALMOSTCBM}\wedge\mathsf{NOJOIN}$ where $\mathsf{ALMOSTCBM}$ is in 
  \EQLCPDL and $\mathsf{NOJOIN}$ is a first-order sentence:
  \begin{align*}
    \mathsf{ALMOSTCBM} & = \mathsf{LABELS} \wedge \mathsf{PROCESS} 
    \wedge \mathsf{ORDER} \wedge \mathsf{MAXPROCESS}
    \\
    &\phantom{{}= \mathsf{LABELS}}\wedge \mathsf{WRITER} \wedge \mathsf{READER} 
     \wedge \mathsf{DISJOINT} \wedge \mathsf{FIFO} \wedge \mathsf{LIFO}
    \\
    \mathsf{NOJOIN} & = \forall x,y,z~(x\procrel z \wedge y\procrel z \implies x=y)
  \end{align*}  
\end{lemma}

\begin{proof}
  It is easy to see that a \CBM satisfies $\mathsf{ALMOSTCBM}\wedge\mathsf{NOJOIN}$.
  Conversely, assuming that a graph $G\models\mathsf{ALMOSTCBM}\wedge\mathsf{NOJOIN}$, we
  show that $\procrel$ is the successor relation of a total order on each process.  First,
  $\mathsf{LABELS} \wedge \mathsf{PROCESS} \wedge \mathsf{ORDER}$ implies that
  $\procrel^{*}$ consists of disjoint partial orders on each process.  In addition,
  $\mathsf{MAXPROCESS}$ implies that $\procrel$ defines on each process a connected
  directed acyclic graph (DAG) with a unique maximal element.  Finally, $\mathsf{NOJOIN}$
  implies that the DAG is actually a linear graph.
\end{proof}

One of the main advantages of the extension is the following:
\newcommand{\Psisys}{\ensuremath{\Psi_\Sys}\xspace}

\begin{mytheorem}\label{thm:CPDStoPDL}
  For every $\Sys \in \CPDS(\Arch,\Sigma)$, there is a sentence
  $\Psisys\in\EQPDL(\Arch,\Sigma)$ of size $\mathcal{O}(|\Sys|^{2})$ such that 
  $\Lang(\Sys)=\{\mscn\in\CBM(\Arch,\Sigma)\mid\mscn\models\Psisys\}$.
\end{mytheorem}

\begin{proof}
  Fix $\Sys=((\Sys_p)_{p\in\Procs},\Val,\FinLocs)\in\CPDS(\Arch,\Sigma)$.  Recall the
  notations of Definition~\ref{def:cpds}.  We use a quantified atomic proposition $X_t$
  for each transition $t$ and $Y$ for the minimal events of processes.  We define
  \[
  \Psisys =
  \begin{array}[t]{rl}
    \multicolumn{2}{l}{\exists (X_t)_{t\in\mathsf{Trans}} \, \exists Y\, \Bigl[ 
    \Forallev \, \bigvee_{t\in\mathsf{Trans}}~
    \Big( X_t \wedge \bigwedge_{t'\neq t}~\neg X_{t'} \Big)
    }
    \\[2ex]
    \wedge & \Forallev \, \bigwedge_{p\in\Procs} \bigl(p \implies 
    \bigvee_{t\in\Delta_p} X_t \bigr)
    \\[2ex]
    \wedge & \Forallev \, \bigwedge_{a\in\Sigma} \bigl(a \implies 
    \bigvee_{t\in\mathsf{Trans}\mid\mathsf{lbl}(t)=a} X_t \bigr)
    \\[2ex]
    \wedge & \Forallev \, \bigl( \existspath{\procrel} \implies 
    \bigvee_{\substack{t,t'\in\mathsf{Trans} \mid \mathsf{tgt}(t)=\mathsf{src}(t')}} 
    X_t \wedge \existspath{\procrel}X_{t'} \bigr)
    \\[2ex]
    \wedge & \Forallev \, \bigl( \existspath{\matchrel} \implies 
    \bigvee_{\substack{t\in\Delta^{!},t'\in\Delta^{?} \mid \mathsf{val}(t)=\mathsf{val}(t') 
    \wedge \mathsf{ds}(t)=\mathsf{ds}(t')}} X_t \wedge 
    \existspath{\matchrel^{\mathsf{ds}(t)}}X_{t'} \bigr)
    \\[2ex]
    \wedge & \Big( \Forallev \, \neg\existspath{\procrel}Y \Big)
    \wedge \Big( \bigwedge_{p\in\Procs} \Existsev \, p \implies \Existsev \,(p\wedge Y) \Big)
    \\[2ex]
    \wedge & \Forallev \, \bigl( Y \implies
    \bigvee_{p\in\Procs,t\in\Delta_p\mid\mathsf{src(t)=\iota_p}} X_t \bigr)
    \\[2ex]
    \wedge & 
    \bigvee_{(\ell_p)_{p\in\Procs}\in\FinLocs}
    \Big( \bigwedge_{\substack{p\in\Procs \\ \mid \ell_p\neq\iota_p}}~ \Existsev \, p \Big)
    \wedge \Forallev \, \Big( \neg\existspath{\procrel} 
    \implies \bigvee_{\substack{p\in\Procs,t\in\Delta_p \\ 
    \mid\mathsf{tgt}(t)=\ell_p}} X_{t} \Big)
    \,\Bigr]
  \end{array}
  \]
  This completes the construction of the desired formula $\Psisys$. 
\end{proof}

\clearpage
\section{Satisfiability and Model Checking}

For an architecture $\Arch$ and an alphabet $\Sigma$, consider the following problems:

\begin{center}
\begin{tabular}{ll}
\toprule
\textsf{(EQ-)(ILC)}\PDL-\textsc{Satisfiability}$(\Arch,\Sigma)$:\\
\midrule
{Instance:} & \hspace{-10em}$\Phi \in \textsf{(EQ-)(ILC)PDL}(\Arch,\Sigma)$\\[0.5ex]
{Question:} & \hspace{-10em}$L(\Phi) \neq \emptyset$\,?\\
\bottomrule
\end{tabular}
\end{center}

\begin{center}
\begin{tabular}{ll}
\toprule
\textsf{(EQ-)(ILC)}\PDL-\textsc{ModelChecking}$(\Arch,\Sigma)$:\\
\midrule
{Instance:} & \hspace{-10em}$\Sys \in \CPDS(\Arch,\Sigma)$\,; $\Phi \in 
\textsf{(EQ-)(ILC)PDL}(\Arch,\Sigma)$\\[0.5ex]
{Question:} & \hspace{-10em}$L(\Sys) \subseteq L(\Phi)$\,?\\
\bottomrule
\end{tabular}
\end{center}

\bigskip

\begin{mytheorem}
Let $\Arch$ be given as follows (and $\Sigma$ be arbitrary):
\begin{center}
{
\includegraphics[scale=0.5]{simplearch.pdf}
}
\end{center}
Then, all the abovementioned problems are undecidable.
\end{mytheorem}

This is proved using a reduction from the (non)emptiness problem of \CPDSs.  
The reduction is trivial for the \textsc{ModelChecking} problem, we simply use the
specification $\Phi=\mathsf{False}=\Existsev\,(p\wedge\neg p)$.  
For the \textsc{Satisfiability} problem of \EQPDL, the reduction follows from
Theorem~\ref{thm:CPDStoPDL}: given $\Sys\in\CPDS$ we construct $\Psisys$.  

For \PDL-\textsc{Satisfiability}, we start with $\Sys\in\CPDS(\Arch,\Sigma)$ and we
construct $\Phi_\Sys\in\PDL(\Arch,\Sigma')$ where
$\Sigma'=\Sigma\times\Trans\times\{0,1\}$.  The second component of $\Sigma'$ allows to
label events with transitions (corresponding to $(X_t)_{t\in\Trans}$) and the last 
component allows to mark the minimal event of each process (corresponding to $Y$). Then 
$\Phi_\Sys$ is similar to the kernel of $\Psisys$.

By Theorem~\ref{PDLtoCPDS}, we obtain the following positive result:

\begin{mytheorem}
Suppose $\DS=\Bags$. Then, the problems
\begin{itemize}
\item[] \PDL-\textsc{Satisfiability}$(\Arch,\Sigma)$ and
\item[] \PDL-\textsc{ModelChecking}$(\Arch,\Sigma)$
\end{itemize}
are both decidable.
\end{mytheorem}

\newpage
\section{\PDL and special tree-width}\label{sec:stwpdl}

We have seen in Section~\ref{sec:stw-tree-interpretation} that a $(2^\AP,\Gamma)$-labeled
graph $G$ with special tree-width at most $k$ can be interpreted in any \kSTT $\tau$ such
that $\sem{\tau}=(G,\chi)$.  We prove now a backward translation for \PDL, which is an
analog of Proposition~\ref{prop:stt-mso-interpretation}.

We define $\AP^k=\{\sttunion, i, \add{i}{p}, \add{i,j}\gamma, \forget{i} \mid
i,j\in[k], p\in\AP, \gamma\in\Gamma\}$.

\begin{proposition}[PDL interpretation]\label{prop:stt-pdl-interpretation}
  For all sentences $\Phi\in\EQICPDL(\AP,\Gamma)$ and all $k>0$, we can
  construct a sentence
  $\widetilde{\Phi}^k\in\EQICPDL(\AP^k,\da_0,\da_1)$ of size 
  $\mathcal{O}(k^2|\Phi|)$ such that,
  for every valid \kSTT $\tau$ with $\sem{\tau}=(G,\chi)$, we have
  $$
  G\models\Phi \textup{~~~iff~~~} \tau\models \widetilde{\Phi}^k \,.
  $$
  Moreover, if $\Phi$ is in some fragment $\textsf{(EQ-)(IL)CPDL}(\Arch,\Sigma)$
  then we can construct $\widetilde{\Phi}^k$ in the corresponding fragment
  $\textsf{(EQ-)(IL)CPDL}(\AP^k,\da_0,\da_1)$.
\end{proposition}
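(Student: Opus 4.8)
The plan is to reuse the tree-interpretation strategy of Proposition~\ref{prop:stt-mso-interpretation}, but carried out entirely inside \ICPDL. Since the vertices of $G$ are exactly the leaves of $\tau$, I would define $\widetilde{\,\cdot\,}^k$ by a simultaneous recursion on sentences, state formulas and path formulas that is homomorphic for every construct except the atomic programs $\gamma\in\Gamma$: set $\widetilde{p}^k=p$, $\widetilde{\neg\sigma}^k=\neg\widetilde{\sigma}^k$, $\widetilde{\sigma_1\vee\sigma_2}^k=\widetilde{\sigma_1}^k\vee\widetilde{\sigma_2}^k$, $\widetilde{\existspath{\pi}\sigma}^k=\existspath{\widetilde{\pi}^k}\widetilde{\sigma}^k$, $\widetilde{\existsloop{\pi}}^k=\existsloop{\widetilde{\pi}^k}$, and for programs $\widetilde{\test{\sigma}}^k=\test{\widetilde{\sigma}^k}$, $\widetilde{\back{\pi}}^k=\back{(\widetilde{\pi}^k)}$, with $+,\cdot,{}^{\ast},\cap$ all commuting with $\widetilde{\,\cdot\,}^k$. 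At the sentence level I would set $\widetilde{\Existsev\sigma}^k=\Existsev(\mathsf{leaf}\wedge\widetilde{\sigma}^k)$, where $\mathsf{leaf}\fequiv\neg\existspath{\da_0+\da_1}\true$, to force the evaluating node to be a genuine vertex of $G$.

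The only real content is the translation of an atomic edge label $\gamma$, which must play the role of the macro $\beta_i$ from the \MSO proof. Recall that $\beta_i(x,z)$ says $z$ is an ancestor of the $i$-colored leaf $x$ with no $\forget{i}$ in between, and that a $\gamma$-edge from $x$ to $y$ exists iff some node labeled $\add{i}{j}{\gamma}$ is a common ancestor $z$ with $\beta_i(x,z)$ and $\beta_j(y,z)$. I would encode this ``ancestor without intervening $\forget$'' condition as a guarded reflexive--transitive closure, walking up from $x$ to $z$ and then down to $y$:
\[
\widetilde{\gamma}^k \;=\; \sum_{i,j\in[k]}
\test{\mathsf{color}_i}\cdot\bigl(\test{\neg\forget{i}}\cdot(\back{\da_0}+\back{\da_1})\bigr)^{\ast}
\cdot\test{\add{i}{j}{\gamma}}\cdot
\bigl((\da_0+\da_1)\cdot\test{\neg\forget{j}}\bigr)^{\ast}\cdot\test{\mathsf{color}_j}\,.
\]
Here the up-phase tests $\neg\forget{i}$ at every node strictly below the chosen $\add{i}{j}{\gamma}$-node $z$, so that color $i$ is still active at $z$ and points to $x$; symmetrically the down-phase guarantees that $y$ is the unique leaf carrying the still-active color $j$ at $z$. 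I would then prove $(x,y)\in\Sem{\gamma}{G}$ iff $(x,y)\in\Sem{\widetilde{\gamma}^k}{\tau}$ by tracing the \STT semantics: an edge enters $E_\gamma$ exactly when some $\add{i}{j}{\gamma}$ operation fires while $\chi(i)=x$ and $\chi(j)=y$, which corresponds precisely to the existence of such a node $z$ witnessed by the two guarded stars.

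With the edge case settled, correctness of the full translation follows by the obvious mutual induction: for every leaf $e$ (resp.\ leaves $e,f$) of $\tau$ one shows $G,e\models\sigma$ iff $\tau,e\models\widetilde{\sigma}^k$ (resp.\ $(e,f)\in\Sem{\pi}{G}$ iff $(e,f)\in\Sem{\widetilde{\pi}^k}{\tau}$), using that, when started at a leaf, every translated program stays within the leaves, so all intermediate state formulas are evaluated at vertices of $G$. For the size bound, each atomic $\gamma$ expands to a disjunction of $k^2$ constant-size programs while every other construct adds only constant overhead, giving $|\widetilde{\Phi}^k|=\mathcal{O}(k^2|\Phi|)$.

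The ``moreover'' part is then immediate from the shape of the translation: $\widetilde{\gamma}^k$ uses converse (unavoidably, since one must navigate upward to the $\add$-node) but neither intersection nor loop, and all remaining clauses are homomorphic; hence intersection (resp.\ loop) occurs in $\widetilde{\Phi}^k$ only if it already occurs in $\Phi$, so a \LCPDL input yields a \LCPDL output. The main obstacle I expect is exactly the correctness argument for $\widetilde{\gamma}^k$: pinning down that the two guarded stars capture the ``common ancestor without intervening $\forget$'' relation on both sides, and that the $\mathsf{color}$/$\forget$ bookkeeping faithfully tracks the firing of $\add$ operations.
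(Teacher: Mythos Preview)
Your proposal is correct and follows essentially the same approach as the paper: a homomorphic translation on all constructs except atomic programs $\gamma$, which are rendered by the up-then-down guarded path formula over $\mathsf{color}_i$, $\forget{}$, and $\add{}{}{}$ labels. The only cosmetic differences are that the paper uses $+$ rather than $\ast$ in the guarded closures and places the $\neg\forget{j}$ test before (rather than after) each downward step; both variants are semantically equivalent since leaves and $\add{}{}{}$-nodes are never $\forget{}$-nodes.
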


\begin{proof}
  We prove by induction that for all $\ICPDL(\AP,\Gamma)$ sentences $\Phi$, state formulas
  $\sigma$ and path formulas $\pi$, and all $k>0$, there are $\ICPDL(\AP^k,\da_0,\da_1)$
  formulas $\widetilde{\Phi}^k$, $\widetilde{\sigma}^k$ and $\widetilde{\pi}^k$ such that,
  for all valid \kSTT $\tau$ with $\sem{\tau}=(G,\chi)$, and all vertices $e,f$ of $G$
  (identified with leaves of $\tau$) we have
  \begin{align*}
    G\models\Phi &\textup{~~~iff~~~} \tau\models \widetilde{\Phi}^k \\
    G,e\models\sigma &\textup{~~~iff~~~} \tau,e\models \widetilde{\sigma}^k \\
    G,e,f\models\pi &\textup{~~~iff~~~} \tau,e,f\models \widetilde{\pi}^k 
  \end{align*}

  The difficult cases are for edge relations $\gamma\in\Gamma$ and vertex labels
  $p\in\AP$.  We define
  \begin{align*}
    \widetilde{p}^{k} &= \bigvee_{0\leq i\leq k} 
    i \wedge \existspath{(\test{\neg\forget{i}}\cdot{\uparrow})^+}{\add{i}p}
    \\
    \widetilde{\gamma}^k &= \sum_{0\leq i,j\leq k} 
    \test{i}\cdot(\test{\neg\forget{i}}\cdot{\uparrow})^+
    \cdot\test{\add{i,j}\gamma}\cdot(\test{\neg\forget{j}}\cdot\da)^+
    \cdot\test{j} \,.
  \end{align*}
  where $\da=\da_0+\da_1$ and $\ua=\da^{-1}$.
  The formula $\widetilde{p}^k$ is of size $\mathcal{O}(k)$ and
  the formula $\widetilde{\gamma}^k$ is of size $\mathcal{O}(k^2)$.
  The other cases are trivial:
  \begin{align*}
    \widetilde{\Existsev\sigma}^k &= \Existsev(\neg\existspath{\da}{}\wedge\widetilde{\sigma}^k)
    &
    \widetilde{\neg\Phi}^k &= \neg\,\widetilde{\Phi}^k
    &
    \widetilde{\Phi_1\vee\Phi_2}^k &= \widetilde{\Phi_1}^k \vee \widetilde{\Phi_2}^k
    \\
    &&
    \widetilde{\neg\sigma}^k &= \neg\,\widetilde{\sigma}^k
    &
    \widetilde{\sigma_1 \vee \sigma_2}^k &= \widetilde{\sigma_1}^k \vee \widetilde{\sigma_2}^k
    \\
    &&
    \widetilde{\existspath{\pi}\sigma}^k &= \existspath{\widetilde{\pi}^k}{\widetilde{\sigma}^k}
    &
    \widetilde{\existsloop{\pi}}^k &= \existsloop{\widetilde{\pi}^k}
    \\
    \widetilde{\test{\sigma}}^k &= \test{\widetilde{\sigma}^k}
    &
    \widetilde{\pi^{-1}}^k &= (\widetilde{\pi}^k)^{-1}
    &
    \widetilde{\pi_1+\pi_2}^k &= \widetilde{\pi_1}^k + \widetilde{\pi_2}^k
    \\
    \widetilde{\pi_1\cdot\pi_2}^k &= \widetilde{\pi_1}^k \cdot \widetilde{\pi_2}^k
    &
    \widetilde{\pi^{\ast}}^k &= (\widetilde{\pi}^k)^{\ast}
    &
    \widetilde{\pi_1\cap\pi_2}^k &= \widetilde{\pi_1}^k \cap \widetilde{\pi_2}^k
  \end{align*}
  Finally, for an existentially quantified formula 
  $\Psi=\exists\,p_1,\ldots,p_n~\chi\wedge\Phi$ we define
  $$
  \widetilde{\Psi}^{k}=\exists\,p_1,\ldots,p_n~~\chi\wedge\widetilde{\Phi}^{k}
  \wedge\Forallev\!\! \bigwedge_{1\leq i\leq n} p_i \implies \neg\existspath{\da}{}
  $$
  Notice that $\Phi$ may use atomic propositions in $\AP'=\AP\uplus\{p_1,\ldots,p_n\}$. 
  In addition to $\widetilde{p}^{k}$ for $p\in\AP$ defined above, we simply let 
  $\widetilde{p_i}^{k}=p_i$ for $1\leq i\leq n$. Indeed, in $\widetilde{\Psi}^{k}$, the 
  existential propositional quantification directly labels nodes of the tree with atomic 
  propositions from $\{p_1,\ldots,p_n\}$ and the last conjunct in $\widetilde{\Psi}^{k}$ 
  ensures that only leaves (i.e., vertices of $G$) are labeled with these atomic 
  propositions.
%   This concludes the proof.
\end{proof}

\begin{exercise}
  Write a sentence $\Phi^{k}_{\mathsf{valid}}\in\PDL(\AP^{k},\da_0,\da_1)$ 
  stating that the binary tree is a valid \kSTT. What is the size of 
  $\Phi^{k}_{\mathsf{valid}}$?
\end{exercise}

\begin{theorem}[G{\"o}ller, Lohrey, and Lutz \cite{Goeller2009}]\label{thm:pdl-tree}
  For a given sentence $\Phi\in\LCPDL(\AP,\da_0,\da_1)$ over binary trees, we can
  construct a tree automaton $\mathcal{B}_\Phi$ of size
  $2^{\poly(|\Phi|)}$ such that, for every binary tree $\tau$, we have
  $$
  \tau\models\Phi \textup{~~~iff~~~} \tau\in\Lang(\mathcal{B}_\Phi) \,.
  $$
  Moreover, if $\Phi\in\ICPDL(\AP,\da_0,\da_1)$ then we can construct an 
  equivalent $\mathcal{B}_\Phi$ of double exponential size.
\end{theorem}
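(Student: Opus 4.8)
The plan is to proceed automata-theoretically, via two-way alternating tree automata, following the Fischer--Ladner closure of the formula. First I would treat the $\LCPDL$ case. Given $\phi\in\LCPDL(\AP,\da_0,\da_1)$, I would build a two-way alternating tree automaton $\mathcal{A}_\phi$ whose states are (essentially) the subformulas and subprograms in the closure of $\phi$, so that $|\mathcal{A}_\phi|$ is polynomial in $|\phi|$. The Boolean and modal structure of state formulas is handled by alternation: a state checking $\sigma_1\vee\sigma_2$ branches disjunctively, $\neg\sigma$ is handled by dualizing, and an atomic proposition $p$ is checked locally against the node label. Each program $\pi$ is a regular expression over the navigation alphabet $\{\da_0,\da_1\}$ together with tests; since the automaton is two-way, the converse $\pi^{-1}$ costs nothing beyond allowing up-moves $\da_i^{-1}$, and a state formula $\existspath{\pi}\sigma$ is simulated by running, in the transition structure, the finite automaton for $\pi$, navigating the tree accordingly and finally verifying $\sigma$ at the reached node.

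The genuinely delicate construct is $\existsloop{\pi}$, which asserts a $\pi$-labeled path from a node back to itself. Here I would have the automaton mark the starting node and then guess-and-verify a $\pi$-path using its two-way moves, the subtlety being that the path must return to \emph{exactly} the marked node; on a tree this can be enforced by tracking the net displacement (the child-directions descended versus those undone), so that a genuine loop is detected rather than a path ending at a distinct node. Once $\mathcal{A}_\phi$ is in place with $L(\mathcal{A}_\phi)=\{\tau\mid\tau\models\phi\}$, I would invoke the simulation theorem for two-way alternating tree automata (Vardi), which converts any such automaton into an equivalent one-way nondeterministic tree automaton with a single-exponential blow-up. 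This yields $\mathcal{B}_\phi$ of size $2^{\poly(|\phi|)}$, as required.

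For the full $\ICPDL$ case the obstacle is the intersection $\pi_1\cap\pi_2$, which asserts that two paths with the \emph{same} endpoints exist, and this cannot be simulated by independent alternating branches because the two copies cannot agree on the (existentially chosen) target node. The plan is to eliminate intersection at the cost of one further exponential: I would track, in the automaton's states, a synchronized simulation of both intersected programs, forcing the two paths to meet at a common node. Concretely this amounts to a product-style construction over pairs (or bounded tuples) of program-states, which inflates the two-way alternating automaton to exponential size; composing with the exponential simulation then gives the stated double-exponential bound $2^{2^{\poly(|\phi|)}}$.

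The main obstacle I anticipate is exactly the intersection operator: coordinating two paths to share endpoints is what forces the jump from single to double exponential, and getting the synchronized-product bookkeeping right---while still respecting two-way navigation and the loop condition---is the technical heart of the argument. The $\existsloop{\pi}$ construct is a secondary difficulty, but it is subsumed by the same machinery, since a loop is the special case in which the two endpoints of a path are constrained to coincide.
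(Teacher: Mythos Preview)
Your plan is essentially the paper's: build a two-way alternating tree automaton of polynomial size for $\LCPDL$ (exponential for $\ICPDL$, because of intersection) and then apply the single-exponential simulation into a nondeterministic tree automaton. One caveat on $\existsloop{\pi}$: you cannot literally ``track net displacement'' in the finite control, since the stack of pending child-directions is unbounded; the standard remedy is a summary/hop construction in which alternation discharges each excursion into a subtree via a separately spawned obligation, which still keeps the alternating automaton polynomial.
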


\begin{proof}
  We construct an \emph{alternating tree walking automaton}
  (\ATWA) which is equivalent to $\Phi$.  An \ATWA ``walks'' in a
  tree, similarly to a path formula from \CPDL. In addition, it can spawn
  several copies of an automaton, which \emph{all} have to accept the input.
  This spawning is dual to non-deterministic choice, hence the name
  \emph{alternating}.  If $\Phi$ is an \LCPDL formula, then the
  \ATWA is of size polynomial in $\Phi$.
  
  We sketch some steps of the construction. For each state subformula $\sigma$ there is a 
  state $q_\sigma$ of the \ATWA so that for each node $e$ of the tree $T$ we have 
  $T,e\models\sigma$ iff the \ATWA accepts starting from the configuration 
  $(q_\sigma,e)$. The interesting cases are when $\sigma=\existspath{\pi}{}$ or 
  $\sigma=\existsloop{\pi}$. We construct an automaton $\mathcal{C}_\pi$ associated with 
  $\pi$. For instance, consider 
  $\pi=\da^{+}\cdot\test{a}\cdot\ua^{+}\cdot\test{b}\cdot\da^{+}\cdot\test{c}\cdot\ua^{+}$.
  The automaton $\mathcal{C}_\pi$ is
  
  \begin{gpicture}[name = path-automaton]
    \gasset{Nw=6, Nh = 6, Nframe = y}\gasset{loopdiam=4}
    \node[Nmarks=i](0)(0,0){0}
    \node(1)(15,0){1}
    \node(2)(35,0){2}
    \node(3)(50,0){3}
    \node(4)(70,0){4}
    \node(5)(85,0){5}
    \node(6)(105,0){6}
    \node[Nmarks=r](7)(120,0){7}
    \drawedge(0,1){$\da$}\drawloop(1){$\da$}
    \drawedge(1,2){$\test{a}$}
    \drawedge(2,3){$\ua$}\drawloop(3){$\ua$}
    \drawedge(3,4){$\test{b}$}
    \drawedge(4,5){$\da$}\drawloop(5){$\da$}
    \drawedge(5,6){$\test{c}$}
    \drawedge(6,7){$\ua$}\drawloop(7){$\ua$}
  \end{gpicture}
  
  For formula $\sigma=\existspath{\pi}{}$ we start the walking automaton 
  $\mathcal{C}_\pi$ in state $0$ and it should accept in state $7$.
  If instead of a simple $\test{a}$ with $a\in\AP$ we have some $\test{\sigma'}$ then we 
  spawn a copy of the \ATWA in state $q_{\sigma'}$ from the current node of the tree.
  
  The case $\sigma=\existsloop{\pi}$ is more interesting since the walking automaton 
  cannot check that at the end of the path it is back in the same node of the tree. The 
  trick is to consider pairs $(p,q)$ of states of $\mathcal{C}_\pi$ and the \ATWA will 
  accept from state $(p,q)$ starting at node $e$ if there is a loop on $e$ following a 
  path going from $p$ to $q$ in $\mathcal{C}_\pi$. It uses the following transitions, 
  where $\varepsilon$ means that we stay at the same node:
  \begin{align*}
    (p,p)&\xrightarrow{\varepsilon}\mathsf{accept} 
    \\
    (p,q)&\xrightarrow{\varepsilon}q_{\sigma'} 
    &&\text{if } p\xrightarrow{\test{\sigma'}}q \text{ in } \mathcal{C}_\pi
    \\
    (p,q)&\xrightarrow{\varepsilon}(p,r)\wedge(r,q)
    &&\text{splits the loop}
    \\
    (p,q)&\xrightarrow{\da_i}(p',q')
    &&\text{if } p\xrightarrow{\da_i}p' \wedge q'\xrightarrow{\ua_i}q \text{ in } \mathcal{C}_\pi
    \\
    (p,q)&\xrightarrow{\ua_i}(p',q')
    &&\text{if } p\xrightarrow{\ua_i}p' \wedge q'\xrightarrow{\da_i}q \text{ in } \mathcal{C}_\pi
  \end{align*}
  With the example above, we have (at least) the following transitions:
  \begin{align*}
    (0,7)&\xrightarrow{\da}(1,6)\vee(1,7)
    & (1,7)&\xrightarrow{\da}(1,6)\vee(1,7)
    \\
    (0,3)&\xrightarrow{\da}(1,2)\vee(1,3)
    & (1,3)&\xrightarrow{\da}(1,2)\vee(1,3)
    \\
    (4,7)&\xrightarrow{\da}(5,6)\vee(5,7)
    & (5,7)&\xrightarrow{\da}(5,6)\vee(5,7)
    \\
    (2,5)&\xrightarrow{\ua}(3,4)\vee(3,5)
    & (3,5)&\xrightarrow{\ua}(3,4)\vee(3,5)
    \\
    (1,2)&\xrightarrow{a}\mathsf{accept} 
    & (3,4)&\xrightarrow{b}\mathsf{accept} 
    & (5,6)&\xrightarrow{c}\mathsf{accept} 
    \\
    (0,7)&\xrightarrow{\varepsilon}(0,3)\wedge(3,7)
    & (1,7)&\xrightarrow{\varepsilon}(1,3)\wedge(3,7)
    \\
    (3,7)&\xrightarrow{\varepsilon}(3,4)\wedge(4,7)
    & (3,7)&\xrightarrow{\varepsilon}(3,5)\wedge(5,7)
  \end{align*}

  Next, we construct a non-deterministic tree automaton equivalent to the \ATWA
  associated with $\Phi$.  This induces an exponential blow-up.
  Hence, the resulting automaton is of size $2^{\poly(|\Phi|)}$.  
  
  If $\Phi$ is in \ICPDL then the construction of the corresponding \ATWA is
  exponential, resulting in $\mathcal{B}_\Phi$ of double exponential size.
\end{proof}

Since tree automata are closed under projections and can be used to check the cardinality 
constraints, the above result can be generalized to \EQLCPDL.

\begin{corollary}\label{cor:eqpdl-tree}
  Given a sentence $\Psi\in\EQLCPDL(\AP,\da_0,\da_1)$ over binary trees, we can
  construct a tree automaton $\mathcal{B}_\Psi$ of size
  $2^{\poly(|\Psi|)}$ such that, for every binary tree $\tau$, we have
  $$
  \tau\models\Psi \textup{~~~iff~~~} \tau\in\Lang(\mathcal{B}_\Psi) \,.
  $$
  Moreover, if $\Psi\in\EQICPDL(\AP,\da_0,\da_1)$ then we can construct an 
  equivalent $\mathcal{B}_\Psi$ of double exponential size.
\end{corollary}

\begin{proof}
  Let $\Psi=\exists p_1,\ldots,p_n~\chi\wedge\Phi$ be the given formula with 
  $\Phi\in\LCPDL(\AP',\da_0,\da_1)$ and $\chi$ the cardinality constraint. 
  Recall that $\AP'=\AP\uplus\{p_1,\ldots,p_n\}$.
  
  For each $1\leq i\leq n$, let $M_i$ be the largest constant $c$ occuring in some 
  constraint $\#p_i\leq c$, with $M_i=-1$ if there are no cardinality constraints on $p_i$.
  We construct a deterministic bottom-up tree automaton $\mathcal{B}_\chi$:
  \begin{description}
    \item[States]  $Q=\prod_{1\leq i\leq n}[1+M_i]$ where $[N]=\{0,1,\ldots,N\}$.
    
    A state is a tuple $\overline{m}=(m_1,\ldots,m_n)$.  It is accepting if
    $\overline{m}\models\chi$, i.e., $\chi$ evaluates to true when, for each $1\leq i\leq
    n$, $\#p_i$ is replaced with $m_i$.
  
    \item[Transitions] Let $a\in 2^{\AP'}$ be a possible node label of the tree. The 
    transitions are described below, where $\oplus_i$ denotes addition with threshold 
    $1+M_i$: $m \oplus_i m' = \min(m+m',1+M_i)$.
    \begin{description}
      \item[Leaves]  $(a,\overline{m})$ is a transition 
      if for all $1\leq i\leq n$ we have 
      $$m_i=
      \begin{cases}
        0 \oplus_i 1 & \text{if } p_i\in a \\
        0 & \text{otherwise.}
      \end{cases}$$
    
      \item[Unary]  $(\overline{m}',a,\overline{m})$ is a transition 
      if for all $1\leq i\leq n$ we have
      $$m_i=
      \begin{cases}
        m'_i \oplus_i 1 & \text{if } p_i\in a \\
        m'_i & \text{otherwise.}
      \end{cases}$$
    
      \item[Binary]  $(\overline{m}',\overline{m}'',a,\overline{m})$ is a transition 
      if for all $1\leq i\leq n$ we have 
      $$m_i=
      \begin{cases}
        m'_i \oplus_i m''_i \oplus_i 1 & \text{if } p_i\in a \\
        m'_i \oplus_i m''_i & \text{otherwise.}
      \end{cases}$$
    \end{description}
  \end{description}
  It is easy to see that a tree $\tau$ over $\AP'$ satisfies $\chi$ iff it is accepted by 
  $\mathcal{B}_\chi$. If the constants in $\chi$ are encoded in binary, then the size of 
  $\mathcal{B}_\chi$ is $2^{\mathcal{O}(|\AP'|+|\chi|)}$.
  
  Now, consider the tree automaton $\mathcal{B}_\Phi$ given by 
  Theorem~\ref{thm:pdl-tree}. We define $\mathcal{B}_\Psi$ as the projection over $\AP$ 
  of the intersection $\mathcal{B}_\chi \cap \mathcal{B}_\Phi$. It is of size 
  $2^{\poly(|\Psi|)}$ (assuming that all atomic propositions in $\AP$ occur in $\Phi$).
\end{proof}

\begin{corollary}\label{cor:Akstwpdl}
  Given a sentence $\Psi\in\EQLCPDL(\AP,\Gamma)$ and $k>0$, we can
  construct a tree automaton \AkstwPsi of size $2^{\poly(k,|\Psi|)}$
  such that, for every valid \kSTT $\tau$ with $\sem{\tau}=(G,\chi)$, we have
  $$
  G\models\Psi \textup{~~~iff~~~} \tau\in\Lang(\AkstwPsi) \,.
  $$
  Moreover, if $\Psi\in\EQICPDL(\AP,\Gamma)$ then we can construct an 
  equivalent \AkstwPsi of double exponential size.
\end{corollary}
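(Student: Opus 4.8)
The plan is to obtain \Akstwphi by simply composing the two results established immediately above: the \PDL tree-interpretation (Proposition~\ref{prop:stt-pdl-interpretation}) and the automata-theoretic characterisation of \LCPDL and \ICPDL over binary trees (Theorem~\ref{thm:pdl-tree}). The statement is, in effect, the corollary of reading one through the other and tracking sizes.

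First I would apply Proposition~\ref{prop:stt-pdl-interpretation} to the given sentence $\phi\in\LCPDL(\AP,\Gamma)$. This yields a sentence $\widetilde{\phi}^k\in\ICPDL(\AP^k,\da_0,\da_1)$ of size $\mathcal{O}(k^2|\phi|)$ such that, for every valid \kSTT $\tau$ with $\sem{\tau}=(G,\chi)$, we have $G\models\phi$ iff $\tau\models\widetilde{\phi}^k$. The essential point is that the interpretation stays inside the \LCPDL fragment: the construction introduces tests, concatenation, converse (via $\ua=\da^{-1}$), Kleene star, and loop, but no intersection, so when $\phi$ is an \LCPDL formula the translated formula $\widetilde{\phi}^k$ is again \LCPDL over the tree signature $\{\da_0,\da_1\}$. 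Next I would feed $\widetilde{\phi}^k$ into Theorem~\ref{thm:pdl-tree} to get a tree automaton $\mathcal{B}_{\widetilde{\phi}^k}$ of size $2^{\poly(|\widetilde{\phi}^k|)}$ accepting exactly the binary trees satisfying $\widetilde{\phi}^k$, and I would set $\Akstwphi := \mathcal{B}_{\widetilde{\phi}^k}$. Correctness is then immediate by chaining the two equivalences: for a valid \kSTT $\tau$ with $\sem{\tau}=(G,\chi)$,
\[
G\models\phi \iff \tau\models\widetilde{\phi}^k \iff \tau\in\Lang(\mathcal{B}_{\widetilde{\phi}^k}) = \Lang(\Akstwphi)\,.
\]
For the size bound, substituting $|\widetilde{\phi}^k|=\mathcal{O}(k^2|\phi|)$ into the exponential bound of Theorem~\ref{thm:pdl-tree} gives $2^{\poly(k^2|\phi|)}=2^{\poly(k,|\phi|)}$, as claimed.

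The \ICPDL case is handled identically, only with weaker bounds. Proposition~\ref{prop:stt-pdl-interpretation} again produces $\widetilde{\phi}^k$ of size $\mathcal{O}(k^2|\phi|)$ (now genuinely in \ICPDL, since the $\pi_1\cap\pi_2$ clause of the interpretation may be used), and the \ICPDL branch of Theorem~\ref{thm:pdl-tree} converts it into an equivalent tree automaton of double-exponential size in $|\widetilde{\phi}^k|$, hence double-exponential in $\poly(k,|\phi|)$.

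There is no real obstacle once both ingredients are available; the proof is a one-line composition. The only thing requiring care is the quantitative accounting, which reduces to plugging the polynomial size of $\widetilde{\phi}^k$ into the (double-)exponential bound of Theorem~\ref{thm:pdl-tree}. I would also note that the claimed equivalence is asserted only for \emph{valid} \kSTTs, so no intersection with a validity automaton is needed here: the behaviour of $\Akstwphi$ on ill-formed trees is irrelevant to the stated equivalence (in applications one intersects with \Akstwcbm to restrict to well-formed encodings of \CBMs).
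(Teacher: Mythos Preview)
Your proposal is correct and follows essentially the same approach as the paper: apply Proposition~\ref{prop:stt-pdl-interpretation} to obtain $\widetilde{\phi}^k$ of size $\mathcal{O}(k^2|\phi|)$, then set $\Akstwphi=\mathcal{B}_{\widetilde{\phi}^k}$ via Theorem~\ref{thm:pdl-tree}. Your write-up is in fact more explicit than the paper's, in particular in noting that the interpretation preserves the \LCPDL fragment and in spelling out the \ICPDL case and the size accounting.
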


\begin{proof}
  Given $\Psi\in\EQLCPDL(\AP,\Gamma)$ and $k>0$, we construct using
  Proposition~\ref{prop:stt-pdl-interpretation} the corresponding formula
  $\widetilde{\Psi}^k$ of size $\mathcal{O}(k^2|\Psi|)$.  Then, we apply
  Corollary~\ref{cor:eqpdl-tree} to construct the automaton
  $\AkstwPsi=\mathcal{B}_{\widetilde{\Psi}^k}$ of size $2^{\poly(k,|\Psi|)}$.
\end{proof}

% \myline
\newpage

\section{\STW-\EQICPDL model checking}
\newcommand{\AkswPhi}{\ensuremath{\A_\Phi^{k\text{-}\mathsf{sw}}}\xspace}
\newcommand{\AkstwnotPhi}{\ensuremath{\A_{\neg\Phi}^{k\text{-}\mathsf{stw}}}\xspace}

We are now in a position to give `efficient' decision procedures to solve the 
model-checking problem of \CPDSs against \PDL specifications when we restrict to 
behaviours of bounded special tree-width.
For a given architecture $\Arch$ and alphabet $\Sigma$, we consider the following problem:

\begin{center}
\begin{tabular}{ll}
\toprule
\textsc{stw}-\textsf{(EQ-)(ILC)}\PDL-\textsc{ModelChecking}$(\Arch,\Sigma)$:\\
\midrule
{Instance:} & \hspace{-14em}$\Sys \in \CPDS(\Arch,\Sigma)$\,; 
$\Phi\in\textsf{(EQ-)(ILC)PDL}(\Arch,\Sigma)$; $k>0$\\[0.5ex]
{Question:} & \hspace{-14em}$L(\Sys) \cap \kstwCBM \subseteq L(\Phi)$\,?\\
\bottomrule
\end{tabular}
\end{center}

Here, we suppose that $k$ is given in unary.

Decidability of this problem follows from decidability of \MSO and the statement
of Exercise~\ref{exc:ICPDLtoMSO}, saying that every \ICPDL formula can be
(effectively) translated into an \MSO sentence.  Unfortunately, this does not
give us an elementary upper bound.  Instead, we will use
Corollary~\ref{cor:Akstwpdl} in order to obtain the following result.

\begin{mytheorem}\label{thm:icpdl-mc}
  The bound $k$ on the special tree-width is given in unary.

  \medskip\centering
  \begin{tabular}{|l|l|}
    \hline
    Problem & can be solved in \\
    \hline
    \textsc{stw}-\LCPDL-\textsc{ModelChecking}$(\Arch,\Sigma)$ & \textsc{ExpTime}  \\
    \hline
    \textsc{stw}-\EQLCPDL-\textsc{ModelChecking}$(\Arch,\Sigma)$ & $2$-\textsc{ExpTime}  \\
    \hline
    \textsc{stw}-\ICPDL-\textsc{ModelChecking}$(\Arch,\Sigma)$ & $2$-\textsc{ExpTime}  \\
    \hline
    \textsc{stw}-\EQICPDL-\textsc{ModelChecking}$(\Arch,\Sigma)$ & $3$-\textsc{ExpTime}  \\
    \hline
  \end{tabular}
\end{mytheorem}

This result for \LCPDL and \ICPDL
was proved in \cite{CGN-atva14} using split-width instead of special tree-width.

\begin{proof}
  The first ingredient is a tree automaton \Akstwcbm which accepts all binary trees $\tau$
  such that $\tau$ is a valid \kSTT which defines a \CBM. This was already established by 
  Proposition~\ref{prop:Akstwcbm} with a rather involved direct construction. Here 
  instead, we give a much simpler proof by taking advantage of \EQLCPDL and the general 
  construction of Corollary~\ref{cor:Akstwpdl}.
  
  Let $\AP=\Sigma\uplus\Procs$ and $\Gamma=\{\procrel\}\cup\DS$.  Consider a
  $(2^\AP,\Gamma)$-labeled graph $G=(V,(E_\gamma)_{\gamma\in\Gamma},\lambda)$.  By
  Lemma~\ref{lem:cbm-pdl-nojoin}, $G$ is a \CBM iff it satisfies 
  $\mathsf{ALMOSTCBM}\wedge\mathsf{NOJOIN}$.
  Notice that the \EQLCPDL sentence is of size 
  $\mathcal{O}(|\Procs|^{2}+|\DS|^{2}+|\Sigma|^{2})$.
  We construct a deterministic bottom-up tree automaton $\Akstw_\mathsf{NOJOIN}$ with
  $2^{\mathcal{O}(k^{2})}$ states which accepts all binary trees that are valid \kSTTs and
  such that the associated graph satisfies $\mathsf{NOJOIN}$. This is the only tree 
  automaton that we need to construct directly.

  A state of $\Akstw_\mathsf{NOJOIN}$ is a partial map $\beta\colon[k]\to\{\false,\true\}$.  
  When reading a \kSTT $\tau$ with $\sem{\tau}=(G,\chi)$ the deterministic bottom-up
  automaton will reach the state $\beta$ satisfying the following two conditions:
  \begin{enumerate}[nosep,label=($\mathsf{C}_{\arabic*}$),ref=$\mathsf{C}_{\arabic*}$]
    \item\label{item:C1} $\dom(\beta)=\dom(\chi)\subseteq[k]$ is the set of active 
    colors, 

    \item\label{item:C2} $G\models\mathsf{NOJOIN}$ and for each active color $j$ we have 
    $\beta(j)=\true$ iff $\chi(j)$ is the target of a $\procrel$ edge in $G$.
  \end{enumerate}
  The transitions are defined below.
  
  \begin{tabular}{|c|p{110mm}|}
    \hline
    $\bot \xrightarrow{i}\beta$ 
    &
    where $\dom(\beta)=\{i\}$ and $\beta(i)=\false$.
    \\ \hline
    $\beta\xrightarrow{\add{i}{a}}\beta$ 
    &
    if $a\in\AP$ (and $i\in\dom(\beta)$).
    \\ \hline
    $\beta\xrightarrow{\add{i,j}{d}}\beta$ 
    &
    if $d\in\DS$ (and $i,j\in\dom(\beta)$, $i\neq j$).
    \\ \hline
    $\beta'\xrightarrow{\add{i,j}{\procrel}}\beta$ 
    &
    if $\beta'(j)\neq\true$ (and $i,j\in\dom(\beta')$, $i\neq j$),
%     \par
    then, $\beta=\beta'[j\mapsto\true]$.
    \\ \hline
    $\beta'\xrightarrow{\forget{i}}\beta$ 
    &
    if ($i\in\dom(\beta')$ and) $\beta$ is the restriction of $\beta'$ to 
    $\dom(\beta')\setminus\{i\}$.
    \\ \hline
    $\beta',\beta''\xrightarrow{\oplus}\beta$
    &
    if $\dom(\beta')\cap\dom(\beta'')=\emptyset$ (active colors should be disjoint) and 
    $\beta=\beta'\uplus\beta''$.
    \\ \hline
  \end{tabular}
  
  We can easily check that $\Akstw_\mathsf{NOJOIN}$ has a run on a binary tree $\tau$
  iff $\tau$ is a valid \kSTT and $G_\tau\models\mathsf{NOJOIN}$.  
  The number of states of $\Akstw_\mathsf{NOJOIN}$ is at most $3^{k+1}$.
  
  Note that $\Akstw_\mathsf{NOJOIN}$ does not accept \kSTTs adding twice the same 
  $\procrel$ edge, such as for instance $\add{0,1}{\procrel}\add{0,1}{\procrel}(0\oplus1)$, 
  which is actually a valid \kSTT. This can be fixed using as states maps 
  $\beta\colon[k]\to\{\false,\true\}\cup[k]$ such that as before
  $\dom(\beta)=\dom(\chi)\subseteq[k]$ is the set of active colors, and for 
  $j\in\dom(\beta)$ we have $\beta(j)=\false$ if $\chi(j)$ is not the target of a 
  $\procrel$ edge in $G$,  $\beta(j)=\true$ if $e\procrel\chi(j)$ in $G$ but $e$ is no 
  more in the image of $\chi$, and $\beta(j)=i$ if $e\procrel\chi(j)$ in $G$ and 
  $e=\chi(i)$.

  Now, by Corollary~\ref{cor:Akstwpdl}, we obtain a tree automaton
  $\Akstw_\mathsf{ALMOSTCBM}$ of size $2^{\poly(k,|\Arch|,|\Sigma|)}$ which is equivalent
  to the \EQLCPDL sentence $\mathsf{ALMOSTCBM}$.
  Finally, the automaton \Akstwcbm is obtained as the intersection of
  $\Akstw_\mathsf{NOJOIN}$ and $\Akstw_\mathsf{ALMOSTCBM}$.
  
  From Corollary~\ref{cor:Akstwpdl}, we construct the tree automata \AkstwnotPhi of size
  $2^{\poly(k,|\Phi|)}$ when $\Phi\in\LCPDL$.
  Now, using Theorem~\ref{thm:CPDStoPDL} and Corollary~\ref{cor:Akstwpdl}, we construct
  the tree automaton \Akstwsys of size $2^{\poly(k,|\Sys|)}$ (we could use also
  Proposition~\ref{prop:Akstwsys} to construct \Akstwsys with a size only polynomial in
  $|\Sys|$).  We obtain
  $$
  \Lang(\Sys) \cap \kstwCBM \subseteq \Lang(\Phi) \text{~iff~} 
  \Lang(\Akstwcbm\cap\Akstwsys\cap\AkstwnotPhi) = \emptyset
  $$
  and we conclude since emptiness for non-deterministic tree automata can be checked in
  linear time~\cite{TATA}.

  Notice that, in order to model-check a system against a specification $\Psi$ in \EQLCPDL
  (instead of \LCPDL) with the above method, we need an automaton $\Akstw_{\neg\Psi}$.
  Since $\neg\Psi$ is not in \EQLCPDL (negation in front of the existentially quantified
  propositions), we cannot use Corollary~\ref{cor:Akstwpdl} directly.  Instead, we
  construct the automaton $\Akstw_\Psi$ using Corollary~\ref{cor:Akstwpdl} and check if
  $\Lang(\Akstwcbm\cap\Akstwsys)\subseteq\Lang(\Akstw_\Psi)$.  The inclusion problem for
  non-deterministic tree automata is \textsc{ExpTime}-complete~\cite{TATA}. The 
  2-\textsc{ExpTime} complexity for \EQLCPDL model-checking follows.
\end{proof}

% \myline
\newpage

\section{Concrete Underapproximation Classes and Special Tree-Width}\label{sec:concrete}

So far, we considered the following (decidable) version of the model-checking
problem for \CPDSs: Given a \txtCPDS $\Sys$, a sentence $\Phi$ in \MSO or 
\EQICPDL, and $k>0$, do we have
\[L(\Sys) \cap \kstwCBM \subseteq L(\Phi)\,?\]

We will now study some other, more ``concrete'' families $\C=(\C_k)_{k \ge 0}$ that are
\begin{itemize}
\item monotone ($\C_k \subseteq \C_{k+1}$ for all $k \ge 0$),
\item complete ($\bigcup_{k \ge 0} \C_k = \CBM$), and
\item decidable (the usual decision problems are decidable when the domain of \CBMs is restricted to $\C_k$).
\end{itemize}

In particular, the following model-checking problems for $\C$ should be decidable:
\begin{center}
\begin{tabular}{ll}
\toprule
$\C$-\MSO-\textsc{ModelChecking}$(\Arch,\Sigma)$:\\
\midrule
{Instance:} & \hspace{-10em}$\Sys \in \CPDS(\Arch,\Sigma)$\,; $\Phi \in \MSO(\Arch,\Sigma)$; $k \ge 0$\\[0.5ex]
{Question:} & \hspace{-10em}$L(\Sys) \cap \C_k \subseteq L(\Phi)$\,?\\
\bottomrule
\end{tabular}
\end{center}

\begin{center}
\begin{tabular}{ll}
\toprule
$\C$-\textsf{(EQ-)(ILC)}\PDL-\textsc{ModelChecking}$(\Arch,\Sigma)$:\\
\midrule
{Instance:} & \hspace{-10em}$\Sys \in \CPDS(\Arch,\Sigma)$\,; 
$\Phi\in \textsf{(EQ-)(ILC)}\PDL(\Arch,\Sigma)$; $k \ge 0$\\[0.5ex]
{Question:} & \hspace{-10em}$L(\Sys) \cap \C_k \subseteq L(\Phi)$\,?\\
\bottomrule
\end{tabular}
\end{center}

Next, we argue that, to show decidability of the above problem, we can make use
of the previous results on special tree-width.

% \myline

Consider a family $\C=(\C_k)_{k \ge 0}$ such that the following hold, for all $k \ge 0$:
\begin{itemize}
\item[(1)] there is $k' \ge 0$ such that $\C_k \subseteq \stwCBM{{k'}}$ (and $k'$ is ``easily'' computable),
\item[(2)] 
\begin{itemize}
  \item[(a)] there is $\Phi_k \in \MSO$ such that $L(\Phi_k) = \C_k$, or

  \item[(b)] there is $\Phi_k\in\textsf{(EQ-)(ILC)}\PDL$ such that $L(\Phi_k)\cap\CBM=\C_k$.
  
  By Theorem~\ref{thm:CPDStoPDL}, this is the case if 
  there is $\Sys_k \in \CPDS$ with $L(\Sys_k) = \C_k$.
\end{itemize}
\end{itemize}
Then, we have
$$
\{\tau \mid \tau\in k'\text{-}\STT \text{ and } G_\tau\in\C_k \}
=\Lang(\A_\cbm^{k'\text{-}\mathsf{stw}} \cap \A_{\Phi_k}^{k'\text{-}\mathsf{stw}}) \,.
$$
We deduce that the \MSO or \textsf{(EQ-)(ILC)}\PDL model-checking problems are decidable
due to the following equivalences:
\begin{align*}
  \Lang(\Sys) \cap \C_k \subseteq \Lang(\Phi) 
  &\textup{~~~iff~~~} 
  \Lang(\A_\cbm^{k'\text{-}\mathsf{stw}} \cap \A_{\Phi_k}^{k'\text{-}\mathsf{stw}}
  \cap \A_\Sys^{k'\text{-}\mathsf{stw}} ) \subseteq 
  \Lang(\A_{\Phi}^{k'\text{-}\mathsf{stw}}) 
  \\
  &\textup{~~~iff~~~} 
  \Lang(\A_\cbm^{k'\text{-}\mathsf{stw}} \cap \A_{\Phi_k}^{k'\text{-}\mathsf{stw}}
  \cap \A_\Sys^{k'\text{-}\mathsf{stw}} \cap 
  \A_{\neg\Phi}^{k'\text{-}\mathsf{stw}}) = \emptyset
\end{align*}

Depending on the size of $\Phi_k\in\textsf{(EQ-)(ILC)}\PDL$ we get an upper-bound of
\textsc{ExpTime}, 2\textsc{ExpTime} or 3\textsc{ExpTime} complexity for the
\textsf{(EQ-)(ILC)}\PDL-\textsc{ModelChecking}.

\subsection{Context-Bounded \MNWs}

In the following, we assume that the architecture $\Arch$ satisfies $|\Procs|=1$ and $\DS=\Stacks$. Actually, many underapproximation classes have been defined for this setting of \emph{multiply nested words} (\MNWs).

In the first class that we consider, we restrict the number of contexts.  In
each context, only one stack can be accessed.

\begin{definition}\label{def:context}
  Let $\mscn=(a_1 \ldots a_n,(\matchrel^d)_{d\in\DS}) \in \CBM$.  Note that
  $\Events = \{1,\ldots,n\}$.  A \emph{context} of $\mscn$ is a possibly empty
  interval $I = \{e,e+1,\ldots,f\}$, for some $e,f \in \Events$, such that, for
  all $d,d' \in \DS$, $(i,j) \in {\matchrel^d}$, and $(i',j') \in
  {\matchrel^{d'}}$, the following holds:
  \[
  \left.
  \begin{array}{rl}
    & I \cap \{i,j\} \neq \emptyset\\[1ex]
    \wedge & I \cap \{i',j'\} \neq \emptyset
  \end{array}
  \right\} ~~\Rightarrow~~ d = d'
  \]
\end{definition}

\begin{definition}[\cite{QadeerR05}]\label{def:cntbounded}
  For $k \ge 0$, we call $\mscn$ $k$-\emph{context-bounded} if there are
  contexts $I_1,\ldots,I_k$ of $\mscn$ such that
  $\Events = I_1 \cup \ldots \cup I_k$.
\end{definition}

\begin{theorem}[\cite{QadeerR05}]
  Non-emptiness (reachability) of multipushdown systems restricted to
  $k$-context-bounded is decidable in NP.
\end{theorem}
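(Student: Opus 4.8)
The plan is to give the direct guess-and-check argument of Qadeer and Rehof, rather than routing through special tree-width: the tree-automata route (using Example~\ref{ex:stw-mnw-bounded-context}, which bounds the special tree-width of $k$-context \MNWs by $2k-1$) yields decidability but only an \textsc{ExpTime} bound, whereas the \textsc{NP} upper bound needs to exploit the single-process, pure-stack structure. Write $m=|\Stacks|$ for the (fixed) number of stacks. Since $|\Procs|=1$, a global configuration is a pair $(\ell,\bar z)$ with $\ell\in\Locs$ and one content per stack, the initial configuration is $(\inLoc,(\varepsilon,\dots,\varepsilon))$, and an accepting configuration has $\ell\in\FinLocs$ and all stacks empty. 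A $k$-context-bounded accepting run is one whose event set $\Events$ is covered by $k$ contexts $I_1,\dots,I_k$; by Definition~\ref{def:context}, inside each $I_j$ every $\matchrel$-edge uses one and the same stack, so within $I_j$ the system performs only internal moves and pushes/pops on a single stack.

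First I would fix the \emph{certificate} to be guessed nondeterministically: a \emph{schedule} $\sigma\colon\{1,\dots,k\}\to\Stacks$ assigning to each context the unique stack it touches, together with the sequence of control locations $\ell_0,\ell_1,\dots,\ell_k\in\Locs$ at the context boundaries, where $\ell_{j-1}$ is the location on entering context $j$ and $\ell_j$ the location on leaving it. This certificate has size $O(k(\log m+\log|\Locs|))$, i.e.\ polynomial, and I would impose $\ell_0=\inLoc$ and $\ell_k\in\FinLocs$. The crucial structural observation is that, because context $I_j$ modifies only stack $\sigma(j)$, the content of every other stack is \emph{frozen} during $I_j$; hence the content of a fixed stack $s$ changes only during the contexts assigned to $s$ and is preserved verbatim across the intervening contexts.

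This observation lets me decouple the $m$ stacks. For each stack $s$, let $P_s$ be the ordinary pushdown automaton over $\Locs$ whose moves are the internal transitions of $\Sys$ together with the push/pop transitions on $s$, and let $j_1<\dots<j_t$ be the contexts with $\sigma(j)=s$. I would then check, by a single pushdown reachability query, whether $P_s$ admits a run that starts in $(\ell_{j_1-1},\varepsilon)$, reaches $(\ell_{j_1},\gamma_1)$, resumes in $(\ell_{j_2-1},\gamma_1)$, reaches $(\ell_{j_2},\gamma_2)$, and so on, finishing in $(\ell_{j_t},\varepsilon)$ (empty stack, for acceptance); stacks never scheduled stay empty and are accepted trivially. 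Since the boundary locations are already guessed, this stitched question is itself a reachability problem in a pushdown automaton obtained from $P_s$ by adding $O(k)$ auxiliary location-resetting transitions, and pushdown reachability/emptiness is decidable in polynomial (cubic) time by the standard saturation algorithm. Soundness and completeness follow from the freezing observation: any $k$-context accepting run induces such a schedule, boundary locations, and per-stack runs; conversely, per-stack runs agreeing on the guessed boundary locations can be interleaved context-by-context into a genuine accepting run, the shared control flow being consistent by construction since $\ell_j$ is simultaneously the exit of context $j$ and the entry of context $j+1$.

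Putting it together: guess the polynomial certificate, then verify it in polynomial time via $m$ pushdown-reachability checks; this places the problem in \textsc{NP}. The main obstacle I anticipate is the correctness of the decoupling, namely proving that the existence of the $m$ independent, boundary-consistent pushdown runs is equivalent to the existence of a single global $k$-context run. The subtle points are that internal events may be assigned to whichever context performs them (so the per-stack automata $P_s$ may share internal transitions without conflict), and that the stack contents $\gamma_i$ handed across same-stack contexts are never inspected by other contexts, which is exactly what legitimises treating each stack's contribution as one uninterrupted pushdown computation.
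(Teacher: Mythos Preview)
The paper does not actually prove this theorem: it is stated with attribution to \cite{QadeerR05} and left unproved, the surrounding section instead developing the special tree-width route (which, as you correctly note, only gives \textsc{ExpTime}). Your proposal is therefore not to be compared against a proof in the paper but against the original Qadeer--Rehof argument, and it is a faithful reconstruction of it: guess the context schedule and the control locations at the $k{-}1$ context switches, then use the freezing property of untouched stacks to decouple the verification into $m$ independent ordinary pushdown reachability queries, each solvable in polynomial time.

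Two small points worth tightening. First, make explicit that $k$ is given in unary (or treated as a fixed parameter), since otherwise a certificate of length $O(k)$ need not be polynomial in the input. Second, when you stitch the contexts of a fixed stack $s$ via ``location-resetting'' $\varepsilon$-transitions $\ell_{j_i}\to\ell_{j_{i+1}-1}$, you must also ensure that these auxiliary transitions can only be taken in the intended order (e.g., by tagging states with the current context index $1,\dots,k$); otherwise the single pushdown could cheat by reusing a reset transition. With those caveats, the soundness/completeness argument via the shared boundary locations is exactly right.
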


The set of $k$-context-bounded \MNWs (over the fixed architecture) is denoted by $\Context{k}$.
Moreover, we let $\AllContext = (\Context{k})_{k \ge 0}$.

\begin{example}
Consider the MNW below, over two stacks and a singleton set $\Sigma$ (so that we omit its letters).
\begin{center}
\fbox{
\includegraphics[scale=0.5]{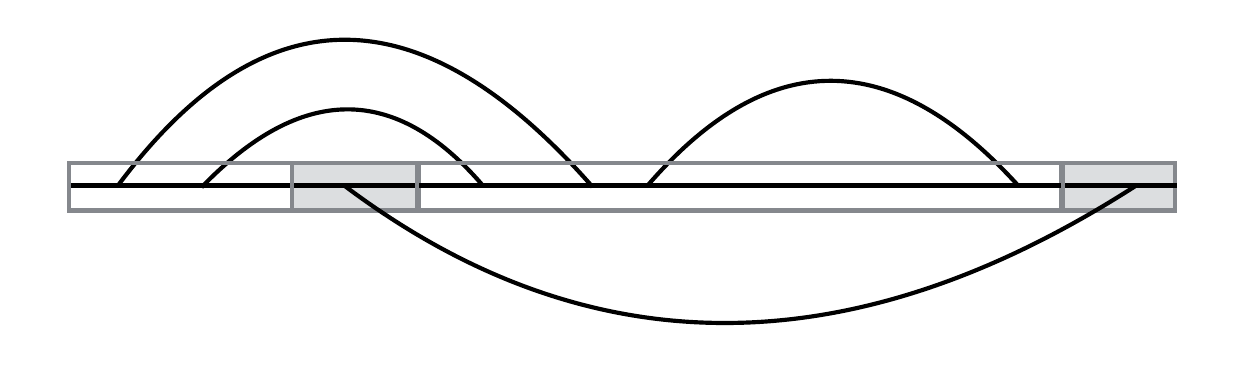}
}
\end{center}
The curved edges above the horizontal line stand for one of the stacks, the curved edge below it represents the other stack. The \MNW is $4$-context-bounded, but not $3$-context-bounded.
\end{example}

\begin{lemma}
  For $k\ge 2$, we have $\Context{k}\subseteq\stwCBM{(2k-1)}$.
\end{lemma}

\begin{proof}
  See Section~\ref{sec:game-stw}.
\end{proof}

\begin{lemma}\label{lem:context-pdl}
  For all $k>0$, there is $\Phi^k_{\mathsf{context}}\in\CPDL$ of size
  $\mathcal{O}(k|\DS|^2)$ such that $L(\Phi^k_{\mathsf{context}})=\Context{k}$.
\end{lemma}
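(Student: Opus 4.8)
The plan is to reduce $k$-context-boundedness to a simple counting condition along the linear process order and then to express that condition directly in \CPDL.

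\textbf{Step 1 (combinatorial characterisation).} Since $|\Procs|=1$, the events $\Events=\{1,\dots,n\}$ are linearly ordered and $\procrel$ is the successor relation of this order. Call an event \emph{active} if it is a source or target of a matching edge, and say it \emph{touches} $d$ if that edge carries label $d\in\DS$; in an \MNW each active event touches exactly one data structure, and the remaining events are \emph{internal}. Reading the active events in process order yields a word over $\DS$; group it into maximal blocks of equal letters and let $m$ be the number of blocks, so $m-1$ is the number of \emph{switches} (positions where consecutive active events touch different structures). I would prove that the least number of contexts needed to cover $\mscn$ equals $m$. The upper bound is easy: place each block, together with the internal events adjacent to it, into one interval. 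For the lower bound, note that a context is an \emph{interval} and may touch only one structure; hence two active events lying in distinct blocks can never share a context, since the interval between them necessarily contains the differently-labelled separating block. Consequently $\mscn\in\Context{k}$ iff $\mscn$ has at most $k-1$ switches.

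\textbf{Step 2 (\CPDL gadgets).} I would introduce the state formulas $\mathsf{touch}_d := \existspath{\matchrel^d}\true \vee \existspath{(\matchrel^d)^{-1}}\true$, then $\mathsf{active}:=\bigvee_{d\in\DS}\mathsf{touch}_d$ and $\mathsf{internal}:=\neg\mathsf{active}$, together with the path formula $\mathsf{nextactive}:={\procrel}\cdot(\test{\mathsf{internal}}\cdot{\procrel})^{*}$, which walks forward while skipping only internal events. From these I define the ``move to the next active event'' program $\mathsf{step}:=\mathsf{nextactive}\cdot\test{\mathsf{active}}$ and the ``move across one switch'' program $\mathsf{step}_{\mathsf{switch}}:=\sum_{d\in\DS}\test{\mathsf{touch}_d}\cdot\mathsf{nextactive}\cdot\test{\mathsf{active}\wedge\neg\mathsf{touch}_d}$, relating an active event touching $d$ to the next active event provided the latter touches some $d'\neq d$. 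All of these remain inside \CPDL: only converse (inside $\mathsf{touch}_d$) and Kleene star occur, but neither intersection nor loop.

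\textbf{Step 3 (assembly and size).} I would set $\Phi^k_{\mathsf{context}} := \neg\,\Existsev\,\existspath{(\mathsf{step}^{*}\cdot\mathsf{step}_{\mathsf{switch}})^{k}}\true$. A path following $(\mathsf{step}^{*}\cdot\mathsf{step}_{\mathsf{switch}})^{k}$ moves strictly forward through the active events and performs exactly $k$ genuine switch-transitions at $k$ distinct positions; such a path exists from some event iff the active-event word contains at least $k$ switches. Hence, interpreted over \MNWs, $\mscn\models\Phi^k_{\mathsf{context}}$ iff $\mscn$ has at most $k-1$ switches, which by Step 1 is exactly $\mscn\in\Context{k}$. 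For the size: $\mathsf{nextactive}$ and $\mathsf{active}$ have size $\mathcal{O}(|\DS|)$, so each summand of $\mathsf{step}_{\mathsf{switch}}$ is $\mathcal{O}(|\DS|)$ and $\mathsf{step}_{\mathsf{switch}}$ itself is $\mathcal{O}(|\DS|^2)$; the $k$-fold concatenation then yields $\mathcal{O}(k|\DS|^2)$.

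The main obstacle is Step 1: pinning down the equality ``number of contexts $=$ number of label blocks'', and in particular the lower-bound argument exploiting that the interval nature of a context forces non-adjacent same-labelled blocks into separate contexts. Once this characterisation is established, the \CPDL encoding is routine; the only care needed is to navigate past internal events correctly via $\mathsf{nextactive}$ and to keep the whole formula within the converse-only fragment (no loop, no intersection).
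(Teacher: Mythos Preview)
Your proof is correct and rests on the same underlying observation as the paper's: on a single process, being $k$-context-bounded is equivalent to having at most $k-1$ data-structure switches along $\procrel$. The paper encodes this \emph{positively}, asserting the existence of a path from the minimal to the maximal event that traverses at most $k$ single-stack segments $\mathsf{onlyRW}_d=(\test{\neg\bigvee_{d'\neq d}\mathsf{RW}_{d'}}\cdot{\procrel})^*\cdot\test{\neg\bigvee_{d'\neq d}\mathsf{RW}_{d'}}$, whereas you encode it \emph{negatively}, forbidding a path that performs $k$ switch-steps. Both formulas live in \CPDL, have the same $\mathcal{O}(k|\DS|^2)$ size, and your gadgets $\mathsf{nextactive}$ and $\mathsf{step}_{\mathsf{switch}}$ are exactly the complements of the paper's $\mathsf{onlyRW}_d$ segments. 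The only difference is cosmetic: the paper's formulation avoids spelling out Step~1 as a separate lemma (its correctness is left implicit), while your version makes the combinatorial characterisation explicit, which is arguably cleaner.
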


\begin{proof}
  We introduce some macros.  For $d\in\DS$, let
  $\mathsf{RW}_d=\existspath{{\matchrel}^d+({\matchrel}^d)^{-1}}$ be the state
  formula characterizing events accessing the data structure $d$.  We define
  $$
  \Phi^k_{\mathsf{context}} = \neg\Existsev \Big\langle \Big( \sum_{d\neq d'}
  \test{\mathsf{RW}_d} \cdot {\procrel}^{+} \cdot \test{\mathsf{RW}_{d'}} \Big)^{k} \Big\rangle 
  $$
  Notice that the path formula 
  $\test{\mathsf{RW}_d}\cdot{\procrel}^+\cdot\test{\mathsf{RW}_{d'}}$ with 
  $d\neq d'$ ensures a change of context.
\end{proof}

\begin{corollary}\label{cor:context-MC}
  For any architecture $\Arch$ such that $|\Procs|=1$ and $\DS=\Stacks$, the 
  $\AllContext${\sc{-ModelChecking}}$(\Arch)$ problem is decidable for \MSO or 
  \EQICPDL specifications. The problem is in \textsc{ExpTime} for \LCPDL specifications,
  in 2\textsc{ExpTime} for \ICPDL or \EQLCPDL specifications, and 
  in 3\textsc{ExpTime} for \EQICPDL.
\end{corollary}

\begin{proof}
  Let $k'=\max(3,2k-1)$.  The $\AllContext${\sc{-ModelChecking}}$(\Arch)$
  problem for a \CPDS \Sys and a specification $\Phi$ in \textsf{(ILC)}\PDL reduces to the
  emptiness problem for the tree automaton
  $\A_\cbm^{k'\text{-}\mathsf{stw}} 
  \cap \A_{\Phi^k_{\mathsf{context}}}^{k'\text{-}\mathsf{stw}}
  \cap \A_\Sys^{k'\text{-}\mathsf{stw}} 
  \cap \A_{\neg\Phi}^{k'\text{-}\mathsf{stw}}$, 
  and for $\Phi$ in \textsf{EQ-(ILC)}\PDL to the inclusion problem
  $\Lang( \A_\cbm^{k'\text{-}\mathsf{stw}} 
  \cap \A_{\Phi^k_{\mathsf{context}}}^{k'\text{-}\mathsf{stw}}
  \cap \A_\Sys^{k'\text{-}\mathsf{stw}} ) \subseteq
  \Lang( \A_{\Phi}^{k'\text{-}\mathsf{stw}} )$.
\end{proof}

We can also directly construct a \txtCPDS for the language $\Context{k}$:

\begin{lemma}
  For all $k>0$, there is $\Sys_k \in \CPDS$ such that $L(\Sys_k)=\Context{k}$.
\end{lemma}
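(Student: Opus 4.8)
The plan is to realize $\Context{k}$ by a deterministic multi-pushdown automaton, i.e.\ a \txtCPDS over the architecture with a single process $p$ (so that $\writer(d)=\reader(d)=p$ for every $d\in\Stacks$) which scans the events of the word along $\procrel$ and keeps, inside its location, just enough information to track the current context. Concretely, I would take $\Val=\{\star\}$ (a single dummy value, since stack contents are irrelevant here), locations $\Locs=\{(c,s)\mid 1\le c\le k,\ s\in\Stacks\cup\{\bot\}\}$, initial location $\inLoc=(1,\bot)$, and $\FinLocs=\Locs$ (every completed run is accepting). The first component $c$ counts how many contexts have been opened so far, and the second component $s$ records the unique stack that the current context has accessed, with $s=\bot$ meaning ``no stack accessed yet in this context''.

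The transitions implement a greedy context-switching policy. For every label $a\in\Sigma$ and every location there is an internal move $(c,s)\xrightarrow{a}_p(c,s)$, since an internal event accesses no data structure and stays in the current context. For a stack access on $d\in\Stacks$, that is, a push $(c,s)\xrightarrow{a,d!\star}_p(c',s')$ or a pop $(c,s)\xrightarrow{a,d?\star}_p(c',s')$, I distinguish two cases: if $s\in\{\bot,d\}$ the access is compatible with the current context, so $(c',s')=(c,d)$; if $s\notin\{\bot,d\}$ the access forces a new context, the move is enabled only when $c<k$, and then $(c',s')=(c+1,d)$. Thus $\Sys_k$ is deterministic: at a stack access to a different stack it is \emph{forced} to open a new context, and once $c=k$ no such transition is available and the run gets stuck. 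Since $L(\Sys_k)\subseteq\CBM(\Arch,\Sigma)$ holds by definition of the graph semantics, the LIFO and matching axioms come for free, so the automaton need not re-check them.

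For correctness I would first reduce the ``covering'' formulation of Definition~\ref{def:cntbounded} to a greedy one. Let $\mscn$ be an \MNW and let $B_1,\ldots,B_b$ be the maximal runs of the sequence of stack accesses of $\mscn$ that use a single stack (so consecutive blocks use different stacks); then $b$ is exactly the number of contexts used by the greedy run of $\Sys_k$, and hence $\Sys_k$ accepts $\mscn$ iff $b\le k$. It remains to show $\mscn\in\Context{k}$ iff $b\le k$. The direction $b\le k\Rightarrow\mscn\in\Context{k}$ is immediate, as the $b$ blocks together with their attached internal events give a partition of $\Events$ into $b\le k$ consecutive intervals, each a context, which is in particular a covering by $k$ contexts. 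For the converse, given a covering $I_1,\ldots,I_k$, pick a stack-accessing representative $e_i\in B_i$ for each block; any two distinct representatives either access different stacks, in which case they cannot lie in the same interval $I_j$, or access the same stack while some intermediate block accesses a different stack, and then the (contiguous) interval $I_j$ containing both would also contain that intermediate access and violate the single-stack property. Either way $e_1,\ldots,e_b$ lie in pairwise distinct intervals, so $b\le k$.

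The main obstacle is exactly this combinatorial equivalence between the set-cover style definition of $k$-context-boundedness and the greedy partition into consecutive single-stack blocks; once it is in place, the determinism of $\Sys_k$ makes its acceptance coincide with $b\le k$, and $L(\Sys_k)=\Context{k}$ follows. A minor point to get right is the empty behaviour ($\Events=\emptyset$), which is $k$-context-bounded and is accepted because the run's final location defaults to $\inLoc\in\FinLocs$.
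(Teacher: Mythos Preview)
Your construction is correct and coincides with the paper's approach: both keep a bounded counter for the current context together with the stack currently in use, with a distinguished initial location for the prefix before any stack access (your $(1,\bot)$ plays the role of the paper's $\inLoc$, and your unreachable states $(c,\bot)$ for $c>1$ are the only surplus). The paper's proof sketch stops at the construction, whereas you additionally spell out the equivalence between the covering definition of $\Context{k}$ and the greedy block count, which is a welcome level of detail.
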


\begin{proof}
  The idea is simple: The set of locations being
  $\{\inLoc\}\cup(\DS\times\{1,\ldots,k\})$ and $\Val$ a singleton set, one
  keeps track of the current data structure and context number.  The \CPDS stays
  in state $\inLoc$ while reading a prefix of internal events.
\end{proof}

\begin{lemma}
  For all $k \ge 0$, there is $\phi_k \in \MSO$ such that $L(\phi_k) = \Context{k}$.
\end{lemma}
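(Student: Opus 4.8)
The plan is to translate Definitions~\ref{def:context} and~\ref{def:cntbounded} directly into a single existential \MSO sentence. Recall that in this setting $|\Procs|=1$ and $\DS=\Stacks$, so a \txtCBM $\mscn$ is simply a word with matchings: the relation $\procrel$ is the successor of the one total order on $\Events$, and that total order $\le$ is itself \MSO-definable (the macro $x \le y$ from the \MSO chapter). A $k$-context-bounded \txtMNW is, by definition, one whose events can be covered by $k$ intervals $I_1,\dots,I_k$ that are each a \emph{context}, and a context is an interval all of whose incident matching edges use a single stack. Since the definition already allows the covering intervals to overlap and to be empty, I do not need any refinement/normalisation lemma: I can quantify $k$ set variables and assert, in \MSO, that they are intervals, that they cover $\Events$, and that each is a context.

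Concretely, first introduce the auxiliary predicate $\mathsf{touch}_{d}(x) := \exists y\,(x \matchrel^{d} y \vee y \matchrel^{d} x)$, stating that $x$ is an endpoint of a $d$-matching edge. Then set
\[
\phi_k \;:=\; \exists X_1 \cdots \exists X_k \Bigl[\; \bigl(\forall x\, \textstyle\bigvee_{m=1}^k x \in X_m\bigr) \;\wedge\; \bigwedge_{m=1}^k \mathit{Int}(X_m) \;\wedge\; \bigwedge_{m=1}^k \mathit{Ctx}(X_m) \;\Bigr]
\]
where $\mathit{Int}(X) := \forall x\,\forall y\,\forall z\,\bigl((x \in X \wedge z \in X \wedge x \le y \wedge y \le z) \Rightarrow y \in X\bigr)$ expresses convexity with respect to $\le$ (i.e.\ that $X$ is an interval), and
\[
\mathit{Ctx}(X) \;:=\; \bigvee_{d \in \DS} \forall x\,\Bigl(x \in X \Rightarrow \neg \textstyle\bigvee_{d' \in \DS,\, d' \ne d} \mathsf{touch}_{d'}(x)\Bigr)
\]
expresses that there is a single stack $d$ such that no event of $X$ touches any other stack. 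Both $\DS$ and $\Procs$ being finite, all the displayed conjunctions and disjunctions are finite, so $\phi_k$ is a genuine \MSO sentence.

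For correctness I would check that $\mathit{Ctx}(X)$ captures Definition~\ref{def:context} exactly. Using the disjointness condition of Definition~\ref{def:cbm}, each event is the endpoint of at most one matching edge, so ``$x$ is touched by $d'$'' determines $d'$ uniquely. If $\mscn \models \mathit{Ctx}(X)$ via stack $d$, then any two matching edges meeting $X$ have an endpoint in $X$ touched only by $d$, hence both use $d$; conversely, if $X$ is a context in the sense of Definition~\ref{def:context}, then either no edge meets $X$ (any $d$ witnesses the disjunction) or all meeting edges share a stack $d$, and then no event of $X$ can touch a different stack. Combined with $\mathit{Int}(X_m)$ and the covering clause, $\mscn \models \phi_k$ holds iff $\Events$ is covered by $k$ context-intervals, i.e.\ iff $\mscn \in \Context{k}$ (Definition~\ref{def:cntbounded}). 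Since $L(\phi_k) := \{\mscn \in \CBM \mid \mscn \models \phi_k\}$ already restricts to \CBMs, this yields $L(\phi_k) = \Context{k}$.

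The only genuinely load-bearing facts are the \MSO-definability of the linear order $\le$ (so that $\mathit{Int}$ is expressible at all) and the above equivalence between $\mathit{Ctx}$ and the interval condition of Definition~\ref{def:context}; everything else is a routine unfolding, so I expect no real obstacle. As a remark I would note the one-line alternative: Lemma~\ref{lem:context-pdl} already provides $\Phi^k_{\mathsf{context}} \in \CPDL \subseteq \ICPDL$ with $L(\Phi^k_{\mathsf{context}}) = \Context{k}$, and Exercise~\ref{exc:ICPDLtoMSO} turns it into an equivalent \MSO sentence $\overline{\Phi^k_{\mathsf{context}}}$; taking $\phi_k := \overline{\Phi^k_{\mathsf{context}}}$ gives $L(\phi_k) = \Context{k}$ immediately. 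I would nonetheless present the direct construction above, since it is self-contained and makes the size of $\phi_k$ (polynomial in $k$ and $|\DS|$) transparent.
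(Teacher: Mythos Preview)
Your proof is correct, but the route differs from the paper's. You translate Definition~\ref{def:cntbounded} literally: existentially quantify $k$ set variables, require each to be a convex interval and a context, and require them to cover $\Events$. This is clean and manifestly correct; the care you take in showing that $\mathit{Ctx}(X)$ matches Definition~\ref{def:context} via the disjointness axiom is exactly the point that needs checking.

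The paper instead gives a purely \emph{first-order} sentence. It observes that $\mscn \in \Context{k}$ iff one cannot find $k{+}1$ events $x_1 < \cdots < x_{k+1}$ such that consecutive $x_i,x_{i+1}$ touch distinct stacks, and writes
\[
\phi_k = \forall x\,\forall y\;\neg\exists x_1,\ldots,x_{k+1}\Bigl(x \le x_1 < \cdots < x_{k+1} \le y \;\wedge\; \textstyle\bigwedge_{i}\bigvee_{d\neq d'}\mathit{stack}_d(x_i)\wedge\mathit{stack}_{d'}(x_{i+1})\Bigr).
\]
The trade-off: your formula is the obvious ``there exist $k$ intervals'' and needs no combinatorial reformulation, but lives in $\EMSO$ with $k$ set quantifiers; the paper's formula needs the (easy but unstated) equivalence between ``$\le k$ contexts suffice'' and ``no alternating chain of length $k{+}1$'', but in return avoids second-order quantification entirely. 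Either is adequate for the lemma as stated, and your remark about deriving $\phi_k$ from Lemma~\ref{lem:context-pdl} via Exercise~\ref{exc:ICPDLtoMSO} is a legitimate third option.
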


\begin{proof}
We define a formula $\mathit{cont}_k(x,y)$ that says that, assuming $x \le y$, the events $x$ and $y$ are in the scope of at most $k$ contexts. It says that there are no $k+1$ events between $x$ and $y$ that are in distinct contexts:
\[
\mathit{cont}_k(x,y) ~=~ \neg \exists x_1,\ldots,x_{k+1}
\left(
\begin{array}{rl}
& x \le x_1 < \ldots < x_{k+1} \le y\\[1ex]
\wedge & \displaystyle\bigwedge_{1 \le i \le k} ~\bigvee_{d \neq d'} \mathit{stack}_d(x_i) \wedge \mathit{stack}_{d'}(x_{i+1})
\end{array}
\right)
\]
where $\mathit{stack}_d(x_i) = \exists z\,(x_i \matchrel^d z ~\vee~ z \matchrel^d x_i)$. With this, we set
\[\phi_k ~=~ \forall x\forall y\,\mathit{cont}_k(x,y)\,.\]
\end{proof}

\newpage

\subsection{Phase-Bounded \MNWs}

There is another well established notion for \MNWs, which relaxes the notion of a context:
\begin{definition}[\cite{Madhusudan2007}]\label{def:phase}
Let $\mscn=(a_1 \ldots a_n,(\matchrel^d)_{d\in\DS}) \in \CBM(\Arch,\Sigma)$. A \emph{phase} of $\mscn$ is a set $I = \{e,e+1,\ldots,f\}$, for some $e,f \in \Events$, such that, for all $d,d' \in \DS$, $(i,j) \in {\matchrel^d}$, and $(i',j') \in {\matchrel^{d'}}$, the following holds:
\[
\left.
\begin{array}{rl}
& I \cap \{j\} \neq \emptyset\\[1ex]
\wedge & I \cap \{j'\} \neq \emptyset
\end{array}
\right\} ~~\Rightarrow~~ d = d'
\]
\end{definition}

\begin{theorem}[\cite{Madhusudan2007}]
  Non-emptiness (reachability) of multipushdown systems restricted to
  $k$-phase-bounded is decidable in 2\textsc{ExpTime}.
\end{theorem}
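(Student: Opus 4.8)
The plan is to reuse the tree-automata route of the context-bounded subsection verbatim. Write $\Phase{k}$ for the class of $k$-phase-bounded \MNWs and $\AllPhase=(\Phase{k})_{k\ge0}$. Given a multipushdown system $\Sys$ (so $|\Procs|=1$ and $\DS=\Stacks$) with its target control state encoded in $\FinLocs$, reachability within $k$ phases is exactly $\Lang(\Sys)\cap\Phase{k}\neq\emptyset$. I would (i) bound the special tree-width of $\Phase{k}$ by some $k'=f(k)$, (ii) build a tree automaton recognising $\Phase{k}$ among valid \kST[k']s, and (iii) intersect it with $\A_\cbm^{k'\text{-}\mathsf{stw}}$ (Proposition~\ref{prop:Akstwcbm}) and $\A_\Sys^{k'\text{-}\mathsf{stw}}$ (Proposition~\ref{prop:Akstwsys}). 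Reachability then reduces to non-emptiness of $\A_\cbm^{k'\text{-}\mathsf{stw}}\cap\A_\Sys^{k'\text{-}\mathsf{stw}}\cap\A_{\Phase{k}}^{k'\text{-}\mathsf{stw}}$; since nothing is negated here, no complementation is needed and a single emptiness check suffices.

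\textbf{The key lemma.} The crux is $\Phase{k}\subseteq\stwCBM{f(k)}$ with $f(k)=2^{\Oh(k)}$; this exponential bound is precisely what will force the final procedure to be doubly (rather than singly) exponential. I would exhibit a winning strategy for Eve in the decomposition game of Section~\ref{sec:game-stw} that does \emph{not} proceed left-to-right along the word (which would expose unboundedly many open pushes), but recursively on the phase structure. At each level Eve picks a cut splitting the behaviour into two parts each spanning fewer phases, marks the few boundary events of each phase together with the endpoints of every matching edge that straddles the cut, removes those edges to disconnect the parts, and recurses. The delicate point is that, exploiting that each phase pops from a single stack and that each stack is LIFO, one can always choose cuts so that the number of straddling edges to be marked at any position stays $2^{\Oh(k)}$ rather than unbounded. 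Proving this interface bound is the main obstacle: it is the combinatorial heart of \cite{Madhusudan2007}, recast here as a bound on the cost of Eve's strategy, and it is what makes the special tree-width exponential in $k$.

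\textbf{Recognising the class and complexity.} It remains to recognise $\Phase{k}$, which I would do by mirroring Lemma~\ref{lem:context-pdl}. With $\mathsf{pop}_d=\existspath{(\matchrel^d)^{-1}}\true$ the state formula for ``is a pop of stack $d$'', the path formula $\mathsf{onlyPop}_d=(\test{\neg\bigvee_{d'\neq d}\mathsf{pop}_{d'}}\cdot{\procrel})^*\cdot\test{\neg\bigvee_{d'\neq d}\mathsf{pop}_{d'}}$ spans a single phase whose pops target only $d$; pushes to any stack remain allowed, which is exactly what distinguishes a phase from a context. Chaining at most $k$ such blocks from the first to the last event, exactly as in $\Phi^k_{\mathsf{context}}$, yields a \CPDL sentence $\Phi^k_{\mathsf{phase}}$ of size $\Oh(k|\DS|)$ with $L(\Phi^k_{\mathsf{phase}})=\Phase{k}$. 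Feeding it through Corollary~\ref{cor:Akstwpdl} gives $\A_{\Phase{k}}^{k'\text{-}\mathsf{stw}}$ of size $2^{\poly(k',k)}$ (a \CPDS $\Sys_k$ of size $\Oh(k|\DS|)$ recognising $\Phase{k}$ would do just as well). Taking $k'=f(k)=2^{\Oh(k)}$, the automaton $\A_\cbm^{k'\text{-}\mathsf{stw}}$ has size $2^{\Oh(k'^2|\Arch|)}=2^{2^{\Oh(k)}|\Arch|}$ and $\A_\Sys^{k'\text{-}\mathsf{stw}}$ has size $|\Sys|^{\Oh(k')}$; the dominant factor is doubly exponential in $k$. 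Since tree-automata non-emptiness is polynomial in the automaton size, the whole reachability test runs in \mbox{2\textsc{ExpTime}}, as claimed.
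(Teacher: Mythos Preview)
Your proposal is correct and follows the same route the paper takes. The paper does not actually prove this theorem---it is cited from \cite{Madhusudan2007}---but immediately afterwards it assembles precisely the ingredients you use: the special tree-width bound $k'=2^{2k}$ (cited from \cite{CGN-concur12}, not proved), the \CPDL sentence $\Phi^k_{\mathsf{phase}}$ (Lemma~\ref{lem:phase-pdl}, whose $\mathsf{onlyR}_d$ is your $\mathsf{onlyPop}_d$ with a cosmetically different name and a size bound of $\mathcal{O}(k|\DS|^2)$ rather than your $\mathcal{O}(k|\DS|)$), and the reduction to tree-automata emptiness via $\A_\cbm^{k'\text{-}\mathsf{stw}}\cap\A_\Sys^{k'\text{-}\mathsf{stw}}\cap\A_{\Phi^k_{\mathsf{phase}}}^{k'\text{-}\mathsf{stw}}$. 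Your sketch of Eve's strategy for the special tree-width bound is extra content beyond what the paper provides, since the paper simply cites that bound; the recursive phase-splitting idea you outline is indeed the right shape, though the paper does not attempt it.
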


The special tree-width of $k$-\emph{phase-bounded} \MNWs (i.e., those \MNWs that
can be split into at most $k$ phases) is at most $k'=2^{2k}$ \cite{CGN-concur12}.

\begin{lemma}\label{lem:phase-pdl}
  For all $k>0$, there is $\Phi^k_{\mathsf{phase}}\in\CPDL$ of size
  $\mathcal{O}(k|\DS|^2)$ such that $L(\Phi^k_{\mathsf{phase}})=\Phase{k}$.
\end{lemma}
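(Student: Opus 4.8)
The plan is to mirror the construction of $\Phi^k_{\mathsf{context}}$ from Lemma~\ref{lem:context-pdl}, adapting it to the weaker phase condition. The key observation is that a phase constrains only the \emph{read} (pop) events of an interval: an interval $I$ is a phase iff all pops occurring in $I$ belong to a single stack, whereas pushes inside $I$ may target arbitrary stacks. Hence, where the context formula spans maximal intervals \emph{accessing} a single data structure, the phase formula should span intervals in which only \emph{pops} of a single data structure occur.

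First I would introduce, for each $d\in\DS$, the state formula $\mathsf{R}_d=\existspath{(\matchrel^d)^{-1}}$ characterising the events that are targets of a $\matchrel^d$-edge, i.e., the pops of stack $d$ (here I use that $\matchrel^d$ points from a push to its matching pop, so $(\matchrel^d)^{-1}$ selects pops). The path formula
\[
\mathsf{onlyR}_d=\Big(\test{\neg\bigvee_{d'\neq d}\mathsf{R}_{d'}}\cdot{\procrel}\Big)^{*}\cdot\test{\neg\bigvee_{d'\neq d}\mathsf{R}_{d'}}
\]
then traverses a $\procrel$-consecutive interval none of whose events is a pop of a stack $d'\neq d$; equivalently, every pop inside the traversed interval is a pop of $d$, which is exactly the phase condition for the single stack $d$. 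Finally I set
\[
\Phi^k_{\mathsf{phase}}=\Existsev\Big\langle\test{\neg\existspath{\procrel^{-1}}}\cdot\Big(\sum_{d\in\DS}\mathsf{onlyR}_d\cdot{\procrel}\Big)^{<k}\cdot\sum_{d\in\DS}\mathsf{onlyR}_d\cdot\test{\neg\existspath{\procrel}}\Big\rangle,
\]
whose outer tests force the witnessing path to start at the minimal event and end at the maximal event of the (single) process.

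For correctness I would argue both inclusions. If $\mscn\models\Phi^k_{\mathsf{phase}}$, the witnessing path moves strictly forward along $\procrel$ from the first to the last event, hence visits every event and decomposes $\Events$ into at most $k$ consecutive intervals, one per occurrence of some $\mathsf{onlyR}_d$; by the observation above each such interval is a phase, so $\mscn\in\Phase{k}$. Conversely, if $\mscn\in\Phase{k}$, a covering by at most $k$ phases can be turned into a partition of $\Events$ into at most $k$ consecutive intervals each of which is still a phase (a subinterval of a phase is a phase), and choosing for each interval a witnessing stack $d_\ell$ yields a run of the path, whence $\mscn\models\Phi^k_{\mathsf{phase}}$. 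The size bound is immediate and parallels Lemma~\ref{lem:context-pdl}: each $\mathsf{onlyR}_d$ has size $\mathcal{O}(|\DS|)$, the sum $\sum_d\mathsf{onlyR}_d$ has size $\mathcal{O}(|\DS|^2)$, and the bounded iteration $(\cdot)^{<k}$ contributes a factor $k$, giving $\mathcal{O}(k|\DS|^2)$.

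Since the formula uses only converse, tests, union, composition and Kleene star (neither intersection nor loop), it indeed lies in $\CPDL$. The only place where the proof genuinely departs from Lemma~\ref{lem:context-pdl} — and the main point to get right — is the asymmetry of the phase condition: it must forbid pops of foreign stacks but allow pushes to any stack, which is reflected by using $\mathsf{R}_d$ (pops only) rather than $\mathsf{RW}_d$ (pops and pushes). A minor point to verify is the direction convention for $\matchrel^d$, so that $(\matchrel^d)^{-1}$ really selects pops; the empty \MNW can be treated separately (or by convention), exactly as in the context case.
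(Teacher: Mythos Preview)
Your proposal is correct and essentially identical to the paper's proof: both obtain $\Phi^k_{\mathsf{phase}}$ from $\Phi^k_{\mathsf{context}}$ by replacing $\mathsf{onlyRW}_d$ with the path formula $\mathsf{onlyR}_d$ that forbids pops of stacks $d'\neq d$ via the test $\neg\bigvee_{d'\neq d}\existspath{(\matchrel^{d'})^{-1}}$. Your added correctness discussion and size analysis match the intended argument.
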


\begin{proof}
  The formula is obtained from $\Phi^k_{\mathsf{context}}$ by replacing $\mathsf{RW}_d$ 
  with $\mathsf{R_d}=\existspath{({\matchrel}^d)^{-1}}$:
  $$
  \Phi^k_{\mathsf{phase}} = \neg\Existsev \Big\langle \Big( \sum_{d\neq d'}
  \test{\mathsf{R}_d} \cdot {\procrel}^{+} \cdot \test{\mathsf{R}_{d'}} \Big)^{k} \Big\rangle 
  $$
  Notice that the path formula 
  $\test{\mathsf{R}_d}\cdot{\procrel}^+\cdot\test{\mathsf{R}_{d'}}$ with 
  $d\neq d'$ ensures a change of phase.
\end{proof}

\begin{corollary}\label{cor:phase-MC}
  For any architecture $\Arch$ such that $|\Procs|=1$ and $\DS=\Stacks$, the 
  $\AllPhase${\sc{-ModelChecking}}$(\Arch)$ problem is decidable for \MSO or 
  \EQICPDL specifications. The problem is in 2\textsc{ExpTime} for \ICPDL.
\end{corollary}

% \myline
\newpage

\subsection{Scope-Bounded Nested Words}

Next, we define a restriction that captures more behaviors than pure contexts.
We continue to assume $|\Procs|=1$ and $\DS=\Stacks$.

\begin{definition}[\cite{LaTorreN11}]\label{def:scopebounded}
For $k \ge 0$, we call an MNW $\mscn$ $k$-\emph{scope-bounded} if, for all $(e,f) \in {\matchrel}$, there are contexts $I_1,\ldots,I_k$ of $\mscn$ such that
\[\{e,e+1,\ldots,f\} = I_1 \cup \ldots \cup I_k\,.\]
\end{definition}

\begin{theorem}[\cite{LaTorreN11}]
  Non-emptiness (reachability) of multipushdown systems restricted to
  $k$-scope-bounded is decidable in \textsc{PSpace}.
\end{theorem}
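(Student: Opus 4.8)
The plan is to combine the special-tree-width machinery of the previous sections with a dedicated, space-efficient automaton, since the generic tree-automata route alone overshoots the claimed complexity. There are two things to do: first establish \emph{decidability} cheaply by reusing the existing framework, and then sharpen the bound to \textsc{PSpace} by a direct argument that exploits the locality enforced by the scope restriction.

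For decidability I would first prove a bridging lemma: the class $\Scope{k}$ of $k$-scope-bounded \MNWs has special tree-width bounded by a linear function of $k$, i.e.\ $\Scope{k}\subseteq\stwCBM{O(k)}$. The proof is a decomposition-game strategy for Eve in the spirit of Examples~\ref{ex:stw-nw} and~\ref{ex:stw-mnw-bounded-context}. Since each matching edge spans at most $k$ contexts, a push in a context $C_i$ is necessarily popped before the end of $C_{i+k-1}$; hence Eve can sweep the contexts left to right while keeping marked only the $O(k)$ boundary points of a sliding window of the last $k$ contexts, detaching a context as soon as all its $\matchrel$-edges fall behind the window and applying the nested-word strategy inside the window. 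Once this lemma is in place, decidability is immediate from the constructions already developed: build $\Akstwcbm$ (Proposition~\ref{prop:Akstwcbm}) and $\Akstwsys$ (Proposition~\ref{prop:Akstwsys}) for $k'=O(k)$, intersect them, and test emptiness exactly as in Corollary~\ref{cor:stwnonemptiness}. This already yields an \textsc{ExpTime} procedure.

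To obtain the sharper \textsc{PSpace} bound I would abandon the tree route and argue directly on runs, using \emph{context summaries}. A summary of a single context is a binary relation on control locations $\Locs$ recording the net effect of reading that context together with the bounded stack activity it performs. The key consequence of $k$-scope-boundedness is that every push--pop dependency is confined to a window of $k$ consecutive contexts, so reachability can be decided by a nondeterministic procedure that sweeps the (unboundedly many) contexts from left to right while maintaining only the interfaces of the last $k$ of them: each interface is essentially a pair of locations describing a partially resolved stack segment, there are $O(k)$ of them live at once, and new pops are matched and verified against this window as they are guessed. All the state needed is therefore polynomial, so the procedure runs in nondeterministic polynomial space, i.e.\ in \textsc{PSpace} by Savitch's theorem.

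The main obstacle is this last step: reconciling the unbounded number of contexts and unbounded individual stack heights with a polynomial working space. The crux is to show that the only information that must persist across context boundaries is the bounded window of interface relations, and then to prove soundness and completeness of the windowed-summary composition, i.e.\ that a guessed sequence of interfaces is realizable by a genuine scope-bounded run of $\Sys$ iff all the windowed checks succeed. Getting the interface-composition algebra exactly right, so that each pop is matched to a push of the correct context within the window while respecting the LIFO discipline of each stack in $\Stacks$, is the delicate point. The special-tree-width argument above serves as a sanity check that the class is tame, whereas the summary argument is what delivers the optimal \textsc{PSpace} complexity.
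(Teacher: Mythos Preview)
The paper does not prove this theorem at all: it is stated with a citation to \cite{LaTorreN11} and left without proof, as a known result from the literature. What the paper \emph{does} do, in the lemmas immediately following the theorem, is exactly your first part: it shows $\Scope{k}\subseteq\stwCBM{(2k-1)}$ via a decomposition-game strategy, and it gives \LCPDL, \MSO, and \CPDS definitions of $\Scope{k}$, which together with the generic machinery yield decidability in \textsc{ExpTime} (Corollary~\ref{cor:context-MC} for scope). So your ``bridging lemma'' plan is precisely the route the surrounding text takes, and it is explicitly acknowledged there that this only reaches \textsc{ExpTime}, not \textsc{PSpace}.

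For the \textsc{PSpace} bound you are effectively reconstructing the argument of \cite{LaTorreN11}, and your sketch captures the right intuition (a left-to-right sweep over contexts keeping only a bounded window of interface data), but as written it has a real gap. Your claim that ``a summary of a single context is a binary relation on control locations recording the net effect of reading that context together with the bounded stack activity it performs'' is the problematic step: within one context the stack in use is a genuine, unbounded pushdown, so its effect is \emph{not} captured by a finite relation on $\Locs$ alone. What has to be finitely summarized is not the internal behavior of a context but the \emph{cross-context} residue: for each stack, the sequence of still-unmatched pushes that were performed in the last $k{-}1$ contexts, abstracted as a bounded word of interface states (or, equivalently, a bounded collection of partial-run summaries for the single-stack fragments). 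Getting this abstraction right, showing it is of polynomial size, and proving that the guess-and-check sweep is sound and complete is the actual content of the \textsc{PSpace} proof; your proposal names the obstacle correctly in its last paragraph but does not discharge it.
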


The set of $k$-scope-bounded \MNWs is denoted by $\Scope{k}$.
Moreover, we let $\AllScope = (\Scope{k})_{k \ge 0}$.

\begin{example}\label{ex:scope}
The figure below illustrates a \txtCBM $\mscn$ with $\mscn \in \Scope{3}$. Note that $\mscn \in \Context{5} \setminus \Context{4}$.
\begin{center}
\fbox{
\includegraphics[scale=0.5]{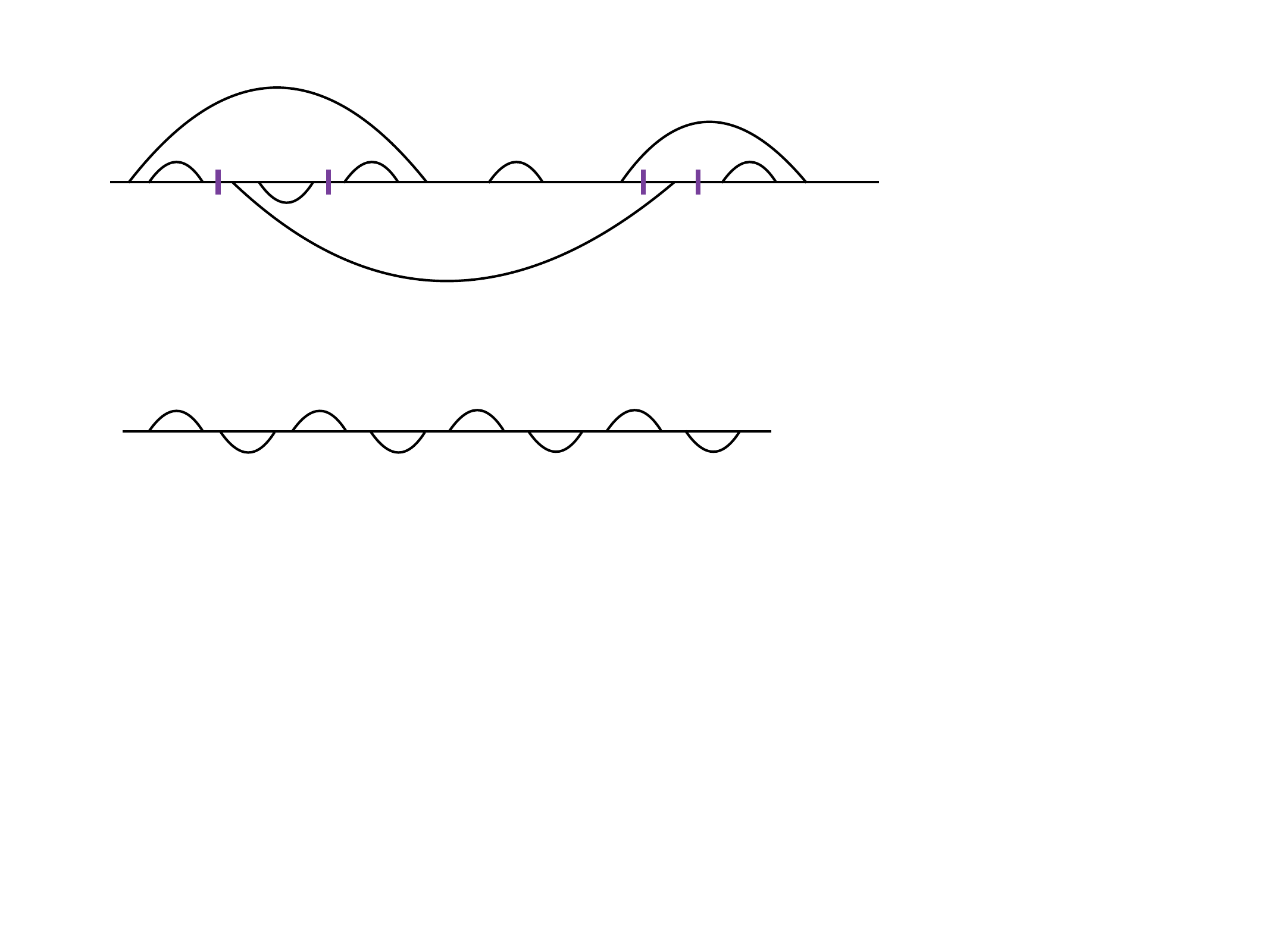}
}
\end{center}
Next, consider the set $L$ of \CBMs with an arbitrary number of alternating write-read edges, following the pattern below:
\begin{center}
\fbox{
\includegraphics[scale=0.5]{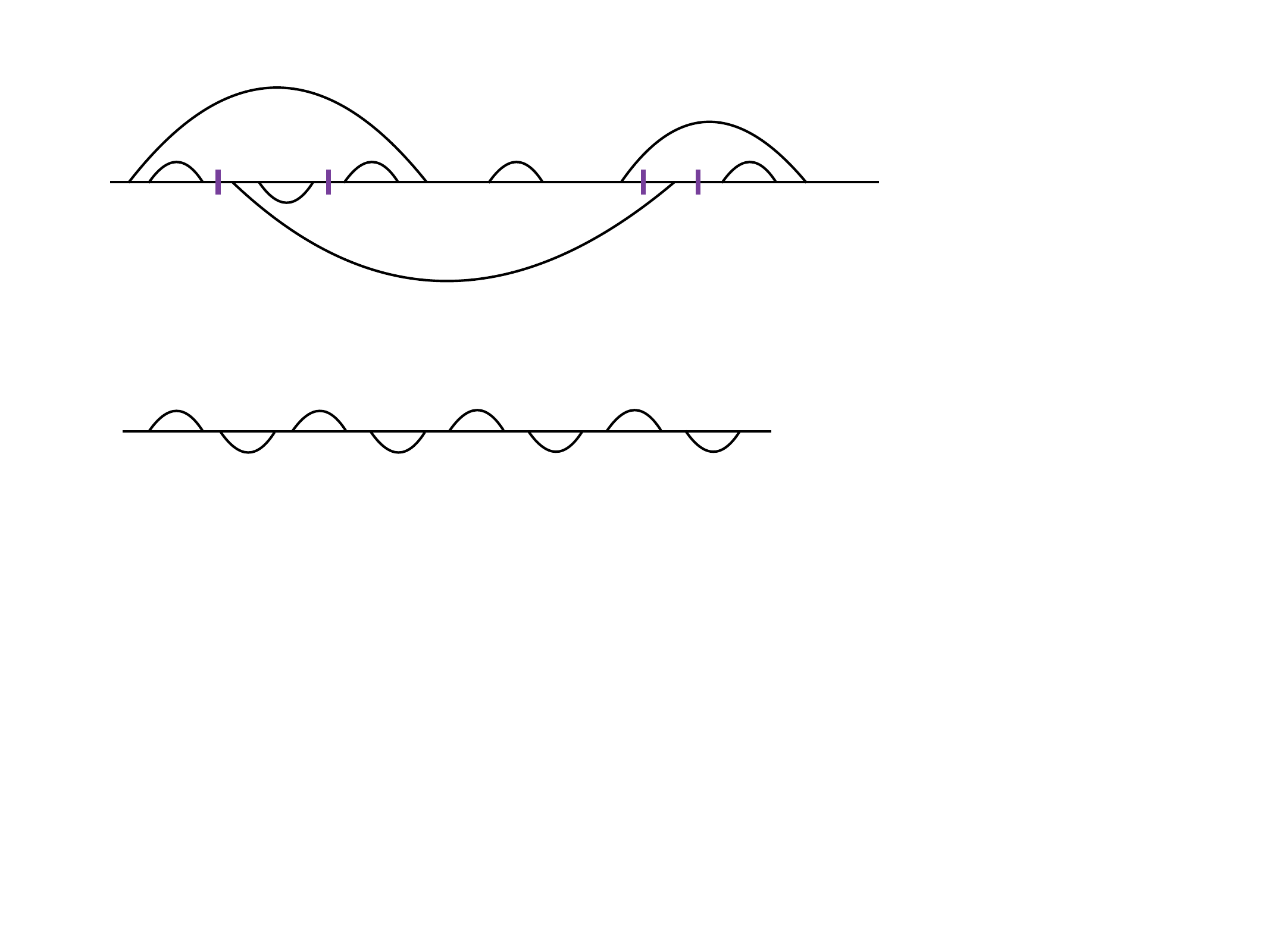}
}
\end{center}
Then, $L \subseteq \Scope{1}$ but $L \not\subseteq \Context{k}$ for all $k \ge 0$.
\end{example}

\begin{lemma}
  We have $\Scope{1} \subseteq \stwCBM{3}$ and $\Scope{k} \subseteq
  \stwCBM{(2k-1)}$ for $k \ge 2$.
\end{lemma}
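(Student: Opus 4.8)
The plan is to prove both inclusions by exhibiting winning strategies for Eve in the decomposition game of Section~\ref{sec:game-stw}, reusing the strategies of Examples~\ref{ex:stw-nw} and~\ref{ex:stw-mnw-bounded-context}. Recall that a graph has special tree-width at most $m-1$ precisely when Eve has an $m$-winning strategy from $(G,\emptyset)$, so it suffices to give a $4$-winning strategy for every $\mscn \in \Scope{1}$ and a $2k$-winning strategy for every $\mscn \in \Scope{k}$ with $k \ge 2$. Throughout, list the maximal contexts of $\mscn$ as $C_1,\ldots,C_m$ (consecutive ones using distinct stacks), and use the scope bound in the following form: for every $e \matchrel f$, the interval $[e,f]$ meets at most $k$ of the $C_i$; equivalently, if $e \in C_i$ and $f \in C_j$ then $j - i \le k-1$.

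First I would dispose of $\Scope{1}$. Here every matching edge lies inside a single maximal context, since $[e,f]$ must itself be a context and hence access only one stack. Consequently the intervals spanned by two distinct matching edges are either nested (same context, by the LIFO discipline of stacks) or disjoint (different contexts), so the whole relation $\matchrel = \bigcup_{d} \matchrel^d$ is globally \emph{non-crossing}. The strategy of Example~\ref{ex:stw-nw} then applies verbatim, since it uses only the non-crossing property of $\matchrel$ and never the presence of a single stack: Eve marks the two extreme events and peels matching edges from the first point, each peel disconnecting the graph. This is a $4$-winning strategy, giving $\Scope{1} \subseteq \stwCBM{3}$.

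For $\Scope{k}$ with $k \ge 2$, I would have Eve sweep a sliding window of $k$ consecutive contexts from left to right. The scope bound guarantees that at the boundary after $C_i$ every crossing matching edge has its write-endpoint in $C_{i-k+2},\ldots,C_i$ and its read-endpoint in $C_{i+1},\ldots,C_{i+k-1}$; hence the only contexts carrying still-open pushes are the last $k-1$ processed ones. Eve keeps the endpoints of the currently active factors marked and, on the leftmost context of the window, resolves the matching edges emanating from it one at a time with a constant number of reusable working marks, exactly as in the first-point peel of Example~\ref{ex:stw-nw}; once that context carries no open push she detaches it and advances the window. Combined with the per-stack splitting idea of Example~\ref{ex:stw-mnw-bounded-context}, the active part always decomposes into at most $k$ factors, so at most $2k$ endpoints are marked at any time, yielding a $2k$-winning strategy and $\Scope{k} \subseteq \stwCBM{(2k-1)}$.

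The main obstacle is the bookkeeping behind the claim that at most $k$ factors (hence $\le 2k$ marked endpoints) are ever active simultaneously. The crux is to combine the LIFO discipline on each stack with the scope bound to show that the write events still unmatched at any stage originate in at most $k-1$ earlier contexts, and that peeling a matched pair fragments the active region in a controlled way; the small-$k$ boundary counting (and re-verifying the non-crossing claim used in the $\Scope{1}$ case) also needs care.
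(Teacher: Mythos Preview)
Your treatment of $\Scope_1$ is correct and matches the paper: once you observe that every matching edge stays inside a single maximal context, the union $\matchrel$ is globally non-crossing, so the $4$-winning strategy of Example~\ref{ex:stw-nw} applies verbatim (that strategy only uses non-crossing, never the single-stack hypothesis).

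For $k \ge 2$, however, your sliding-window approach is genuinely different from the paper's, and as you yourself flag, the bookkeeping is the whole difficulty---but you have not done it. The paper does not maintain a window of $k$ contexts. Instead, Eve always looks at the \emph{first push} $e$ of the current component and its matching pop $f$. By the scope bound, $[e,f]$ meets at most $k$ contexts; she cuts the process edges at the context boundaries inside $[e,f]$ and the edge just after $f$ (at most $k$ cuts, hence $\le 2k$ marks). The contexts in $[e,f]$ using the same stack as $e$ form one connected component which is a split nested word carrying at most $2k-3$ marks, and three additional marks suffice to decompose it via Example~\ref{ex:stw-nw}. The remaining component (the other-stack contexts in $[e,f]$ together with the suffix) is handled recursively; the key observation is that the split-edges it inherits lie in the new prefix, so re-dividing into contexts still leaves at most $k$ split-edges.

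Your sliding-window description has a genuine gap. To ``detach'' the leftmost context $C_i$ you must remove every matching edge leaving $C_i$, but there can be arbitrarily many such edges, with far endpoints scattered across $C_{i+1},\ldots,C_{i+k-1}$. Removing them ``one at a time with a constant number of reusable working marks'' only works if each removal disconnects the graph so that marks are recycled via Adam's choice---and you have not argued this. Making it precise essentially forces you back to the paper's first-push peel (the first push in $C_i$ together with its pop isolates a split nested word). Your appeal to Example~\ref{ex:stw-mnw-bounded-context} does not help directly either: that strategy marks \emph{all} context endpoints at once, which is impossible here since the total number of contexts is unbounded. Finally, ``endpoints of active factors'' plus ``constant extra working marks'' would already exceed $2k$, so even the arithmetic of your invariant needs tightening.
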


\begin{proof}
  If $k=1$, then a CBM is the concatenation of (singly) nested words.  So,
  suppose $k \ge 2$.  Again, we show that Eve has a winning strategy in the
  split-game, using at most $2k$ colors.  To illustrate her strategy, we
  consider Figure~\ref{fig:scope-splits}, depicting the \txtCBM from
  Example~\ref{ex:scope}.  We omit internal events, which are easy to handle.
  \begin{figure}[ht]
    \begin{center}
      \fbox{\includegraphics[scale=0.4]{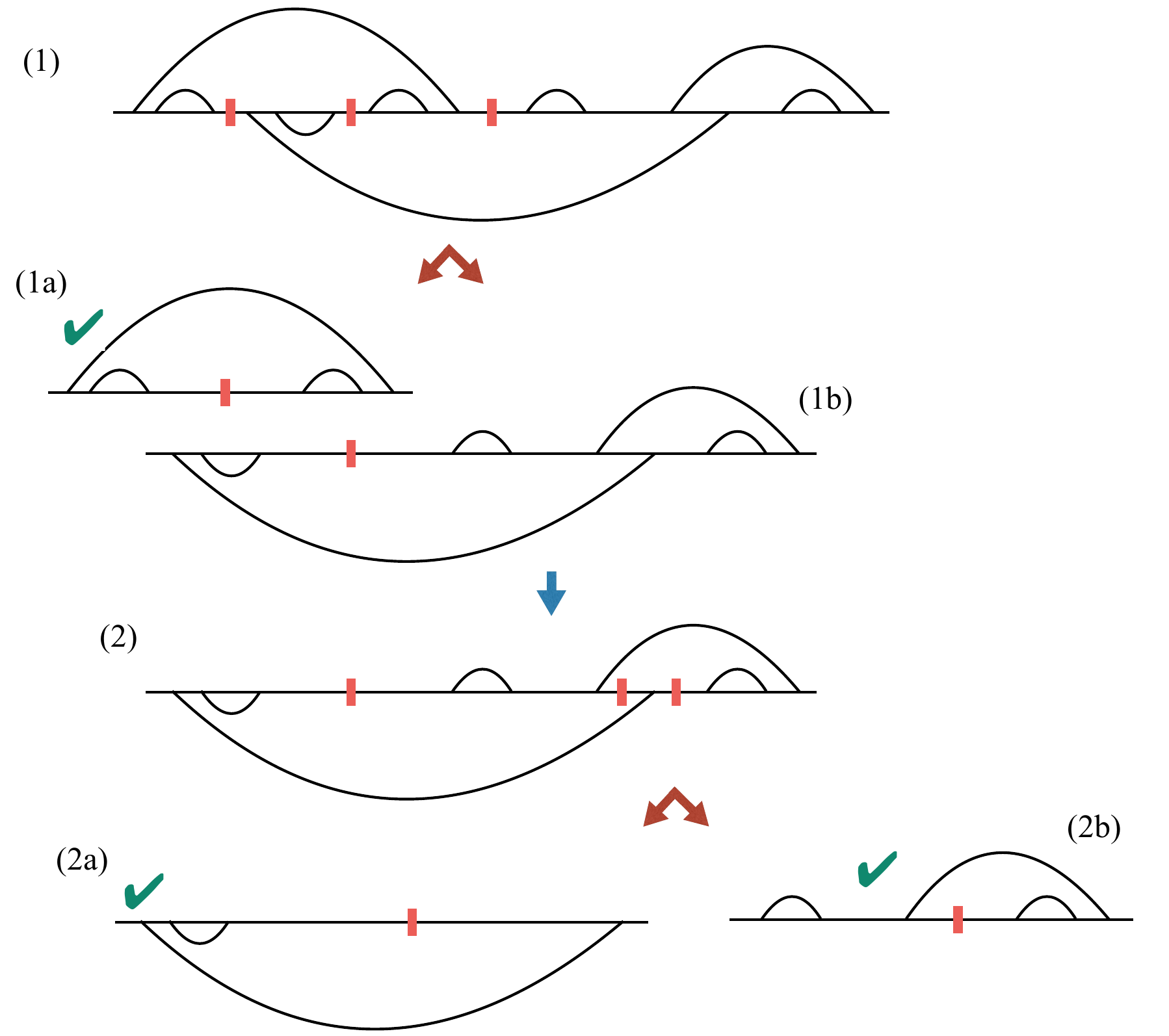}}
    \end{center}
    \caption{Split-game for scope-bounded \MNWs\label{fig:scope-splits}}
  \end{figure}
  \begin{itemize}
    \item Consider the \emph{leftmost} write (i.e., a push).  We split the
    process edge just behind the corresponding read (i.e., a pop).  Moreover, we
    divide the induced prefix into its contexts.  Since the \txtCBM is
    $k$-scope-bounded, this process requires at most $k$ split-edges (1), i.e., 
    at most $2k$ colors.

    \item One resulting component is a nested word (1a) with at most $2k-3$
    colors and where the last point is colored.  The nested word can be
    decomposed using three more colors, i.e., using a total of at most $2k$
    colors.

    \item We proceed with the other remaining component (1b), which possibly has
    already some split-edges.  Again, we look at the first call and its receive
    $f$, place a split-edge behind $f$, and divide the corresponding prefix into
    its contexts.  Though the prefix may already contain split-edges, we do not
    have more than $k$ split-edges after the division phase (2).  We proceed
    like in the 2nd item.
  \end{itemize}
  Altogether, Eve wins the split-game with at most $2k$ colors. 
\end{proof}

\begin{lemma}
  For all $k>0$, there is $\Phi^k_{\mathsf{scope}}\in\LCPDL$ of size
  $\mathcal{O}(k|\DS|^2)$ such that $L(\Phi^k_{\mathsf{scope}})=\Scope{k}$.
\end{lemma}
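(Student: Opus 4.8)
The plan is to adapt the context-bounded formula $\Phi^k_{\mathsf{context}}$ of Lemma~\ref{lem:context-pdl}, but instead of requiring the \emph{whole} \MNW to split into $k$ contexts, I would localize the test to the span $\{e,\ldots,f\}$ of every matching edge $e\matchrel f$. Reusing the macro $\mathsf{onlyRW}_d$ (which spans a context accessing only the data structure $d$), I first reuse the path formula that traverses at most $k$ consecutive contexts,
\[
\mathsf{ctx}^{\leq k} ~=~ \Big(\sum_{d\in\DS}\mathsf{onlyRW}_d\cdot{\procrel}\Big)^{<k}\cdot \sum_{d\in\DS}\mathsf{onlyRW}_d\,.
\]
Because each $\mathsf{onlyRW}_d$ segment moves forward along $\procrel$, a $\mathsf{ctx}^{\leq k}$-path from an event $e$ to an event $g$ visits precisely the consecutive interval $\{e,\ldots,g\}$ and witnesses its decomposition into at most $k$ contexts.

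The key idea is to pin the endpoint of this path to the read matched with $e$ by closing a \emph{loop} through the converse matching edge. I would set
\[
\Phi^k_{\mathsf{scope}} ~=~ \Forallev\Big(\mathsf{write} \Rightarrow \existsloop{\mathsf{ctx}^{\leq k}\cdot{\matchrel^{-1}}}\Big)\,.
\]
Both converse and loop are essential here, so the formula lives in \LCPDL and not in \CPDL. The loop at a write $e$ forces a $\mathsf{ctx}^{\leq k}$-path to some $g$ followed by a step $g \matchrel^{-1} e$, i.e.\ $e\matchrel g$. Since matching edges in a \txtCBM are pairwise disjoint, $e$ has a unique matched read, so $g$ is forced to be exactly the $f$ with $e\matchrel f$. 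Hence the loop holds at $e$ iff the interval $\{e,\ldots,f\}$ decomposes into at most $k$ contexts.

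For correctness I would argue both inclusions. If $\mscn\in\Scope{k}$, then for every matching edge $e\matchrel f$ the span $\{e,\ldots,f\}$ is a union of at most $k$ contexts; turning this cover into a partition into at most $k$ maximal consecutive contexts yields the required $\mathsf{ctx}^{\leq k}$-path from $e$ to $f$, and appending ${\matchrel^{-1}}$ closes the loop at $e$, so $\mscn\models\Phi^k_{\mathsf{scope}}$. Conversely, if $\mscn\models\Phi^k_{\mathsf{scope}}$, then every write $e$ carries such a loop; reading off the intermediate point $g$ gives $g=f$ and a decomposition of $\{e,\ldots,f\}$ into at most $k$ contexts, so $\mscn$ is $k$-scope-bounded. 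For the size, $\mathsf{ctx}^{\leq k}$ is $\mathcal{O}(k|\DS|^2)$ (each $\mathsf{onlyRW}_d$ is $\mathcal{O}(|\DS|)$, summed over $d$ and repeated $k$ times), while $\mathsf{write}$ and ${\matchrel^{-1}}$ contribute $\mathcal{O}(|\DS|)$, giving the claimed bound $\mathcal{O}(k|\DS|^2)$.

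I expect the main obstacle to be the clean justification that a $k$-scope cover (a \emph{union} of $k$ contexts in the definition of scope-boundedness) can always be reshaped into a \emph{consecutive partition} into at most $k$ contexts of the exact shape enforced by $\mathsf{ctx}^{\leq k}$, and dually that the interval selected by the loop witnesses such a partition. This relies on the facts that the span is a single interval, that maximal single-data-structure stretches are themselves contexts, and that merging adjacent same-structure contexts never increases their number.
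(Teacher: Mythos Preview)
Your construction is correct and uses the same core device as the paper --- closing a loop through ${\matchrel^{-1}}$ so that the intermediate point is forced to be the unique read matched with the starting write --- but the two formulas are dual. You give a \emph{positive} characterization: every write $e$ admits a $\mathsf{ctx}^{\leq k}$-path to its matched read $f$, reusing the $\mathsf{onlyRW}_d$ machinery from Lemma~\ref{lem:context-pdl}. The paper instead gives a \emph{negative} one:
\[
\Phi^k_{\mathsf{scope}} = \neg\Existsev\,\existsloop{\Big(\textstyle\sum_{d\neq d'}\test{\mathsf{RW}_d}\cdot{\procrel}^+\cdot\test{\mathsf{RW}_{d'}}\Big)^{k}\cdot{\matchrel}^{-1}}\,,
\]
i.e., there is no matching edge whose span witnesses $k$ context \emph{changes} (hence $\geq k{+}1$ contexts). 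The paper's route sidesteps exactly the ``union-vs-partition'' subtlety you flag at the end: counting $k$ alternations along a forward $\procrel^+$-path directly certifies $k{+}1$ distinct maximal contexts, with no need to argue that a cover by $k$ contexts can be reshaped into a consecutive partition. Your route, in exchange, gives a cleaner modular reuse of $\mathsf{ctx}^{\leq k}$. Both land in \LCPDL with the same $\mathcal{O}(k|\DS|^2)$ bound.
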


\begin{proof}
  Recall that the state formula
  $\mathsf{RW}_d=\existspath{{\matchrel}^d+({\matchrel}^d)^{-1}}$ characterizes
  events accessing the data structure $d\in\DS$.
  We define
  $$
  \Phi^k_{\mathsf{scope}} = \neg\Existsev  \mathsf{Loop}
  \Big\langle \Big(
  \sum_{d\neq d'} 
  \test{\mathsf{RW}_d}\cdot{\procrel}^+\cdot\test{\mathsf{RW}_{d'}}
  \Big)^{k}
  \cdot {\matchrel}^{-1} \Big\rangle 
  $$
  Notice that the path formula 
  $\test{\mathsf{RW}_d}\cdot{\procrel}^+\cdot\test{\mathsf{RW}_{d'}}$ with 
  $d\neq d'$ ensures a change of context.
\end{proof}

\begin{corollary}\label{cor:scope-MC}
  For any architecture $\Arch$ such that $|\Procs|=1$ and $\DS=\Stacks$, the 
  $\AllScope${\sc{-ModelChecking}}$(\Arch)$ problem is decidable for \MSO or 
  \ICPDL specifications. The problem is in \textsc{ExpTime} for \LCPDL and in 
  2\textsc{ExpTime} for \ICPDL.
\end{corollary}

\begin{lemma}
For all $k \ge 0$, there is $\phi_k \in \MSO$ such that $L(\phi_k) = \Scope{k}$.
\end{lemma}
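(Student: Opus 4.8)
The plan is to reuse the first-order formula $\mathit{cont}_k(x,y)$ that was introduced to define $\Context{k}$, and to relativize its universal quantification to matched pairs of events. Recall that, for events with $x \le y$, the subformula $\mathit{cont}_k(x,y)$ holds exactly when the interval $\{x,x+1,\ldots,y\}$ can be covered by at most $k$ contexts: it forbids an increasing sequence $x_1 < \cdots < x_{k+1}$ of stack-accessing events inside $[x,y]$ in which consecutive events access distinct stacks. The minimum number of contexts needed to cover an interval is one plus the number of such stack-changes, because consecutive stack-accessing events with different stacks must lie in different contexts (a context is an interval touching only one data structure, and contexts are intervals), whereas each maximal constant-stack run, together with the neutral events around it, forms a single context. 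Thus absence of a length-$(k{+}1)$ alternation is equivalent to coverability by at most $k$ contexts.

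Given this, I would simply set
\[
\phi_k ~=~ \forall x \forall y\,\bigl(x \matchrel y \Rightarrow \mathit{cont}_k(x,y)\bigr),
\]
using the abbreviation $x \matchrel y \fequiv \bigvee_{d \in \DS}(x \matchrel^d y)$. Since $x \matchrel y$ is one of the generators of the order $<$, it forces $x < y$, so $x \le y$ holds automatically and $\mathit{cont}_k(x,y)$ is only ever applied in the regime where the characterization above is valid.

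For correctness I would unfold the definition of $k$-scope-boundedness: $\mscn \in \Scope{k}$ iff for every matching edge $(e,f) \in \matchrel$ the interval $\{e,\ldots,f\}$ is a union of at most $k$ contexts. By the recalled characterization of $\mathit{cont}_k$, this is precisely the statement that $\mathit{cont}_k(e,f)$ holds for all $e \matchrel f$, which is exactly what $\phi_k$ asserts; hence $L(\phi_k) = \Scope{k}$. As $\mathit{cont}_k$ and $x \matchrel y$ are in $\MSO$ (indeed first-order), so is $\phi_k$.

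The only genuine content is the auxiliary claim that $\mathit{cont}_k(x,y)$ captures coverability of $[x,y]$ by at most $k$ contexts, and this is already the crux of the $\Context{k}$ lemma, so it can be invoked directly. Consequently the main --- and quite minor --- obstacle here is just the bookkeeping of guarding the universal quantification by $x \matchrel y$ and verifying $x \matchrel y \Rightarrow x \le y$; no new combinatorial argument is needed beyond the one already carried out for contexts.
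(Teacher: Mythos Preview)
Your proposal is correct and matches the paper's proof exactly: the paper also sets $\phi_k = \forall x\forall y\,(x \matchrel y \Rightarrow \mathit{cont}_k(x,y))$, with essentially no further justification beyond pointing to the definition of $k$-scope-boundedness. Your additional remarks about why $\mathit{cont}_k$ characterizes coverability by $k$ contexts and why $x \matchrel y$ guarantees $x \le y$ are sound and make the argument more self-contained than the paper's one-line proof.
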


\begin{proof}
According to the definition of $k$-scope-bounded words, it is enough to set
\[\phi_k ~=~ \forall x\forall y\,\displaystyle(x \matchrel y ~\Rightarrow~ \mathit{cont}_k(x,y))\,.\]
\end{proof}

Again, one can also directly construct a \txtCPDS:

\begin{lemma}
For all $k \ge 0$, there is $\Sys_k \in \CPDS$ such that $L(\Sys_k) = \Scope{k}$.
\end{lemma}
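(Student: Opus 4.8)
The plan is to build $\Sys_k$ as a multi-stack automaton that reads the nested word along the single process and, using its locations and values, verifies the scope bound on the fly. The first step is to restate the condition locally. Writing the \emph{maximal} contexts of $\mscn$ as consecutive intervals numbered $1,2,\ldots$ and letting $c(e)$ denote the index of the context containing $e$, one checks that for a matching edge $(e,f)\in{\matchrel}$ the interval $\{e,\ldots,f\}$ is covered by exactly $c(f)-c(e)+1$ maximal contexts. Hence $\mscn\in\Scope{k}$ iff every matched write $e$ is read at most $k-1$ \emph{context switches} after it was written, where a context switch is an access to a stack different from the one accessed in the current context. This turns the global, per-edge constraint into a bounded ``age'' condition that a finite control can monitor.

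The automaton keeps in its finite control (i) the stack $\mathsf{cur}\in\DS$ accessed in the current context, used to detect context switches, and (ii) for every stack $d\in\DS$ an age $\mathsf{age}_d\in\{0,\ldots,k-1,\mathsf{old}\}$, namely the number of context switches since the topmost symbol of $d$ was pushed, capped at $\mathsf{old}$ once it reaches $k$. On each write/read accessing some $d'\neq\mathsf{cur}$ the automaton first increments all $\mathsf{age}_d$ (capped) and resets $\mathsf{cur}:=d'$. On a read from $d$ it refuses to move when $\mathsf{age}_d=\mathsf{old}$, which is exactly the rejection of a matched write whose scope exceeds $k$. Since $|\DS|$ is fixed, this is finite information.

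The main obstacle is that $\mathsf{age}_d$ is directly available only for the \emph{top} symbol of $d$: a buried symbol must have its age recovered when later re-exposed, yet the value stored on a stack is fixed at write time and cannot be updated while the symbol is buried. The idea resolving this exploits the LIFO discipline of $\Stacks$: when a symbol $s'$ is pushed onto $d$ on top of $s$, the current age of $s$ is already known and is stored as part of the value in $\Val$ carried by $s'$. When $s'$ is later read, exposing $s$, the automaton recovers the age of $s$ as (the age stored on $s'$) plus (the age of $s'$ at the moment of its pop, which is the control value $\mathsf{age}_d$ just before popping), again capped at $\mathsf{old}$; this addition is correct because the number of context switches during the lifetime of $s'$ on top of $s$ is precisely the age of $s'$ at its pop. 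Crucially, this value component is determined at write time and read unchanged at the matching read, so it is compatible with the immutable-value discipline of the model.

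Finally one sets $\inLoc$ to the empty-stacks state with $\mathsf{cur}$ undefined and declares every global location accepting ($\FinLocs=\Locs^{\Procs}$), so that a run is accepting exactly when no read was ever blocked by $\mathsf{old}$; unmatched writes impose no scope constraint and are simply never read. Since the control updates and the stored ages are all forced by the word and its matching, the run is essentially deterministic, and the unique run survives iff every matched write is read within $k-1$ context switches, i.e.\ iff $\mscn\in\Scope{k}$. The remaining work is routine: checking that the capping never causes a false rejection (by monotonicity of ages, $\mathsf{old}$ is reached only for genuine violations) and filling in the transition tables. I expect the age-recovery invariant of the previous paragraph to be the only genuinely delicate point.
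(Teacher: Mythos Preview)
Your approach is correct, but it takes a different route from the paper's.

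The paper's construction exploits a monotonicity property you do not use: for a fixed stack $d$, the open $\matchrel^d$-edges at any moment are nested, and an inner edge always has scope at most that of the outermost one. Hence it suffices to track, for each stack $d$, the number of contexts since the \emph{outermost} currently open push on $d$; the inner edges are checked implicitly. To know when the stack becomes empty (and the counter resets), the paper pushes one of two values, $\textup{o}$ for an outermost push and $\textup{i}$ for an inner one, so $\Val=\{\textup{o},\textup{i}\}$.

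Your construction instead tracks the age of the \emph{top} symbol, and stores on each pushed symbol the age of the symbol beneath it at push time, so that upon pop the buried symbol's age can be reconstructed as the sum of the stored value and the popped symbol's current age. This is a genuinely different bookkeeping mechanism; it avoids the inner/outer observation at the price of a larger value alphabet $\Val$ of size $\Theta(k)$ rather than $2$. One detail you leave implicit is what $\mathsf{age}_d$ means when stack $d$ is empty; tracing your rules shows the resulting ``garbage'' value is only ever stored as the value of a push onto an empty stack and is recovered only when that symbol is popped and the stack becomes empty again, so it never influences a check on an actual edge. This deserves a sentence in the final write-up, but it is not a gap.

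Both constructions give a $\Sys_k$ with $|\Locs|$ in $(k{+}1)^{|\DS|}\cdot O(|\DS|)$; the paper's wins on $|\Val|$, while yours is arguably more direct in that it verifies each edge's age explicitly rather than via the outermost-edge argument.
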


\begin{proof}
The idea is to employ a counter from $1$ to $k$ for \emph{each} stack $d \in \DS$, and to count the number of contexts in the scope of every \emph{outermost} write-read edge of $d$. Thus, we can do with set of locations $\{\inLoc,\ell_\textup{pref}\} \cup (\DS \times \{0,1,\ldots,k\}^\DS)$ and $\Val = \{\textup{o},\textup{i}\}$, where a pushed value $\textup{o}$ signals an \emph{outermost} nesting edge, and $\textup{i}$ an inner nesting edge.
The following figure shows how the two counters, one for each stack, work:
\begin{center}
\fbox{
\includegraphics[scale=0.5]{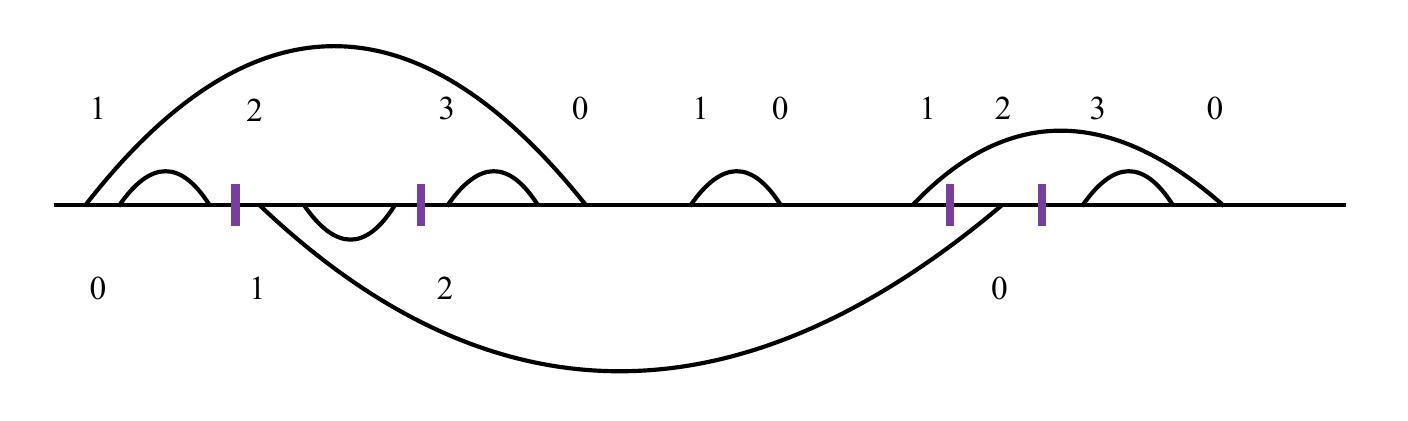}
}
\end{center}
Note that the number of locations is exponential in $|\DS|$.
\end{proof}
 
% \myline
\newpage

\subsection{Existentially Bounded MSCs} 

We consider existentially bounded \CBMs.
The following definition is slightly different from the notion that we have already seen.
In fact, we now define a \emph{local} variant of existential bounds.

We assume that there are no local events and that $\Sigma$ is a singleton set
(and can therefore be omitted).  

\begin{definition}
  A \CBM $\mscn$ is $\exists k$\emph{-bounded} if it admits some linearization
  $<_\mathsf{lin}$ such that, at any time, there are no more than $k$ messages
  in \emph{each} data structure: for all $g\in\Events$ and all $d\in\DS$,
  $|\{(e,f)\in{\matchrel}^{d} \mid e \leq_\mathsf{lin} g <_\mathsf{lin} f\}| \leq k$.
\end{definition}

The set of $\exists k$-bounded \CBMs (over the given architecture) is denoted by $\ebMSCs{k}$.
Moreover, we let $\allebMSCs = (\ebMSCs{k})_{k \ge 0}$.

\begin{example}
Consider the MSC $\mscn$ below:
\begin{center}
\fbox{
\includegraphics[scale=0.5]{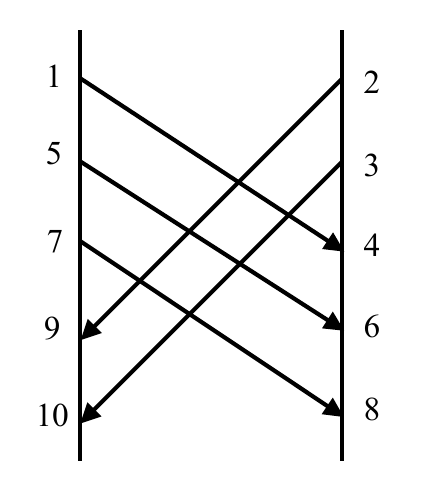}
}
\end{center}
The suggested linearization shows that $\mscn$ is $\exists 2$-bounded.
In fact, $\mscn \in \ebMSCs{2} \setminus \ebMSCs{1}$.
\end{example}

We have already seen (Exercise~\ref{ex:stw-exist-bounded} that $\exists
k$-bounded \CBMs have bounded special tree-width, though the bound on special
tree-width has to take into account that we defined a local variant of
existential bounds:

\begin{lemma}
  For all $k \ge 0$, we have $\ebMSCs{k} \subseteq \stwCBM{(k|\DS|+|\Procs|)}$.
\end{lemma}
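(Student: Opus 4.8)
The plan is to exhibit, for each $\mscn\in\ebMSCs{k}$, a special tree term using at most $k|\DS|+|\Procs|+1$ colors whose denotation is $G_\mscn$; by the definition of special tree-width this yields $\mscn\in\stwCBM{(k|\DS|+|\Procs|)}$. I would build the term bottom-up, inserting the events of $\mscn$ one at a time along a witnessing linearization. So first fix a linearization $e_1<_{\mathsf{lin}}e_2<_{\mathsf{lin}}\cdots<_{\mathsf{lin}}e_n$ such that, for every $g\in\Events$ and every $d\in\DS$, at most $k$ matching edges of $d$ straddle $g$ (i.e.\ $e\le_{\mathsf{lin}}g<_{\mathsf{lin}}f$ for $(e,f)\in{\matchrel^d}$), as guaranteed by $\exists k$-boundedness.

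The key invariant is that after inserting $e_1,\ldots,e_{i-1}$ the only colored events are the \emph{live} ones: (i) for each process $p$, the $<_{\mathsf{lin}}$-maximal inserted event on $p$ (its \emph{frontier}, needed to attach the next $\procrel$-edge), and (ii) every already inserted write whose matching read has not yet been inserted (a \emph{pending write}, needed to attach its future $\matchrel$-edge). There are at most $|\Procs|$ frontiers. For the pending writes I would apply the boundedness condition at $g=e_{i-1}$: the inserted writes whose reads lie in $\{e_i,\ldots,e_n\}$ are exactly the matching edges of each $d$ straddling $e_{i-1}$, hence at most $k$ per data structure and at most $k|\DS|$ in total. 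Thus at most $k|\DS|+|\Procs|$ colors are active before each insertion.

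The insertion step for $e_i$ then runs as follows. I take the disjoint union $\sttunion$ of the term so far with a fresh atomic term carrying $e_i$ and an unused color; since at most $k|\DS|+|\Procs|$ colors are active, one of the $k|\DS|+|\Procs|+1$ colors is free, so this $\sttunion$ is legal. Next I attach the incident edges via $\mathsf{Add}$ operations: a $\procrel$-edge from the (colored) frontier of $e_i$'s process to $e_i$ when $e_i$ is not $<_{\mathsf{lin}}$-minimal on its process, and, if $e_i$ is a read, a $\matchrel^d$-edge from its matching write, which is colored by invariant~(ii), to $e_i$. Finally I restore the invariant via $\mathsf{Forget}$ operations: the previous frontier of $e_i$'s process is forgotten unless it is also a pending write, and the matching write of a read $e_i$ is forgotten unless it is a frontier. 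The peak number of colored vertices in this step is $k|\DS|+|\Procs|+1$, attained right after the fresh atom is added. Once $e_n$ is processed every event is fully connected and all colors can be forgotten, producing a term denoting exactly $G_\mscn$ with empty coloring, as required by Proposition~\ref{prop:Akstwcbm}.

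The routine but delicate part, which I regard as the main obstacle, is the bookkeeping that makes the invariant self-sustaining: verifying that both endpoints of every edge added at step $i$ are colored at that moment (the frontier for the $\procrel$-edge, a pending write for the $\matchrel^d$-edge), and that each color dropped by $\mathsf{Forget}$ is genuinely no longer live, so that no later insertion references a forgotten vertex. This is precisely where the two roles a vertex may simultaneously play, frontier and pending write, must be tracked together, which is why the bound \emph{adds} $|\Procs|$ and $k|\DS|$ rather than taking a maximum. The argument is the local-bound refinement of Example~\ref{ex:stw-exist-bounded}: replacing the global pool of $k$ pending messages by one pool of size $k$ per data structure turns $k+|\Procs|$ into $k|\DS|+|\Procs|$. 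Equivalently, one could phrase the same construction top-down as a $(k|\DS|+|\Procs|+1)$-winning strategy for Eve in the decomposition game of Section~\ref{sec:game-stw}, peeling off the $<_{\mathsf{lin}}$-last event at each turn.
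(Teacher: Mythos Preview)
Your argument is correct. You build the \STT bottom-up, inserting events one by one along a witnessing linearization and keeping colored exactly the process frontiers and the pending writes; the $\exists k$-bound applied at $g=e_{i-1}$ caps the pending writes at $k|\DS|$, and together with the $|\Procs|$ frontiers plus the one fresh color for $e_i$ you never exceed $k|\DS|+|\Procs|+1$ colors. The bookkeeping you flag as the delicate point is exactly right and goes through.

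The paper's proof takes a different route. It works in the decomposition game and, rather than peeling off one event at a time, has Eve mark the first $k|\DS|+1$ events of the linearization together with the next event on each process (at most $k|\DS|+|\Procs|+1$ marks), remove the incident $\procrel$-edges, and then use a pigeonhole argument: since at most $k|\DS|$ matching edges can leave this block, at least one matching edge is entirely contained in it and becomes an isolated component; Eve splits it off and repeats. Your one-event-at-a-time construction is arguably more elementary, avoids the pigeonhole step, and is a direct refinement of Example~\ref{ex:stw-exist-bounded}; the paper's block strategy is closer in spirit to the earlier split-game examples for nested words and context-bounded \MNWs. Both yield the same bound, and your closing remark that the bottom-up term corresponds to a top-down strategy peeling off the $<_{\mathsf{lin}}$-last event is the clean bridge between the two viewpoints.
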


\begin{proof}
  Let us quickly recall the proof by means of Figure~\ref{fig:ebmsc-splits}.
  \begin{figure}[ht]
    \begin{center}
      \fbox{
      \includegraphics[scale=0.4]{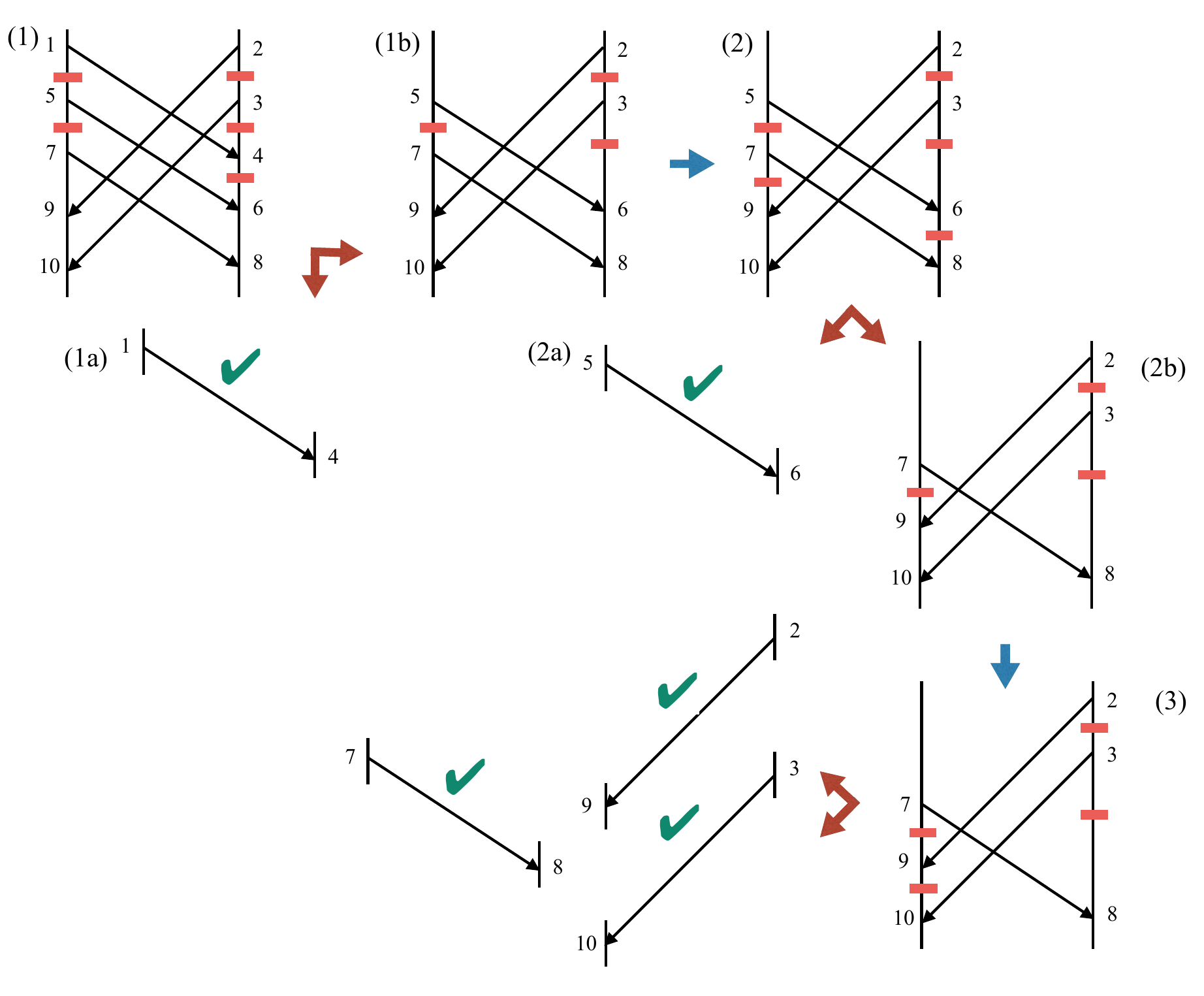}
      }
    \end{center}
    \caption{Split-game for existentially bounded \MSCs\label{fig:ebmsc-splits}}
  \end{figure}
  
  \begin{itemize}
    \item Eve's strategy is to choose a linearization and to cut the first
    $k|\DS|+1= 5$ events according to that linearization (1).  For this, she
    uses $k|\DS|+|\Procs|+1$ colors and she removes the $\procrel$-edges 
    touching the first $k|\DS|+1$ events.
    
    \item Then, there is at least one isolated message edge, which in this case
    is $(1,4)$.
    
    \item In the remaining component, we again cut some more events until we
    isolated $k|\DS|+1$ of them (2), and so on.
  \end{itemize}
  Thus, Eve wins the decomposition game with at most $k|\DS|+|\Procs|+1$ colors. 
\end{proof}

\begin{lemma}\label{lem:exist-bounded-pdl}
  For all $k>0$, there is $\Phi^k_{\exists\mathsf{B}}\in\LCPDL$ of size
  $\mathcal{O}(k|\DS|)$ such that
  $L(\Phi^k_{\exists\mathsf{B}}) \cap \CBM=\ebMSCs{k}$.
\end{lemma}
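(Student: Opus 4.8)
The plan is to express $\exists k$-boundedness as acyclicity of a \emph{revised} causal relation, following the classical characterization of existentially bounded MSCs. Since we assume FIFO channels, no local events, and a singleton $\Sigma$, every event is either a write or a read, and the messages on a fixed data structure $d$ form a totally ordered sequence $e_1 \matchrel^d f_1,\, e_2 \matchrel^d f_2,\, \ldots$ in send order, which by the FIFO axiom (Definition~\ref{def:cbm}) coincides with the order of the receives. The key observation is that a linearization keeps $d$ bounded by $k$ if and only if it schedules the receive $f_i$ of the $i$-th message before the send $e_{i+k}$ of the $(i+k)$-th message, for every $i$: right after $e_{i+k}$ has been emitted, messages $i,\ldots,i+k$ are all in flight unless $f_i$ has already been consumed. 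Hence $\mscn$ is $\exists k$-bounded iff the relation obtained from $({\procrel}\cup{\matchrel})$ by adding, for each $d$ and each $i$, an edge $f_i \to e_{i+k}$, is acyclic.

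First I would introduce macros to navigate a channel. Let $\mathsf{send}_d \fequiv \existspath{\matchrel^d}\true$ be the state formula marking the source of a $\matchrel^d$-edge, i.e.\ a $d$-send. Then the path formula
\[
\mathsf{nextSend}_d \fequiv {\procrel}\cdot(\test{\neg\mathsf{send}_d}\cdot{\procrel})^*\cdot\test{\mathsf{send}_d}
\]
moves from a $d$-send to the next $d$-send along its (writer) process; since all $d$-sends lie on the single process $\writer(d)$ and ${\procrel}$ stays within a process, FIFO guarantees that this reaches exactly the next message on $d$. Iterating $k$ times, $\mathsf{nextSend}_d^{\,k}$ takes $e_i$ to $e_{i+k}$, and is undefined precisely when the $(i+k)$-th message is absent, which is exactly when no constraint is needed. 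The revised edge out of a receive is then
\[
\mathsf{rev}^k_d \fequiv {(\matchrel^d)^{-1}}\cdot\mathsf{nextSend}_d^{\,k}\,,
\]
sending $f_i$ back to $e_i$ via converse and then $k$ messages ahead to $e_{i+k}$.

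Next I would assemble the relation and assert acyclicity. Setting $\rho \fequiv {\procrel} + {\matchrel} + \sum_{d\in\DS}\mathsf{rev}^k_d$, I define
\[
\Phi^k_{\exists\mathsf{B}} \fequiv \neg\,\Existsev\,\existsloop{\rho\cdot\rho^*}\,,
\]
which on a \CBM states that no event lies on a nontrivial $\rho$-cycle, i.e.\ that $\rho$ is acyclic. By the characterization above this is equivalent to $\exists k$-boundedness, giving $L(\Phi^k_{\exists\mathsf{B}})\cap\CBM = \ebMSCs{k}$. The formula uses only converse and loop, so it lies in $\LCPDL$; its size is dominated by the $k$-fold concatenation $\mathsf{nextSend}_d^{\,k}$ for each $d$, hence $\mathcal{O}(k|\DS|)$ overall.

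The main obstacle is the correctness of the revised-relation characterization rather than the syntax of the formula. I must verify, using FIFO, that $\mathsf{nextSend}_d^{\,k}$ really reaches the $(i+k)$-th message (that send order along the writer process agrees with channel order), and that the edges $f_i \to e_{i+k}$ capture $k$-boundedness in both directions: that any topological sort of an acyclic $\rho$ is a genuine $k$-bounded linearization extending the causal order, and that conversely every $k$-bounded linearization must respect all these edges so that $\rho\subseteq{<_\mathsf{lin}}$ is acyclic.
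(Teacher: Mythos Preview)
Your proposal is correct and follows essentially the same approach as the paper: both characterize $\exists k$-boundedness via acyclicity of ${<}\cup{\rightsquigarrow_k}$ where $\rightsquigarrow_k$ links the $i$-th receive on $d$ to the $(i{+}k)$-th send on $d$, and both realize $\rightsquigarrow_k$ by the path formula $({\matchrel^d})^{-1}$ followed by $k$ iterations of ``go to the next $d$-send along $\procrel$''. Your $\mathsf{rev}^k_d$ and the paper's $\rightsquigarrow_k$ coincide, and your $\existsloop{\rho\cdot\rho^*}$ is the paper's $\existsloop{({\procrel}+{\matchrel}+{\rightsquigarrow_k})^+}$.
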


\begin{proof}
  For an \CBM $\mscn$, consider the binary relation ${\rightsquigarrow}_k
  \subseteq \Events \times \Events$ that connects events $(f,e)$ where, for some
  data structure $d\in\DS$ and some $i\geq1$:
  \begin{itemize}[nosep]
    \item $e$ is the $(i+k)$-th write on $d$
    \item $f$ is the $i$-th read from $d$
  \end{itemize}
  The relation ${\rightsquigarrow}_k$ is illustrated below for the cases $k=1$ (cyclic $\Rightarrow$ not $\exists 1$-bounded) and $k=2$ (acyclic $\Rightarrow$ $\exists 2$-bounded).
  \begin{center}
    \fbox{
    \includegraphics[scale=0.5]{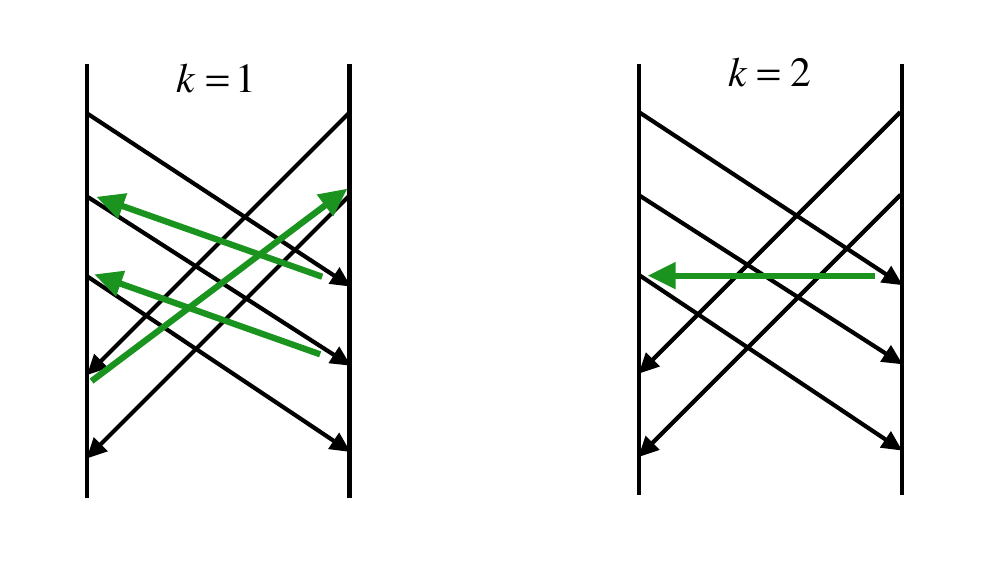}
    }
  \end{center}

  \begin{claim}[\cite{LohreyM04} for the special case $\DS=\Queues$] \mbox{ }
    \begin{center}
      $\mscn \in \ebMSCs{k}$ iff ${<} \cup {\rightsquigarrow}_k$ is acyclic.
    \end{center}
  \end{claim}
  
  $\Leftarrow$ Take $<_\mathsf{lin}$ any linearization of ${<} \cup 
  {\rightsquigarrow}_k$. We check that $<_\mathsf{lin}$ is $k$-bounded.
  
  $\Rightarrow$ Let $<_\mathsf{lin}$ be a $k$-bounded linearization of $<$. 
  We check that $f\rightsquigarrow e$ implies $f<_\mathsf{lin}e$.

  The binary relation ${\rightsquigarrow}_k$ can be formalized by a path 
  formula from \CPDL:
  $$
  {\rightsquigarrow}_k = \sum_{d\in\DS}({\matchrel}^d)^{-1}\cdot
  \Big( {\procrel}\cdot(\test{\neg\existspath{\matchrel^d}}
  \cdot{\procrel})^{*}\cdot\test{\existspath{\matchrel^d}} \Big)^k
  $$
  Acyclicity relies on the loop predicate:
  $\Phi^k_{\exists\mathsf{B}} =
  \neg\Existsev\existsloop{({\procrel}+{\matchrel}+{\rightsquigarrow}_k)^+}$.
\end{proof}

\begin{corollary}\label{cor:context-MC}
  For any architecture $\Arch$,
  the $\allebMSCs${\sc{-ModelChecking}}$(\Arch)$ problem is decidable for \MSO
  or \ICPDL specifications.  The problem is in \textsc{ExpTime} for \LCPDL and
  in 2\textsc{ExpTime} for \ICPDL.
\end{corollary}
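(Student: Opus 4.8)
The plan is to instantiate the generic reduction of Section~\ref{sec:concrete}, since both ingredients it demands are already in place for $\allebMSCs$. First I would set $k' = k|\DS| + |\Procs|$; by the preceding lemma this guarantees $\ebMSCs{k} \subseteq \stwCBM{k'}$, and $k'$ is trivially computable from the instance, so condition~(1) of the scheme holds. For condition~(2)(b) I would invoke Lemma~\ref{lem:exist-bounded-pdl}, which provides an \LCPDL formula $\Phi^k_{\exists\mathsf{B}}$ of size $\mathcal{O}(k|\DS|)$ with $L(\Phi^k_{\exists\mathsf{B}}) \cap \CBM = \ebMSCs{k}$.

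With these two facts, the instance $L(\Sys) \cap \ebMSCs{k} \subseteq L(\varphi)$ reduces, exactly as in the context- and scope-bounded cases, to the emptiness of
\[
\A_\cbm^{k'\text{-}\mathsf{stw}} \cap \A_{\Phi^k_{\exists\mathsf{B}}}^{k'\text{-}\mathsf{stw}} \cap \A_\Sys^{k'\text{-}\mathsf{stw}} \cap \A_{\neg\varphi}^{k'\text{-}\mathsf{stw}} \,.
\]
Here $\A_\cbm^{k'\text{-}\mathsf{stw}}$ is from Proposition~\ref{prop:Akstwcbm}, $\A_\Sys^{k'\text{-}\mathsf{stw}}$ from Proposition~\ref{prop:Akstwsys}, the class automaton $\A_{\Phi^k_{\exists\mathsf{B}}}^{k'\text{-}\mathsf{stw}}$ is obtained by applying Corollary~\ref{cor:Akstwpdl} to $\Phi^k_{\exists\mathsf{B}}$, and the specification automaton $\A_{\neg\varphi}^{k'\text{-}\mathsf{stw}}$ comes from Proposition~\ref{prop:Akstwphi} when $\varphi \in \MSO$, or from Corollary~\ref{cor:Akstwpdl} when $\varphi \in \ICPDL$. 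Correctness is the routine chain: a tree $\tau$ lies in the intersection iff it is a valid \kSTT whose graph $G_\tau$ is a \CBM (via $\A_\cbm$) that is $\exists k$-bounded (via $\Phi^k_{\exists\mathsf{B}}$ together with Lemma~\ref{lem:exist-bounded-pdl}), is accepted by $\Sys$, and falsifies $\varphi$, i.e.\ iff $G_\tau$ witnesses $L(\Sys) \cap \ebMSCs{k} \not\subseteq L(\varphi)$.

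For the complexity I would track automaton sizes. Since $|\Phi^k_{\exists\mathsf{B}}| = \mathcal{O}(k|\DS|)$, Corollary~\ref{cor:Akstwpdl} yields $\A_{\Phi^k_{\exists\mathsf{B}}}^{k'\text{-}\mathsf{stw}}$ of size $2^{\poly(k',|\Phi^k_{\exists\mathsf{B}}|)}$, i.e.\ singly exponential in $k$ and $|\Arch|$, and Propositions~\ref{prop:Akstwcbm} and~\ref{prop:Akstwsys} make $\A_\cbm$ and $\A_\Sys$ at most exponential as well. When $\varphi \in \LCPDL$, Corollary~\ref{cor:Akstwpdl} produces $\A_{\neg\varphi}^{k'\text{-}\mathsf{stw}}$ of singly exponential size, so the product is singly exponential and its emptiness---solvable in \textsc{PTime} for tree automata---lands in \textsc{ExpTime}; when $\varphi \in \ICPDL$ the same corollary gives a doubly exponential specification automaton, raising the total to 2\textsc{ExpTime}. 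Plain decidability for full \MSO follows identically from Proposition~\ref{prop:Akstwphi}.

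Since every hard construction is packaged in the cited results, I anticipate no real obstacle. The single point that needs care is the observation that intersecting with $\A_\cbm^{k'\text{-}\mathsf{stw}}$ is indispensable: Lemma~\ref{lem:exist-bounded-pdl} only characterizes $\ebMSCs{k}$ \emph{among} \CBMs, so $\A_{\Phi^k_{\exists\mathsf{B}}}^{k'\text{-}\mathsf{stw}}$ alone would over-accept on trees whose graph is not a \CBM, and the correctness of the reduction would fail without this guard.
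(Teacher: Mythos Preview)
Your proposal is correct and follows exactly the approach the paper intends: the corollary is an instance of the generic scheme laid out at the start of Section~\ref{sec:concrete}, with $k'=k|\DS|+|\Procs|$ supplied by the special tree-width lemma and condition~(2)(b) witnessed by the \LCPDL formula $\Phi^k_{\exists\mathsf{B}}$ of Lemma~\ref{lem:exist-bounded-pdl}, so that model checking reduces to emptiness of $\A_\cbm^{k'\text{-}\mathsf{stw}} \cap \A_{\Phi^k_{\exists\mathsf{B}}}^{k'\text{-}\mathsf{stw}} \cap \A_\Sys^{k'\text{-}\mathsf{stw}} \cap \A_{\neg\varphi}^{k'\text{-}\mathsf{stw}}$. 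The paper does not spell out a separate proof here because it is identical in shape to the proof given for $\AllContext$; your observation that the intersection with $\A_\cbm^{k'\text{-}\mathsf{stw}}$ is essential (since Lemma~\ref{lem:exist-bounded-pdl} only characterizes $\ebMSCs{k}$ among \CBMs) is a nice point of care that the paper leaves implicit.
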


\begin{lemma}
  For all $k \ge 0$, there is $\phi_k \in \MSO$ such that $L(\phi_k) = \ebMSCs{k}$.
\end{lemma}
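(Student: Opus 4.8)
The plan is to reduce the statement to the acyclicity characterization already used in the proof of Lemma~\ref{lem:exist-bounded-pdl}: for a \CBM $\mscn$ we have $\mscn \in \ebMSCs{k}$ iff the relation ${<} \cup {\rightsquigarrow}_k$ is acyclic. Since acyclicity of an MSO-definable binary relation is itself MSO-expressible, it suffices to define in $\MSO(\Arch,\Sigma)$ the strict order $<$, the relation $\rightsquigarrow_k$, and then to assert that their union is acyclic. I would then invoke the claim to conclude.

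First I would note that $<$ is already MSO-definable: the formula $x \le y$ from the examples of the MSO chapter defines reachability ${({\procrel}\cup{\matchrel})}^*$ on \CBMs, so $x < y \fequiv (x \le y \wedge \neg(x=y))$ captures the strict partial order. Next I would mirror the \LCPDL definition of $\rightsquigarrow_k$. Because we are in the MSC setting ($\DS=\Queues$, FIFO), the $\matchrel^d$-edge already links a read to the correctly ranked write, so $\rightsquigarrow_k$ needs no global counting, only a bounded number $k$ of forward steps among writes. Writing $W_d(x) \fequiv \exists y\,(x \matchrel^d y)$ for ``$x$ is a (matched) write to $d$'', and $\mathit{nw}_d(x,x') \fequiv W_d(x)\wedge W_d(x')\wedge x<x'\wedge\neg\exists z\,(W_d(z)\wedge x<z<x')$ for ``$x'$ is the next write to $d$ after $x$'', I would set
\[
(x \rightsquigarrow_k y) ~\fequiv~ \bigvee_{d\in\DS}\exists x_0 \cdots \exists x_k\Bigl( x_0 \matchrel^d x \;\wedge\; \bigwedge_{0\le j<k}\mathit{nw}_d(x_j,x_{j+1}) \;\wedge\; x_k = y\Bigr)\,,
\]
which navigates from the read $x$ back to its matched write $x_0$ and then advances $k$ consecutive writes to $d$, exactly as the path formula $\sum_{d}({\matchrel}^d)^{-1}\cdot(\cdots)^k$ does. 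Finally, abbreviating $\psi_S(x,y) \fequiv (x<y)\vee(x\rightsquigarrow_k y)$, I would express acyclicity in the standard way (a finite relation has a cycle iff some nonempty vertex set is closed under taking an outgoing $\psi_S$-edge):
\[
\phi_k ~\fequiv~ \neg\exists X\Bigl(\exists x\,(x\in X) \;\wedge\; \forall x\bigl(x\in X \Rightarrow \exists y\,(y\in X \wedge \psi_S(x,y))\bigr)\Bigr)\,.
\]
By the claim, $\mscn\models\phi_k$ iff ${<}\cup{\rightsquigarrow}_k$ is acyclic iff $\mscn\in\ebMSCs{k}$, and since $L(\phi_k)$ for an MSO sentence is by definition the set of \CBMs satisfying it, we get $L(\phi_k)=\ebMSCs{k}$.

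The hard part is the faithful MSO definition of $\rightsquigarrow_k$: matching the $i$-th read to the $(i+k)$-th write looks like it requires comparing two cardinalities, which MSO cannot do. The key observation that makes it work is that the FIFO discipline reduces rank-matching to following a single $\matchrel^d$-edge, after which only a \emph{fixed} number $k$ of further process steps is needed, a purely first-order, bounded operation. A quicker but less self-contained alternative would be to take the \LCPDL sentence $\Phi^k_{\exists\mathsf{B}}$ of Lemma~\ref{lem:exist-bounded-pdl} and translate it into an equivalent MSO sentence via Exercise~\ref{exc:ICPDLtoMSO} (using $\LCPDL\subseteq\ICPDL$); the resulting $\phi_k$ then satisfies $L(\phi_k)=L(\Phi^k_{\exists\mathsf{B}})\cap\CBM=\ebMSCs{k}$.
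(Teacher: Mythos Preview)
The paper does not actually give a proof of this lemma; it is stated and left implicit, presumably because it follows at once from Lemma~\ref{lem:exist-bounded-pdl} via the \ICPDL-to-\MSO translation of Exercise~\ref{exc:ICPDLtoMSO}. Your proposal is correct, and you even mention this shortcut at the end. Your main route---redoing the acyclicity characterization of ${<}\cup{\rightsquigarrow}_k$ directly in \MSO---is a more explicit, self-contained variant of the same idea: you correctly observe that FIFO makes the rank correspondence ``$i$-th read $\mapsto$ $i$-th write'' available through a single $\matchrel^d$-edge, so that $\rightsquigarrow_k$ becomes first-order definable with a fixed number $k$ of ``next write'' steps, and your acyclicity formula is the standard one for finite structures. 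One cosmetic remark: your $\mathit{nw}_d$ uses the global order $<$, whereas the paper's \LCPDL path uses $\procrel$; this is harmless because all writes to $d$ live on $\writer(d)$ and there $<$ coincides with $\procrel^+$, but it may be worth saying so.
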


For message passing systems, there is a \txtCPDS defining $\ebMSCs{k}$, too, but
the proof is much harder (and omitted here):

\begin{theorem}[\cite{GKM06}]
  For any architecture $\Arch$ such that $\DS=\Queues$, and for all $k \ge 0$,
  there is $\Sys_k \in \CPDS$ such that $L(\Sys_k) = \ebMSCs{k}$.
\end{theorem}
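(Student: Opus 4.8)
The plan is to build $\Sys_k$ as a message-passing automaton that accepts an MSC $\mscn$ exactly when it can \emph{simulate} a run of $\mscn$ in which every channel $d\in\Queues$ holds at most $k$ pending messages. By the characterization recorded in the claim of Lemma~\ref{lem:exist-bounded-pdl}, $\mscn\in\ebMSCs{k}$ iff ${<}\cup{\rightsquigarrow}_k$ is acyclic, equivalently iff $\mscn$ admits a $k$-bounded linearization $<_\mathsf{lin}$. I would therefore aim for accepting runs of $\Sys_k$ that correspond to such $k$-bounded schedules: a run assigns to each event a location recording enough of the process's local knowledge so that, whenever process $\writer(d)$ performs its $(i+k)$-th send on $d$, it can certify that the $i$-th message on $d$ has already been received. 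Certifying this is precisely forbidding a ${\rightsquigarrow}_k$-edge from pointing ``into the future'', so that globally ${<}\cup{\rightsquigarrow}_k$ stays acyclic.

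The core difficulty is that this certification is a global, order-theoretic property, whereas a \CPDS may only inspect local state and the finite value $v\in\Val$ attached to each message. The key is that, along a $k$-bounded schedule, each process needs to maintain only a \emph{bounded} amount of information about the rest of the system. Concretely, I would equip every process $p$ with its \emph{primary information}: for each $q\in\Procs$, the latest $q$-event that $p$ currently knows about, together with, for each channel, a bounded count (modulo a window of size $k$) of messages sent and received. This information is updated locally on internal and receive events and is \emph{piggybacked} on outgoing messages by encoding it into the value $v$ --- legitimate since $\Val$ may be any finite set. Because channels never exceed capacity $k$ along the simulated schedule, the discrepancy between what $p$ knows and the true global state is bounded, so finitely many values of the primary information suffice. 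This is exactly the bounded time-stamping (gossip) machinery of Mukund, Narayan Kumar and Sohoni; it lets each $\writer(d)$ decide, from its current location alone, whether its next send is $k$-safe.

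For correctness I would argue both inclusions. If $\mscn\in\ebMSCs{k}$, fix a $k$-bounded linearization; processing the events in that order, the gossip state stays bounded, every send is $k$-safe, all channels empty out, and $\Sys_k$ reaches a global final location, so $\mscn\in L(\Sys_k)$. Conversely, an accepting run of $\Sys_k$ exhibits, through its $k$-safety certificates, a scheduling of $\mscn$ in which no channel ever exceeds $k$, hence a $k$-bounded linearization, so $\mscn\in\ebMSCs{k}$; via Theorem~\ref{thm:semantics} this is consistent with $\Lin(L(\Sys_k))=\opLang(\Sys_k)$. The main obstacle, and the reason the construction is substantially harder than the \MSO or \LCPDL definitions given earlier, is establishing that the primary information is finite-state and can be transmitted within the bounded message alphabet: one must prove the bounded time-stamping lemma in the present setting, namely that $k$-boundedness caps the amount of ``unknown'' global progress and therefore the number of distinct gossip states, so that $\Sys_k$ indeed has finitely many locations.
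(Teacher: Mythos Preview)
The paper does not give a proof of this theorem at all; it explicitly states that ``the proof is much harder (and omitted here)'' and only cites \cite{GKM06}. So there is no paper proof to compare against line by line.

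That said, your outline is faithful to the construction in the cited reference. The two ingredients you isolate are the right ones: (i) the Lohrey--Muscholl acyclicity criterion ${<}\cup{\rightsquigarrow}_k$ (already recalled in the paper as the Claim inside Lemma~\ref{lem:exist-bounded-pdl}), which reduces membership in $\ebMSCs{k}$ to the existence of a $k$-bounded linearization; and (ii) the bounded time-stamping / gossip mechanism of Mukund, Narayan Kumar and Sohoni, which lets each sender decide locally whether its next write is $k$-safe by piggybacking a finite summary of ``latest knowledge'' on messages. Your identification of the hard step --- proving that the primary information stays finite under the $k$-bound, so that $\Locs$ and $\Val$ are finite --- is exactly where the work lies in \cite{GKM06}.

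One point worth sharpening: in the graph semantics of a \CPDS, a run is a labeling $\rho\colon\Events\to\Locs$ and does not itself fix a linearization. Your converse direction therefore needs the extra observation that the gossip annotations carried by an accepting $\rho$ are \emph{consistent} in the sense that they can be realized by \emph{some} global schedule; this is where one actually extracts the witnessing $k$-bounded linearization (rather than just asserting that ``the run exhibits a scheduling''). In \cite{GKM06} this is handled by showing that the local $k$-safety tests, together with the gossip invariants, force acyclicity of ${<}\cup{\rightsquigarrow}_k$. With that refinement, your plan matches the intended argument.
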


\newpage
\section{Synthesis from \ICPDL to \CPDS}\label{sec:expricpdl}

The synthesis problem is to construct an implementation $\Sys$ from a given 
specification $\Phi$. Here we are interested in specifications given in 
\ICPDL and distributed implementations, i.e., \CPDSs.

\newcommand{\cbmL}{L_{\textup{\textsf{cbm}}}}

In the following, given $\Phi \in \ICPDL(\Arch,\Sigma)$, we let $\cbmL(\Phi) := L(\Phi) \cap \CBM(\Arch,\Sigma)$.

Unfortunately,  \IPDL and \LCPDL are too expressive to be translatable into \CPDSs:

\begin{mytheorem}\label{thm:ICPDLtoCPDS}
Suppose $\Sigma=\{a,b,c\}$ and let $\Arch$ be given as follows:
\begin{center}
{
\includegraphics[scale=0.5]{simplearch.pdf}
}
\end{center}
There is $\Phi\in\IPDL(\Arch,\Sigma)$ such that $\Lang(\Sys) \neq \cbmL(\Phi)$,
for all $\Sys \in \CPDS(\Arch,\Sigma)$.

There is $\Phi\in\LCPDL(\Arch,\Sigma)$ such that $\Lang(\Sys) \neq \cbmL(\Phi)$,
for all $\Sys \in \CPDS(\Arch,\Sigma)$.
\end{mytheorem}

\begin{exercise}
Prove Theorem~\ref{thm:ICPDLtoCPDS} using the idea in the proof of Theorem~\ref{thm:grids}.
\end{exercise}

The exact relation between $\CPDL$ and $\CPDS$ is unknown. However, every \PDL formula can be translated into a \txtCPDS (the special case $\DS=\Queues$ was considered in \cite{BKM-lmcs10}):

\begin{mytheorem}[\cite{cyriac:hal-00943690}]\label{PDLtoCPDS}
For every $\Phi \in \PDL(\Arch,\Sigma)$, there is $\Sys \in \CPDS(\Arch,\Sigma)$ such that $\Lang(\Sys) = \cbmL(\Phi)$.
\end{mytheorem}

\begin{proof}
  For a state formula $\sigma$, we construct, inductively,
  \[\Sys_\sigma=(\Locs,\Val,(\to_p)_{p\in\Procs},\inLoc,\FinLocs) \in
  \CPDS(\Arch,\Sigma)\] 
  together with a mapping $\holds_\sigma\colon \Trans \to \{0,1\}$ such that
  \begin{itemize}[nosep]
    \item $\Lang(\Sys_\sigma) = \CBM(\Arch,\Sigma)$ and,

    \item for all $\mscn \in \CBM(\Arch,\Sigma)$, all accepting runs $\rho$ of $\Sys_\sigma$ on $\mscn$, and all events $e$ of $\mscn$, we have
    \[e \in \Sem{\sigma}{\mscn} ~~\textup{iff}~~\holds_\sigma(\rho(e))=1\,.\]
  \end{itemize}
  Here $\Trans=\biguplus_{p\in\Procs}{\to}_p$ and a run is a map 
  $\rho\colon\Events\to\Trans$.

  The pair $(\Sys_\sigma,\gamma_\sigma)$ is a \emph{transducer} from the input 
  alphabet $\Sigma$ to the output alphabet $\{0,1\}$.
  
  \newpage

  \underline{\txtCPDS $\Sys_a$ for $a \in \Sigma$:}
  
  We let $\Sys_a=(\Locs,\Val,(\to_p)_{p\in\Procs},\inLoc,\FinLocs)$ where
  \begin{itemize}
    \item $\Locs = \{1\}$~~~
    $\Val = \{v\}$~~~
    $\inLoc=1$~~~
    $\FinLocs = \{1\}^\Procs$,
    
    \item transitions and output function ($b \neq a$, $p\in\Procs$ and $d\in\DS$):
    \begin{itemize}
      \item $\gamma_a(1 \xrightarrow{{a}}_p 1) =
      \gamma_a(1 \xrightarrow{a,d!v}_{\writer(d)} 1) =
      \gamma_a(1 \xrightarrow{a,d?v}_{\reader(d)} 1) = 1$; 

      \item $\gamma_a(1 \xrightarrow{{b}}_p 1) =
      \gamma_a(1 \xrightarrow{b,d!v}_{\writer(d)} 1) =
      \gamma_a(1 \xrightarrow{b,d?v}_{\reader(d)} 1) = 0$.
    \end{itemize}
  \end{itemize}
  
  \bigskip

  \underline{\txtCPDS $\Sys_p$ for $p \in \Procs$:}

  As above, $\Sys_p=(\Locs,\Val,(\to_q)_{q\in\Procs},\inLoc,\FinLocs)$ is the
  universal \CPDS with one state. The output function is modified:
  $$
  \gamma_p(t)=
  \begin{cases}
    1 & \text{if } t\in{\to}_p \\
    0 & \text{otherwise.}
  \end{cases}
  $$
  
  \bigskip

  \underline{\txtCPDS $\Sys_{\neg\sigma}$:}

  Suppose $\Sys_{\sigma}=(\Locs,\Val,(\to_p)_{p\in\Procs},\inLoc,\FinLocs)$ with
  associated mapping $\holds_\sigma\colon\Trans\to \{0,1\}$.  We set
  $\Sys_{\neg\sigma}=\Sys_\sigma$ and let $\holds_{\neg\sigma}(t) = 1 -
  \holds_{\sigma}(t)$ for all $t\in\Trans$.

  \bigskip
  
  \underline{\txtCPDS $\Sys_{\sigma_1 \vee \sigma_2}$:}
  
  Suppose $\Sys_{\sigma_i}=(\Locs_i,\Val_i,(\to_p^i)_{p\in\Procs},\inLoc^i,\FinLocs_i)$, for $i \in \{1,2\}$.
  
  We construct the product
  $\Sys_{\sigma_1}\times\Sys_{\sigma_2}=(\Locs,\Val,(\to_p)_{p\in\Procs},\inLoc,\FinLocs)$
  as usual:
  \begin{itemize}[nosep]
    \item $\Locs = \Locs_1 \times \Locs_2$
    \item $\Val = \Val_1 \times \Val_2$
    \item $\inLoc=(\inLoc^1,\inLoc^2)$
    \item $\FinLocs = \{((\ell_1^p,\ell_2^p))_{p \in \Procs} \in \Locs^\Procs \mid (\ell_i^p)_{p \in \Procs} \in \FinLocs_i$ for all $i \in \{1,2\}\}$
    \item transitions:
    \begin{itemize}
      \item $t_1\times t_2 = (\ell_1,\ell_2) \xrightarrow{{a}}_p (\ell_1',\ell_2')$ ~~if~~ 
      $t_i=\ell_i \mathrel{\smash{\xrightarrow{{a}}}_p^i} \ell_i'$ for all $i \in \{1,2\}$
      
      \item $t_1\times t_2 = (\ell_1,\ell_2) \xrightarrow{{a},d!(v_1,v_2)}_p (\ell_1',\ell_2')$ 
      ~~if~~ $t_i=\ell_i \mathrel{\smash{\xrightarrow{{a},d!v_i}}_p^i} \ell_i'$ for all $i \in \{1,2\}$
      
      \item $t_1\times t_2 =(\ell_1,\ell_2) \xrightarrow{{a},d?(v_1,v_2)}_p (\ell_1',\ell_2')$ 
      ~~if~~ $t_i=\ell_i \mathrel{\smash{\xrightarrow{{a},d?v_i}}_p^i} \ell_i'$ for $i \in \{1,2\}$
    \end{itemize}
  \end{itemize}
  Finally, we let $\holds_{\sigma_1\vee\sigma_2}(t_1\times 
  t_2)=\max\{\holds_{\sigma_1}(t_1),\holds_{\sigma_2}(t_2)\}$.
  
  More generally, transducers are closed under product: given 
  $(\Sys_1,\gamma_1)$ and $(\Sys_2,\gamma_2)$, we construct 
  $\Sys=\Sys_1\times\Sys_2$ as above and we let $\gamma(t_1\times 
  t_2)=(\gamma_1(t_1),\gamma_2(t_2))$.
  
  Transducers are also closed under composition.

  \newpage
  
  \underline{Let us turn to the case of formulas $\existspath{\pi}\sigma$.}

  $\existspath{\pi}\sigma \equiv \existspath{\pi \Conc \test{\sigma}} \true$.
  Hence, we may assume that if $\existspath{\pi}\sigma$ appears as a subformula then $\sigma$ is $\true$.
  Furthermore, we simply denote it by $\existspath{\pi}$ (which means $\existspath{\pi}\true$). 
  
  \begin{example}
    Let us illustrate the idea by means of an example. Consider the \PDL path formula
    \[\pi ~=~ \bigl(\test{a} \cdot ({\procrel} + {\matchrel})\bigr)^\ast \cdot \test{b}\,.\]
    We translate $\pi$ into a finite automaton $\B$ over the alphabet
    $\{\matchrel,\procrel,\test{a},\test{b}\}$ as follows:
    \begin{center}
      \fbox{
      \includegraphics[scale=0.45]{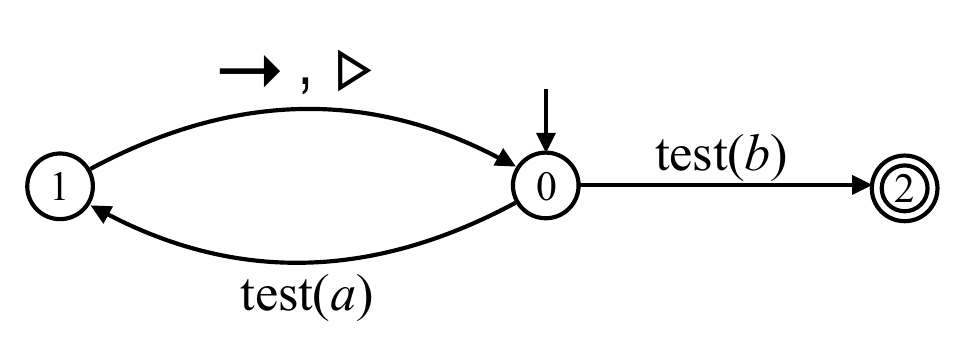}
      }
    \end{center}
    The \txtCPDS $\Sys_{\existspath{\pi}}$ will now label each event $e$ of a
    \txtCBM with the set of states from which one ``can reach a final state of 
    $\B$'', starting from event $e$. This computation proceeds backward
    starting from the maximal events wrt.\ $<$ (in the example, the only
    $b$-labeled one):
    \begin{center}
      \fbox{
      \includegraphics[scale=0.45]{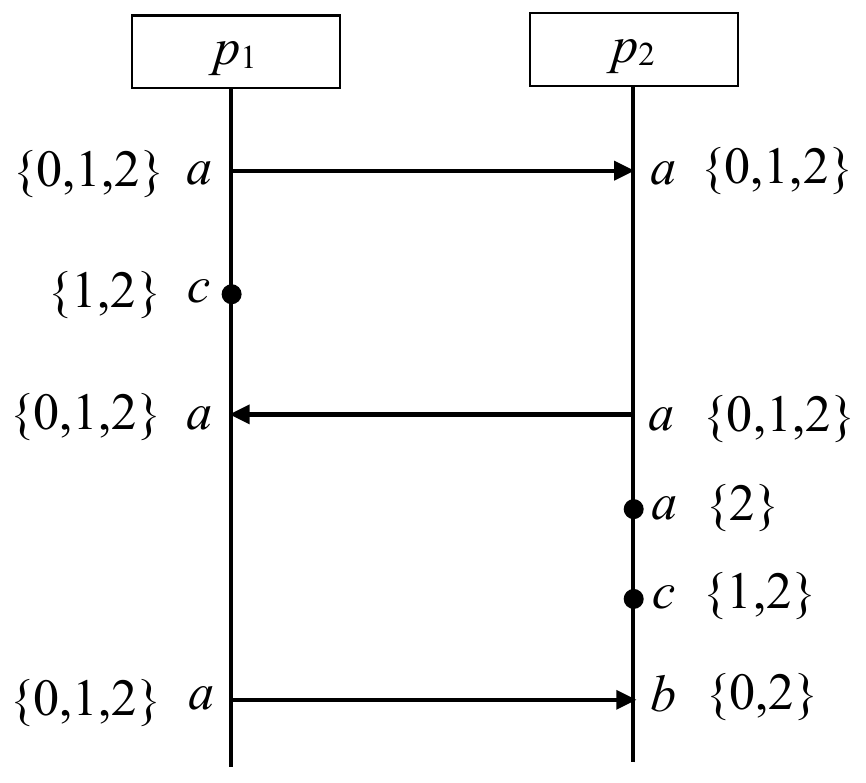}
      }
    \end{center}
    This can indeed be achieved by a \txtCPDS. To do so, the \txtCPDS has to
    ``inspect'', at each event $e$, the states at the immediate $\procrel$- and
    $\matchrel^d$-successors of $e$.  In particular, at a write event, it will
    have to guess what will be the state at the corresponding read event.
    Finally, an event satisfies $\existspath{\pi}$ iff the initial state $0$ is
    contained in the labeling.
  \end{example}
  
  \newpage
  
  \underline{\txtCPDS $\Sys_{\existspath{\pi}}$:}

Let $\text{Tests}(\pi) = \{\test{\sigma_1}, \ldots, \test{\sigma_n} \}$ be the set of tests appearing in $\pi$. 

Now, $\pi$ can be seen as a regular expression over the alphabet \[\Omega=\text{Tests}(\pi) \cup \{\matchrel^d \mid d \in \DS\} \cup \{\procrel\}\,.\]

Let $\B=(S,\delta,\iota,F)$ be a finite automaton over $\Omega$ for $\pi$, i.e., such that $L(\B) = L(\pi) \subseteq \Omega^\ast$. Note that we can assume $|S| = |\pi|$. Given $s \in S$, we set $\B_s=(S,\delta,s,F)$, i.e., $\B_s$ is essentially $\B$, but with new initial state $s$.

Let $\pi_s$ be a rational expression over $\Omega$ that is equivalent to $\B_s$ (in particular, $\pi_\iota = \pi$).

For a \txtCBM $\mscn=((w_p)_{p\in\Procs},(\matchrel^d)_{d\in\DS})$, we want to 
compute with a transducer the map $\nu: \Events \to 2^S$ such that
\[\nu(e) = \{s \in S \mid e \in \Sem{\existspath{\pi_s}}{\mscn}\}\,.\]

Let $\nu^+(e) =
\begin{cases}
\nu(f)  & \textup{if~} e \to f\\
\emptyset  & \textup{if~} e \text{ is maximal on its process.}
\end{cases}$

Let $H(e) = \{\test{\sigma_i} \mid e \in \Sem{\sigma_i}{\mscn}\} \subseteq \Omega$.

For $K \subseteq \Omega^\ast$ and $T \subseteq S$, let \[\delta^{-1}(K,T) = \{s \in S \mid \delta(s,w) \in T \text{ for some } w \in K\}\,.\]

\begin{lemma}\label{lem:nu}
\begin{itemize}
  \item[(i)] If $e$ is not a write event, then 
  \[\nu(e) = \delta^{-1}(H(e)^\ast, F \cup \delta^{-1}(\mathord{\to},\nu^+(e)))\,.\]

  \item[(ii)] If $e \matchrel^d f$, then 
  \[\nu(e) = \delta^{-1}(H(e)^\ast, F \cup \delta^{-1}(\mathord{\to},\nu^+(e)) 
  \cup \delta^{-1}(\mathord{\matchrel^d},\nu(f)))\,.\]
\end{itemize}
\end{lemma}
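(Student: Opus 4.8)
The plan is to unfold the definitions so that each equation becomes a statement about the \emph{first step} of an accepting walk of $\B_s$ that starts at the event $e$. First I would record a preliminary reduction. Viewing a word $w=\omega_1\cdots\omega_m\in\Omega^\ast$ as a concatenation of atomic programs and tests, set $\Sem{w}{\mscn}=\Sem{\omega_1}{\mscn}\circ\cdots\circ\Sem{\omega_m}{\mscn}$. Since $\pi_s$ is a rational expression for $L(\B_s)$ and the relational semantics $\Sem{\cdot}{\mscn}$ interprets $+$, $\Conc$ and $\ast$ as union, relational composition and reflexive--transitive closure, a routine induction on $\pi_s$ gives
\[
\Sem{\pi_s}{\mscn} ~=~ \bigcup_{w\in L(\B_s)} \Sem{w}{\mscn}\,.
\]
Hence $s\in\nu(e)$ holds iff there exist $w\in L(\B_s)$ and $f\in\Events$ with $(e,f)\in\Sem{w}{\mscn}$; call such a $w$ a \emph{witness for $s$ at $e$}. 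Throughout I read $s\in\delta^{-1}(K,T)$ as: some run of $\B$ from $s$ on some word of $K$ ends in $T$, which handles a possibly nondeterministic $\B$ uniformly.

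The core of the argument is a factorization of any witness at its first non-test symbol. Reading a test $\test{\sigma_i}$ does not move the current event, whereas reading $\procrel$ or $\matchrel^{d'}$ does. So if $w$ is a witness for $s$ at $e$, I would write $w=u\cdot v$ with $u\in\text{Tests}(\pi)^\ast$ the maximal test-prefix and $v$ either empty or beginning with a move symbol. Because the walk is still at $e$ after reading $u$, every test in $u$ must hold at $e$, i.e.\ $u\in H(e)^\ast$, and $\B$ reaches from $s$ some state $s'$ on $u$. There are then exactly three mutually exclusive possibilities for $v$: (a) $v=\varepsilon$, which forces $f=e$ and $s'\in F$; (b) $v$ begins with $\procrel$, which requires $e$ not to be maximal on its process, moves to the $\procrel$-successor $f_1$ (so $\nu(f_1)=\nu^+(e)$) in some state $s''$ reached by reading $\procrel$, and leaves a suffix witnessing $s''$ at $f_1$, i.e.\ $s''\in\nu^+(e)$; (c) $v$ begins with $\matchrel^{d'}$, which requires $e$ to be a write event on $d'$ and moves to the matched read. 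Conversely, any $s'$ satisfying (a), (b) or (c) extends by a suitable prefix in $H(e)^\ast$ to a full witness for $s$ at $e$. This establishes
\[
\nu(e)=\delta^{-1}\bigl(H(e)^\ast,\ T_e\bigr),\qquad
T_e=F\cup\delta^{-1}(\mathord{\to},\nu^+(e))\cup\bigcup_{e\matchrel^{d'}f'}\delta^{-1}(\mathord{\matchrel^{d'}},\nu(f'))\,.
\]

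It then remains to simplify $T_e$ in the two cases. For (i), if $e$ is not a write event it is the source of no matching edge, so the last union is empty and $T_e=F\cup\delta^{-1}(\mathord{\to},\nu^+(e))$, which is exactly (i). For (ii), if $e\matchrel^d f$ then the disjointness axiom of Definition~\ref{def:cbm} ensures that $e$ is the source of \emph{exactly one} matching edge, namely to $f$ via $d$; hence the union collapses to the single term $\delta^{-1}(\mathord{\matchrel^d},\nu(f))$, giving (ii). The recursion is sound because in cases (b) and (c) the target event satisfies $e<f_1$, so every $\nu$-value on the right-hand side refers to a strictly $<$-larger event; this is what later makes the maximal-to-minimal computation in the transducer well founded, although for the identity itself only the factorization is needed. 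The step I expect to require the most care is the precise handling of the test-prefix together with nondeterminism of $\B$: I must argue both that the maximal test-prefix of any witness lies in $H(e)^\ast$ (not merely in $\text{Tests}(\pi)^\ast$), and that peeling off \emph{exactly one} move symbol neither loses nor invents witnesses, which is where the ``first move'' normal form must be matched exactly against the single-letter preimages $\delta^{-1}(\mathord{\to},\cdot)$ and $\delta^{-1}(\mathord{\matchrel^d},\cdot)$.
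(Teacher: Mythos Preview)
The paper does not actually prove Lemma~\ref{lem:nu}: it is stated and then immediately left as an exercise. Your argument is a correct and natural way to fill in that exercise. The factorization of a witness word at its first non-test symbol, the observation that the test-prefix must lie in $H(e)^\ast$, and the case split on whether the first move is $\procrel$ or $\matchrel^{d'}$ are exactly the ingredients the statement is designed to expose; the appeal to the disjointness axiom to collapse the match-union in case~(ii) is also the intended point. One minor remark: the paper's notation $\delta(s,w)\in T$ in the definition of $\delta^{-1}(K,T)$ suggests a deterministic $\B$, so your extra care about nondeterminism, while harmless, is not strictly needed here.
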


\begin{exercise}
Prove Lemma~\ref{lem:nu}.
\end{exercise}

\begin{remark}
If $e$ is maximal wrt.\ $<$, then $\nu(e) = \delta^{-1}(H(e)^\ast,F)$ can be computed directly.
\end{remark}

We are looking for a transducer $\Sys_\nu$, i.e., a \txtCPDS with output, which computes $\nu$.

\underline{Problem:} The computation of $\nu$ goes \emph{backward}, whereas a \txtCPDS run goes \emph{forward}.

\underline{Solution:} Guess $\nu$ nondeterministically and check afterwards whether the guess was correct.

We define $\Sys_\nu = \underbrace{(\Sys_{\sigma_1} \times \cdots
\times \Sys_{\sigma_n})}_{\substack{\text{computes $H$}\\\text{as
input for }\B}} {}\cdot \B$.  By induction, the
$(\Sys_{\sigma_i})_{1 \le i \le n}$ are given.

\newpage

The product transducer $\Sys=\Sys_{\sigma_1} \times \cdots\times \Sys_{\sigma_n}$ 
outputs letters in $\mathbb{B}^{n}=\{0,1\}^{n}$.

Let $\rho=\rho_1\times\cdots\times\rho_n$ be an accepting run of $\Sys$ on a \CBM $\mscn$. 

For all events $e\in\Events$, we have $\gamma(\rho(e))=(\gamma_1(\rho_1(e)),\ldots,\gamma_n(\rho_n(e)))$.

For $G=(G_1,\ldots,G_n)\in\mathbb{B}^{n}$, define $\overline{G}=\{\test{\sigma_i}\mid G_i=1\}$.

By induction hypothesis, we have $\gamma_i(\rho_i(e))=1$ iff $\mscn,e\models\sigma_i$.

Therefore, $\overline{\gamma(\rho(e))}=H(e)$.

The transducer $\B$ has input alphabet $\mathbb{B}^{n}$ and output alphabet 
$2^{S}$. Since we use future modalities, the transducer $\B$ must be 
non-determinitic. But we construct a transducer which is backward deterministic 
and backward complete. Since it is backward deterministic, it has a unique 
global final state, but it uses a set of global initial states.
This is a generalization compared with single local initial states, but can be
simulated with nondeterminism.

We define $\B=(\Locs,\Val,(\to_p)_{p\in\Procs},\mathsf{Init},\FinLocs)$ as follows:
\begin{itemize}
  \item $\Locs = 2^S$.
  A set $U\in\Locs$ represents $\nu(e)$.

  \item $\Val = 2^S$.
  Here, $W\in\Val$ represents $\nu(f)$ when $e \matchrel^d f$.

  \item $\mathsf{Init} = (2^S)^{\Procs}$.

  \item $\FinLocs = \{\emptyset\}^{\Procs} = \{(\emptyset,\ldots,\emptyset)\}$.
\end{itemize}

Now, we turn to the transitions.
\begin{align*}
  & U \xrightarrow{G}_p V 
  && \text{if } U = \delta^{-1}(G^\ast, F \cup \delta^{-1}(\mathord{\to},V)) 
  \\
  & U \xrightarrow{G,d!W}_p V
  && \text{if } U = \delta^{-1}(G^\ast, F \cup \delta^{-1}(\mathord{\to},V) \cup 
  \delta^{-1}(\mathord{\matchrel}^{d},W))
  \tag{$p=\writer(d)$}
  \\
  & U \xrightarrow{G,d?U}_p V
  && \text{if } U = \delta^{-1}(G^\ast, F \cup \delta^{-1}(\mathord{\to},V))
  \tag{$p=\reader(d)$}
\end{align*}

Finally, we set $\gamma_\B(t)=\mathsf{src}(t)$.

Define $\Sys_{\nu}$ as the composition
$(\Sys_1\times\cdots\times\Sys_n)\cdot\B$.

\begin{lemma}
  Let $\mscn$ be a \CBM and let 
  $\rho=\rho_1\times\cdots\times\rho_n\times\rho_\B$ be an accepting run of
  $\Sys_{\nu}$ on $\mscn$. Then, for all events 
  $e\in\Events$, we have 
  $$
  \gamma_{\nu}(\rho(e)) = 
  \gamma_\B(\rho_\B(e)) = \nu(e) \,.
  $$
\end{lemma}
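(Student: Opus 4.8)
The plan is to show that in every accepting run $\rho=\rho_1\times\cdots\times\rho_n\times\rho_\B$ of $\Sys_\nu$ the output produced at each event is forced to be $\nu(e)$. First I would note that, by the definition of transducer composition, the output of $\Sys_\nu=(\Sys_1\times\cdots\times\Sys_n)\cdot\B$ at $e$ is exactly the output of its last component, so $\gamma_\nu(\rho(e))=\gamma_\B(\rho_\B(e))$, and it suffices to analyse the $\B$-component. Write $\mu(e):=\gamma_\B(\rho_\B(e))=\mathsf{src}(\rho_\B(e))$ for this output, which by $\gamma_\B(t)=\mathsf{src}(t)$ is the source location of the $\B$-transition fired at $e$. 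The goal is then $\mu=\nu$.

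Next I would read off the three constraints any run imposes on $\mu$. First, the input letter fed into $\B$ at $e$ is $G=\gamma(\rho(e))$, and by the induction hypothesis on the subformulas $\sigma_i$ we have $\gamma_i(\rho_i(e))=1$ iff $\mscn,e\models\sigma_i$, so $\overline{G}=H(e)$. Second, by the run semantics the target of the transition at $e$ equals the source of the transition at the $\procrel$-successor of $e$, while for the maximal event on each process the acceptance condition $\FinLocs=\{\emptyset\}^\Procs$ forces this target to be $\emptyset$; hence the target at $e$ equals $\mu^{+}(e)$, with $\mu^{+}$ defined exactly like $\nu^{+}$. Third, for a matched pair $e\matchrel^d f$ the read transition at $f$ has the shape $U\xrightarrow{G,d?U}_p V$, so the value consumed at $f$ is its own source $\mu(f)$; since $e\matchrel^d f$ forces the value written at $e$ to equal the value read at $f$, the value $W$ appearing in the write transition $U\xrightarrow{G,d!W}_p V$ at $e$ must be $\mu(f)$.

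Substituting these facts into the guards of the $\B$-transitions yields, for non-write events, $\mu(e)=\delta^{-1}(H(e)^\ast,\,F\cup\delta^{-1}(\mathord{\to},\mu^{+}(e)))$, and for a write event $e$ with $e\matchrel^d f$, $\mu(e)=\delta^{-1}(H(e)^\ast,\,F\cup\delta^{-1}(\mathord{\to},\mu^{+}(e))\cup\delta^{-1}(\mathord{\matchrel^d},\mu(f)))$. These are precisely the equations of Lemma~\ref{lem:nu}, now written for $\mu$. I would finish by a backward induction on the finite strict partial order $<$: the right-hand sides mention $\mu$ only at the $\procrel$- and $\matchrel$-successors of $e$, which are all strictly above $e$ in $<$, so the system determines $\mu$ uniquely from the $<$-maximal events downward (at a $<$-maximal $e$ one has $\mu^{+}(e)=\emptyset$ and no matched successor, giving the base case $\mu(e)=\delta^{-1}(H(e)^\ast,F)$). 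Since $\nu$ satisfies the same system by Lemma~\ref{lem:nu}, uniqueness gives $\mu=\nu$. Conversely, taking $\mu=\nu$ reconstructs an accepting run, so $\Sys_\nu$ indeed computes $\nu$ on every \CBM.

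The main obstacle I anticipate is the bookkeeping in the second and third constraints: one must keep track that ``output $=$ source'' together with the backward-deterministic shape of $\B$ means the labeling is recovered from the targets and matched values rather than by the usual forward reading, and in particular that the conventions $\FinLocs=\{\emptyset\}^\Procs$ and $\mathsf{Init}=(2^S)^\Procs$ correctly encode the boundary condition $\nu^{+}(e)=\emptyset$ at process-maximal events while leaving the source at process-minimal events free. Once this alignment between the run structure and the recursion of Lemma~\ref{lem:nu} is made precise, the uniqueness argument via well-foundedness of $<$ is routine.
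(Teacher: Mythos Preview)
Your proposal is correct and follows exactly the approach the paper indicates: a backward induction on the partial order $<$ of $\mscn$, starting from the maximal events and invoking Lemma~\ref{lem:nu}. The paper's own proof is a two-line sketch of precisely this argument, and you have simply spelled out the bookkeeping (identifying the target at $e$ with $\mu^{+}(e)$ via the run conditions and $\FinLocs=\{\emptyset\}^{\Procs}$, and the written value with $\mu(f)$ via the shape of the read transitions) that the paper leaves implicit.
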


\begin{proof}
  The proof is by induction on the partial order $<$ of $\mscn$, starting
  from the maximal events. It is based on Lemma~\ref{lem:nu}.
\end{proof}

Finally, we obtain $\Sys_{\existspath{\pi}}$ as the composition
$\Sys_\nu\cdot\C$ where $\C$ is the universal transducer with one state which 
transforms an input letter $T\subseteq S$ into the boolean value
$$
\begin{cases}
1  & \textup{if~} \iota \in T\\
0  & \text{otherwise.}
\end{cases}
$$

\newpage

It remains to define automata for PDL formulas $\Phi \in \PDL(\Arch,\Sigma)$.
Without loss of generality, we assume that $\Phi$ is a positive boolean combination of formulas of the form
$\Existsev \sigma$ or $\Forallev \sigma$. Disjunction and conjunction are easy to handle, since \CPDSs are closed under union and intersection (exercise).

For the case $\Forallev \sigma$, suppose we already have
$\Sys_{\sigma}=(\Locs,\Val,(\to_p)_{p\in\Procs},\inLoc,\FinLocs)$ with
associated mapping $\holds_\sigma\colon\Trans \to \{0,1\}$.
Then, the \txtCPDS for $\Forallev \sigma$ is simply the restriction of $\Sys$ 
to those transitions $t$ such that $\holds_\sigma(t) = 1$.

The case $\Existsev \sigma$ is left as an exercise.
\end{proof}

\begin{exercise}
Consider the following extension of $\PDL(\Arch,\Act)$, which we call $\PDL^{-1}(\Arch,\Act)$:
\begin{align*}
  \Phi &::= \Existsev \sigma \mid \Phi \vee \Phi \mid \neg \Phi
  \\
  \sigma &::= p \mid a \mid \sigma \vee \sigma \mid \neg \sigma \mid \existspath{\pi}{\sigma} \mid \textcolor{red}{\existspath{\pi^{-1}}\sigma}
  \\
  \pi &::= {\matchrel^d} \mid {\procrel} \mid \test{\sigma} \mid 
  \pi + \pi \mid \pi \cdot \pi \mid \pi^{\ast} 
\end{align*}
where $p \in\Procs$, $d \in \DS$ and  $a\in\Act$.
Show that Theorem~\ref{PDLtoCPDS} even holds for the logic $\PDL^{-1}(\Arch,\Act)$:\\
For every $\Phi \in \PDL^{-1}(\Arch,\Sigma)$, there is $\Sys \in \CPDS(\Arch,\Sigma)$ such that $\Lang(\Sys) = \cbmL(\Phi)$.
\end{exercise}

\begin{remark}
  Since $\PDL$ is closed under complementation (negation), while \CPDSs are not
  (for certain architectures), we obtain, as a corollary, that \CPDSs are
  strictly more expressive than \PDL.
\end{remark}

% =============================================
% =============================================

\bibliographystyle{alpha}
\bibliography{mylit}

% ------------------------------------------------
% ------------------------------------------------

\end{document}